\newcommand*{\addFileDependency}[1]{
\typeout{(#1)}
\@addtofilelist{#1}
\IfFileExists{#1}{}{\typeout{No file #1.}}
}\makeatother
\newtheorem{definition}{Definition}
\newtheorem{theorem}{Theorem}
\newtheorem{lemma}{Lemma}
\newtheorem{example}{Example}
\newtheorem{proposition}{Proposition}
\newtheorem{assumption}{Assumption}
\newtheorem{remark}{Remark}
\newcommand{\W}{\ensuremath{\mathcal{W}}}
\newcommand{\suppW}{\ensuremath{\text{\normalfont supp}(\PR_{\bm
      X/\lVert\bm X\rVert})}}
\newcommand{\suppWclean}{\ensuremath{\text{\normalfont supp}(\PR_{\bm W})}}
\newcommand{\suppX}{\ensuremath{\text{\normalfont supp}(\PR_{\bm X})}}
\newcommand{\zL}{\ensuremath{z_{-}}}
\newcommand{\zR}{\ensuremath{z_{+}}}
\newcommand{\nuR}{\ensuremath{\nu_{\{+1\}}}}
\newcommand{\LambdaR}{\ensuremath{\Lambda_{\{+1\}}}}
\newcommand{\xiR}{\ensuremath{\xi_{\{+1\}}}}
\newcommand{\nuL}{\ensuremath{\nu_{\{-1\}}}}
\newcommand{\LambdaL}{\ensuremath{\Lambda_{\{-1\}}}}
\newcommand{\xiL}{\ensuremath{\xi_{\{-1\}}}}
\newcommand{\rGP}{\ensuremath{\text{rGP}}}
\newcommand{\rExp}{\ensuremath{\text{rExp}}}
\newcommand{\HMRV}{\ensuremath{H_{\textsf{MRV}}}}
\newcommand{\Rstar}{\ensuremath{\RR^d\setminus\{\Orig\}}}
\newcommand{\QQ}{\ensuremath{\mathsf{Q}}}
\newcommand{\fL}{\ensuremath{f_{[L]}}}
\newcommand{\OO}{\ensuremath{O}} \newcommand{\oo}{\ensuremath{o}}
\newcommand{\QS}{\ensuremath{\mathcal{Q}}}
\newcommand{\QSq}{\ensuremath{\QS_q}}
\newcommand{\rquant}{\ensuremath{r_{\QSq}}}
\newcommand{\rW}{\ensuremath{r_{\W}}}
\newcommand{\rquants}[1]{\ensuremath{r_{\QSq,#1}}}
\newcommand{\loc}{\ensuremath{\mathcal{M}}}
\newcommand{\B}{\ensuremath{\mathcal{B}}}
\newcommand{\LL}{\ensuremath{\mathcal{L}}}
\newcommand{\G}{\ensuremath{\mathcal{G}}}
\newcommand{\rG}{\ensuremath{r_{\G}}}
\newcommand{\rGs}[1]{\ensuremath{r_{\G,#1}}}
\newcommand{\gG}{\ensuremath{g_{\G}}}
\newcommand{\dd}{\ensuremath{\mathrm{d}}}
\newcommand{\RR}{\ensuremath{\mathbb{R}}}
\newcommand{\PR}{\ensuremath{\mathbb{P}}}
\newcommand{\EE}{\ensuremath{\mathbb{E}}}
\newcommand{\Espace}{\ensuremath{\mathbb{E}}}
\newcommand{\Emult}{\ensuremath{\overline{\Espace}^d}} 
\newcommand{\obsdatbold}{\ensuremath{\bm o}}
\newcommand{\obsdat}{\ensuremath{o}}
\newcommand{\obsrvbold}{\ensuremath{\bm O}}
\newcommand{\dat}{\ensuremath{\bm y}}
\newcommand{\SSS}{\ensuremath{\mathbb{S}}}
\newcommand{\RS}{\ensuremath{\mathsf{X}}}
\newcommand{\Q}{\ensuremath{\QQ}} \newcommand{\qfirst}{\ensuremath{u_1}} \newcommand{\qsecond} {\ensuremath{u_2}}
\newcommand{\Orig}{\ensuremath{\bm 0}}
\newcommand{\pfirst} {\ensuremath{p_1}}  \newcommand{\roc} {\ensuremath{u}}
\newcommand{\afirst} {\ensuremath{a_1}} \newcommand{\asecond}
{\ensuremath{a_2}} \newcommand{\lfirst} {\ensuremath{l_1}}
\newcommand{\lsecond} {\ensuremath{l_2}}
\newcommand{\radd}{\ensuremath{+}} 
\newcommand{\rmult}{\ensuremath{\cdot}}
\newcommand{\rdiv}{\ensuremath{/}}
\newcommand{\Mh}{\ensuremath{\text{M}_1}}
\newcommand{\Mg}{\ensuremath{\text{M}_2}}
\newcommand{\Mgh}{\ensuremath{\text{M}_3}} 
\newcommand{\vol}[1]{\ensuremath{\lvert #1 \rvert}}
\newcommand{\setExc}{\ensuremath{\mathcal{Y}}}
\newcommand{\sfR}{\ensuremath{\{+1\}}}
\newcommand{\sfL}{\ensuremath{\{-1\}}}
\definecolor{bleudefrance}{rgb}{0.19, 0.55, 0.91}
\definecolor{darkgreen}{rgb}{0.0, 0.2, 0.13} 
\author[1]{Ioannis Papastathopoulos \thanks{\small
    i.papastathopoulos@ed.ac.uk}} \author[1]{Lambert {De Monte}
  \thanks{\small l.demonte@ed.ac.uk}} \author[2]{Ryan Campbell
  \thanks{\small r.campbell3@lancaster.ac.uk }} \author[3]{H{\aa}vard
  Rue \thanks{\small haavard.rue@kaust.ed.sa}} \affil[1]{{\small
    School of Mathematics and Maxwell Institute for Mathematical
    Sciences, University of Edinburgh, Edinburgh, EH9 3FD, Scotland}}
\affil[2]{{\small School of Mathematical Sciences, Lancaster
    University, Lancaster, LA1 4YF, England}} \affil[3] {{\small
    Statistics Program, King Abdullah University of Science and
    Technology, Thuwal 23955-6900, Saudi Arabia}}
\newcommand{\blind}{1}
\newcommand{\wk}{\ensuremath{\,\overset {\mathrm {w}
    }{\rightarrow}\,}}
\newcommand{\cind}{\ensuremath{\,\overset {\mathrm {d}
    }{\rightarrow}\,}}
\newcommand{\vg}{\ensuremath{\,\overset {\mathrm {v}
    }{\rightarrow}\,}}
\makenoidxglossaries \newglossaryentry{lset} { name
  ={\ensuremath{\mathcal{K}_d}}, description={family of non-empty
    compact subsets of $\RR^d$}, sort={A} }
\newglossaryentry{limitset} { name=$G$, description={a limit set,
    assumed to have positive area $\vol{\G}>0$}, sort={Gl} }
\newglossaryentry{gauge} { name=$\gG$, description={continuous gauge
    function describing a non-singular limit set $\G$}, sort={g} }
\newglossaryentry{simplex} { name={\ensuremath{\SSS^{d-1}}},
  description={$(d-1)$--dimensional sphere
    $\{\bm w\in \RR^d\,:\, \lVert \bm w\rVert = 1\}$ with respect to a
    norm $\lVert\, \cdot\, \rVert$.}, sort={Simplex} }
\newglossaryentry{polar} { name={\ensuremath{\Theta}},
  description={angular component of the polar coordinate system used
    for illustrations.\ A bijective map from $\Psi=[-\pi,\pi)$ to
    $\SSS$ is given by
    $\Psi \ni \psi \mapsto
    \mathcal{P}(\psi)=(\cos\psi,\sin\psi)\in \SSS$.},
  sort={SimplexPolar} }
\newglossaryentry{spherical} {
  name={\ensuremath{\Theta\times\Phi}}, description={angular component
    of the spherical coordinate system used for illustrations.\ A
    bijective map from
    $\Psi\times\Phi=[-\pi,\pi)\times[-\pi/2,\pi/2]$ to $\SSS^2$ is
    given by
    \[ \Psi\times\Phi \ni (\psi,\varphi) \mapsto
      \mathcal{S}(\psi,\varphi)=(\cos\psi\sin\varphi,\sin\psi\sin\varphi,\cos\varphi)\in
      \SSS^2.\]}, sort={SimplexSpherical} }
\newglossaryentry{spherical_d} {
  name={\ensuremath{\Phi}}, description={space of angles in spherical coordinates.\ A
    bijective map from
    $\Phi=[0,\pi]^{d-2}\times[-\pi,\pi)$ to $\SSS^{d-1}$ is
    given by
    \[ \Phi \ni \bm \varphi =(\varphi_1,\ldots,\varphi_{d-1}) \mapsto
      \mathcal{S}(\bm \varphi)\in
      \SSS^{d-1},
      \]
      where
      $\mathcal{S}(\bm \varphi)=(\cos\varphi_1,
      \cos\varphi_2\sin\varphi_1, \ldots,
      \cos\varphi_{d-1}\prod_{i=1}^{d-2}\sin\varphi_i,\prod_{i=1}^{d-1}\sin\varphi_i)$.\
    }, sort={SimplexSpherical} } \newglossaryentry{Lapvector} {
    name={\ensuremath{\bm X_L}}, description={random vector with
      standard Laplace marginal distributions}, sort={Radius} }
  \newglossaryentry{radius} { name={\ensuremath{R}},
    description={radial part $\lVert \bm X_L \rVert$ of
      $\bm X_L\in \RR_+$ with respect to a norm
      $\lVert\, \cdot\, \rVert$.}, sort={Radius} }
  \newglossaryentry{angle} { name={\ensuremath{\bm W}},
    description={angular part
      $\bm X_L/\lVert \bm X_L \rVert \in \SSS^{d-1}$ with respect to a
      norm $\lVert\, \cdot\, \rVert$.}, sort={angle} }
  \newglossaryentry{classM} { name={\ensuremath{M_p(\EE)}},
    description={collection of all Radon point measures on a
      \textit{nice} metric space $\EE$}, sort={angle} }
  \newglossaryentry{radialsum} { name={\ensuremath{x \radd y}},
    description={When $x$ and $y$ are elements of $\RR^{d}$,
      $x \radd y$ denotes the radial sum of $x$ and $y$, defined to be
      the usual vector sum of $x$ and $y$ when $x$ and $y$ are
      contained through a line through the origin $o$, and $o$
      otherwise}, sort={radialsum} } \newglossaryentry{littleo} {
    name={\ensuremath{\mathcal{o}}}, description={ADD descro},
    sort={o} } \newglossaryentry{bigO} {
    name={\ensuremath{\mathcal{O}}}, description={ADD descro},
    sort={O} } \newglossaryentry{linesegment} { name={\ensuremath{[a :
        b]}}, description={Closed line segment \ensuremath{\{(1-t)a +
        t b\,:\, 0 \leq t\leq 1\}}} between points $a\in\RR^d$ and
    $b\in\RR^d$, sort={linesegment} }
  \title{\bf\Large Statistical inference for radial generalised Pareto
    distributions and return sets in geometric extremes}
\begin{document}
\date{}
\maketitle

\def\spacingset#1{\renewcommand{\baselinestretch}%
  {#1}\small\normalsize}
\spacingset{1.15}

\begin{abstract}
  We use a functional analogue of the quantile function for
  probability measures on $\RR^d$ to characterize a novel limit
  Poisson point process for radially recentred and rescaled random
  vectors under a radial-directional decomposition. This limit process
  yields new multivariate distributions, including \textit{radial
    generalised Pareto distributions}, exhibiting stability for
  extrapolation to extremal sets along any direction. We show that the
  normalising functions leading to the limit Poisson point process
  correspond to a novel class of sets visited with fixed probability,
  with geometric properties determined by the conditional distribution
  of the radius given the direction and the Radon-Nikodym derivative
  of the directional probability distribution relative to reference
  spherical measures. This leads to return sets, defined by the
  complement of these probability sets and expressed by their return
  period. We identify an important member, the \textit{isotropic
    return set}, where all directions of exceedances outside the set
  are equally likely. Building on the limit Poisson point process
  likelihood, we develop parsimonious statistical models leveraging
  links between limit distribution parameters, with novel diagnostics
  for assessing convergence to the limiting distribution. These models
  enable Bayesian inference for return sets with arbitrarily large
  return periods and probabilities of unobserved extreme events,
  incorporating directional information from observations outside
  probability sets. The framework supports efficient computations in
  dimensions $d=2$ and $d=3$. We demonstrate the utility of the methods
  through simulations and case studies involving hydrological and
  oceanographic data, showcasing potential for robust and
  interpretable analysis of multivariate extremes.
\end{abstract}
\noindent%
{\it Keywords:} Bayesian inference, isotropic return set, limit set, Poisson process,
quantile set, radial generalised Pareto, return set, starshaped set


\section{Introduction}
\label{sec:intro}
The multivariate nature of extreme events casts a shadow of
potentially devastating consequences upon ecosystems, infrastructures,
as well as financial, economic, and insurance sectors.\ Knowledge of
the frequency and magnitude of extreme events is crucial in enhancing
planning strategies and adaptation efforts.\ The statistical
properties of univariate extremes are
well-established~\citep{balkema1974residual, pick75, davismit90}, but
statistical inference for multivariate random processes is much more
intricate:\ one must analyse how random processes interact with and
influence each other.\ A common way to describe the extremal
dependence structure of a real-valued random vector
$\bm X=\left(X_1,\dots,X_d\right)$ is through the coefficient of tail
dependence
\begin{equation}
  \label{eq:chi-def}
  \chi_q(A) = \PR\Big[\,\bigcap_{j\in A}\{F_j(X_{j})>q\}\,\Big]\big/(1-q),\quad q\in(0,1), \quad A\subseteq\left\{1,\dots,d\right\}, \left|A\right|>1,
\end{equation}
where $F_j$ is the cumulative distribution function of $X_j$.\ When
$\lim_{q\rightarrow1}\chi_q(A)=0$, the variables in $A$ are unlikely
to grow large together and we say that they exhibit asymptotic
independence.\ Conversely, when $\lim_{q\rightarrow1}\chi_q(A)>0$, the
variables in $A$ are likely to grow large together and the variables
in $A$ exhibit asymptotic dependence.\ The different dependence
structures that can be present within subgroups of the marginal
variables of $\bm X$ can make inference for rare events challenging
and extrapolation inaccurate.\

Classical approaches to multivariate extreme events often rely on the
framework of multivariate regular variation (MRV)~\citep{haanresn77},
which posits that the point processes of exceedances of a random
vector over a high threshold, when suitably renormalised, converge in
distribution to a non-degenerate non-homogenous Poisson point
process~\citep{haan84}.\ This provides a framework for understanding
the joint behaviour of extreme observations and leads to meaningful
limit distributions that can be used for statistical inference of
multivariate extremes.\
In practice, MRV is applied in a way that does not adequately describe
relationships among asymptotically independent random
variables~\citep{nolde2021linking}.\ This is due to the type of the
renormalisation that is employed:\ given a random sample of $n$
$d$-dimensional observations, all components are normalised by the
same amount.\ This leads to considering the dependence structure only
in a single direction in $\RR^d$ (see Figure~\ref{fig:directions} and
Section~\ref{sec:background_univ_extr}).\

Another drawback of MRV is the limited set of directions in the
multivariate space in which one can extrapolate the model.\ Under MRV,
the probability of lying in an extremal set is estimated by shifting
the extremal set and performing empirical probability estimation on
the translated set.\ When the translated set does not contain
observations from the initial dataset, the estimate the probability of
interest is inevitably 0.\ To correct the joint rate of tail decay in
the case of weaker extremal dependence, the notion of hidden regular
variation (HRV) was introduced
\citep{ledtawn96,ledtawn97,maulik2004characterizations}.\ However, HRV
also suffers from the drawback that it does not allow extrapolation
along directions where not all variables are simultaneously large.\ To
extrapolate to a wider range of extremal regions with a wider array of
dependence structures, the frameworks of conditional
extremes~\citep{hefftawn04} and angular dependence \citep{wadtawn13}
have been introduced, but statistical methodology based on these
frameworks suffers from drawbacks.\ Despite its wide applicability and
widespread adoption, the conditional extremal inference method
of~\cite{hefftawn04} is based on composite likelihood methods and on
gluing separate models post-fit, making statistical inference and
computations unwieldy.\ While the angular dependence method
of~\cite{wadtawn13} permits extrapolation in regions where variables
are not simultaneously extreme, it is only useful for joint survival
regions.\
\begin{figure}[t!]
  \centering
  \includegraphics[scale=1, trim= 50 100 50 100]{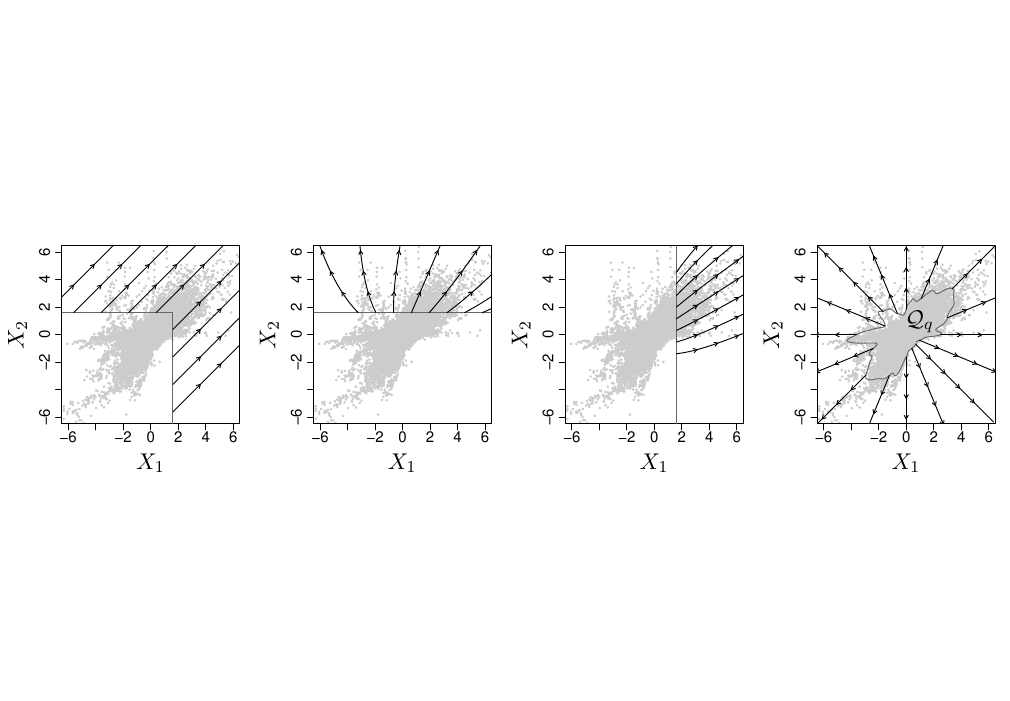}
  \caption{Directions along which MEVT frameworks allow extrapolation
    to tail regions of a bivariate Thames tributary river flow dataset transformed so that each marginal distribution is
    standard Laplace:\ $(a)$ MRV, $(b)$ and $(c)$ conditional extremes
    given $X_2$ and $X_1$ are large, respectively, and $(d)$ geometric
    extremes, with $\QS_q$ illustrating the posterior mean of the
    quantile set ($q=0.95$).\ The support of the distribution of exceedances is inscribed by the region containing the
    arrows.\ }
  \label{fig:directions}
\end{figure}
Recently, a characterisation of extremal dependence through the
limiting geometry of observations from $\bm{X}$ has become of
interest.\ The \emph{limit set}, whose boundary arises as the limiting
hull of appropriately scaled sample clouds, provides insight into the
extremal dependence structure of $\bm X$.\ The \emph{gauge function},
whose unit level set is in one-to-one correspondence with the boundary
of the limit set, has been shown to connect several known coefficients
describing extremal dependence of known
copulas~\citep{nolde2014geometric,nolde2021linking}.\
\cite{wadsworth2022statistical} proposed a framework for performing
statistical inference for multivariate extremes using this geometric
approach.\ Using a radial-directional decomposition, this framework
treats the gauge function as a rate parameter of a left-truncated
gamma model for the distribution of radii conditioned on angles on the
unit simplex.\ Inference for the gauge function and its associated
limit set is based on parametric models derived from known copulas in
$d$-dimensions and standard exponential margins, and a maximum
likelihood approach is implemented within the rate parameter of the
truncated Gamma distribution.\ The result is a new statistical
inference method in estimating extremal probabilities with great
flexibility relative to state-of-the-art methods in multivariate
extremes.\ Also in exponential margins and in a bivariate setting,
\cite{simptawn22bivariate} estimate the gauge function via the
generalised Pareto (GP) distribution~\citep{pick75} for the
conditional distribution of the excess radii given angles on the
simplex, and showing improvements in the estimation of extremal
dependence coefficients derived from the estimated gauge
\citep{nolde2021linking}.
\begin{figure}[t!]
  \centering \includegraphics[width=.3\textwidth, trim= 40 40 40
  100]{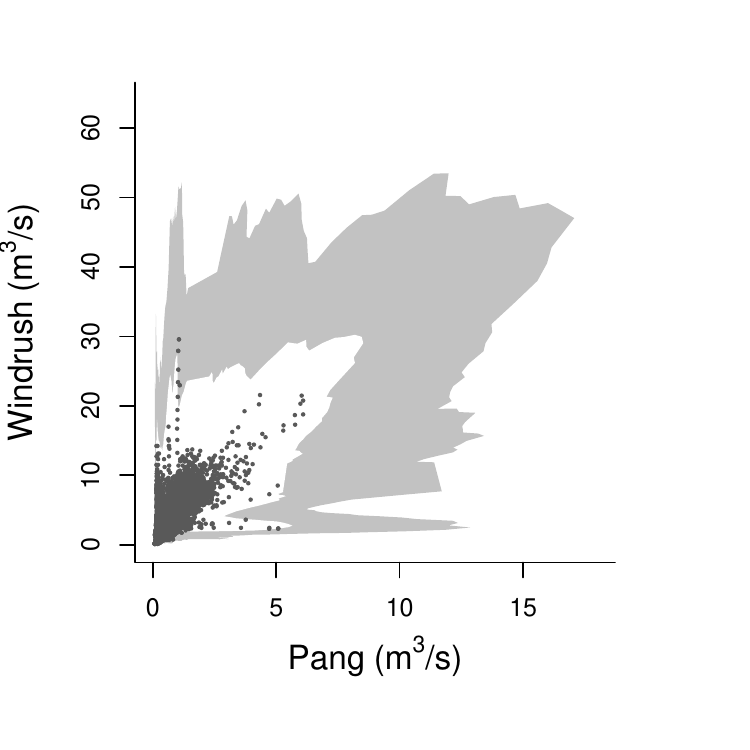} \hspace{0.13\textwidth}
  \includegraphics[width=.3\textwidth, trim= 40 40 40
  40]{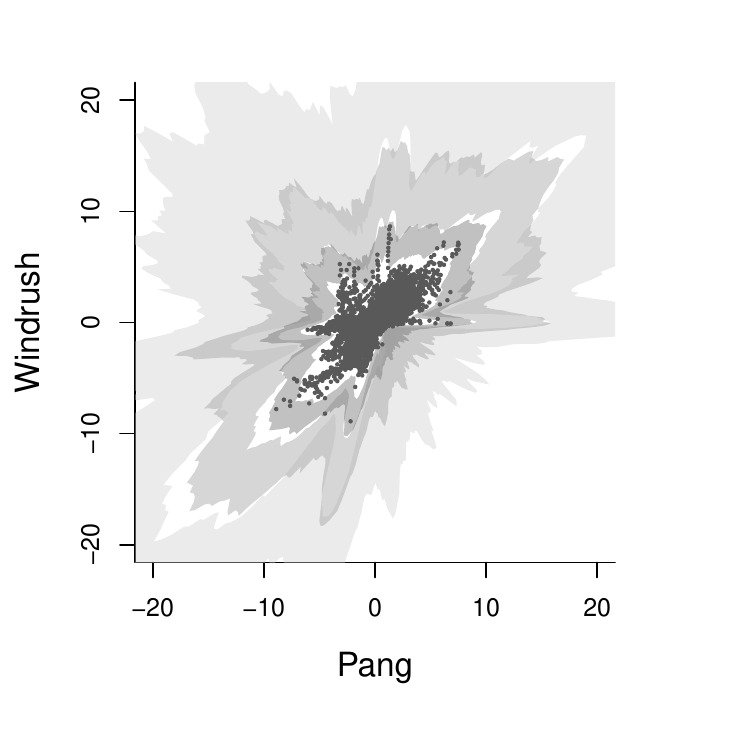}
  \caption{Simultaneous prediction intervals for the boundary of river flow
    return level-sets of the Thames tributaries.
    \textit{Left:} original margins, return period of $T=10^3$ days.\
    \textit{Right:} standard Laplace margins, return periods from dark
    to light grey, $T=10^3$, $10^5$, and $T=10^9$ days.}
    \label{fig:river-returnlevel}
\end{figure}

In this paper, we develop a framework that leverages the structure of
weak limits in suitably radially renormalized multivariate exceedances
in $\RR^d$.\ Multivariate exceedances are observations that lie within
a \textit{return set}, defined as the complement of a
\textit{probability set}.\ Among the infinite class of probability
sets that are visited with a prespecified probability, we identify two
key types:\ quantile sets and isotropic probability sets.\ These sets
are characterized by the distribution of directions associated with
extreme events.\ Specifically, the quantile set is defined via the
conditional quantile of the radius, ensuring that the directional
distribution of exceedances matches the directional distribution of
all events.\ In contrast, the isotropic probability set is defined
such that the directional distribution of exceedances is uniform.\ The
quantile set derives its name from its role as a $d$-dimensional
generalisation of quantiles, addressing challenges in multivariate
extreme value theory that arise from the absence of natural ordering
in $\RR^d$ \citep{barnett76}.\ This concept parallels the quantile
regions introduced by \cite{hallin2021distribution}, but our
definition is specifically designed for extrapolation beyond the
observed data.\ Likewise, the isotropic return set derives its name
from its key property, which balances exceedances across all
directions.

By specifying an appropriate sequence of quantile sets, we show under
mild conditions that the framework of MRV be suitably extended to the
cone $\Rstar$.\ We further show that similar extensions are also
available for light-tailed and bounded multivariate distributions.\ A
key property of our framework is that it is applicable to any
$d\geq 1$.\ In the multivariate setting ($d>1$), our framework enables
the modelling of the entire joint tail, accommodating scenarios where
subsets of components are extreme.\ This approach can capture
behaviours across the entire spectrum of multivariate space and reveal
hidden dependencies, thereby bridging the gap between theory and
practice.\ Using radially recentred and rescaled exceedances over high
quantile sets, we characterise non-trivial limit distributions on
$\RR^d$, termed \textit{radially-stable generalised Pareto
  distributions}.\ The radial stability properties of these families
of distributions permit extrapolation beyond the range of observed
data along any direction (see Figure~\ref{fig:directions}) and lead to
the notion of \emph{return sets}, a geometric $d$-dimensional
extension of the uni-dimensional (upper-tail) return
level~\citep{cole01}.\ This yields an interpretable way to communicate
the risk associated with extreme multivariate events allowing
decision-makers, policymakers, and the general public to understand
the likelihood of experiencing impact from \textit{joint extreme
  events}.\

Throughout our work, inference is done in a Bayesian manner,
allowing us us to obtain inferences for any functional of the joint
tail distribution.\ Our methods account for multiple sources of
uncertainty in the estimation, including that of uncertainty in the
estimation of the marginal tails as well as the multivariate
threshold, which is typically not accounted for in previous
statistical methods for multivariate extremes.\ We use use prediction
intervals for functionals of interest, allowing us for example to
quantify uncertainty via simultaneous prediction bands when the
functional of interest consists of the boundary of a
distribution-dependent set.\ 
The flexibility and accuracy of our modelling approach is
illustrated in Figure~\ref{fig:river-returnlevel}, where the
posterior return level sets are shown on a bivariate dataset of
river flow measurements (${m^3}/{s}$) displaying a complicated
dependence structure.\
Simultaneous prediction intervals for posterior estimates of the
lower-boundary of the return level sets for various return periods are
shown with growing uncertainty as the return period increases, which
is expected.\ Our definition of the return level sets are linked with
the findings in \cite{hallin2021distribution}, which is based on
optimal transportation theory.\ However, we are able to use our model
to obtain these same curves far beyond the domain of our data with
uncertainty estimated coherently through the posterior distribution.\
We note the similarities with \cite{hallin2021distribution} by
estimation of return level curves on a series of bivariate Gaussian
mixture distributions, presented in Supplementary
Material~\ref{sec:biv-gauss-mix}.\

The paper is organised as follows.\ In Section~\ref{sec:theory}, we
initiate our findings through classical univariate extreme value
theory and multivariate regular variation, and introduces the main
results, which include the weak convergence of radially recentred and
rescaled exceedances to a Poisson point process.\ Our findings lead to
a novel class of limit multivariate distributions that are presented
in Section~\ref{sec:mult_rad_stab_distr} and to novel return
level-sets presented in Section~\ref{sec:quantile_return_sets}.\
Section~\ref{sec:inference} details methodology for statistical
inference of extremes using hierarchical Bayesian models with latent
Gaussian random effects on Euclidean spheres.\ 
while information on
Finally, Section~\ref{sec:wave}
illustrates the merits of our approach on a 3-dimensional data set of
extreme sea levels in Newlyn, England.\
Publicly available code can be accessed via our \texttt{R}
package \texttt{geometricExtremes} at \texttt{GitHub} to use the methodology discussed herein.\ 

\section{Theory}
\label{sec:theory}
\subsection{Notation and background}
\textit{Notation linked with measures:} Let $\mu$ be a measure on a
topological space $\Espace$ and let $h\,:\,\Espace \to \RR$ be a
measurable function.\ The support of a measure $\mu$ is denoted by
$\text{supp}(\mu)$ and is the smallest closed subset of $\Espace$ such
that $\mu(\Espace\setminus \text{supp}(\mu))= 0$.\ The integral of $h$
with respect to $\mu$ is denoted by $\mu (h)=\int_{\Espace} h d\mu$.\
The Dirac measure centred at $x\in \Espace$, denoted by $\delta_x$, is
defined by $\delta_x(h) = h(x)$.\ A measure $\nu$ is said to be
absolutely continuous with respect to another measure $\mu$, denoted
as $\nu \ll \mu$, if for every measurable set $A$, $\mu(A) = 0$
implies $\nu(A)=0$.\ We write $M_+(\Espace)$ for the space of
nonnegative Radon measures on Borel subsets of $\Espace$,
$M_1(\Espace)$ for the space of Radon probability measures on Borel
subsets of $\Espace$ and $M_p(\mathcal{\Espace})$ for the space of all
locally finite point measures on $\Espace$.\ Given a standard
probability space $(\Omega, \mathcal{F}, \PR)$, we let a random point
measure $P$ on $\Espace$ be a measurable map
$P\,:\, \Omega \to M_p(\Espace)$, such that
$P = \sum_{i=1}^{N}\delta_{X_i}$ where
$N\,:\,\Omega \to \mathbb{N}\cup\{\infty\}$ is a discrete random
variable, representing the number of atoms of the point measure and
given $\{N=n\}$, $X_i\,:\,\Omega \to \Espace$, $i=1,\dots,n$, are
random variables representing the location of the atoms.\ For
notational simplicity, when subscripts become cumbersome, we also
write $\delta[x]$ to denote the Dirac measure $\delta_x$.\ The
indicator random variable $\mathbbm{1}_A\,:\,\Omega \to \RR$ is
defined for $A\in \mathcal{F}$ by $\mathbbm{1}_A(\omega) = 1$ if
$\omega \in A$ and $\mathbbm{1}_A(\omega)=0$ otherwise.\ For a
probability measure $\mu$ defined on a subset $\mathcal{E}$ of a
topological space $\Espace$, then $\mu(B)=\mu(B \cap \mathcal{E})$ for
all $B\in \mathcal{B}(\Espace)$, that is, the measure is always
interpreted as its extension to the entire ambient space $\Espace$.\

\textit{Notation linked with vague convergence:} For a sequence of
measures $\mu_n$ and $\mu$ on a nice space $\Espace$, we say that
$\mu_n$ converges vaguely to $\mu$ in $M_+(\Espace)$, denoted by
$\mu_n\vg \mu$, if $\mu_n(B)\to \mu(B)$ for all relatively compact
sets $B\subset \Espace$ for which the boundary of $B$ has $\mu$
measure zero, that is, $\mu(\partial B)=0$.\ An equivalent definition
which is based on probing the behaviour of measures by integrating
them along compactly supported continuous test functions is given in
Appendix~\ref{sec:vg}.\ Additionally, we write
$\Espace_\xi = \{x\in\RR\,:\,1+\xi x > 0\}$ for $\xi\in\RR$, and
$\overline{\Espace}_{\xi}$ denotes
${\Espace}_{\xi} \cup \sup \Espace_{\xi}$, where
$\sup \Espace_{\xi}:=\sup \{{x\in\RR\,:\, x\in\Espace}_{\xi}\}$, that
is, the right closure of the interval $\Espace_\xi$.\

\textit{Notation linked with weak convergence of a random point
  measure to a Poisson point process}: A sequence of random point
measures $P_n$ converges weakly in $M_p(\Espace)$ to a
Poisson point process $P$ with intensity measure $\mu(dx)$ if the
sequence of Laplace functionals of suitable test functions converges
to the Laplace functional of the Poisson point process, \textit{i.e.}, if
for any $h\in C_K^+(\Espace)$,
\begin{IEEEeqnarray*}{rCl}
  &&\lim_{n\to\infty}\EE \left[\exp\{- P_{n} (h)\}\right] =
  \exp\left\{-\int_{\Espace} [1- \exp\{-h(x)\}] \mu(d
    x)\right\},
\end{IEEEeqnarray*}
where $C_K^+(\Espace)$ denotes the space of continuous functions
with compact support on $\Espace$.\

\color{black} \textit{Notation linked with distributions and measure
  transportation} \color{black} A random element
$Z\,:\,\Omega \to \Espace$ follows a distribution $\PR_{Z}$ and write
$Z \sim \PR_Z$ whenever $\PR_Z(A)=\PR(Z\in A)$.\ Push-forward of
measures, cyclically monotonic, When $\Espace=\mathbb{R}^d$ with
$d\geq 1$, then we let
$F_Z(\bm x) =\PR_Z((-\infty, x_1]\times \cdots \times (-\infty,
x_d])$, $\bm x=(x_1, \dots, x_d)\in \mathbb{R}^d$, denote the
corresponding cumulative distribution function of $Z$.\ For a
nondecreasing function $F\,:\,\mathbb{R}\to \mathbb{R}$, the
left-continuous inverse is defined by
$F^{-1}(y) = \inf\{x\,:\, F(x)\geq y\}$.\ Two cumulative distribution
functions $G_1, G_2 \,:\, \mathbb{R}^d\to [0,1]$ are said to be of the
same type if there exist constants $\bm a>0$ and $\bm b$ such that
$G_1(\bm a \bm x + \bm b) = G_2(\bm x)$ for all
$\bm x\in\mathbb{R}^d$, with vector algebra interpreted as
elementwise.\ \color{black}

\textit{Notation linked with starshaped sets:} A set $\B\subset \RR^d$
is called starshaped at $\bm x\in \B$ if every point $\bm y\in \B$ is
visible from $\bm x$ in the sense that the closed line segment
$\{(1-t)\bm x + t \bm y\,:\, t \in [0,1]\}\subset \B$.\ If $\B$ is
starshaped at $\bm 0$ and compact in $\RR^d$, then we call $\B$ a
star-body.\ The collection of all star-bodies is denoted by
$\bigstar$.\ A star-body $\B$ is in one-to-one correspondence with its
radial function $r_{\B}:\SSS^{d-1}\to [0,\infty)$ given by
$r_{\B}(\bm w) = \sup\{\lambda > 0\,:\, \lambda \bm w \in \B\}$.\
Star-bodies are endowed with well-defined algebraic operations through
their radial functions.\ For instance, for $\mathcal{A},\B\in\bigstar$, $\mathcal{A}\radd\B$ and $\mathcal{A}\rmult\B$ are interpreted as the star-bodies with radial functions $r_{\mathcal{A}+\B}=r_{\mathcal{A}}+r_{\B}$ and $r_{\mathcal{A}\rmult\B}=r_{\mathcal{A}}\rmult r_{\B}$.\ Additional information about
starshaped sets and radial functions can be found in Supplementary
Materials~\ref{appendix:star_background} and~\ref{appendix:radial}.\
\subsection{Univariate extremes}
\label{sec:background_univ_extr}
A complete characterisation of all possible non-trivial limit laws of
extremes of a univariate \mbox{random} variable $X$ with probability
distribution $F_X$ relies on regular variation and its extensions,
postulating location and scale constants $a_{n} > 0 $ and $b_{n}$, and
$\alpha_n >0$ and $\beta_n\in \RR$, for $n = 1, 2, \dots$, alongside
shape parameters $\xiR, \xiL\in\RR$ and non-degenerate limit measures
$\nuR$ and $\nuL$, such that
\begin{IEEEeqnarray}{rCl}
  && n \mathbb{P}\left[\{X-b_{n}\}/a_{n} \in \cdot\right] \vg
  \nuR(\cdot) \quad \text{and} \quad n
  \mathbb{P}\left[\{(-X)-\beta_n\}/\alpha_n \in \cdot\right] \vg
  \nuL(\cdot)
  \label{eq:RV_univariate_standard_RL} 
\end{IEEEeqnarray}
as $n\to\infty$, in $M_+(\overline{\Espace}_{\xiR})$ and
$M_+(\overline{\Espace}_{\xiL})$, respectively.\ Standard references
include \cite{resn87} and \cite{haanferr06}.\ The location and scale
renormalisations in~\eqref{eq:RV_univariate_standard_RL} imply that
the \textit{exponent functions} of the limit measures, defined by
\[
  \LambdaR(z):=\nuR(z,\sup \Espace_{\xiR}] \quad \text{and} \quad
  \LambdaL(z):=\nuL(z,\sup \Espace_{\xiL}],
\]
satisfy the functional equations
\begin{IEEEeqnarray*}{rCl}
  &&n \,\LambdaR(a_n + b_n z) = \LambdaR(z)\qquad \text{and} \qquad n \,
  \LambdaL(\alpha_n + \beta_n z) = \LambdaL(z),
\end{IEEEeqnarray*}
whose unique solutions are given, up-to-type, by 
\[
  \text{$\LambdaR(z)= \left[1+\xiR z\right]_+^{-1/\xiR}$ and
    $\LambdaL(z)= \left[1+\xiL z\right]_+^{-1/\xiL}$},
\]
where $(x)_+ = \max(x,0)$, respectively characterizing the limit
measures $\nuR$ and $\nuL$.\

The connection between the vague convergences
\eqref{eq:RV_univariate_standard_RL} and the limiting laws of extremes
of a sequence $\{X_i\,:\, i=1,2,\dots\}$ of independent and
identically distributed random variables from $F_X$, comes from the
fact that these convergences constitute necessary and sufficient
conditions for the sequences of random point measures
\[
  \sum_{i=1}^n\delta\left[\left(\frac{X_i - b_n}{a_n},
      \frac{i}{n+1}\right)\right] \quad \text{and} \quad
  \sum_{i=1}^n\delta\left[\left(\frac{(-X_i) - \beta_n}{\alpha_n},
      \frac{i}{n+1}\right)\right]
\]
to respectively weakly converge in
$M_p(\overline{\Espace}_{\xiR}\times (0,1))$ and
$M_p(\overline{\Espace}_{\xiL}\times (0,1))$, to Poisson point
processes with intensity measures $\nuR(dz)dt$ and $\nuL(dz)dt$, as
$n\to \infty$.\

A key aspect of the convergences~\eqref{eq:RV_univariate_standard_RL}
is that these can be unified under a single theme, allowing us to
consider upper and lower extremes simultaneously.\
Theorem~\ref{thm:univariate_vc} illustrates this unification, which is
contingent upon explicitly using the directions along which extremes
occur, and shows that the right and left extremes from a random sample
from $\PR_{\bm X}$ become ultimately independent as the sample size
approaches infinity.\ Explicit in the construction below is the
transformation of $X$ into polar coordinates, which has the usual
geometric interpretation that the sign of $X$ is viewed as
encoding the direction $X/|X|$, while the magnitude $|X|$ represents
the distance of $X$ from 0.
\begin{theorem}
  \label{thm:univariate_vc}
  Let $X$ be a real-valued random variable and let $S$ be the set of
  all points of the real line, any open neighbourhood of which
  receives a positive probability mass.\ Suppose that
  $0 \in \text{\normalfont supp}(\PR_{X})$, $F_{X}(\{0\})=0$, and convergences~\eqref{eq:RV_univariate_standard_RL} hold true.\ Let
  $\pi = \PR(X > 0)=\PR(X/|X| = 1)$,
  $r_n^a(w) = a_{n,\{+1\}} \mathbbm{1}_{\{w\in\{+1\}\}} + a_{n,\{-1\}}
  \mathbbm{1}_{\{w\in\{-1\}\}} $ and
  $r_{n}^b(w) = b_{n,\{+1\}} \mathbbm{1}_{\{w\in\{+1\}\}} + b_{n,\{-1\}}
  \mathbbm{1}_{\{w\in\{-1\}\}}$.\ Let
  \[
    P_n = \sum_{i=1}^n \delta\left[\left(\frac{|X_i| -
          r_{n}^a(X_i/|X_i|)} { r_n^b(X_i/|X_i|)}, \frac{X_i}{|X_i|},
        \frac{i}{n+1}\right)\right].
  \]
  \item[$(i)$]
  There exist location and scale constants $a_{n, \sfR}>0$ and
  $b_{n, \sfR} \in \RR$, and $a_{n, \sfL} > 0$ and
  $b_{n, \sfL} \in \RR$, for $n = 1, 2, \dots$, such that as
  $n\to\infty$,
  \begin{IEEEeqnarray}{rCl}
    n \mathbb{P}\left[(|X|-b_{n, \sfR})/a_{n, \sfR} \in \cdot \mid X >
      0\right] &\vg& \nuR(\cdot), \qquad \text{in }
    M_+(\overline{\Espace}_{\xiR}),
    \label{eq:RV_univariate_R_cond}\\ \lefteqn{\text{and}} \nonumber\\
    n \mathbb{P}\left[(|X|-b_{n, \sfL})/a_{n, \sfL} \in \cdot \mid X <
      0\right] &\vg& \nuL(\cdot), \qquad \text{in }
    M_+(\overline{\Espace}_{\xiL}).
    \label{eq:RV_univariate_L_cond}
  \end{IEEEeqnarray}
  \begin{itemize}
  \item[$(ii)$] As $n\to \infty$,
    \begin{equation}
      n \PR\left[\left(\frac{|X| -
            r_{n}^b(X/|X|)}{ r_n^a(X/|X|)}, X/|X|\right) \in \cdot \right] \vg
      \nu(\cdot),   
      \label{eq:RV_univariate_RL}
    \end{equation}
    in $M_+(\overline{\Espace}^1)$, where
    $\overline{\Espace}^1:=(\overline{\Espace}_{\xiL} \times \{-1\})
    \cup (\overline{\Espace}_{\xiR} \times \{+1\})$ and
    \begin{equation} \nu = (1-\pi) \nuL \, \delta_{-1} + \pi \nuR \,
      \delta_{+1};
      \label{eq:Lambda_univariate}
    \end{equation}
  \item[$(iii)$]
    As $n\to \infty$,
    \[
      P_n \wk P, \qquad \text{in
        $M_p(\overline{\Espace}^1\times (0,1))$},
    \] where $P$ is Poisson point process with intensity measure
    \begin{equation}
      \left[\left\{(1+\xiL z)_+^{-1-1/\xiL} \,dz \right\} (1-\pi)\,
        \delta_{-1}(d w) + \left\{(1+\xiR z)_+^{-1-1/\xiR} \,dz
        \right\} \pi \,\delta_{+1}(d w) \right] \, dt.
      \label{eq:Lambda_intensity_univariate}
    \end{equation}
  \end{itemize}
\end{theorem}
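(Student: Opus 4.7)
The plan is to prove the three parts in sequence, using the hypothesised convergences~\eqref{eq:RV_univariate_standard_RL} and exploiting the observation that conditioning on the sign of $X$ merely rescales the tail by $\pi$ or $1-\pi$. For part~$(i)$, note that for $x > 0$, $\PR[|X| > x \mid X > 0] = \PR[X > x]/\pi$, so the upper tail of the conditional distribution is the upper tail of $F_X$ divided by $\pi$. I would therefore set $a_{n,\sfR} := a_{m_n}$ and $b_{n,\sfR} := b_{m_n}$ with $m_n := \lceil n/\pi \rceil$, giving, for $z \in \Espace_{\xiR}$,
\[
n\PR\bigl[(|X| - b_{n,\sfR})/a_{n,\sfR} > z \,\big|\, X > 0\bigr] = \frac{n}{\pi m_n}\cdot m_n \PR\bigl[(X - b_{m_n})/a_{m_n} > z\bigr] \to \nuR\bigl((z,\sup\Espace_{\xiR}]\bigr),
\]
where the event $\{X>0\}$ is eventually automatic since $b_{n,\sfR} + a_{n,\sfR} z \to \sup\Espace_{\xiR} > 0$. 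A symmetric argument applied to $-X$ with $m_n := \lceil n/(1-\pi)\rceil$ yields~\eqref{eq:RV_univariate_L_cond}.

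For part~$(ii)$, the space $\overline{\Espace}^1$ is a disjoint union, and any relatively compact Borel set $B$ splits as $B = (A_L \times \sfL) \sqcup (A_R \times \sfR)$ with $A_L$, $A_R$ relatively compact in $\overline{\Espace}_{\xiL}$, $\overline{\Espace}_{\xiR}$ respectively. Conditioning on the sign of $X$ gives
\[
n\PR\bigl[((|X| - r_n^b(X/|X|))/r_n^a(X/|X|), X/|X|)\in A_R \times \sfR\bigr] = n\pi\, \PR\bigl[(|X| - b_{n,\sfR})/a_{n,\sfR} \in A_R \,\big|\, X > 0\bigr],
\]
which tends to $\pi\nuR(A_R)$ by part~$(i)$; likewise the $(A_L \times \sfL)$-part converges to $(1-\pi)\nuL(A_L)$. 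Summing the two contributions produces~\eqref{eq:RV_univariate_RL} with $\nu$ as in~\eqref{eq:Lambda_univariate}.

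For part~$(iii)$, the atoms of $P_n$ are the i.i.d.\ random elements $Z_i := ((|X_i| - r_n^b(X_i/|X_i|))/r_n^a(X_i/|X_i|),\,X_i/|X_i|)$ marked by deterministic times $t_i = i/(n+1)$ that asymptotically equidistribute on $(0,1)$. The mean measure of $P_n$ is $\bigl(n\PR[Z_1\in \cdot]\bigr) \otimes \bigl(n^{-1}\sum_{i=1}^n \delta_{t_i}\bigr)$, which converges vaguely to $\nu(dz)\,dt$ by part~$(ii)$ and a Riemann-sum argument. Poisson convergence of $P_n$ then follows from a classical point-process result for row-wise i.i.d.\ triangular arrays (Kallenberg's theorem; see also~\cite{resn87}, Proposition~3.21): mean-measure convergence combined with the null-array condition (automatic since $n\PR[Z_1\in K]=O(1)$ forces $\PR[Z_i\in K] \to 0$ for relatively compact $K$) implies $P_n \wk P$ with the stated intensity. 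The density form~\eqref{eq:Lambda_intensity_univariate} is recovered by differentiating $\LambdaR(z)=(1+\xiR z)_+^{-1/\xiR}$ and $\LambdaL(z)=(1+\xiL z)_+^{-1/\xiL}$.

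The main obstacle lies in parts~$(i)$--$(ii)$: ensuring vague convergence transfers cleanly to the disjoint-union space $\overline{\Espace}^1$ when $\xiR$ or $\xiL$ is negative, in which case $\nuR$, $\nuL$ are compactly supported with mass possibly accumulating near the finite right endpoints. One must verify the boundary continuity $\nu(\partial B)=0$ on a sufficient generating class and reconcile the two one-sided scalings across the sign decomposition, but these details reduce to standard convergence-of-types and regular-variation bookkeeping.
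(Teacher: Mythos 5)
Your proposal is correct and takes essentially the same route as the paper's proof: part~$(i)$ follows from the observation that conditioning on the sign of $X$ rescales the upper tail by $\pi$ or $1-\pi$, part~$(ii)$ from total probability over the disjoint-union space $\overline{\Espace}^1$, and part~$(iii)$ is the standard triangular-array Poisson limit that the paper treats as a known consequence of~$(ii)$. The only cosmetic difference is in part~$(i)$, where you re-index the norming sequences via $m_n=\lceil n/\pi\rceil$ whereas the paper invokes convergence of types directly, arriving at $a_{n,\sfR}\sim a_n\pi^{-\xiR}$ and $b_{n,\sfR}\sim b_n+[(\pi^{-\xiR}-1)/\xiR]\,a_n$; the two choices are equivalent up to type.
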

A proof is given in Appendix~\ref{sec:min-max-stable}.\ The unification
presented in Theorem~\ref{thm:univariate_vc} has a simple
interpretation in terms of superposition of random point
measures.\ Under this framework, the sequences
\begin{IEEEeqnarray*}{rCl}
  &&P_n^+=\sum_{i=1}^n \delta\left[\left(\frac{|X_i| - b_{n,
          \sfR}}{a_{n, \sfR}}, \frac{i}{n+1}\right)\right] \mathbbm{1}_{\{X_i >
    0\}} \text{ and } P_n^-=\sum_{i=1}^n
  \delta\left[\left(\frac{|X_i| - b_{n, \sfL}}{a_{n, \sfL}},
      \frac{i}{n+1}\right)\right] \mathbbm{1}_{\{X_i < 0\}},
\end{IEEEeqnarray*}
weakly converge to independent Poisson point processes $P^+$ and $P^-$
with intensity measures $\pi \nuR(dz)dt$ and $(1-\pi)\nuL(dz)dt$ on
$(\overline{\Espace}_{\xiR} \times \{+1\})\times (0,1)$ and
$(\overline{\Espace}_{\xiL} \times \{-1\})\times (0,1)$,
respectively.\ This follows directly
from~\eqref{eq:RV_univariate_R_cond}
and~\eqref{eq:RV_univariate_L_cond} and has a straightforward sampling
interpretation.\ First, we sample $n$ independent observations from
each renormalised conditional distribution.\ Then, we apply a thinning
operation to the resulting independent random point measures whereby
points are probabilistically removed based on the likelihood of
observing a positive or negative value of $X$, conditionally on
retaining $n$ points in total.\ Consequently, $P_n$ is seen as the
superposition $P_n^- + P_n^+$ of the two renormalised random point
measures, leading to $P_n \wk P$ as $n\to \infty$, where $P$ is a
Poisson point process on the product space between the disjoint union
$(\overline{\Espace}_{\xiL} \times \{-1\}) \cup
(\overline{\Espace}_{\xiR} \times \{+1\})$ and $(0,1)$, with intensity
measure~\eqref{eq:Lambda_intensity_univariate}.\

Although Proposition~\eqref{thm:univariate_vc} assumes
$0\in \text{\normalfont supp}(\PR_{X})$, we can more generally
consider directions relative to any point
$m\in \text{\normalfont supp}(\PR_{X})$, with $|X-m|$, $(X-m)/|X-m|$,
and $X-m\lessgtr 0$, respectively replacing $|X|$, $X/|X|$, and
$X\lessgtr 0$.\ The assumption
$0\in \text{\normalfont supp}(\PR_{X})$, however, can be typically met
by shifting the origin.\ While the definition of right and left
extremes may vary depending on the application, for generic purposes
it may be natural to take $m$ as a point in
$\text{\normalfont supp}(\PR_{X})$ which minimises a well defined
measure of extremity relative to any other point in
$\text{\normalfont supp}(\PR_{X})$.\ For absolutely continuous
distributions, whereby $F_X(\{m\})=0$ for any
$m\in \text{\normalfont supp}(\PR_{X})$, there is always a unique such
choice of $m$ (the median of the univariate distribution) which
uniquely maximises the population version of halfspace depth.\
Additionally, the assumption $F_{X}(\{m\}) = 0$ can also be relaxed at
the cost of adjusting the probabilities $\pi$ and $(1-\pi)$ to new
values $\pi_+>0$ and $\pi_->0$, while ensuring that
$\pi_+ + F_X(\{m\}) + \pi_- = 1$.\ In this case, if the median is not
well defined, we can select a unique point such as the centroid of the
set of values maximizing halfspace depth.\

While such generalisations introduce no technical difficulties, here
and throughout we opt for the simpler cases for clarity and ease of
presentation.\ Thus, assuming $0\in \text{\normalfont supp}(\PR_{X})$ with
$F_X(\{0\})=0$, the connection between non-degenerate limit laws that
facilitate a simultaneous description of right and left extremes
relative to a centre, can be understood through the convergence of the
joint distribution of the appropriately renormalised sample minima and
maxima.\

Let the sample minima and maxima be defined by
$m_n=\min_{i=1,\dots,n}(X_i)$ and $M_n=\max_{i=1,\dots, n}(X_i)$,
respectively, and fix $z_{-}$ and $z_{+}$ such that
$-z_-\in \overline{\Espace}_{\xi_{\sfL}}$ and
$z_+\in \overline{\Espace}_{\xi_{\sfR}}$.\ First, we have the relation
\begin{IEEEeqnarray*}{rCl}
  &&\PR\left( \frac{m_n + b_{n, \sfL}}{a_{n, \sfL}} > z_{-}, \frac{M_n
      - b_{n, \sfR}}{a_{n, \sfR}} < z_{+} \right) = \PR\left(a_{n,
      \sfL} z_{-} - b_{n, \sfL} < X < a_{n, \sfR} z_{+} + b_{n,
      \sfR}\right)^n.
\end{IEEEeqnarray*}
The right-hand side of this relation can be rewritten as
\begin{IEEEeqnarray}{rCl}
  && \left[1 - \frac{1}{n}n\left\{1-\PR\left(a_{n, \sfL} z_{-} - b_{n,
          \sfL} < X < a_{n, \sfR} z_{+} + b_{n,
          \sfR}\right)\right\}\right]^n=: \left[1- \frac{1}{n}n
    \overline{\PR}_n\right]^n,
  \label{eq:min_max_prelimit}
\end{IEEEeqnarray}
and thus, if $n \overline{\PR}_n$ converges to a non-trivial limit
function, then the joint distribution of the renormalised sample
minima and maxima converges weakly to a
non-degenerate limit distribution.\

Assuming convergences~\eqref{eq:RV_univariate_R_cond} and
\eqref{eq:RV_univariate_L_cond} hold, the justification given in
Appendix~\ref{sec:univariate_supplementary} shows that
\begin{equation}
  \lim_{n\to \infty}\PR\left( \frac{m_n + b_{n, \sfL}}{a_{n, \sfL}} >
    z_{-}, \frac{M_n - b_{n, \sfR}}{a_{n, \sfR}} < z_{+} \right) =
  G_{0}(-\zL, \zR \mid 1-\pi, \pi),\quad \text{as $n \to \infty$},
  \label{eq:ETT_limit}
\end{equation}
at continuity points of the limit distribution $G_0$, where
\begin{equation}
  G_{0}(-\zL, \zR\mid \pi)=\exp[-\left\{(1-\pi)\LambdaL(-\zL) +
    \pi\LambdaR(\zR) \right\}\mid  1-\pi, \pi]
  \label{eq:ETT_limit_form}.
\end{equation}
Here, we note that one always have the flexibility to change the
probability mass function $(1-\pi, \pi)$ in~\eqref{eq:ETT_limit_form}
to any (not necessarily probability) mass function, through suitable
changes in the renormalisation.\ This can be seen from the fact that
for any $\lambda_-, \lambda_+ > 0$, we have
\[
  G_0(-\zL, \zR\mid \pi, 1-\pi)=G_0\left( -\frac{\zL-\mu_-(1-\pi,
      \lambda_-)} {\sigma_-(1-\pi,\lambda_-)}, \frac{\zR-\mu_{+}(\pi,
      \lambda_+)}{\sigma_+(\pi,\lambda_+)}~\Big|~ \lambda_-,
    \lambda_+\right)
\]
where
$\mu_\text{sgn}(x,y)=\{(y/x)^{\xi_{\text{sgn}}}-1\}/\xi_{\text{sgn}}$
and $\sigma_\text{sgn}(x,y)=(y/x)^{\xi_{\text{sgn}}}$, for $x,y>0$
and $\text{sgn}\in\{-,+\}$.\ Thus, if~\eqref{eq:ETT_limit} holds true,
then a suitable adaptation of the normalizing constants yields
\begin{equation}
  \lim_{n\to \infty}\PR\left( \frac{m_n + b_{n, \sfL}^\star}{a_{n, \sfL}^\star} >
    z_{-}, \frac{M_n - b_{n, \sfR}^\star}{a_{n, \sfR}^\star} < z_{+} \right) =
  G_{0}(-\zL, \zR \mid \lambda_-, \lambda_+),\quad \text{as $n \to \infty$},
\end{equation}
where
\begin{equation}
  b_{n,\text{sgn}}^\star = b_{n, \text{sgn}} + \mu_\text{sgn}(\pi_{\text{sgn}}, \lambda_{\text{sgn}})  a_{n,\text{sgn}} \quad \text{and} \quad a_{n,\text{sgn}}^\star = \sigma_\text{sgn}(\pi_{\text{sgn}}, \lambda_{\text{sgn}})  a_{n,\text{sgn}},
  \label{eq:nfs_change_of_probability}
\end{equation}
with $\pi_-=1-\pi$ and $\pi_{+}=\pi$.\ These steps, alongside the
conditions set out in Theorem~\ref{thm:univariate_vc}, lead to an
extremal types theorem for left and right extremes, presented below by
Proposition~\ref{prop:min_max_stable}.\ \color{black}
\begin{proposition}
  \label{prop:min_max_stable}
  \color{black} If for some constants $c_n$ and $d_n>0$, and
  $\gamma_n$ and $\delta_n>0$, we have
  \begin{IEEEeqnarray*}{rCl}
    \PR\left(\frac{m_n + \gamma_n}{\delta_{n}} > \zL, \frac{M_n -
        c_{n}}{d_{n}} < \zR\right)
    &\wk& G(-\zL, \zR)
    , \quad \text{as $n \to \infty$},
    \label{eq:max_min_convergence}
  \end{IEEEeqnarray*}
  for some distribution $G$, with non-degenerate margins, then $G$ is
  of the same type as $G_0$ with $\pi = 1/2$, that is, there exist
  location and scale parameters $\mu_+$ and $\sigma_+>0$, and $\mu_-$
  and $\sigma_->0$, such that
  $G(-\zL, \zR)=G_0(-(\zL - \mu_-)/\sigma_-, (\zR-\mu_+)/\sigma_+\mid
  1/2)$.\ Conversely, a distribution function of the same type as $G_0$ with
  $\pi=1/2$ can appear as such a limit.
\end{proposition}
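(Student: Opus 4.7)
The plan is to combine the explicit joint convergence~\eqref{eq:ETT_limit}, the reparameterisation identity~\eqref{eq:nfs_change_of_probability}, and a marginal application of Khinchin's convergence-of-types theorem.  The key observation that makes a coordinate-wise argument possible is that $G_0(\,\cdot\mid \pi)$ factorises as a product of one-dimensional distributions, so it suffices to match each marginal of the candidate limit $G$ separately against the corresponding marginal of $G_0$.

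For the direct part, suppose that the joint convergence in the statement holds under normalising constants $(\gamma_n,\delta_n,c_n,d_n)$ with non-degenerate limit $G$.\ Under the standing assumptions of Theorem~\ref{thm:univariate_vc}, the convergence~\eqref{eq:ETT_limit} also furnishes a non-degenerate joint limit $G_{0}(\,\cdot\mid \pi)$ under the natural constants $(b_{n,\sfL},a_{n,\sfL},b_{n,\sfR},a_{n,\sfR})$.\ Passing to marginals, both $(M_{n}-c_n)/d_n$ and $(M_{n}-b_{n,\sfR})/a_{n,\sfR}$ converge weakly to non-degenerate distributions, as do their counterparts for $m_n$.\ Khinchin's convergence-of-types theorem then yields constants $\mu_+\in\RR$, $\sigma_+>0$, $\mu_-\in\RR$, $\sigma_->0$ that match the two sets of marginal limits via affine transformations.\ Since $G_0(\,\cdot\mid \pi)$ factorises as the product of its marginals, owing to the exponent in~\eqref{eq:ETT_limit_form} being the sum of the one-dimensional exponents $(1-\pi)\LambdaL$ and $\pi\LambdaR$, and sample minima and maxima from an i.i.d.\ sequence are asymptotically independent, the joint limit $G$ factorises as well; combining these yields
\[
G(-\zL,\zR)=G_0\bigl(-(\zL-\mu_-)/\sigma_-,\,(\zR-\mu_+)/\sigma_+\bigm| \pi\bigr),
\]
so $G$ is of the same type as $G_0(\,\cdot\mid \pi)$.\ Reapplying~\eqref{eq:nfs_change_of_probability} with $\lambda_-=\lambda_+=1/2$ shows that $G_0(\,\cdot\mid \pi)$ is itself of the same type as $G_0(\,\cdot\mid 1/2)$, and composing the two affine changes of variable delivers the claimed conclusion.\ The converse is then immediate:\ starting from $G_0(\,\cdot\mid 1/2)$, the change-of-constants recipe~\eqref{eq:nfs_change_of_probability} produces normalising constants $(b_{n,\sfL}^\star,a_{n,\sfL}^\star,b_{n,\sfR}^\star,a_{n,\sfR}^\star)$ under which the limit in~\eqref{eq:ETT_limit} is exactly $G_0(\,\cdot\mid 1/2)$, so this distribution indeed appears as a non-degenerate joint limit.

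The main obstacle is justifying the factorisation of $G$, which is what allows the bivariate convergence-of-types argument to be carried out coordinate by coordinate.\ This factorisation rests on the asymptotic independence of $m_n$ and $M_n$, which is built into the derivation of~\eqref{eq:ETT_limit} from~\eqref{eq:min_max_prelimit}:\ the two-sided interval probability $n\overline{\PR}_n$ decouples into upper and lower tail contributions with coupling terms that are negligible at the relevant scales provided by the conditional convergences~\eqref{eq:RV_univariate_R_cond} and~\eqref{eq:RV_univariate_L_cond}.\ Once this product structure is in hand, all remaining steps reduce to routine univariate applications of Khinchin's theorem together with the algebraic identity~\eqref{eq:nfs_change_of_probability}.
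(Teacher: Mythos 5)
Your proof is correct, but it takes a genuinely different route from what the paper indicates.\ The paper says its proof ``follows closely the proof of the classical extremal types theorem presented in Section~1.4 in \cite{leadling83},'' meaning the intended argument works \emph{ab initio}:\ deduce from the hypothesised joint convergence that $G$ satisfies the max-min stability identity~\eqref{eq:max_stability_minmax}, derive the resulting functional equations for the marginal exponent functions, solve them to obtain the GEV forms, and use the factorisation of the pre-limit probability to conclude that $G$ is a product.\ In contrast, you treat~\eqref{eq:ETT_limit} (with the natural constants $b_{n,\sfL}, a_{n,\sfL}, b_{n,\sfR}, a_{n,\sfR}$) as an already-established \emph{reference} limit, apply Khinchin's convergence-of-types theorem to each marginal to read off the affine relations $\mu_\pm,\sigma_\pm$, transfer this to the joint limit via factorisation of both $G$ and $G_0(\cdot\mid\pi)$, and finally invoke the reparameterisation identity~\eqref{eq:nfs_change_of_probability} with $\lambda_-=\lambda_+=1/2$ to normalise to $\pi=1/2$.\ What each buys:\ your approach is shorter and fits naturally into the flow of the section, and it makes the role of~\eqref{eq:nfs_change_of_probability} explicit; the paper's approach is self-contained and does not presuppose that any reference limit has been established under some other set of constants.\ One small point worth making explicit:\ you write ``Under the standing assumptions of Theorem~\ref{thm:univariate_vc},'' but the proposition's own hypothesis --- joint convergence with non-degenerate margins --- already \emph{implies} the two marginal domain-of-attraction conditions~\eqref{eq:RV_univariate_standard_RL} via the univariate extremal types theorem applied to $M_n$ and to $m_n = -\max_i(-X_i)$, so there is no circularity, but it would be cleaner to deduce~\eqref{eq:RV_univariate_standard_RL} from the hypothesis rather than assume it separately.
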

The proof of Proposition~\ref{prop:min_max_stable} follows closely the
proof of the classical extremal types theorem presented in Section 1.4
in \cite{leadling83} and is omitted for brevity.\ However, the form of
$G$ in Proposition~\ref{prop:min_max_stable} has an instructive
interpretation: since $m_n = -\max_{i=1,\dots,n}(-X_i)$, it follows at
once that $G$ arises as the limiting cumulative distribution function
of the appropriately renormalised component-wise sample-maxima of the
random sample $\{(-X_i, X_i)\,:\,i=1,\dots, n\}$.\ Thus, $G$ is a
bivariate max-stable distribution, which is confirmed by the identity
\begin{equation}
  G(-\zL, \zR)^s = G\left(-\frac{\zL -
      \mu_{-,s}}{\sigma_{-,s}},\frac{\zR -
      \mu_{+,s}}{\sigma_{+,s}}\right), \qquad \text{for any $s > 0$},
 \label{eq:max_stability_minmax}
\end{equation}
where
$\mu_{\text{sgn},s} = \mu_\text{sgn} +
\{(s^{\xi_\text{sgn}}-1)/\xi_\text{sgn}\}\sigma_\text{sgn} $ and
$\sigma_{\text{sgn}, s} = s^{\xi_{\text{sgn}}} \sigma_{\text{sgn}}$.\
Furthermore, Theorem~\ref{prop:min_max_stable} shows that ultimately,
the renormalised sample minima and maxima become independent since
$G(-z_-, z_+)= G_-(-z_-) G_+(-z_+)$, where the marginals of $G$,
defined by $G_+(z_+) := G(\infty, z_+)$ and
$G_-(z_-) := G(z_-, \infty)$, are both generalised extreme value
distributions, corresponding to the limiting non-degenerate
1-dimensional distributions of the appropriately renormalised sample
maxima drawn from the distribution of $X$ and $-X$, respectively.\

\begin{figure}[htbp!]
    \centering
    \begin{overpic}[width=\textwidth]{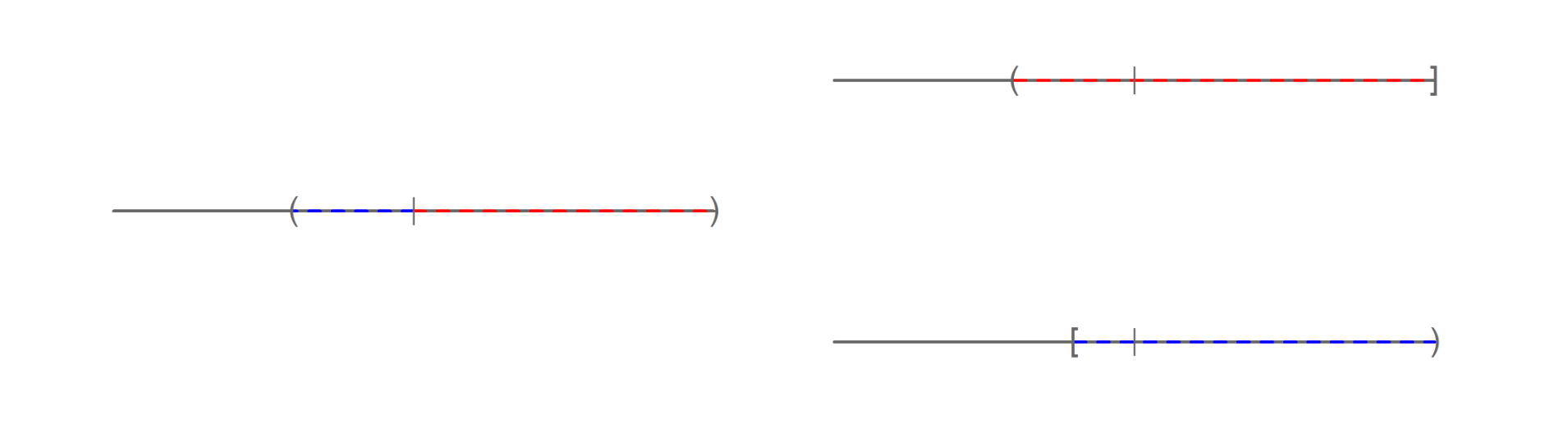}
    \put (5,10) {{\small $-\infty$}}
    \put (25.9,10) {{\small $0$}}
    \put (44.5,10) {{\small $\infty$}}

    \put (51.5,18.5) {{\small $-\infty$}}
    \put (72,18.5) {{\small $0$}}
    \put (60,18.5) {{\small $-1/\xi_{\{+1\}}$}}
    \put (90.5,18.5) {{\small $\infty$}}
    \put (73,24) {{\small \color{red}$\overline{\Espace}_{\xi_{\{+1\}}}\times \{\bm +1\}$\color{black}}}

    \put (51.5,2) {{\small $\infty$}}
    \put (72,2) {{\small $0$}}
    \put (62,2) {{\small $-1/\xi_{\{-1\}}$}}
    \put (90.5,2) {{\small $-\infty$}}
    \put (73,7.5) {{\small \color{blue}$\overline{\Espace}_{\xi_{\{-1\}}}\times \{\bm -1\}$\color{black}}}
    \end{overpic}
    \caption{Illustration of
      $\overline{\Espace}_{\xi_{\{+1\}}}\times \{+1\}$ and
      $\overline{\Espace}_{\xi_{\{-1\}}}\times \{-1\}$ using a
      probability distribution on $\mathbb{R}$ associated with a
      random variable $X$.\ In this example, the probability
      distribution is assumed to have a bounded left lower tail
      ($\xi_{\{-1\}} < 0$) and a heavy upper tail ($\xi_{\{+1\}}>0$).\
      The support of $\PR_{X}$, containing $0$, is the union of the
      coloured segments shown in the left panel.\ The red and blue
      segments, corresponding to positive and negative values of $X$,
      are mapped to $\overline{\Espace}_{\xi_{\{+1\}}}$ and
      $\overline{\Espace}_{\xi_{\{-1\}}}$, respectively, and are
      oriented according to $\{+1\}$ and $\{-1\}$ in the right panel.}
    \label{fig:T_S_to_C}
\end{figure}
While the limiting Poisson point process in Proposition
\ref{thm:univariate_vc} guarantees the convergence of the joint
distribution of renormalised sample maxima and minima, as illustrated
in Proposition~\ref{prop:min_max_stable}, it also yields non-degenerate
limiting distributions for suitably rescaled excesses above
direction-dependent thresholds.\ Specifically, we have that as $q\to 1$
\[
  \PR\left[\frac{|X| - r_{1/(1-q)}^a(X/|X|)}{r_{1/(1-q)}^b(X/|X|)} <
    z, X/|X|=w~\Big|~ |X| > r_{1/(1-q)}^a(X/|X|)\right] \wk H_w(z)
  f_{X/|X|}(w), \quad z > 0,    
\]
where $H_w(y) = 1-(1+\xi_w z)_+^{-1/\xi_w}$ is the cumulative
distribution function of the standard generalised Pareto distribution
with shape parameter $\xi_w$ and
$f_{X/|X|}(w)=\pi \mathbbm{1}_{w\in\{+1\}} +
(1-\pi)\mathbbm{1}_{w\in\{-1\}}$.\


We conclude this section by discussing a practical method for
identifying candidates for the functions $r_n^a$ and $r_n^b$, and a
connection between them, and the transformation
$x \mapsto 2 F_X(x) - 1$ based on the \textit{centre-outward
  distribution function} $2 F_X - 1$, which pushes $\PR_X$ forward to
the uniform distribution on $(-1,1)$, and can be naturally generalised
to multivariate distributions \citep{hallin2021distribution}; see
Section~\ref{sec:quantile_return_sets} for a discussion of this
transformation to the multivariate setting.\

Let $R=|X|$ and $W=X/|X|$.\ A necessary and sufficient condition for
the convergence of each summand in~\eqref{eq:RCsets_at_inf} is that
the functions $r_n^a$ and $r_n^b$ in Theorem~\ref{thm:univariate_vc}
satisfy
\begin{equation} \lim_{n\to \infty} \frac{U_w(n x) -
r_n^b(w)}{r_n^a(w)} = \frac{x^{\xi_w}-1}{\xi_w}, \quad x > 0,
  \label{eq:NS_condition}
\end{equation} where $U_w(1/(1-q))=F_{R ~|~W}^{-1}(q \mid w)$ for $q
\in (0,1)$.\ When the function $U_w$
satisfies~\eqref{eq:NS_condition}, then it follows that we can take
without loss of generality,
\begin{IEEEeqnarray}{l} r_n^b(w) =U_w(n), \quad \text{and} \quad
  r_n^a(w)=U_w(n(1+\xi_w)^{1/\xi_w})-U_w(n) \quad \text{whenever
    $\xi_w \geq -1$}.
  \label{eq:normings_univariate}
\end{IEEEeqnarray} This provides with a practical method for finding
these functions and neatly shows how information from the directional
variable is encoded in the normalizing constants through the
conditional quantile function, which captures the extremity of a point
away from $0\in S$, and the probability mass function of the
direction, which reflects the propensity to observe right or left
extremes.\

These two elements carry essential information about $\PR_X$, which
can also be expressed via the centre-outward distribution function.\
By applying total probability, we obtain the relation
\begin{equation} 2 F_X(x) - 1 = \pi \left[2 F_{R\mid W} (x\mid
1)-1\right] + (1-\pi) \left[2 \left\{1-F_{R\mid W}(-x\mid -1)\right\}
- 1\right],
  \label{eq:CO_1d}
\end{equation} explicitly showing the one-to-one
correspondence between the centre-outward distribution function and
$F_{R\mid W}$ and $f_W$, or equivalently, $F_{R\mid W}^{-1}$ and
$f_W$.\ The centre-outward quantile function $Q_{\pm}\,:\, (-1,1)\to
\RR$ is obtained by inverting $2 F_X(x) -1 = u$, with $u\in (-1,1)$.\
Assuming that the median of the distribution exists and is unique,
with $0$ corresponding to this median, so that $\pi=1/2$,
then~\eqref{eq:CO_1d} gives
\[ Q_{\pm}(u) = [F_{R\mid W}^{-1}(|u| \mid u/|u|)] (u/|u|).
\] The sets
\[ C_q = [-F_{R\mid W}^{-1}(q \mid -1), +F_{R\mid W}^{-1}(q \mid
+1)],\qquad \text{$q\in(0,1)$},
\] have the interpretation of \textit{quantile regions} in the sense
that $\PR(X \in C_q)=q$ and form a sequence of closed, connected, and
nested sets.\ Similarly, the boundaries of these quantile regions
  \[ \partial C_q = \{-F_{R\mid W}^{-1}(q \mid -1), +F_{R\mid
W}^{-1}(q \mid +1)\},\qquad \text{$q\in(0,1)$},
\] have the interpretation of \textit{quantile contours}.\

The key advantage of the centre-outward distribution and its
associated quantile function compared to traditional quantiles based
on the distribution function $F_X$, is that it does not rely on the
total ordering of $\RR$.\ This makes the centre-outward approach much
more flexible, as it naturally extends to higher dimensions, where no
such total order exist, and is particularly useful for generalizing
concepts like quantiles and probability integral transformations in
multivariate settings \citep{hallin2021distribution}.\ As we move to
higher dimensions, however, the relationship of the centre-outward
distribution and the conditional quantile function becomes less
direct, with additional complexities arising due to the structure of
probability distributions on higher-dimensional spaces.\ 
Both approaches, working with the conditional distribution of
$R\mid W$ and the directional variable $W$, or with the multivariate
analogue of the centre-outward distribution function, offer
complementary insights and each viewpoint has its own strengths.\ The
former emphasises how extremes behave relative to fixed directions
alongside knowing the frequency of the directions, while the latter
provides a broader geometric interpretation of data spread and balance
via measure transportation to a target reference distribution.\

In Section~\ref{sec:quantile_return_sets}, we explore these
perspectives in more detail, highlighting their relative strengths and
weaknesses in the context of extreme values.\ By adopting the approach
of transforming a random element $\bm X$, arising from a distribution
$\PR_{\bm X}$, into polar coordinates relative to the centre~$\bm 0$
which is assumed to be an element of $\text{supp}(\PR_{\bm X})$, and
by using the conditional quantile function, we introduce a novel
definition of quantile sets that closely parallels the quantile
regions and quantile contours of \cite{hallin2021distribution}, but
has the added benefit of being easily amenable to extrapolation beyond
the observed data.\ Specifically, our definition facilitates the
development of new refined limit theory for multivariate extremes via
the powerful convergence to the Poisson point process, in a manner
akin to the univariate case presented in this section.\ In
Section~\ref{sec:PP_characterisation} below, we present this theory
and establish key properties of the distribution of the limit Poisson
point process in association with the renormalisation of the radial
component.\ \color{black}


\subsection{Weak convergence to a Poisson point process}
\label{sec:PP_characterisation}

In the multivariate setting, a natural extension of convergences
\eqref{eq:RV_univariate_R_cond} and~\eqref{eq:RV_univariate_L_cond}
can be obtained by requiring location and scale functions of the
direction $b_n\,:\,\mathbb{S}^{d-1}\to \RR_+$ and
$a_n\,:\,\mathbb{S}^{d-1}\to \RR_+$, respectively, and a function
$\xi\,:\, \mathbb{S}^{d-1}\to \RR$, such that as $n\to\infty$,
\begin{equation}  \nu_{n, \bm w}(\cdot):= n \, \PR\left(\frac{\lVert
      \bm X\rVert-b_n(\bm X/\lVert \bm X\rVert)}{a_n(\bm X/\lVert \bm
      X\rVert)}\in\cdot~\Big|~ \bm X/\lVert \bm X\rVert=\bm w \right)
  \vg \nu_{\bm w}(\cdot)\quad \text{in
    $M_+(\overline{\Espace}_{\xi(\bm w)})$}, \quad \PR_{\bm
    X/\lVert\bm X\rVert}-a.e.,
  \label{eq:vg_cond}
\end{equation} where the limit measure $\nu_{\bm w}$ satisfies
$\Lambda_{\xi(\bm w)}(z) = [1 + \xi(\bm w) z]_+^{-1/\xi(\bm w)}$, with
$\Lambda_{\xi(\bm w)}(z):=\nu_{\bm w}(z, \sup \Espace_{\xi(\bm w)}]$.\
Unlike the univariate setting, however, where the set of permissible
directions is finite -- hence the interchange between the limit as
$n\to\infty$ and the sum over the set of directions in the proof of
Theorem~\ref{thm:univariate_vc} is legitimate -- in the multivariate
setting we have a continuum of directions and convergence
\eqref{eq:vg_cond} alone is not sufficient to guarantee that the mean
measure
\begin{equation} \nu_n(\cdot) := n \, \PR\left[\left(\frac{\lVert \bm
X\rVert-b_n(\bm X/\lVert \bm X\rVert)}{a_n(\bm X/\lVert \bm X\rVert)},
\bm X/\lVert \bm X\rVert\right) \in \cdot\right]
\end{equation} vaguely converges in $M_+(\Emult)$ to a non-degenerate
limit mean measure $\nu$, a condition that is necessary and sufficient
for the sequence of renormalised random point measures
\begin{IEEEeqnarray}{rCl} P_n := \sum_{i=1}^n\delta\left[ \frac{\lVert
\bm X_i\rVert - b_n(\bm X_i/\lVert\bm X_i\rVert)}{a_n(\bm
X_i/\lVert\bm X_i\rVert)}, \bm X_i/\lVert\bm X_i\rVert,
\frac{i}{n+1}\right],\quad n=1,2,\dots,
  \label{eq:PPconvergence}
\end{IEEEeqnarray} to weakly converge in $M_p(\Emult)$ to a
non-degenerate Poisson point process $P$.\ Here, the space $\Emult$ is
defined in a similar manner to the space $\overline{\mathbb{E}}$ in
Theorem~\ref{thm:univariate_vc}, through the disjoint union
\[
  \Emult := \bigcup_{\bm w\in \suppW}(\overline{\Espace}_{\xi(\bm
    w)}\times\{\bm w\}) \subset (-\infty, \infty] \times \SSS^{d-1}
\]
of the family of sets
$(\overline{\Espace}_{\xi(\bm w)}\,:\, \bm w \in \suppW)$, and each
space
$\overline{\Espace}_{\xi(\bm w)}\times \{\bm w\} = \{(z, \bm
w)\,:\,z\in\overline{\Espace}_{\xi(\bm w)}\}$ is interpreted as a
subset of the extended real line, passing through the origin and
oriented in the direction of the unit vector $\bm w$.\ For simplicity
and to avoid overloading the notation, we do not explicitly show the
dependence of the space $\Emult$ on the distribution
$\mathbb{P}_{\bm X}$, although this dependence is inherent in its
definition.\ An illustration of this construction is presented in
Figure~\ref{fig:T_S_to_C}.
\begin{figure}[htbp!]  \centering
    \begin{overpic}[width=\textwidth]{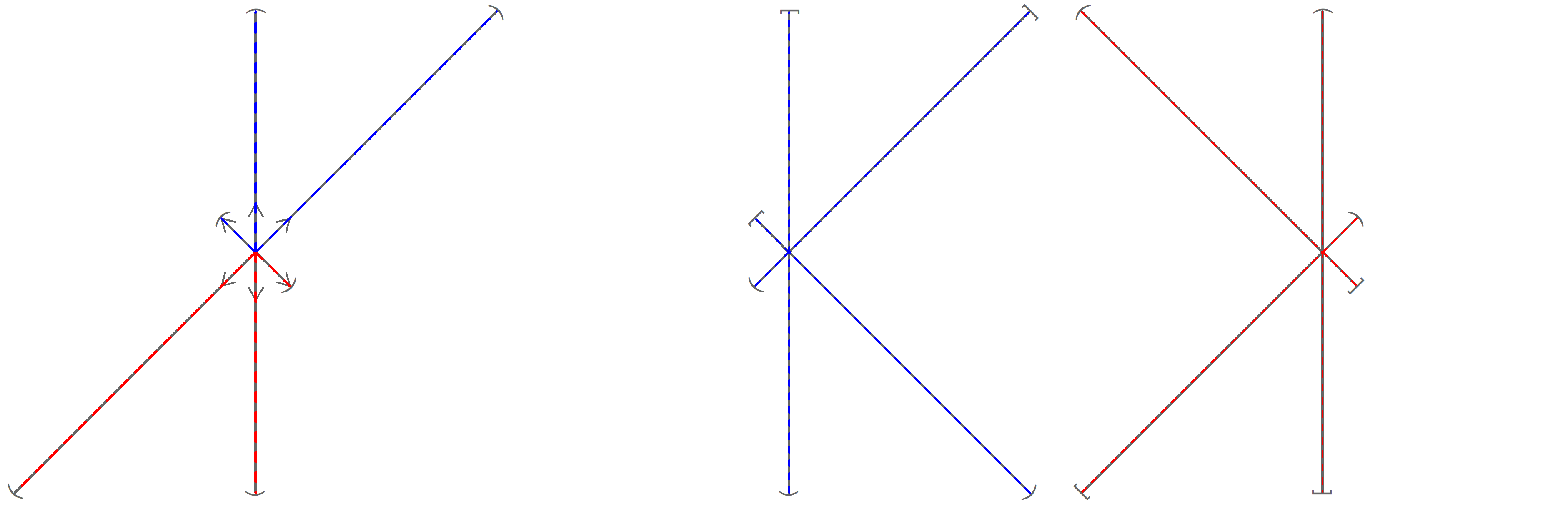} \put
(29.5,16.5) {{\small $x_1$}} \put (13.5,31) {{\small $x_2$}} \put
(18.5,17.5) {{\color{blue}\tiny $\bm w_1$\color{black}}} \put
(16.5,19) {{\color{blue}\tiny $\bm w_2$\color{black}}} \put (12,19)
{{\color{blue}\tiny $\bm w_3$\color{black}}} \put (11,14.5)
{{\color{red}\tiny $-\bm w_1$\color{black}}} \put (13,12.5)
{{\color{red}\tiny $-\bm w_2$\color{black}}} \put (18.5,13)
{{\color{red}\tiny $-\bm w_3$\color{black}}}

    \put (19,21) {\rotatebox{45}{\color{blue}\tiny
$\text{supp}(\lVert \bm X\rVert \mid \bm w_1) = (0,\infty)$}} \put
(0,1.5) {\rotatebox{45}{\color{red}\tiny $\text{supp}(\lVert \bm X\rVert
\mid \bm -w_1) = (0,\infty)$}}

    \put (41,19) {{\color{blue}\tiny $-1/\xi(\bm w_3)$\color{black}}}
\put (62,0.5) {{\color{blue}\tiny $-\infty$\color{black}}} \put
(54.5,22.5) {\rotatebox{45}{\color{blue}\tiny
$\overline{\Espace}_{\xi(\bm w_1)}\times \{\bm w_1\}$\color{black}}}
\put (51.5,31.5) {{\color{blue}\tiny $\infty$\color{black}}} \put
(51.5,0.5) {{\color{blue}\tiny $-\infty$\color{black}}} \put (51,21)
{\rotatebox{90}{\color{blue}\tiny $\overline{\Espace}_{\xi(\bm
w_2)}\times \{\bm w_2\}$\color{black}}} \put (41,13)
{{\color{blue}\tiny $-1/\xi(\bm w_1)$\color{black}}} \put (63,31.5)
{{\color{blue}\tiny $\infty$\color{black}}} \put (54.5,13)
{\rotatebox{-45}{\color{blue}\tiny $\overline{\Espace}_{\xi(\bm
w_3)}\times \{\bm w_3\}$\color{black}}}

    \put (87,19) {{\color{red}\tiny $-1/\xi(-\bm w_1)$\color{black}}}
\put (69.5,0.5) {{\color{red}\tiny $\infty$\color{black}}} \put
(72,6.5) {\rotatebox{45}{\color{red}\tiny
$\overline{\Espace}_{\xi(-\bm w_1)}\times \{-\bm w_1\}$\color{black}}}
\put (81.5,31.5) {{\color{red}\tiny $\infty$\color{black}}} \put
(81,0.5) {{\color{red}\tiny $\infty$\color{black}}} \put (84.5,20)
{\rotatebox{90}{\color{red}\tiny $\overline{\Espace}_{\xi(-\bm
w_2)}\times \{-\bm w_2\}$\color{black}}} \put (87,13)
{{\color{red}\tiny $1/\xi(-\bm w_3)$\color{black}}} \put (69.5,31.5)
{{\color{red}\tiny $-\infty$\color{black}}} \put (72,29)
{\rotatebox{-45}{\color{red}\tiny $\overline{\Espace}_{\xi(-\bm
w_3)}\times \{-\bm w_3\}$\color{black}}}
    \end{overpic}
    \caption{Illustration of a subset of $\overline{\Espace}^2$ using
      a probability distribution on $\mathbb{R}^2$ associated with a
      random variable $\bm X$ whose support contains~$\bm 0$.\ The
      subset of $\overline{\Espace}^2$ is based on the set of
      directions
      $\{\bm w_1,\bm w_2, \bm w_3, -\bm w_1, -\bm w_2, -\bm w_3\}$,
      shown in the left panel.\ In this example, the conditional
      probability distribution of
      $\lVert\bm X\rVert \mid \{\bm X/\lVert\bm X\rVert=\bm w\}$ is
      assumed to have a bounded upper tail ($\xi(\bm w) < 0$) for
      $\bm w \in \{-\bm w_1, -\bm w_3\}$, a light upper tail
      ($\xi(\bm w)=0$) for $\bm w\in\{\bm w_2, -\bm w_2\}$ and a heavy
      upper tail $(\xi(\bm w)>0)$ for
      $\bm w\in\{\bm w_1, -\bm w_1\}$.\ The red and blue segments,
      corresponding to directions $\bm w$ and $-\bm w$ are mapped to
      $\overline{\Espace}_{\xi(\bm w)}$ and
      $\overline{\Espace}_{\xi(-\bm w)}$, respectively, and are
      oriented according to $\bm w$ and $-\bm w$ in the centre and
      right panels.}
\end{figure}

In Assumption~\ref{ass:main_convergence_assumptions} below, we
strengthen convergence~\eqref{eq:vg_cond} by assuming that the total
variation of the signed measure
\begin{equation} \mu_{n, \bm w}(\cdot) := n \, \PR\left(\frac{\bm
X-b_n(\bm X/\lVert \bm X\rVert)}{a_n(\bm X/\lVert \bm X\rVert)}\in
\cdot ~\Big|~\bm X/\lVert \bm X\rVert = \bm w \right) - \nu_{\bm
w}(\cdot),\quad \bm X\sim \PR_{\bm X},
  \label{eq:total_variation_measure}
\end{equation} is bounded on compact intervals by a function that is
integrable with respect to $\PR_{\bm X/\lVert\bm X\rVert}$.
\begin{assumption}
  \label{ass:main_convergence_assumptions} The distribution $\PR_{\bm
X}$ satisfies~\eqref{eq:vg_cond} and there exists a
$\PR_{\bm X/\lVert\bm X\rVert}$-integrable function $\Delta \,:\,
\mathbb{S}^{d-1}\to \RR$, such that for any $-\infty< r_{\inf} \leq
r_{\sup} \leq \infty$, there exists $n_0=n_0(r_{\inf}, r_{\sup})$ such
that
  \[ \int_{r_{\inf}}^{r_{\sup}} \left\lvert \mu_{n, \bm
w}\right\rvert(dz) \leq \Delta(\bm w), \quad \text{for all $n \geq
n_0$,}
  \] where $\lvert \mu_{n, \bm w} \rvert(\cdot)=\mu_{n, \bm
w}^+(\cdot) + \mu_{n, \bm w}^-(\cdot)$ denotes the total variation
measure of $\mu_{n, \bm w}$ defined
in~\eqref{eq:total_variation_measure}.
\end{assumption} The bounding function $\Delta$ introduced in
Assumption~\ref{ass:main_convergence_assumptions} serves as dominating
function, ensuring sufficient regularity for the vague convergence of
$\nu_n$ to $\nu$ in $M_+(\Emult)$, or equivalently, the weak
convergence of $P_n$ to $P$ in $M_p(\Emult\times (0,1))$, as
established below in Theorem~\ref{thm:PPconvergence}.\ The function
$\Delta$ need not in general be bounded and may exhibit singularities,
meaning that the convergence of $\nu_{n}$ need not be uniform on
$\suppW$.\ On the other hand, when $\Delta$ is bounded then the
convergence is uniform on $\suppW$.\

With Assumption~\ref{ass:main_convergence_assumptions} in place, we
are now in position to state our main theorem, namely
Theorem~\ref{thm:PPconvergence}, which concerns the asymptotic
behaviour of the radial and directional variables, expressed in terms
of a limit Poisson point process, after suitably renormalizing the
radius using location and scale functions of the direction,
while preserving the directional variable.
\begin{theorem}
  \label{thm:PPconvergence} Let $\{\bm X_{i}\,:\,i=1,\dots, n\}$ be a
random sample from $\PR_{\bm X}$, where $\PR_{\bm X}$ satisfies
Assumption~\ref{ass:main_convergence_assumptions} with $\bm 0
\in\text{\normalfont supp}(\PR_{\bm X})$.
  \begin{itemize}
  \item[$(i)$] As $n \to \infty$, $\nu_n(\cdot) \vg \nu (\cdot)$ in
    $M_+(\Emult)$, where the limit measure $\nu$ satisfies
    \[ \nu(h) = \iint\limits_{\Emult} h(z, \bm w) \nu_{\bm w}(d z)
\PR_{\bm X/\lVert\bm X\rVert}(d \bm w),
    \]
  \item[$(ii)$] As $n\to\infty$,
    \begin{equation} P_n \wk P \quad \text{\normalfont in $M_p(
\Emult\times (0,1))$},
      \label{eq:PntoP}
    \end{equation} where $P$ is a Poisson point process with intensity
measure
    \[ \left([1+\xi(\bm w) z]_+^{-1-1/\xi(\bm w)} d z \PR_{\bm
X/\lVert\bm X\rVert}(d \bm w)\right) d t.
    \]
  \end{itemize}
\end{theorem}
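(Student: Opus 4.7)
The plan is to establish (i) via a Fubini/dominated-convergence argument based on the conditional vague convergence in Assumption~\ref{ass:main_convergence_assumptions}, and then to deduce (ii) from (i) via a standard marked Poisson convergence result.

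For (i), I would fix a test function $h\in C_K^+(\Emult)$ and use the tower property (since the $\bm X_i$ are i.i.d.) to write
\[
\nu_n(h)=\int_{\SSS^{d-1}}\nu_{n,\bm w}\bigl(h(\cdot,\bm w)\bigr)\,\PR_{\bm X/\lVert\bm X\rVert}(d\bm w).
\]
By convergence~\eqref{eq:vg_cond}, the inner integrand tends to $\nu_{\bm w}(h(\cdot,\bm w))$ for $\PR_{\bm X/\lVert\bm X\rVert}$-a.e.\ $\bm w$. To swap limit and outer integral, I would exploit the total-variation bound in the same assumption: since $h$ has compact support in $\Emult$, its radial projection lies in some compact interval $[r_{\inf},r_{\sup}]$, and for $n\geq n_0$ the triangle inequality applied to $\mu_{n,\bm w}=\nu_{n,\bm w}-\nu_{\bm w}$ yields
\[
\bigl\lvert\nu_{n,\bm w}(h(\cdot,\bm w))\bigr\rvert\leq \lVert h\rVert_\infty\bigl(\Delta(\bm w)+\nu_{\bm w}([r_{\inf},r_{\sup}])\bigr),
\]
which is $\PR_{\bm X/\lVert\bm X\rVert}$-integrable. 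Dominated convergence then gives $\nu_n(h)\to \nu(h)$ and hence the vague convergence on $\Emult$.

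For (ii), I would observe that the points $(Z_{n,i},\bm W_i):=\bigl((\lVert\bm X_i\rVert - b_n(\bm W_i))/a_n(\bm W_i),\bm W_i\bigr)$, $i=1,\dots,n$, are i.i.d.\ elements of $\Emult$ with common distribution $\nu_n/n$, and that the marks $i/(n+1)$ form a deterministic equispaced grid whose empirical measure converges weakly to Lebesgue measure on $(0,1)$. Combined with (i), the standard marked Poisson convergence theorem for triangular arrays of i.i.d.\ points (see e.g.\ Proposition~3.21 of \cite{resn87}) yields $P_n\wk P$ in $M_p(\Emult\times(0,1))$, with $P$ a Poisson point process of intensity $\nu(dz,d\bm w)\otimes dt$. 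The final step is algebraic: differentiating the exponent identity $\Lambda_{\xi(\bm w)}(z)=[1+\xi(\bm w)z]_+^{-1/\xi(\bm w)}$ gives the density $\nu_{\bm w}(dz)=[1+\xi(\bm w)z]_+^{-1-1/\xi(\bm w)}\,dz$, which together with (i) delivers the stated form of the limit intensity.

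The hard part will be part~(i), more precisely the justification of domination across the continuum of fibres $\overline{\Espace}_{\xi(\bm w)}\times\{\bm w\}$ making up $\Emult$. Although these fibres vary dramatically with the sign of $\xi(\bm w)$ (truncated when negative, unbounded when positive), compactness of $h$'s support forces its radial projection into a single compact interval valid uniformly over directions; Assumption~\ref{ass:main_convergence_assumptions} then provides a $\bm w$-uniform threshold $n_0$, which is the precise ingredient making the dominated convergence step legitimate. Once (i) is in hand, (ii) reduces to an application of off-the-shelf Poisson limit theorems, which can alternatively be carried out by comparing Laplace functionals $\EE[\exp(-P_n(h))]=\prod_{i=1}^n\EE\bigl[\exp(-h(Z_{n,1},\bm W_1,i/(n+1)))\bigr]$ with the Poisson Laplace functional via a Taylor expansion and (i).
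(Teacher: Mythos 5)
Your plan for part (i) follows the paper's route almost exactly: disintegrate $\nu_n(h)$ over directions, invoke the fibre-wise convergence~\eqref{eq:vg_cond}, and close with dominated convergence using the compact support of $h$ together with Assumption~\ref{ass:main_convergence_assumptions}. Part (ii) is indeed a standard consequence of part (i) via the Poisson convergence theorem for triangular arrays; the paper leaves that step implicit, and your citation of Resnick's result (or, equivalently, the Laplace-functional comparison) is the right one.

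The one place you should tighten is the domination step. You bound $\lvert\nu_{n,\bm w}(h(\cdot,\bm w))\rvert$ itself, which forces the dominating function to contain the term $\nu_{\bm w}([r_{\inf},r_{\sup}])$, and you assert without argument that this is $\PR_{\bm X/\lVert\bm X\rVert}$-integrable. That is not immediate: for directions with $\xi(\bm w)>0$ and $-1/\xi(\bm w)>r_{\inf}$, the interval $[r_{\inf},r_{\sup}]$ overhangs the left endpoint of $\overline{\Espace}_{\xi(\bm w)}$ and $\nu_{\bm w}([r_{\inf},r_{\sup}])$ can diverge, so integrability requires a separate justification. The cleanest fix, and the one the paper uses, is to dominate the \emph{difference} instead: $\lvert\nu_{n,\bm w}(h(\cdot,\bm w))-\nu_{\bm w}(h(\cdot,\bm w))\rvert = \lvert\mu_{n,\bm w}(h(\cdot,\bm w))\rvert \leq \lVert h\rVert_\infty(\bm w)\,\lvert\mu_{n,\bm w}\rvert([r_{\inf},r_{\sup}]) \leq \lVert h\rVert_\infty(\bm w)\,\Delta(\bm w)$, which is integrable directly from the assumption, and then apply dominated convergence to the pointwise-vanishing sequence $\nu_{n,\bm w}(h)-\nu_{\bm w}(h)$. (If you prefer to keep your version, you can close the gap by noting $\nu_{\bm w}([r_{\inf},r_{\sup}]) \leq \nu_{n_0,\bm w}([r_{\inf},r_{\sup}]) + \lvert\mu_{n_0,\bm w}\rvert([r_{\inf},r_{\sup}]) \leq n_0 + \Delta(\bm w)$, but dominating the difference is simpler and is what the assumption is designed for.) Using the direction-dependent sup $\lVert h\rVert_\infty(\bm w)$ rather than the global $\lVert h\rVert_\infty$ is also worth retaining, as it switches off the bound on fibres where $h$ vanishes.
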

A proof is given in Appendix~\ref{sec:PP_justification}.\ 
Theorem~\ref{thm:PPconvergence}, which covers Theorem
\ref{thm:univariate_vc} as a special case, provides a foundation for
new asymptotic theory associated with the probabilistic behaviour of
multivariate extremes, leading to new insights that stem from limit
laws which parallel those typically appearing in extreme value theory,
namely the limit laws associated with the renormalised component-wise
maxima, $r$-largest order statistics and exceedances above high
thresholds.\ 

Building on the argument presented in Appendix
\ref{sec:univariate_supplementary}, Proposition
\ref{prop:avoidance_probability} establishes a new limit law from the
the void probability of the limit Poisson point process $P$ in
Theorem~\ref{thm:PPconvergence} and naturally generalises Proposition
\ref{prop:min_max_stable}, covering the situation where in the limit
as $n\to \infty$, all atoms of the renormalised random point measure
$P_n$ fall in a set.\ 
\begin{proposition}
  \label{prop:avoidance_probability} Let
  $\{\bm X_i\,:\,i=1,\dots, n\}$ be a sequence of independent and
  identically distributed random variables from $\PR_{\bm X}$, where
  $\PR_{\bm X}$ satisfies
  Assumption~\ref{ass:main_convergence_assumptions}.\ Then
  \begin{equation} \PR\left[\left\{\left(\frac{\lVert\bm
X_i\rVert-b_n(\bm X_i/\lVert\bm X_i\rVert)}{a_n(\bm X_i/\lVert\bm
X_i\rVert)}, \bm X_i/\lVert \bm X_i\rVert\right)\in B\
\,:\,i=1,\dots,n\right\}\right] \wk e^{-\Lambda(B)}\quad \text{in
$M_1(\Emult)$},
    \label{eq:conv_max_stab}
  \end{equation} as $n\to\infty$, where $\Lambda(B):=\nu\big((\Emult
\setminus B)\times (0,1)\big)$.
\end{proposition}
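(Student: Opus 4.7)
The proof reduces to applying the convergences from Theorem~\ref{thm:PPconvergence} to an avoidance event. Introduce the shorthand
\[
Y_{n,i} := \left(\frac{\lVert\bm X_i\rVert-b_n(\bm X_i/\lVert\bm X_i\rVert)}{a_n(\bm X_i/\lVert\bm X_i\rVert)},\ \frac{\bm X_i}{\lVert\bm X_i\rVert}\right), \quad i=1,\dots,n,
\]
so that the event in~\eqref{eq:conv_max_stab} is $\{Y_{n,i}\in B\text{ for all }i\}$, equivalently $\{P_n\bigl((\Emult\setminus B)\times(0,1)\bigr)=0\}$ with $P_n$ as in~\eqref{eq:PPconvergence}.

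The first route uses Theorem~\ref{thm:PPconvergence}(i) together with independence. Writing $p_n := \PR(Y_{n,1}\in\Emult\setminus B)$, independence of the $\bm X_i$ gives $\PR(Y_{n,i}\in B\text{ for all }i)=(1-p_n)^n$, while by definition $np_n = \nu_n(\Emult\setminus B)$. Provided $\Emult\setminus B$ is a relatively compact $\nu$-continuity set, part $(i)$ yields $np_n \to \nu(\Emult\setminus B) = \Lambda(B)$, where the last equality uses that $\Lambda(B)=\nu\bigl((\Emult\setminus B)\times(0,1)\bigr)$ and that Lebesgue measure on $(0,1)$ is unity. The standard real-analytic fact that $np_n\to c$ implies $(1-p_n)^n\to e^{-c}$ then concludes the argument.

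A second, more conceptual route is to apply Theorem~\ref{thm:PPconvergence}(ii) directly. Since $\PR(Y_{n,i}\in B\text{ for all }i) = \PR\bigl(P_n((\Emult\setminus B)\times(0,1))=0\bigr)$ and the void-probability map $m\mapsto \mathbbm{1}_{\{m(A)=0\}}$ is continuous at $m$ whenever $m(\partial A)=0$, the weak convergence $P_n\wk P$ in $M_p(\Emult\times(0,1))$ gives
\[
\PR\bigl(P_n((\Emult\setminus B)\times(0,1))=0\bigr) \longrightarrow \PR\bigl(P((\Emult\setminus B)\times(0,1))=0\bigr) = \exp\{-\EE[P((\Emult\setminus B)\times(0,1))]\} = e^{-\Lambda(B)},
\]
using the Poisson void-probability identity and the product form of the intensity of $P$. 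Both routes are essentially equivalent, the second being a clean specialisation of the first.

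The main obstacle, common to both approaches, is a regularity condition on $B$: one needs $\Emult\setminus B$ to be relatively compact in $\Emult$ with $\nu(\partial(\Emult\setminus B))=0$, so that vague (respectively weak) convergence passes to the indicator of $\Emult\setminus B$ (respectively to the void functional). These conditions are natural for the complements of the probability sets studied in this paper, but the statement of the proposition should be read with this implicit restriction on $B$ in mind.
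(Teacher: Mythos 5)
Your proof is correct and matches the paper's intended argument: the paper states, just before the proposition, that the result is obtained from the void probability of the limit Poisson point process (your second route), while ``building on'' the argument of Appendix~\ref{sec:univariate_supplementary}, which is precisely the $n\,\overline{\PR}_n \to c \Rightarrow (1-\overline{\PR}_n)^n\to e^{-c}$ mechanism in your first route. Your observation that the statement implicitly requires $\Emult\setminus B$ to be relatively compact with $\nu(\partial(\Emult\setminus B))=0$ is a genuine and correctly identified gap in the proposition as written, which the paper leaves unstated.
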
 Unlike the univariate case, the limit law in
Proposition~\ref{prop:avoidance_probability} is not solely tied to the
convergence of renormalised \textit{component-wise} maxima or minima,
but encompasses a broader range of scenarios.\ It can be seen as a
special case of Proposition~\ref{prop:r_largest}, which covers the
situation where in the limit as $n\to\infty$, all but finitely many
atoms of the renormalised random point measure $P_n$ fall in a set.\
\begin{proposition}
  \label{prop:r_largest} Let $\{\bm X_i\,:\,i=1,\dots, n\}$ be a
  sequence of independent and identically distributed random variables
  from $\PR_{\bm X}$, where $\PR_{\bm X}$ satisfies
  Assumption~\ref{ass:main_convergence_assumptions}.\ Then for any
  $r=1,2,\dots$,
  \begin{equation} \PR\left[\sum_{i=1}^n1\left \{\left(\frac{\lVert\bm
X_i\rVert-b_n(\bm X_i/\lVert\bm X_i\rVert)}{a_n(\bm X_i/\lVert\bm
X_i\rVert)}, \bm X_i/\lVert \bm X_i\rVert\right)\in B
\right\}=n-r+1\right] \wk
e^{-\Lambda(B)}\sum_{k=0}^{r-1}\frac{\Lambda(B)^k}{k!} \text{ in
$M_1(\Emult)$},
  \end{equation} as $n\to \infty$, where $\Lambda(B):=\nu\big((\Emult
\setminus B)\times (0,1)\big)$.
\end{proposition}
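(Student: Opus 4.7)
The plan is to deduce Proposition~\ref{prop:r_largest} directly from the weak convergence $P_n \wk P$ in $M_p(\Emult \times (0,1))$ established in Theorem~\ref{thm:PPconvergence}, by rewriting the event of interest as a count of atoms of $P_n$ in a specific Borel subset of $\Emult \times (0,1)$.

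First, I would observe that because $i/(n+1) \in (0,1)$ for every $i=1,\dots,n$, the sum appearing in the statement is exactly $P_n(B \times (0,1))$. Hence
\[
  P_n\big((\Emult \setminus B) \times (0,1)\big) \;=\; n - P_n(B \times (0,1)).
\]
Since the stated limit is the cumulative distribution function of a $\mathrm{Poisson}(\Lambda(B))$ variable at $r-1$, and the accompanying text speaks of ``all but finitely many atoms falling in $B$'', I interpret the event in the proposition as $\{P_n((\Emult \setminus B) \times (0,1)) \leq r-1\}$, i.e.\ at most $r-1$ atoms of $P_n$ escape $B$ (the equality version in the displayed formula appears to be a typographical slip relative to the phrasing of the motivation).

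Second, I would invoke the standard continuity theorem for evaluation functionals on $M_p$: the map $M \mapsto M(A)$ is continuous in the vague topology at any configuration $M$ with $M(\partial A) = 0$, provided $A$ is relatively compact. Setting $A := (\Emult \setminus B) \times (0,1)$, under the tacit regularity hypothesis $\nu(\partial B) = 0$ and $\Lambda(B) < \infty$, the limiting intensity $\nu(dz, d\bm w)\,dt$ assigns no mass to $\partial A$. The continuous mapping theorem applied to $P_n \wk P$ then yields $P_n(A) \cind P(A)$. Since $P$ is a Poisson point process with the intensity of Theorem~\ref{thm:PPconvergence}, $P(A)$ is Poisson with mean $\Lambda(B)$ (the $t$-factor contributes $\mathrm{Leb}((0,1))=1$). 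The conclusion follows:
\[
  \PR\big(P_n(A) \leq r-1\big) \;\longrightarrow\; \PR\big(\mathrm{Poisson}(\Lambda(B)) \leq r-1\big) \;=\; e^{-\Lambda(B)} \sum_{k=0}^{r-1} \frac{\Lambda(B)^k}{k!}.
\]
The case $r=1$ recovers Proposition~\ref{prop:avoidance_probability} (the void probability), providing a consistency check and showing that this is the natural generalisation to the ``$r$ largest'' regime.

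The main obstacle is the joint requirement that $A$ be relatively compact in $\Emult \times (0,1)$ and that $\nu(\partial A) = 0$. In the extreme-value sets of interest, $B$ is typically chosen as the complement of a compact exceedance region, so relative compactness of $\Emult \setminus B$ holds; the continuity-set hypothesis $\nu(\partial B) = 0$ is inherited from the absolute continuity encoded in Assumption~\ref{ass:main_convergence_assumptions}. Should either condition fail in a borderline case, I would fall back on the Laplace functional characterisation of $P_n \wk P$, plugging in a sequence of continuous compactly supported test functions that approximate $\mathbbm{1}_A$ from below and above and concluding by a truncation and monotone-class argument, exploiting the Poisson form of the limit to identify the moments.
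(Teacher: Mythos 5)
The paper supplies no explicit proof of this proposition---it is left as a direct corollary of the point-process convergence in Theorem~\ref{thm:PPconvergence}---and the argument you give, namely the continuous-mapping theorem applied to the evaluation functional $\mu\mapsto \mu\big((\Emult\setminus B)\times(0,1)\big)$ at a relatively compact continuity set, followed by the Poisson law of $P$ on that set, is exactly the intended one. Your reading that ``$= n-r+1$'' in the display must be ``$\geq n-r+1$'' (equivalently $P_n((\Emult\setminus B)\times(0,1))\leq r-1$) is correct; for $r\geq 2$ the equality version would instead give the single term $e^{-\Lambda(B)}\Lambda(B)^{r-1}/(r-1)!$ rather than the stated cumulative sum, so you have identified a genuine typographical slip in the statement.
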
 

Finally, Proposition~\ref{prop:thresh_exceedances} focuses on
exceedances above directionally-dependent thresholds.\ It establishes
weak convergence in $M_1(\Emult_+)$, the subset of $\Emult$
restricted to positive elements, 
that is,
\[
  \Emult_+ := \bigcup_{\bm w\in\suppW} \left((0, \sup
    \overline{\Espace}_{\xi(\bm w)} ]\times\{\bm w\}\right)\subset
  \Emult.
\]
Note that for any $\xi \in \mathbb{R}$, $0$ is an interior point of
$\overline{\Espace}_{\xi}$ and thus, $\Emult_+$ is non-empty a.s..
This result highlights the role of thresholds
$b_{1/(1-q)}(\bm X/\lVert\bm X\rVert)$ and demonstrates how the
Poisson point process characterisation extends to such exceedances.
\begin{proposition}
  \label{prop:thresh_exceedances} Let $\bm X\sim\PR_{\bm X}$ where
$\PR_{\bm X}$ satisfies
Assumption~\ref{ass:main_convergence_assumptions}.\ Then
  \begin{IEEEeqnarray*}{rCl} &&\PR\left[\left(\frac{\lVert \bm X\rVert
- b_{1/(1-q)}(\bm X/\lVert \bm X\rVert)}{a_{1/(1-q)}(\bm X/\lVert \bm
X\rVert) }, \bm X/\lVert \bm X\rVert\right) \in A ~\Big|~ \lVert \bm
X\rVert > b_{1/(1-q)}\left(\bm X/\lVert \bm X\rVert\right)\right]\wk
\Lambda(\Emult \setminus A) \text{ in $M_1(\Emult_+)$},
    \label{eq:MRVgauge}
  \end{IEEEeqnarray*} as $q\to 1$, where $\Lambda(\Emult \setminus
A):=\nu\big(A\times (0,1)\big)$.
\end{proposition}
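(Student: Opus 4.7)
The plan is to recognise the conditional probability on the left-hand side as a ratio involving the unconditional mean measure $\nu_n$ of the renormalised pair, and then pass to the limit using the vague convergence $\nu_n\vg\nu$ established in Theorem~\ref{thm:PPconvergence}(i). Writing $n=1/(1-q)$, I observe that the conditioning event $\{\lVert\bm X\rVert > b_n(\bm X/\lVert\bm X\rVert)\}$ coincides with the event that the renormalised pair lies in $\Emult_+$, since $\Emult_+$ consists of all $(z,\bm w)$ with $z>0$. Hence, for any Borel $A\subset\Emult_+$, the conditional probability equals $\nu_n(A)/\nu_n(\Emult_+)$, and the claim reduces to showing that $\nu_n(A)\to\nu(A)$ and $\nu_n(\Emult_+)\to\nu(\Emult_+)$ for each continuity set $A$ of $\nu$.

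The limit of the denominator is computed directly. Using the product form $\nu(dz,d\bm w) = \nu_{\bm w}(dz)\,\PR_{\bm X/\lVert\bm X\rVert}(d\bm w)$ from Theorem~\ref{thm:PPconvergence}(i), together with the identity $\nu_{\bm w}\big((0,\sup\Espace_{\xi(\bm w)}]\big) = \Lambda_{\xi(\bm w)}(0) = 1$ valid for every $\bm w\in\suppW$, Fubini's theorem yields $\nu(\Emult_+)=1$. Under the convention $\Lambda(\Emult\setminus A):=\nu(A\times(0,1))$ adopted in the statement, this specialises to $\Lambda(\emptyset)=1$ when $A=\Emult_+$, so dividing numerator by denominator reproduces the announced limit $\nu(A)=\Lambda(\Emult\setminus A)$.

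The main obstacle is that $\Emult_+$ is not relatively compact in $\Emult$: the radial coordinate ranges over all of $(0,\sup\Espace_{\xi(\bm w)}]$, which is unbounded whenever $\xi(\bm w)\geq 0$, so the vague convergence of Theorem~\ref{thm:PPconvergence}(i) does not apply to the set $\Emult_+$ itself. To circumvent this, I would exhaust $\Emult_+$ by an increasing family of relatively compact continuity sets $K_m\uparrow\Emult_+$, for instance via a direction-dependent cut-off of the form $K_m = \bigcup_{\bm w}\big([m^{-1}, z_m(\bm w)]\times\{\bm w\}\big)$ with $z_m(\bm w)\nearrow\sup\Espace_{\xi(\bm w)}$, and control the residual $\nu_n(\Emult_+\setminus K_m)$ by invoking Assumption~\ref{ass:main_convergence_assumptions}: the uniform-in-$n$ total variation bound $|\mu_{n,\bm w}|\leq \Delta(\bm w)$ on compact intervals, combined with the $\PR_{\bm X/\lVert\bm X\rVert}$-integrability of $\Delta$, justifies the exchange of limit and integral by dominated convergence, giving $\limsup_n \nu_n(\Emult_+\setminus K_m) \leq \nu(\Emult_+\setminus K_m)\to 0$ as $m\to\infty$. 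The same reasoning applied to the continuity set $A$ yields $\nu_n(A)\to\nu(A)$, and taking the ratio completes the argument.
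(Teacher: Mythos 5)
Your argument is essentially the right one, and it matches the reasoning the paper leaves implicit (Proposition~\ref{prop:thresh_exceedances} carries no separate proof in the appendix; the remark immediately following it supplies the single key observation, namely $\nu(\Emult_+\times(0,1))=1$, which is exactly the computation you carry out via the fiberwise identity $\nu_{\bm w}((0,\sup\Espace_{\xi(\bm w)}])=\Lambda_{\xi(\bm w)}(0)=1$). Reducing the conditional probability to the ratio $\nu_n(A)/\nu_n(\Emult_+)$ with $n=1/(1-q)$ and then appealing to $\nu_n\vg\nu$ from Theorem~\ref{thm:PPconvergence}(i) is the correct route.

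The one place where your reasoning goes somewhat adrift is the claimed obstacle: $\Emult_+$ \emph{is} relatively compact in $\Emult$. By construction $\Emult\subset(-\infty,\infty]\times\SSS^{d-1}$, where the right endpoint $\sup\Espace_{\xi(\bm w)}$ (equal to $+\infty$ when $\xi(\bm w)\geq 0$) is adjoined as a genuine point of $\overline{\Espace}_{\xi(\bm w)}$. Hence each radial fibre $(0,\sup\overline{\Espace}_{\xi(\bm w)}]$ has compact closure $[0,\sup\overline{\Espace}_{\xi(\bm w)}]$, the closure of $\Emult_+$ sits inside the compact set $[0,\infty]\times\SSS^{d-1}$, and the boundary of $\Emult_+$ has $\nu$-measure zero (the fibre measures $\nu_{\bm w}$ have Lebesgue densities, so $\nu_{\bm w}(\{0\})=0$). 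Consequently $\Emult_+$ is a relatively compact $\nu$-continuity set and vague convergence gives $\nu_n(\Emult_+)\to\nu(\Emult_+)=1$ directly, with no exhaustion argument needed. Your proposed cut-off family $K_m\uparrow\Emult_+$ with residual control via Assumption~\ref{ass:main_convergence_assumptions} is not wrong, but it is unnecessary and quietly re-derives what the compactification already provides; note also that Assumption~\ref{ass:main_convergence_assumptions} explicitly permits $r_{\sup}=\infty$, so there is no issue with applying the dominating function on $(0,\infty]$ directly. Finally, a small notational slip: when $A=\Emult_+$ the quantity in question is $\Lambda(\Emult\setminus\Emult_+)=\nu(\Emult_+\times(0,1))=1$, not $\Lambda(\emptyset)$; the latter would be $\nu(\Emult\times(0,1))$, which is infinite.
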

We note that the limit distribution that appears in
Proposition~\ref{prop:thresh_exceedances} is directly expressed in
terms of $\Lambda$ and not in terms of the ratio
\[ \frac{\Lambda((\Emult \setminus A))}{\Lambda(\Emult \setminus
    \Emult_+)}= \frac{\nu(A\times (0,1))}{\nu(\Emult_+\times (0,1))},
\] because the restriction of $\nu$ on $\Emult_+\times (0,1)$ is a
probability measure, satisfying $\nu(\Emult_+\times(0,1))=1$.\

These propositions provide a cohesive narrative.\ They demonstrate how
the Poisson point process framework captures multivariate extremes
through void probabilities, order statistics, and threshold
exceedances.\ Each result expands upon the classical extreme value
theory, offering new insights into the structure and behaviour of
multivariate extremes.

In a similar manner to that explained in
Section~\ref{sec:background_univ_extr}, one has the flexibility to
change the measure $\PR_{\bm X/\lVert\bm X\rVert}$ in the definition
of $\nu$ in Theorem~\ref{thm:PPconvergence}.\ Here, the intuition is
that by applying admissible perturbations in the norming constants
$b_{1/(1-q)}$ and $a_{1/(1-q)}$---adjustments that preserve weak
convergence under a convergence-to-types argument---alternative
norming sequences $b_{1/(1-q)}^\star$ and $a_{1/(1-q)}^\star$ can be
selected.\ Such modifications act on the limit point measure $\nu$ by
thinning it at loci of high directional density and densifying it at
loci of low directional density, so that the directional cross-section for
the resulting radial-directional intensity is constant.\ This is
explained below in Remark~\ref{rem:alternative_renormalization}.\


\begin{remark}
  \label{rem:alternative_renormalization} The probability measure
  $\PR_{\bm X/\lVert\bm X\rVert}$ in the definition of $\nu$ can be
  changed to some other spherical (not necessarily probability)
  measure, through suitable changes in the renormalisation.\ For
  example, 
  if $\PR_{\bm X/\lVert\bm X\rVert}$ is mutually absolutely continuous
  with respect to a probability measure $\mu$ on $\suppX$, then the sequence of
  renormalised random point measures
  \[ P_n^\star := \sum_{i=1}^n\delta\left[ \frac{\lVert \bm X_i\rVert
        - b_{n}(\bm X_i/\lVert\bm X_i\rVert\,;\, \mu)}{a_n(\bm
        X_i/\lVert\bm X_i\rVert\,;\, \mu)}, \bm X_i/\lVert\bm X_i\rVert,
      \frac{i}{n+1}\right], \quad n=1,2,\dots,
  \] converges weakly to $P^\star$ in $M_p(\Emult\times (0,1))$, with
  $P^\star$ a Poisson point process having intensity measure
\[
  ([\{1+\xi(\bm w) z\}_+^{-1-1/\xi(\bm w)} d z] \mu(d \bm w)) d
t,
\] as $n\to\infty$, whenever
\begin{IEEEeqnarray}{rCl}\label{eq:b_n_star} b_n(\bm w\,;\,\mu) &\sim&
  b_n(\bm w) + [\{\varrho_\mu(\bm w)^{\xi(\bm w)}-1\}/\xi(\bm w)] a_n(\bm w), \quad \text{and} \\
  \nonumber \\ a_n(\bm w\,;\, \mu) &\sim&
  \label{eq:a_n_star} \varrho_\mu(\bm w)^{\xi(\bm w)} a_n(\bm w),      
\end{IEEEeqnarray}
as $n\to\infty$, where $\varrho_\mu:=d\PR_{\bm W}/d\mu$ denotes the
Radon--Nikodym derivative of $\PR_{\bm W}$ with respect to $\mu$.
\end{remark}
Building on Remark~\ref{rem:alternative_renormalization}, below in
Remark~\ref{rem:uniform_over_the_ball} we construct a non-linear
renormalisation of the exceedances, using the 1-dimensional
probability integral transform, producing a limit uniform distribution
over the unit ball.\ This distribution is defined by the product of
the uniform distribution over the unit interval of distances to the
origin with the uniform distribution over the unit sphere.
\begin{remark}
  \label{rem:uniform_over_the_ball} 
  If $\lVert\,\cdot\,\rVert$ denotes the Euclidean norm and
  $\PR_{\bm X/\lVert\bm X\rVert}$ is absolutely continuous with
  respect to the spherical Lebesgue measure, with positive density
  $f_{\bm X/\lVert\bm X\rVert}(\bm w)$, then
  \begin{IEEEeqnarray*}{rCl} (\bm X/\lVert \bm X\rVert) ~\big|~
    \{\lVert \bm X\rVert > b_{1/(1-q)}^\star\left(\bm X/\lVert \bm
      X\rVert\right)\}
    \label{eq:uniform_angles}
  \end{IEEEeqnarray*} converges in distribution as $q\to 1$ to a
  random variable that follows the uniform distribution on
  $\SSS^{d-1}$.\ Here,
  $b_n^\star(\bm w):=b_n(\bm w\,;\,\mu^\star$ with
  $\mu^\star$ being the uniform
  probability measure on $\SSS^{d-1}$.\

  Additionally, let $H_{\bm w}$ denoting the cumulative distribution
  function of the standard generalised Pareto distribution with shape
  parameter $\xi(\bm w)$, and define
  $a_n^\star(\bm w)=a_n(\bm w\,;\, \mu^\star)$.\ Then, 
  \[
    \left(H_{\bm X/\lvert\bm X\rVert}\left[\frac{\lVert \bm X\rVert -
          b_{1/(1-q)}^{\star}(\bm X/\lVert \bm
          X\rVert)}{a_{1/(1-q)}^{\star}(\bm X/\lVert \bm X\rVert)
        }\right], \bm X/\lVert \bm X\rVert\right) ~\Big|~ \{\lVert \bm
    X\rVert > b_{1/(1-q)}^{\star}\left(\bm X/\lVert \bm
      X\rVert\right)\}, \quad \text{as } n\to \infty,
  \]
  converges in distribution to a random variable following the uniform
  distribution over the unit ball.\ 
\end{remark} 

Similarly to the normalising functions \eqref{eq:normings_univariate}
in Section~\ref{sec:background_univ_extr}, the functions $a_n$ and
$b_n$ satisfy
\[
  b_n(\bm w) = U_{\bm w}(n), \quad \text{and}\quad a_n(\bm w) = U_{\bm
    w}(n(1+\xi(\bm w))^{1/\xi(\bm w)})-U_{\bm w}(n)\quad
  \text{whenever $\xi(\bm w) > -1$}.
\]
where $U_{\bm w}(1/(1-q))=F_{R~|~\bm W}^{-1}(q \mid \bm w)$, for
$\bm w \in \SSS^{d-1}$.\

In the subsequent
Sections~\ref{sec:quantile_return_sets}
and~\ref{sec:scaling-limit-sets}, we discuss these sequences and their
link to novel sets offering insightful risk representations in more
details.\

\subsection{Correspondence between norming functions and probability
  sets}
\label{sec:quantile_return_sets} Extending the concept of quantiles
to probability distributions on spaces beyond the ordinary real line
is still being recognised as a major challenge, with apparent
difficulties in its formulation.\ \cite{cheretal2017} ingeniously
used measure transportation maps between a distribution of interest
on $\RR^d$ and a reference distribution on the $d$-dimensional unit
ball, to define a new depth concept, termed the
\textit{Monge-Kantorovich} depth, which specialises to halfspace
depth when $d=1$.\ Based on this approach, and by leveraging the
theory of \cite{McCann95}, \cite{hallin2021distribution} proposed a
\textit{centre-outward} definition of multivariate distributions and
quantile functions, along with their empirical counterparts, which
is free of any moment assumptions.\ Central to the aforementioned
line of research is the existence of a $\PR_{\bm X}$-a.e.\ unique
gradient of a convex function that pushes the distribution
$\PR_{\bm X}$ forward to the uniform distribution over the unit
ball, this distribution is defined in
Remark~\ref{rem:uniform_over_the_ball}.\ That gradient yields a
homeomorphism and its inverse naturally qualifies as a quantile
function of a probability distribution in $\RR^d$.\


One important consideration
behind these approaches can be explained when a well-defined centre in $\suppX$ is
available for the distribution of $\bm X$, allowing for discussions of
its radius and direction, defined by
$(R,\bm W)=(\lVert\bm X\rVert,\bm X/\lVert\bm X\rVert)$, relative to a
suitable norm $\lVert \,\cdot \,\rVert$.\ Specifically, the proposed
measure transportation maps of \cite{cheretal2017} and
\cite{hallin2021distribution} do not necessarily preserve the
direction $\bm W$ since a pair $(R, \bm W)$ is mapped to the pair
$(U_{(0,1)}, \bm U_{\SSS^{d-1}})$, where
$U_{(0,1)}\sim \text{Unif}((0,1))$,
$\bm U_{\SSS^{d-1}}\sim \text{Unif}(\SSS^{d-1})$, with $U_{(0,1)}$
independent of $\bm U_{\SSS^{d-1}}$.\ Thus, this approach may not be
directly applicable to certain important metric spaces of interest where the Lebesgue measure may not even exist, and hence, where the target
uniform distribution is not well-defined.\ Another limitation in
the context of rare event probability estimation is the
non-parametric nature of the associated statistical methods, which
currently prevent extrapolation beyond the observed data.\ This is
similar to the sample quantile, which cannot estimate quantiles
outside the observed range, see \textit{e.g.\ }\cite{hallin2024multivariate}.\


The transformation that we propose maps $\bm X$ to $U\bm W$, where
$U=F_{R\mid \bm W}(R\mid \bm W)$ is uniform over the unit interval
of distances to the origin and independent of $\bm W$, and where
$F_{R\mid \bm W}$ denotes the conditional cumulative
distribution function of $R\mid \bm W$.\ Unlike the
approaches of \cite{cheretal2017} and \cite{hallin2021distribution},
which are indifferent to the radius and direction of $\bm X$, our
transformation requires a choice of a norm measuring the
size of vectors.\

We highlight that the independence of $U$ and $\bm W$, and the
optimality of the transport plan $F^{-1}_{R\mid \bm W}(U\mid \bm w)$
for each fixed direction $\bm w$ under the squared Euclidean distance
\citep{brenier1991polar}, pushing forward the distribution of $U$ to
the conditional distribution of $R \mid \bm W=\bm w$, imply that the
map $U \bm W\to F^{-1}_{R\mid \bm W}(U\mid \bm W) \bm W$ is also an
optimal transport plan under the squared Euclidean distance, pushing
forward the distribution of $U\bm W$ to the distribution of $\bm X$.\
Hence, it entirely characterises $\PR_{\bm X}$.\ However, unlike
typical transport plans, which move a distribution to a known target
distribution with full access to its properties, this transport plan
relies on having knowledge of the distribution of $\bm W$ to optimally
transport the distribution of $U\bm W$ to that of $\bm X$.\ This
reliance on the distribution of $\bm W$ is in fact a key strength, as
it enables the derivation of transport plans for a wide range of
distributions using only one-dimensional transformations, and
facilitates inference since the distribution of $\bm W$ can be
estimated directly from data.\ 

Based on these observations, we introduce a novel definition of
\textit{probability sets} and \textit{return sets} for a given
probability measure $\PR_{\bm X}$.\ Consider a $\sigma$-finite measure
$\mu$ such that
\begin{equation}
  \mu \ll \PR_{\bm W} \quad \text{and} \quad \PR_{\bm W} \ll \mu.
  \label{eq:abs_cont}
\end{equation}
Let
\begin{equation}
\label{eq:q_l}
  q_l(\PR_{\bm W} \,\|\, \mu) := 1-\inf\left\{\frac{d \PR_{\bm W}}{d
      \mu}(\bm w) \,:\, \bm w\in\text{\normalfont supp}(\PR_{\bm
      W})\right\},
\end{equation}
where $d \PR_{\bm W}/d\mu$ denotes the Radon--Nikodym derivative of
$\PR_{\bm W}$ with respect to $\mu$, which is strictly positive on
$\text{\normalfont supp}(\PR_{\bm W})$, due to assumption
\eqref{eq:abs_cont}.\ The range of $d \PR_{\bm W}/d\mu$ contains
values strictly less than 1 if $\PR_{\bm W}\neq\mu$, reflecting the
fact that at some neighbourhoods on
$\text{\normalfont supp}(\PR_{\bm W})$ the measure $\PR_{\bm W}$
assigns less mass than $\mu$, while at others it assigns more mass,
and is equal to unity if $\PR_{\bm W} = \mu$, reflecting that the two
measures are identical.\ Thus, when \eqref{eq:abs_cont} holds, it
follows that $q_l(\PR_{\bm W} \,\|\, \mu) \in (0,1)$.\ Define, for all
$q\in (q_l(\PR_{\bm W} \,\| \,\mu) , 1)$, the radial function
\begin{equation}
  r_{\QSq(\PR_{\bm W} \,\| \,\mu)}(\bm w):=\inf\left\{r \in \RR\,:\,
    F_{R\mid \bm W}(r\mid \bm w) \geq 1-(1-q) \frac{d\mu}{d \PR_{\bm
        W}}(\bm w)\right\},\quad \bm w \in \text{\normalfont
    supp}(\PR_{\bm W}).
  \label{eq:r_probability_set}
\end{equation}
\begin{definition}
  \label{defn:mult_quantile}
  Suppose that \eqref{eq:abs_cont} holds.\ A $q$-probability set of
  $\PR_{\bm X}$ relative to $\mu$ is defined by
  \[
    \QSq(\PR_{\bm W} \,\|\, \mu) := \bigcup_{\bm w \in \text{\normalfont
        supp}(\PR_{\bm W})} \{(1-t) \,\bm 0 + t\,
    r_{\QSq(\PR_{\bm W} \,\|\, \mu)}(\bm w)\,\bm w)\,:\, t\in[0,1]\},\quad
    \text{for $q\in (q_l(\PR_{\bm W} \,\| \,\mu) , 1)$}.
  \]
  A return set of $\PR_{\bm X}$ relative to $\mu$ with return period
  $T$ is defined by
  \[
    \RS_{T}(\PR_{\bm W} \,\|\, \mu) = \suppX \setminus
    \mathcal{Q}_{1-1/T}(\PR_{\bm W} \,\|\, \mu), \quad T >
    1/\{1-q_l(\PR_{\bm W} \,\| \,\mu)\}.
  \]

\end{definition}
The justification of the term $q$-probability set in Definition
\ref{defn:mult_quantile} comes from the property
$\PR[\bm X\in \QSq(\PR_{\bm W} \,\|\, \mu)]=q$, which is a direct
consequence of Proposition~\ref{prop:properties_probability_set}.
\begin{proposition}
  \label{prop:properties_probability_set}
  Suppose that condition \eqref{eq:abs_cont} holds and
  $\inf\{\xi(\bm w)\,:\, \bm w\in \suppWclean\} > -1$. Then, 
  \begin{itemize}
  \item[$(i)$] for all $q > q_l(\PR_{\bm W} \,\|\, \mu)$,
    \begin{equation}
      \PR[R > r_{\QS_q(\PR_{\bm W}\,\|\, \mu)}(\bm W)] = 1-q \quad
      \text{and} \quad \PR[\bm W \in B \mid R > r_{\QSq(\PR_{\bm W}
        \,\|\, \mu)}(\bm W)] = \mu(B);
      \label{cor:equality_of_W_distributions}
    \end{equation}
    and
  \item[$(ii)$] the $b_{1/(1-q)}(\bm w \,;\,\mu)$ defined by
    expression \eqref{eq:b_n_star} satisfies
    \[
      b_{1/(1-q)}(\bm w \,;\,\mu) \sim r_{\QS_{q}(\PR_{\bm W}\,\|\,
        \mu)}(\bm w), \quad \bm w\in\suppWclean,\qquad \text{as
        $n\to\infty$},
    \]
    and the function $a_{1/(1-q)}(\bm w \,;\,\mu)$ defined by
    expression \eqref{eq:a_n_star} satisfies
    \[
      a_{1/(1-q)}(\bm w \,;\,\mu) \sim r_{G_q(\PR_{\bm W}\,\|\,
        \mu)}(\bm w), \quad \bm w\in\suppWclean,\qquad \text{as
        $n\to\infty$},
    \]
    where
    \[
      r_{G_q(\PR_{\bm W}\,\|\, \mu)}(\bm w):= r_{\QS_{q(\bm
          w)}(\PR_{\bm W}\,\|\, \mu)}(\bm w) - r_{\QS_{q}(\PR_{\bm
          W}\,\|\, \mu)}(\bm w) \quad \text{and} \quad q(\bm w) =
      1-(1-q)\{1+\xi(\bm w)\}^{-1/\xi(\bm w)}.
    \]
  \end{itemize}
\end{proposition}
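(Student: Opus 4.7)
The plan is to handle part $(i)$ by a direct computation conditional on the direction, and part $(ii)$ by applying the necessary and sufficient condition \eqref{eq:NS_condition} to relate the conditional quantile function $U_{\bm w}$ to the norming functions $b_n$ and $a_n$. For part $(i)$, I would first use the definition of $r_{\QSq(\PR_{\bm W}\,\|\,\mu)}(\bm w)$ together with the continuity of $F_{R\mid \bm W}(\cdot\mid \bm w)$ at the threshold (ensured by $\xi(\bm w)>-1$, which rules out atoms at the upper endpoint of the generalised Pareto limit) to obtain
\[
\PR\bigl[R > r_{\QSq(\PR_{\bm W}\,\|\,\mu)}(\bm W)\,\big|\,\bm W = \bm w\bigr] = 1-F_{R\mid\bm W}\bigl(r_{\QSq(\PR_{\bm W}\,\|\,\mu)}(\bm w)\,\big|\,\bm w\bigr)= (1-q)\,\frac{d\mu}{d\PR_{\bm W}}(\bm w).
\]
Integrating this identity against $\PR_{\bm W}$ over $B\in\mathcal{B}(\SSS^{d-1})$ and using that $\mu$ is a probability measure yields $\PR[R > r_{\QSq}(\bm W),\,\bm W\in B] = (1-q)\mu(B)$, which delivers both claims: the unconditional probability follows from $B=\SSS^{d-1}$, while the conditional law of $\bm W$ given $\{R > r_{\QSq}(\bm W)\}$ is obtained by division.

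For part $(ii)$, the starting observation is that $d\mu/d\PR_{\bm W}=1/\varrho_\mu$, so that $r_{\QSq(\PR_{\bm W}\,\|\,\mu)}(\bm w) = U_{\bm w}\bigl(n\varrho_\mu(\bm w)\bigr)$ when $n = 1/(1-q)$. Substituting $x = \varrho_\mu(\bm w)$ into \eqref{eq:NS_condition} and recalling $b_n(\bm w) = U_{\bm w}(n)$ gives
\[
U_{\bm w}\bigl(n\varrho_\mu(\bm w)\bigr) - b_n(\bm w) \sim \frac{\varrho_\mu(\bm w)^{\xi(\bm w)}-1}{\xi(\bm w)}\, a_n(\bm w),\qquad n\to\infty,
\]
whose right-hand side matches exactly the definition \eqref{eq:b_n_star} of $b_n(\bm w\,;\,\mu)-b_n(\bm w)$; hence $b_n(\bm w\,;\,\mu)\sim r_{\QSq(\PR_{\bm W}\,\|\,\mu)}(\bm w)$. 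For the scale function, I would apply \eqref{eq:NS_condition} at $x_1 = \varrho_\mu(\bm w)\{1+\xi(\bm w)\}^{1/\xi(\bm w)}$ and $x_2 = \varrho_\mu(\bm w)$ and subtract, obtaining
\[
U_{\bm w}(nx_1) - U_{\bm w}(nx_2) \sim \frac{x_1^{\xi(\bm w)}-x_2^{\xi(\bm w)}}{\xi(\bm w)}\,a_n(\bm w) = \varrho_\mu(\bm w)^{\xi(\bm w)}\,a_n(\bm w).
\]
The left-hand side, evaluated at $n = 1/(1-q)$, is exactly $r_{G_q(\PR_{\bm W}\,\|\,\mu)}(\bm w)$, since $nx_1$ corresponds to the radial level $q(\bm w) = 1-(1-q)\{1+\xi(\bm w)\}^{-1/\xi(\bm w)}$; combined with $a_n(\bm w\,;\,\mu)\sim \varrho_\mu(\bm w)^{\xi(\bm w)}a_n(\bm w)$ from \eqref{eq:a_n_star}, the claim follows.

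The main obstacle lies in justifying the threshold continuity of $F_{R\mid\bm W}(\cdot\mid\bm w)$ so that the conditional probability in part $(i)$ equals, rather than merely bounds, $(1-q)(d\mu/d\PR_{\bm W})(\bm w)$: this requires the constraint $\xi(\bm w) > -1$ to exclude atoms at the upper endpoint of the conditional radial distribution, together with an application of Fubini's theorem to lift the $\PR_{\bm W}$-a.e.\ identity to the integral statement. A secondary technicality is the unified treatment of $\xi(\bm w)=0$ across part $(ii)$, where the expression $(x^{\xi}-1)/\xi$ must be read as $\log x$ and the corresponding limiting form of \eqref{eq:NS_condition} invoked; this is routine under the standard conventions for the generalised Pareto family but needs to be flagged for completeness.
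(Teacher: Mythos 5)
Your proposal is correct and follows essentially the same route as the paper. For part $(i)$, you and the paper both hinge on the identity $\PR[R > r_{\QSq(\PR_{\bm W}\|\mu)}(\bm w)\mid\bm W=\bm w] = (1-q)\,\tfrac{d\mu}{d\PR_{\bm W}}(\bm w)$ and then integrate against $\PR_{\bm W}$; you do both claims in one stroke by integrating over general $B$, the paper changes measure to $\mu$ first and then handles the conditional distribution separately, but these are the same computation reorganised. For part $(ii)$, you invoke the Potter-type limit \eqref{eq:NS_condition} directly with the observation that $r_{\QSq(\PR_{\bm W}\|\mu)}(\bm w)=U_{\bm w}(\varrho_\mu(\bm w)/(1-q))$, whereas the paper re-derives the same limit from Assumption~\ref{ass:main_convergence_assumptions} via Lemma~1.1.1 of de Haan and Ferreira using an auxiliary function $V_{\bm w}(\cdot\,;\,\mu)=U_{\bm w}(\cdot\,\varrho_\mu(\bm w))$; the conclusions agree.

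One minor inaccuracy: your stated reason for requiring $\xi(\bm w)>-1$ --- that it ``rules out atoms at the upper endpoint of the conditional radial distribution'' --- is not quite right. That condition does not guarantee continuity of $F_{R\mid\bm W}$ (which must be assumed separately, e.g.\ through absolute continuity of $\bm X$, for the conditional exceedance identity to hold with equality rather than $\leq$). Rather, $\xi(\bm w)>-1$ ensures that the scale factor $\{1+\xi(\bm w)\}^{1/\xi(\bm w)}$ appearing in $q(\bm w)$, and hence in $a_n(\bm w)=U_{\bm w}(n\{1+\xi(\bm w)\}^{1/\xi(\bm w)})-U_{\bm w}(n)$, is well defined and nonzero; at $\xi=-1$ this factor degenerates to $0$. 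You correctly flag that continuity is needed, but attribute it to the wrong hypothesis.
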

A proof is given in Appendix~\ref{sec:radon_nikodym}.
\begin{definition}
  A \textit{quantile set} with probability $q\in(0,1)$ and a
  \textit{return set} with return period \color{black} $T>0$ are
  respectively defined by
  \begin{equation}
    \QS_q:=\QS_q(\PR_{\bm W}\,\|\,\PR_{\bm W})
    \qquad \text{and} \qquad \RS_{T} := \RS_{T}(\PR_{\bm W}\,\|\,\PR_{\bm W}).
    \label{eq:Rq}
  \end{equation}
  \color{black} Suppose that condition \eqref{eq:abs_cont} holds with
  $\mu=\mu^\star$, where $\mu^\star$ denotes the uniform probability
  measure on ${\text{\normalfont supp} (\PR_{\bm W})}$.\ Then for
  $q>q_l(\PR_{\bm W}\,\|\, \mu^\star)$, an \textit{isotropic
    probability set} with probability $q$ and an \textit{isotropic
    return set} with return period \color{black}
  $T>1/\{1-q_l(\PR_{\bm W}\,\|\,\mu^\star)\}$ are respectively defined
  by
  \begin{equation}
    \QS_q^\star:=\QS_q(\PR_{\bm W}\,\|\,\mu^\star)
    \qquad \text{and} \qquad \RS_{T}^\star := \RS_{T}(\PR_{\bm W}\,\|\,\mu^\star).
    \label{eq:Rq}
  \end{equation}
  
\end{definition}

\subsection{Scaling and limit sets} 
\label{sec:scaling-limit-sets}
Assumption~\ref{ass:main_convergence_assumptions} ensures sufficient
regularity of $\PR_{\bm X}$ for the vague convergence of $\nu_n$ to
$\nu$ in $M_+(\Emult)$.\ However, verifying this assumption directly
can be challenging.\ Proposition~\ref{prop:RV1} provides easily
verifiable sufficient conditions for this convergence.\ These
conditions, which hold for a wide range of copulas, especially those
with suitably transformed common marginal distributions, rely on the
existence of a Lebesgue density $f_{\bm X}$ for $\PR_{\bm X}$, and
guarantee the convergence in distribution of the radially renormalized
exceedances, as described in
Proposition~\ref{prop:thresh_exceedances}.\ Our conditions here assume
that $\suppW = \SSS^{d-1}$ and a constant shape function
$\xi(\bm w)= \xi\in \RR$ for all $\bm w\in\SSS^{d-1}$.\ Additionally,
our conditions require that the conditional distribution of the radius
given the direction is in the domain of \textit{uniform local
  attraction} \citep{Sweeting85}, and that the dominating function
$\Delta$ in Assumption~\ref{ass:main_convergence_assumptions} is
bounded.\ This is a consequence of the uniform convergence that is
assumed in Proposition \ref{prop:RV1} and can be easily seen to hold
when the functional that determines the rate of convergence of the
density function to the limit gauge function, presented in
Supplementary~\ref{sec:rof}, is bounded on $\SSS^{d-1}$.
\begin{proposition} 
  \label{prop:RV1} 
  Suppose that the random vector $\bm X$ is absolutely continuous with
  respect to the Lebesgue measure on $\mathbb{R}^d$, admitting a
  density $f_{\bm X}$.\ Let $\G\in\bigstar$ be described by continuous
  1-homogeneous gauge function $\gG\,:\,\RR^d\to \RR_+$.\ Suppose that
  one of the following conditions holds:
  \begin{description}
  \item[$(i)$] The limit $\QS_{1}:=\lim_{q\to 1} \QS_q$ exists in
    $\bigstar$, $f_{\bm X}(\bm x) > 0$ when $\bm x\in \G^\circ$,
    $f_{\bm X}(\bm x) = 0$ when $\bm x\in \G'$, and there exists
    $\psi_{\textsf{B}}\,:\,\RR_+\to\RR_+$, and a $\xi < 0$ such that
    as $t\to \infty$,
    \begin{equation}
      \frac{f_{\bm X}[\{1-(\lVert t \bm y\rVert)^{-1}\}r_{\QS_1}(t \bm y)(t \bm y)]}{\psi_{\textsf{B}}(t)}\to
      g_{\G}(\bm y)^{1/\xi+1}, \qquad \bm y\in\Rstar.\ 
      \label{eq:RVneg}
    \end{equation}
  \item[$(ii)$] There exists $\psi_{\textsf{L}}\,:\,\RR_+\to \RR_+$,
    and a $\rho > 0$ such that as $t\to\infty$,
    \begin{equation}
      -\frac{\log f_{\bm X}(t \bm y)}{\psi_{\textsf{L}}(t)} \to \gG(\bm y)^\rho, \qquad \bm y\in\Rstar.
      \label{eq:RV1}
    \end{equation}
  \item[$(iii)$] There exists $\psi_{\textsf{H}}\,:\,\RR_+\to \RR_+$,
    and $\xi > 0$, such that as $t\to\infty$,
    \begin{equation}    
      \frac{f_{\bm X}(t \bm y)}{\psi_{\textsf{H}}(t)}\to
      \gG(\bm y)^{-(d+\xi^{-1})}, \qquad \bm y\in\Rstar.
      \label{eq:RVpos}
    \end{equation}
  \end{description}
  Further, if the convergence is uniform on $\SSS^{d-1}$, then
  Proposition~\ref{prop:thresh_exceedances} holds with $b_{1/(1-q)} := r_{\QSq}$ and
  \begin{itemize}
  \item[$(i)$] $a_{1/(1-q)} := r_{\G_q}(\bm w)$ where
    $r_{\G_q}(\bm w)= -\xi \, [r_{\QS_1}(\bm w) - r_{\QS_q}(\bm w)]$
    when $\xi < 0$;
  \item[$(ii)$] $a_{1/(1-q)} := r_{\G_q}(\bm w)$ where
    $r_{\G_q}(\bm w)=\rG(\bm w)^{\rho}/\psi_{\textsf{L}}'\{\rquant(\bm
    w)\}$ when $\xi = 0$;
  \item[$(iii)$] $a_{1/(1-q)} := r_{\G_q}(\bm w)$ where
    $r_{\G_q}(\bm w)=\xi r_{\QS_q}(\bm w)$ when $\xi > 0$.\
  \end{itemize}
\end{proposition}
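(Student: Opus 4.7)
The strategy is to verify Assumption~\ref{ass:main_convergence_assumptions} under each of the three sufficient conditions of Proposition~\ref{prop:RV1}, at which point Proposition~\ref{prop:thresh_exceedances} applies, and then to identify $b_{1/(1-q)}$ and $a_{1/(1-q)}$ with $r_{\QS_q}$ and $r_{\G_q}$ via Proposition~\ref{prop:properties_probability_set}(ii).

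First I would reduce to conditional tail statements for $R\mid \bm W=\bm w$. Since $\bm X$ is absolutely continuous, the joint density of the polar pair $(R,\bm W)$ with respect to the product of Lebesgue measure on $(0,\infty)$ and surface measure on $\SSS^{d-1}$ equals $r^{d-1} f_{\bm X}(r\bm w)$, and consequently
$$f_{R\mid\bm W}(r\mid \bm w) = \frac{r^{d-1} f_{\bm X}(r\bm w)}{f_{\bm W}(\bm w)}, \qquad f_{\bm W}(\bm w) := \int_0^\infty r^{d-1} f_{\bm X}(r\bm w)\,dr.$$
By classical univariate extreme value theory \cite{resn87,haanferr06}, it then suffices in each case to check that the conditional distribution of $R\mid \bm W=\bm w$ lies in the max-domain of attraction of a GPD with shape $\xi$, and to extract the correct scaling.

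Second I would treat the three cases separately, exploiting the $1$-homogeneity of $\gG$ and the identity $\gG(\bm w)=1/\rG(\bm w)$ on $\SSS^{d-1}$. In case $(iii)$, restricting \eqref{eq:RVpos} to $\bm y=\bm w\in\SSS^{d-1}$ gives $f_{\bm X}(t\bm w)\sim \psi_{\textsf{H}}(t)\rG(\bm w)^{d+1/\xi}$, so $\psi_{\textsf{H}}$ varies regularly at infinity with index $-(d+1/\xi)$. Karamata's theorem yields a Fr\'echet-type tail $\overline{F}_{R\mid\bm W}(u\mid \bm w)\sim C(\bm w)u^{-1/\xi}L(u)$ with $L$ slowly varying, and the classical Pareto scaling $a_n=\xi b_n$ yields $r_{\G_q}(\bm w)=\xi\, r_{\QS_q}(\bm w)$. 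In case $(ii)$, \eqref{eq:RV1} yields $-\log f_{\bm X}(t\bm w)\sim\psi_{\textsf{L}}(t)\rG(\bm w)^{-\rho}$; integration plus a reciprocal-hazard-rate computation deliver Gumbel-type attraction with $a_n(\bm w)=\rG(\bm w)^{\rho}/\psi_{\textsf{L}}'\{r_{\QS_q}(\bm w)\}$. In case $(i)$, using the $0$-homogeneous convention $r_{\QS_1}(\bm x)=r_{\QS_1}(\bm x/\lVert\bm x\rVert)/\lVert\bm x\rVert$, the argument $\{1-1/\lVert t\bm y\rVert\}r_{\QS_1}(t\bm y)(t\bm y)$ is a radial perturbation of the boundary point $r_{\QS_1}(\bm w)\bm w\in\partial\QS_1$ at relative depth $1/\lVert t\bm y\rVert$; as $t\to\infty$ this probes the density decay near $\partial\QS_1$ along the ray through $\bm w$. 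Regular variation of $\psi_{\textsf{B}}$ then yields a Weibull-type conditional tail $\overline{F}_{R\mid\bm W}(u\mid\bm w)\sim C(\bm w)\{r_{\QS_1}(\bm w)-u\}^{-1/\xi}$ near the upper endpoint, whence $r_{\G_q}(\bm w)=-\xi\{r_{\QS_1}(\bm w)-r_{\QS_q}(\bm w)\}$ after matching with $q(\bm w)=1-(1-q)(1+\xi)^{-1/\xi}$ in Proposition~\ref{prop:properties_probability_set}(ii).

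Third I would use the uniformity of the assumed density convergence on $\SSS^{d-1}$ to discharge the dominating-function requirement in Assumption~\ref{ass:main_convergence_assumptions}: on each compact radial window $[r_{\inf},r_{\sup}]$ the total variation $|\mu_{n,\bm w}|([r_{\inf},r_{\sup}])$ is bounded by a constant independent of $\bm w$, so taking a bounded (hence $\PR_{\bm W}$-integrable) $\Delta$ suffices. With the assumption verified, Proposition~\ref{prop:thresh_exceedances} applies, and the identifications $b_{1/(1-q)}=r_{\QS_q}$ and $a_{1/(1-q)}=r_{\G_q}$ follow from Proposition~\ref{prop:properties_probability_set}(ii). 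The main obstacle will be case $(i)$, because the parameterisation in \eqref{eq:RVneg} is written relative to $\partial\QS_1$ rather than via the homothetic scaling $t\bm y$ used in cases $(ii)$ and $(iii)$; one must carefully unravel the implicit dependence of $r_{\QS_1}(t\bm y)(t\bm y)$ on $t$, match the radial deficit $r_{\QS_1}(\bm w)-u$ with the reciprocal scale $1/\lVert t\bm y\rVert$ on which $\psi_{\textsf{B}}$ varies regularly, and then integrate the density up to the endpoint. Cases $(ii)$ and $(iii)$, after the $1$-homogeneity reduction, fall back to standard one-dimensional von Mises and Karamata arguments.
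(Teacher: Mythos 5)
Your plan follows essentially the same route as the paper: reduce to the one-dimensional conditional law of $R\mid\bm W=\bm w$, establish GP-type attraction case by case (Karamata for the heavy-tail and bounded-tail cases, a von Mises/hazard-rate argument for the light-tail case), and use the assumed uniform convergence on $\SSS^{d-1}$ to produce a bounded dominating function for Assumption~\ref{ass:main_convergence_assumptions}. The formulas you arrive at for $r_{\G_q}$ in all three cases agree with the paper. The main organizational difference is that the paper does not detour through an abstract max-domain-of-attraction statement: it writes down the explicit conditional density ratio
\[
\frac{ r_{\G_q}(\bm w)\,z_q(\bm w)^{d-1}f_{\bm X}(z_q(\bm w)\bm w)}{\int_{\rquant(\bm w)}^{r_{\QS_1}(\bm w)} f_{R,\bm W}(s, \bm
      w)\,\dd s},\qquad z_q(\bm w)=r_{\QS_q}(\bm w)+r_{\G_q}(\bm w)z,
\]
already with the candidate $b_{1/(1-q)}=r_{\QS_q}$ and $a_{1/(1-q)}=r_{\G_q}$ plugged in, and shows it converges to $(1+\xi z)_+^{-1-1/\xi}$; the identification of the normings is built into the computation rather than being imported afterward from Proposition~\ref{prop:properties_probability_set}(ii). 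In case $(ii)$ the paper uses Potter bounds together with a first-order Taylor expansion of $\psi_{\textsf L}$ about $\rquant(\bm w)$, which is the concrete realisation of your ``reciprocal-hazard-rate'' step.

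One point you should tighten: you write that it ``suffices to check that the conditional distribution of $R\mid\bm W=\bm w$ lies in the max-domain of attraction of a GPD''. Max-domain attraction is a statement about weak convergence of the conditional distribution, whereas Assumption~\ref{ass:main_convergence_assumptions} is a \emph{total-variation} bound on the signed measure $\mu_{n,\bm w}$ over compact radial windows, which requires convergence of the conditional \emph{densities} (and then Scheff\'e for the $L^1$ control). Your subsequent steps — Karamata applied to the radial integral of $r^{d-1}f_{\bm X}(r\bm w)$, the hazard-rate computation, the endpoint expansion in case $(i)$ — do in fact operate at the density level, so the substance of your argument is right; but the opening reduction should be phrased as density convergence of $f_{R\mid\bm W}$ after the affine rescaling (which is exactly what the paper proves), not as membership in the distributional domain of attraction.
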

A proof is given in Appendix~\ref{sec:proof_prop1}.\ 
\begin{remark}
  For the light-tailed case corresponding to case $(ii)$,
  Assumption~\ref{prop:RV1} guarantees that a suitable sequence of
  scaling constants $r_n$ can be found so that the random point-set
  $N_n=\left\{\bm X_1/r_n, \dots, \bm X_n/r_n\right\}$ converges in
  probability onto a limit set $\G$ having radial function $r_{\G}$
  \citep{nolde2021linking}.\ For the heavy tailed-case $(iii)$, our
  assumption is in a similar spirit to~\cite{de1987regular} and
  guarantees that the distribution of $\bm X$ is multivariate
  regularly varying on $\Rstar$.\
  Similarly to the classical theory of univariate extremes, where
  bounded-tail behaviour is handled by employing power-law tails for
  the normalised variable through appropriate modifications of the
  location sequence near the upper endpoint and the scaling sequence,
  here the multivariate case follows a similar structure.\
  Specifically, the assumption in case $(i)$ corresponds to that in
  case $(iii)$ after a suitable transformation. 
\end{remark}


\subsection{Probability density function of directions}
\label{sec:density_angles}
In this section, we introduce useful properties of the direction
variable $\bm W = \bm X/\lVert \bm X \rVert\in\SSS^{d-1}$ in light of
its central role in the limiting Poisson Process of appropriately
centred and scaled radii, as presented in
Section~\ref{sec:PP_characterisation}.\ The framework presented in
that section has the important advantage that it yields a distribution
of direction of exceedances of $\QSq$ equal to that of directions
found in the original variable $\bm X$ as exemplified by
Corollary~\ref{cor:equality_of_W_distributions}.\ Below we present
additional properties of the directional variable $\bm W$, including a
connection between the distribution of directions and star-bodies,
even in cases where $F_{\bm W}$ partially depends on the limit set
$\G$.\

For simplicity, we assume throughout
that $F_{\bm W}$ is absolutely continuous with respect to the
spherical Lebesgue measure and that it admits a probability density
$f_{\bm W}$.\ In this setting, we observe that any continuous
probability density function on the sphere is in one-to-one
correspondence with a strongly starshaped set $\mathcal{W}$ with
radial function $r_{\W}=f_{\bm W}$ defined by
\[
  \W := \bigcup_{\bm w \in \suppW}
  [0\,:\, r_{\W}(\bm w)\, \bm w] \in \bigstar, \quad \bm
  w\in\SSS^{d-1}.
\]
Because $\W\in\bigstar$, we have
$r_{\W^{1/d}}(\bm w)^d = r_{\W}(\bm w)$.\ Thus, from the definition of
the volume given in equation~\eqref{eq:vol} in Supplementary
Material~\ref{appendix:radial}, for $f_{\bm W}$ to be a valid density,
$\W$ must satisfy
\begin{equation}
  1=\int_{\SSS^{d-1}}f_{\bm W}(\bm w)\,\dd \bm w = d\left[\frac{1}{d}\int_{\SSS^{d-1}}r_{\W^{1/d}}(\bm w)^d\,\dd \bm w\right] = d\vol{\mathcal{W}^{1/d}}.
  \label{eq:W_density_constraint}
\end{equation}
Hence, to construct any set $\mathcal{W}$ satisfying
condition~\eqref{eq:W_density_constraint} that $r_{\W}$, it suffices
to consider
\begin{IEEEeqnarray}{rCl}
  \W&=&\LL^d\rdiv (d \vol{\LL}),
  \label{eq:W_from_L}
\end{IEEEeqnarray}
for some star-body $\LL$, which follows from the formula for the
volume of a starshaped set.\

There are several possible forms for $\LL$, and hence, for $\W$.\ A
first possibility for the form of the set $\LL$ is motivated from the
properties of probability density functions that are homothetic with
respect to $\G$~\citep{balknold10}.\ Recall that a density is termed
homothetic if it has level-sets that are scaled copies of $\G$.\ In
the setting where $f_{\bm X}$ is homothetic,
Proposition~\ref{prop:angle_distribution} explores the one-to-one
correspondence between the density of directions and $\G$.\
\begin{proposition}
  Suppose that $f_{\bm X}(\bm x)=f_0\left(r_\G(\bm
    x)^{-1}\right)$ for a decreasing, positive, continuous function
  $f_0:[0,\infty)\to[0,\infty)$ and a radial function
  $r_{\G}$ characterizing a set
  $\G\in\bigstar$.\ Then $\mathcal{L}=\G$.
  \label{prop:angle_distribution}
\end{proposition}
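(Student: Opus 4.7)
The plan is to compute the marginal density $f_{\bm W}$ explicitly under the homothetic assumption, show it is proportional to $r_{\G}^d$, and then identify the proportionality constant via the density normalization, at which point the representation $\W = \LL^d/(d\vol{\LL})$ with $\LL = \G$ follows immediately.

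First I would pass to polar coordinates. Writing $\bm x = r \bm w$ with $r>0$ and $\bm w \in \SSS^{d-1}$, the Jacobian yields
\begin{equation*}
  f_{\bm W}(\bm w) \;=\; \int_0^\infty r^{d-1}\, f_{\bm X}(r\bm w)\, \dd r \;=\; \int_0^\infty r^{d-1}\, f_0\!\left(r_{\G}(r\bm w)^{-1}\right)\dd r.
\end{equation*}
The radial function of a star-body satisfies $r_{\G}(r\bm w) = r^{-1} r_{\G}(\bm w)$ for $r>0$, so $f_0(r_{\G}(r\bm w)^{-1}) = f_0(r/r_{\G}(\bm w))$. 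Substituting $u = r/r_{\G}(\bm w)$ then gives
\begin{equation*}
  f_{\bm W}(\bm w) \;=\; r_{\G}(\bm w)^d\int_0^\infty u^{d-1}\, f_0(u)\,\dd u \;=\; C\, r_{\G}(\bm w)^d,
\end{equation*}
where $C := \int_0^\infty u^{d-1} f_0(u)\,\dd u$ is a finite positive constant (finite because $f_{\bm X}$ is a probability density, positive because $f_0>0$).

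Next I would pin down $C$ using the fact that $f_{\bm X}$ integrates to one. Repeating the polar decomposition on the full integral,
\begin{equation*}
  1 \;=\; \int_{\SSS^{d-1}}\!\int_0^\infty f_0(r/r_{\G}(\bm w))\, r^{d-1}\,\dd r\,\dd \bm w \;=\; C\int_{\SSS^{d-1}} r_{\G}(\bm w)^d\,\dd \bm w \;=\; C\, d\vol{\G},
\end{equation*}
where the last equality is the standard volume formula for a star-body (as used for $\W$ in~\eqref{eq:W_density_constraint}). Hence $C = 1/(d\vol{\G})$, and consequently $r_{\W}(\bm w) = f_{\bm W}(\bm w) = r_{\G}(\bm w)^d/(d\vol{\G})$.

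Finally, I would match this with the representation $\W = \LL^d\rdiv(d\vol{\LL})$, which at the level of radial functions reads $r_{\W} = r_{\LL}^d/(d\vol{\LL})$. Taking $\LL = \G$ gives precisely the expression we just derived, so $\LL = \G$ is a valid (and up to a positive rescaling, the canonical) choice, proving the proposition. No step is really hard; the only subtle point is keeping track of the $(-1)$-homogeneity convention for $r_{\G}$ on $\Rstar$ so that the change of variables in the radial integral is carried out correctly.
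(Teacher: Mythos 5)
Your proof is correct and follows essentially the same route as the paper's: convert to polar coordinates, use the $(-1)$-homogeneity of $r_{\G}$ to factor the radial integral, and evaluate the remaining one-dimensional constant via the density normalization combined with the polar volume formula $\vol{\G}=\tfrac{1}{d}\int_{\SSS^{d-1}}r_{\G}(\bm w)^d\,\dd\bm w$. The only cosmetic difference is that the paper cites \cite{balknold10} for the identity $\int_0^\infty s^{d-1}f_0(s)\,\dd s = 1/(d\vol{\G})$, whereas you derive it in-line by the normalization argument; the two are the same calculation.
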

\noindent A proof of Proposition~\ref{prop:angle_distribution} based
on the integration by parts method of~\citet[Section
3.1,][]{balknold10} is given in Appendix~\ref{sec:proof_W_density}.\
We remark that Proposition~\ref{prop:angle_distribution} is a
probabilistic proof for the volume of a star-body in polar
coordinates~\citep{klain1997invariant}.\ \color{black}
As a result, under the homothetic framework, an analytic form of the
density of directions can be obtained.\ Consider the three examples
below, all of which have homothetic densities $f_{\bm{X}}$ with common
radial function,
$r_{\G}(\bm x) = (\bm x^\top \mathsf{Q} \bm x)^{-1/2}$ for
$\mathsf{Q}$ is positive, but different generators, $f_0$:\
\begin{example} Suppose
  $\bm X\sim {\mathcal{N}_d}(\bm 0,\mathsf{Q}^{-1})$.\ The margins of
  $\bm X$ are standard-normally distributed.\ The density $f_{\bm X}$
  is homothetic with respect to $r_{\G}$ and
  $f_0(r) = C_1 \exp(-r^{-2}/2)$, $r >0$, where
  $C_1=[\lvert\mathsf{Q}\rvert^{1/2}/(2\pi)^{d/2}]$.\
  \label{ex:MVN}
\end{example}
\begin{example} Suppose $\bm X\sim \text{\normalfont MVL}(\Orig,
  \mathsf{Q}^{-1})$.\ The margins of $\bm
  X$ are standard Laplace distributed.\ The density $f_{\bm X}(\bm
  x)$ is homothetic with respect to $r_{\G}$ and $f_0(r) = C_2
  r^{-\nu}K_{\nu}\left(r^{-1}\right)$,
  $r>0$, where \color{black}
  $C_2=[\lvert\mathsf{Q}\rvert^{1/2}/(2\pi)^{d/2}]$\color{black},
  $K_\nu$ is the modified Bessel function of the second kind
  \citep{gradshteyn2014table} and $\nu = (2-d)/2$.\ 
\label{ex:MVL}
\end{example}
\begin{example} Suppose
  $\bm X\sim \text{t}_{\nu}(\bm 0, \mathsf{Q}^{-1})$.\ The margins of
  $\bm X$ are Student-$t_{\nu}$ distributed.\ The density
  $f_{\bm X}(\bm x)$ is homothetic with respect to $r_{\G}$ and
  \smash{$f_0(r) = C_3 \left(1+r^{-2}/\nu\right)^{-(\nu + d)/2}$},
  $r>0$, where
  $C_3=[\Gamma\{(\nu+d)/2\}\lvert\mathsf{Q}\rvert^{1/2}/\{(\nu\pi)^{d/2}\Gamma(\nu/2)\}]$.\
\label{ex:t}
\end{example}
\noindent In each of the above examples, Proposition~\ref{prop:angle_distribution} gives an analytic form for the star
set $\mathcal{W}$.\ The set $\mathcal{W}$
is common to all examples as it is defined by $r_{\G}^{d}/(d\vol{\G})$ for the same $r_{\G}$ in all examples.\
\begin{figure}[t!]
    \centering
    \includegraphics[width=\textwidth]{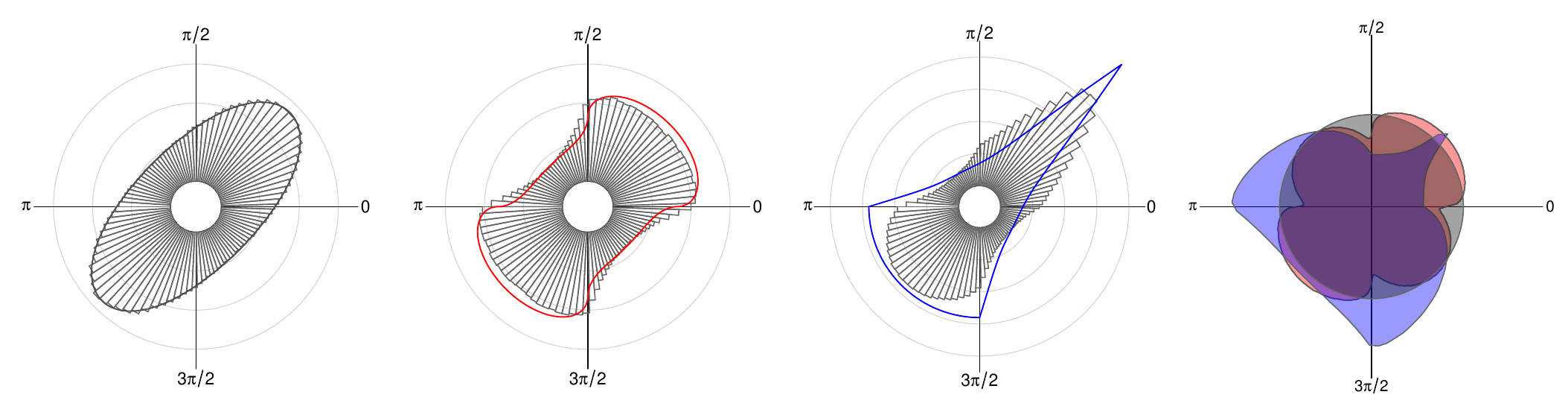}
    \caption{\textit{Left to right}:\ Circular histogram of
      $10^6$ directions sampled from a standard bivariate normal
      distribution (correlation
      $\rho=0.5$) with standard normal margins, standard Laplace
      margins, and from a bivariate max-stable logistic distribution
      with Laplace margins (dependence parameter
      $\theta=0.5$).\ A concentric circle corresponds to a leap of 0.1
      in the density of directions.\ Solid lines correspond to the boundary of
      $\G^{d}/(d\vol{\G})$, for
      $\G$ of the respective distributions.\ Plot 4:\ Sets
      $[\G^d/(d\vol{\G})]\cdot
      \W^{-1}$ corresponding to Plots 1--3.\ }
    \label{fig:f_W_from_g}
\end{figure}

The class of homothetic densities, although rich, serves at best as an
idealistic setting.\ A second option for the form of $\LL$ is revealed
when removing the effect of marginal scale to studying the extremal
dependence structure of $\bm X$.\ This is standard practice in extreme
value analysis, and is typically achieved by standardising the margins
of $\bm X$ to a common distribution (see
Section~\ref{sec:standardisation}).\ Consider the map
$\RR^d \ni \bm x \mapsto T(\bm x) =
\big(\Psi^{\leftarrow}(F_{X_i}(x_i)):i=1,\ldots,d\big) \in \RR^d$
performing a transformation of the marginal distributions of $\bm X$,
so that the $i$th element of $T(\bm X)$ follows $\Psi$, a continuous
cumulative distribution function on $\RR$.\ The transformation
preserves the cardinality of the set of input vectors
$\bm X/\lVert \bm X\rVert$, ensuring a one-to-one correspondence with
the output vector $T(\bm X)/\lVert T(\bm X)\rVert$.\ From a geometric
perspective, this means that the densities of
$\bm X/\lVert \bm X\rVert$ and $T(\bm X)/\lVert T(\bm X)\rVert$ can be
defined over the same set of
points.
\ The distribution of directions, however, may change when $T$
introduces nonlinearities.\ 
For example, suppose that $\bm X_N$ is
distributed according to a multivariate normal distribution with zero
mean and covariance matrix $\mathsf{Q}^{-1}$ where $\Q$ is such that
each marginal follows the standard normal distribution.\ Let
$\bm X = F_L^{-1}[\Phi(\bm X_N)]$, were $F_L$ and $\Phi$ denote the
cumulative distribution functions of standard Laplace and standard
normal random variables, respectively.\ While the density of $\bm X_N$ is
homothetic, thus having uniform rate of convergence is across
directions, the density of $\bm X$ is not
homothetic.

It is worth noting that the asymptotic analysis with Laplace marginals
may get complicated, for example through the behaviour of the joint
density function along the coordinate axes.\ Specifically, for the
multivariate normal copula in Laplace margins, the standard Mill's ratio
approximation breaks down, leading to singularities in the convergence
rate.\ These singularities, however, are resolved when convergence can
be established locally uniformly on $\RR^d\setminus\{0\}$.\ A refined
analysis in Supplementary Material~\ref{sec:MVN_supplementary} for the
multivariate normal copula example demonstrates this and highlights
that marginal standardisation may not always adversely affect the
convergence.

For the multivariate normal copula in Laplace margins, we prove that
convergence in $(ii)$ of Proposition~\ref{prop:RV1} holds uniformly,
with $\G$ determined by
\begin{equation}
  r_{\G}(\bm w)= \left[\left\{\text{sgn}(\bm w) \vert \bm
      w\rvert^{1/2}\right\}^\top \mathsf{Q} \, \left\{\text{sgn}(\bm
      w) \vert \bm w\rvert^{1/2}\right\}\right]^{-1}, \qquad \bm
  w\in\SSS^{d-1}.
  \label{eq:MVN_Laplace}
\end{equation}
where $\text{sgn}(\bm w) |\bm w|^{1/2}=(\text{sgn}(w_i)
|w_i|^{1/2}\,:\,i=1,\dots,d)$ and $\text{sgn}(x)=x/\lvert x\rvert$ denotes the signum function.\ In this case, the star set $\mathcal{W}$
describing the density of $\bm X/\lVert \bm X\rVert$ is no longer
a constant scale multiple of $\G$, but instead a radial product of
$\G$ with another star-body.\ 
Figure~\ref{fig:f_W_from_g} also shows the true density (histogram) of
$\bm X/\lVert \bm X\rVert$ and $r_{\G}(\bm w)^d/\{d\vol{\G}\}$
(solid curve).\ The star-body in panel four of the same figure is the
ratio between $\W$ and $\G^d/\{d\vol{\G}\}$.\ 

The analysis of the marginal transformation and its effect on the distribution of the directions reveals a case where $\mathcal{W}$ can depend
both on $\G$ and on some additional star-body $\B$ independent of $\G$.\ The star-body $\B$ captures residual directional
variation that is not explained by a homothetic density.\ For
instance, assume that $\bm X$ follows a bivariate max-stable logistic
distribution in standard Laplace margins.\
From Figure~\ref{fig:f_W_from_g}, we see that
$\W\neq\G^d/(d\vol{\G})$.\ For the practical case where both $r_{\W}$
and $r_{G}$ are strictly positive on $\SSS^{d-1}$, then
$\LL=\B\rmult\G$, $ \B, \G \in \bigstar$ with $\B$ independent of
$\G$.\ Hence, in the specific context where $f$ is homothetic with
respect to $\rG$, $\B=B_{1}(\Orig)$.\ 

A third possibility for the set $\LL$, and consequently the set $\W$,
is to set $\LL=\B$ for some set $\B$ independent of $\G$, implying a
case where $\bm W$ is independent of $\G$.\ This can occur within the
class of radial generalised Pareto distributions, introduced in the
following section.\ In Supplementary Material~\ref{supp:angle}, we
derive closed-form expressions of the distribution of directions
$f_{\bm{W}}$ for a number of distributions when the marginal
distributions are common and prespecified.\

\subsection{Radial generalised Pareto distributions}
\label{sec:mult_rad_stab_distr}
We now present a novel family of multivariate distributions, termed
radial generalised Pareto ($\rGP$) distributions and detail some of
their stability properties making them suitable for extrapolation.\
They arise as the only non-trivial limits of \textit{radially
  renormalised} exceedances above a threshold $\QS_q$, as detailed in
Section~\ref{sec:PP_characterisation}, and enable the modelling of
rare events in a much wider set of joint-tail regions of $\RR^d$ than
other well-established frameworks of extreme value theory.\ Radial generalised Pareto
distributions are parameterised via members of
the class of star bodies $\bigstar$.\


\begin{definition}[Multivariate radial generalised Pareto
  distributions]
  \label{def:mult_rad_stab_GP}
  A random vector $\bm Z\in \RR^d$ is said to follow the multivariate
  radial generalised Pareto distribution with location
  $\loc$, scale $\Sigma$, and shape $\xi:\SSS^{d-1}\to \RR$
  and directional shape
  $\W$, if for any Borel set
  $K\subseteq\RR^{d}\setminus\loc$,
  \begin{equation}
    \PR\left[\bm Z\in K\right]= \int\limits_{\SSS^{d-1}}\int\limits_{]\Orig:\bm w)\cap K}
      \left\{1+\xi(\bm w) \frac{r-r_{\loc }(\bm w)}{r_{\Sigma}(\bm w)}\right\}_+^{-1/\xi(\bm w)}\,\rW(\bm w) \,\dd r\,\dd \bm w.
    \label{eq:RingStab_gP}
  \end{equation}
\end{definition}
The radial generalised Pareto distribution presented in
Definition~\ref{def:mult_rad_stab_GP} possesses stability properties
detailed in Propositions~\ref{prop:MRS_stability_GP}
and~\ref{prop:MRS_stability_exp}, where Proposition~\ref{prop:MRS_stability_exp} is concerned with the special case of the when $\xi(\bm{w})=0$ for all $\bm{w}\in\mathbb{S}^{d-1}$, corresponding to a radial exponential distribution.
\begin{proposition}
  Let $\bm Z\in\RR^d$ follow a radial generalised Pareto distribution with location
  $\loc$, scale $\Sigma$, shape $\xi:\SSS^{d-1}\to \RR$,
  and directional shape
  $\W$, then the quantile and isotropic probability sets of $\bm Z$ are respectively
  \begin{IEEEeqnarray}{rCl}
        \label{eq:Qq_rGP}  
        \QSq &=& \loc \radd \xi^{-1}\{B_{1-q}(\Orig)^{-\xi}-B_{1}(\Orig)\}\rmult\Sigma, \quad q\in(0,1), \\
        \QS_q^{\star} &=& \QSq + \xi^{-1}[\{\W\rmult B_{\vol{\SSS^{d-1}}}(\bm 0)\}^{\xi} -B_{1}(\Orig)]\rmult \Sigma, \quad q\in(q_l(\PR_{\bm Z/\lVert \bm Z \rVert}\,\|\,\mu^\star),1),
        \label{eq:Qqstar_rGP}      
    \end{IEEEeqnarray}
  where $q_l$ is defined in expression~\eqref{eq:q_l}, and where $\B^\xi$ for $\B\in\bigstar$ is interpreted as $r_{\B}(\bm w)^{\xi(\bm w)}$, for all $\bm w \in \text{\normalfont supp}(\PR_{\bm Z/\lVert \bm Z \rVert})$.\ Further, if $\xi$ is constant, $\bm Z$ satisfies the radial stability property
  \begin{IEEEeqnarray}{rCl}
  \label{eq:MRS_stability_GP}
    \PR\left[\bm Z\in \{\loc {\radd}B_{r_1+r_2}(\Orig){\rmult} \Sigma\}^\prime\mid \bm Z \in \{\loc{+}B_{r_1}(\Orig){\rmult} \Sigma\}^\prime\right] {=} \PR\left[\bm Z \in \left\{\loc {+} \frac{B_{r_2}(\Orig){\rmult}\Sigma}{B_1(\Orig){\radd} B_{\xi}(\Orig){\rmult} B_{r_1}(\Orig)}\right\}^\prime\right].\qquad
  \end{IEEEeqnarray}
  \label{prop:MRS_stability_GP}
\end{proposition}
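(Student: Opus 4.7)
The plan is to exploit the explicit radial-conditional structure of the $\rGP$ distribution in Definition~\ref{def:mult_rad_stab_GP}. Writing $R=\lVert\bm Z\rVert$ and $\bm W=\bm Z/\lVert\bm Z\rVert$, the joint density factorises so that $\bm W$ has angular density $r_{\W}$ on $\SSS^{d-1}$ and, conditionally on $\bm W=\bm w$, $R$ is generalised Pareto with survival function
\[
\overline F_{R\mid\bm W}(r\mid\bm w)=\left\{1+\xi(\bm w)\frac{r-r_{\loc}(\bm w)}{r_\Sigma(\bm w)}\right\}_+^{-1/\xi(\bm w)}.
\]
All three claims then reduce to inverting this survival function and translating between its analytic form and the star-body operations $\radd$, $\rmult$, and pointwise radial exponentiation.

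To obtain~\eqref{eq:Qq_rGP} I apply Definition~\ref{defn:mult_quantile} with $\mu=\PR_{\bm W}$, so that $r_{\QSq}(\bm w)=F_{R\mid\bm W}^{-1}(q\mid\bm w)$. Inverting the GPD survival function gives
\[
r_{\QSq}(\bm w)=r_{\loc}(\bm w)+\frac{r_\Sigma(\bm w)}{\xi(\bm w)}\bigl\{(1-q)^{-\xi(\bm w)}-1\bigr\},
\]
which matches the radial function of $\loc\radd\xi^{-1}\{B_{1-q}(\Orig)^{-\xi}-B_1(\Orig)\}\rmult\Sigma$ once $B_{1-q}(\Orig)^{-\xi}$ is interpreted pointwise as the star body with radial function $(1-q)^{-\xi(\bm w)}$. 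For~\eqref{eq:Qqstar_rGP} the same inversion is carried out with $\mu=\mu^\star$; the Radon--Nikodym derivative $d\mu^\star/d\PR_{\bm W}(\bm w)=1/[\vol{\SSS^{d-1}}\,r_{\W}(\bm w)]$ substituted into~\eqref{eq:r_probability_set} yields
\[
r_{\QSq^\star}(\bm w)=r_\loc(\bm w)+\frac{r_\Sigma(\bm w)}{\xi(\bm w)}\left\{\left[\frac{\vol{\SSS^{d-1}}\,r_{\W}(\bm w)}{1-q}\right]^{\xi(\bm w)}-1\right\}.
\]
I then isolate the $r_{\QSq}$ piece using~\eqref{eq:Qq_rGP}, recognise $\vol{\SSS^{d-1}}\,r_{\W}(\bm w)$ as the radial function of $\W\rmult B_{\vol{\SSS^{d-1}}}(\Orig)$, and absorb the leftover factor $(1-q)^{-\xi(\bm w)}$ into the threshold-adjusted scale that arises naturally for GPD exceedances above $\QSq$ (namely $r_\Sigma(1-q)^{-\xi}$).

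For the radial-stability identity~\eqref{eq:MRS_stability_GP}, the decisive observation when $\xi$ is constant is that
\[
\PR\bigl[R>r_\loc(\bm W)+s\,r_\Sigma(\bm W)\bigm|\bm W\bigr]=(1+\xi s)^{-1/\xi}
\]
does not depend on $\bm W$, so integration against $f_{\bm W}$ preserves the value. Interpreting the event $\bm Z\in\{\loc\radd B_s(\Orig)\rmult\Sigma\}^\prime$ as $R>r_\loc(\bm W)+s\,r_\Sigma(\bm W)$, the left-hand conditional probability collapses to
\[
\frac{(1+\xi(r_1+r_2))^{-1/\xi}}{(1+\xi r_1)^{-1/\xi}}=\left\{1+\frac{\xi r_2}{1+\xi r_1}\right\}^{-1/\xi}.
\]
The right-hand side is handled identically: the star body $B_{r_2}(\Orig)\rmult\Sigma/\{B_1(\Orig)\radd B_\xi(\Orig)\rmult B_{r_1}(\Orig)\}$ has radial function $r_2\,r_\Sigma(\bm w)/(1+\xi r_1)$, so the corresponding exceedance probability also equals $\{1+\xi r_2/(1+\xi r_1)\}^{-1/\xi}$.

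The principal obstacle is bookkeeping rather than substance: one must handle pointwise radial exponentiation, the radial quotient appearing in~\eqref{eq:MRS_stability_GP}, and the threshold-adjusted scale implicit in~\eqref{eq:Qqstar_rGP} consistently with the star-body operations introduced in the preliminaries, and verify that the admissible ranges of $q,r_1,r_2$ keep every radial function involved strictly positive on $\suppW$.
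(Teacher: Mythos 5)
Your treatment of the stability identity~\eqref{eq:MRS_stability_GP} is essentially the paper's proof: translate each set condition to $\{\lVert\bm Z\rVert-r_{\loc}(\bm Z/\lVert\bm Z\rVert)\}/r_{\Sigma}(\bm Z/\lVert\bm Z\rVert)>s$, use constancy of $\xi$ to get direction-free generalised Pareto survival probabilities, and simplify the ratio to $\{1+\xi r_2/(1+\xi r_1)\}^{-1/\xi}$. You make the intermediate observation (the conditional exceedance probability does not depend on $\bm W$) slightly more explicit than the paper, but the route and the algebra are the same. Your derivation of~\eqref{eq:Qq_rGP} by inverting the conditional GPD survival function and reading off the radial function is likewise correct and is the natural route; the paper does not spell this out.

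The one place your proposal is not watertight is~\eqref{eq:Qqstar_rGP}. Your own algebra gives
\[
r_{\QS_q^\star}(\bm w)-r_{\QSq}(\bm w)
=\frac{r_\Sigma(\bm w)\,(1-q)^{-\xi(\bm w)}}{\xi(\bm w)}
\left\{\bigl[\vol{\SSS^{d-1}}\,r_{\W}(\bm w)\bigr]^{\xi(\bm w)}-1\right\},
\]
which carries the factor $(1-q)^{-\xi(\bm w)}$ you mention. The stated formula has $\rmult\Sigma$, whose radial function is $r_\Sigma(\bm w)$ without that factor, so your result and the displayed identity disagree by exactly $(1-q)^{-\xi}$. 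You say you ``absorb the leftover factor into the threshold-adjusted scale,'' and indeed that factor is precisely $r_{\Sigma'}(\bm w)/r_{\Sigma}(\bm w)$ where $\Sigma'=\Sigma\radd\xi\rmult(\QSq-\loc)$ is the post-thresholding scale that appears below~\eqref{eq:rs_stability_gP}. But ``absorb'' is not a proof move: either the statement's $\Sigma$ should read $\Sigma'$ and you should say so explicitly, or you need to demonstrate an identity that makes the factor vanish --- and no such identity exists, since in the $\xi=0$ limit the factor reduces to $1$ exactly, which is why the analogous formula~\eqref{eq:Qqstar_rExp} for the radial exponential case balances. As written, your proposal stops one sentence short of resolving whether you are proving the stated claim or a corrected version of it, and that sentence is the substance of the matter.
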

    \begin{proposition}
    \label{prop:MRS_stability_exp}
    Let $\bm Z\in\RR^d$ follow a radial exponential distribution with location
    $\loc$, scale $\Sigma$, and directional shape
    $\W$, then the quantile and isotropic probability sets of $\bm Z$ are respectively
    given by
    \begin{IEEEeqnarray}{rCl}
        \label{eq:Qq_rExp}
        \QSq &=& \loc \radd B_{-\log(1-q)}(\Orig)\rmult\Sigma, \quad q\in(0,1), \\ 
        \QS_q^{\star} &=& \QSq + \log\{\W\rmult B_{\vol{\SSS^{d-1}}}(\bm 0) \}\rmult \Sigma, \quad q\in(q_l(\PR_{\bm Z/\lVert \bm Z \rVert}\,\|\,\mu^\star),1).
        \label{eq:Qqstar_rExp}      
    \end{IEEEeqnarray}
    Further, $\bm Z$ satisfies the radial memoryless property that
    \begin{equation}
    \PR\left[\bm Z\in [\loc \radd B_{r_1+r_2}(\Orig)\rmult \Sigma]^\prime\mid \bm Z \in \{\loc +B_{r_1}(\Orig)\rmult \Sigma\}^\prime\right]= \PR\left[\bm Z\in \{\loc \radd B_{r_2}(\Orig)\rmult\Sigma\}^\prime\right].
    \label{eq:MRS_stability_exp}
    \end{equation}
\end{proposition}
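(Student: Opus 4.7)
The strategy is to exploit directly the radial-directional decomposition that underlies Definition~\ref{def:mult_rad_stab_GP} when $\xi(\bm w)\equiv 0$: writing $\bm Z=R\bm W$ with $\bm W\sim r_{\W}$ on $\SSS^{d-1}$, the conditional law of $R$ given $\bm W=\bm w$ is the shifted exponential with CDF
\[
F_{R\mid\bm W}(r\mid\bm w)=1-\exp\{-(r-r_{\loc}(\bm w))/r_{\Sigma}(\bm w)\},\qquad r\geq r_{\loc}(\bm w).
\]
All three assertions can then be deduced by inverting this one-dimensional CDF, invoking the scalar memoryless property of the exponential, and re-expressing the resulting pointwise identities in the star-body arithmetic $(\radd,\rmult)$. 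A consistency check is available by taking $\xi\to 0$ in Proposition~\ref{prop:MRS_stability_GP}, using the elementary limits $\xi^{-1}\{(1-q)^{-\xi}-1\}\to -\log(1-q)$ and $\xi^{-1}\{c^{\xi}-1\}\to\log c$ for $c>0$, together with $B_{\xi}(\Orig)\rmult B_{r_1}(\Orig)\to\Orig$ inside the denominator on the right-hand side of \eqref{eq:MRS_stability_GP}.

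For \eqref{eq:Qq_rExp}, I apply Definition~\ref{defn:mult_quantile} with $\mu=\PR_{\bm W}$, so that $r_{\QSq}(\bm w)=F^{-1}_{R\mid\bm W}(q\mid\bm w)=r_{\loc}(\bm w)-r_{\Sigma}(\bm w)\log(1-q)$; since $r_{B_{c}(\Orig)}\equiv c$, $r_{\mathcal{A}\radd\mathcal{B}}=r_{\mathcal{A}}+r_{\mathcal{B}}$ and $r_{\mathcal{A}\rmult\mathcal{B}}=r_{\mathcal{A}}r_{\mathcal{B}}$, this pointwise identity collapses into $\QSq=\loc\radd B_{-\log(1-q)}(\Orig)\rmult\Sigma$. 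The argument for \eqref{eq:Qqstar_rExp} is identical after setting $\mu=\mu^{\star}$ and using $d\mu^{\star}/d\PR_{\bm W}(\bm w)=1/[\vol{\SSS^{d-1}}r_{\W}(\bm w)]$, which yields $r_{\QSq^{\star}}(\bm w)-r_{\QSq}(\bm w)=r_{\Sigma}(\bm w)\log\{r_{\W}(\bm w)\vol{\SSS^{d-1}}\}$; this rearranges to the stated expression once the $\log$ of a star-body is read through its radial function.

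The radial memoryless property \eqref{eq:MRS_stability_exp} reduces to the observation that $\{\loc\radd B_{r}(\Orig)\rmult\Sigma\}'=\{R>r_{\loc}(\bm W)+r\,r_{\Sigma}(\bm W)\}$: conditioning on $\bm W$ and using the scalar memorylessness of the exponential gives $\PR\{R-r_{\loc}(\bm W)>(r_{1}+r_{2})r_{\Sigma}(\bm W)\mid R-r_{\loc}(\bm W)>r_{1}r_{\Sigma}(\bm W),\bm W\}=e^{-r_{2}}$, so that integrating in $\bm W$ recovers $\PR[\bm Z\in\{\loc\radd B_{r_{2}}(\Orig)\rmult\Sigma\}']$, since the survival factor $e^{-r}$ is direction-free and the conditioning therefore preserves the marginal law of $\bm W$. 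The main obstacle is bookkeeping rather than conceptual: one must verify that each scalar identity lifts to an identity between valid star-bodies, in particular that $\log\{\W\rmult B_{\vol{\SSS^{d-1}}}(\bm 0)\}$ defines an admissible radial function on $\suppWclean$ throughout the admissible range $q>q_{l}(\PR_{\bm W}\,\|\,\mu^{\star})$ singled out by \eqref{eq:q_l}, and that the $\radd$ and $\rmult$ operations indeed preserve star-bodies across the manipulations.
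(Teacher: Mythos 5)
Your proof of the radial memoryless property \eqref{eq:MRS_stability_exp} is correct and is essentially the same argument the paper gives in Appendix~\ref{proof:stab_exp}. The paper writes the conditional probability as a ratio of unconditional void probabilities of the complement sets, re-expresses them through the normalized radial excess $\{\lVert\bm Z\rVert-r_{\loc}(\bm W)\}/r_{\Sigma}(\bm W)$, and reads off the ratio $e^{-r_2}$; you condition on $\bm W$ first, invoke scalar memorylessness of the exponential, and then integrate. These are logically equivalent, but your version has the advantage of making explicit the key observation the paper leaves implicit, namely that the conditional survival factor $e^{-r}$ is direction-free, so conditioning on an exceedance event preserves the marginal law of $\bm W$. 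Without that remark the final identity $\exp\{-r_2\} = \PR[\bm Z\in\{\loc\radd B_{r_2}(\Orig)\rmult\Sigma\}']$ does not obviously follow, so your added care there is worthwhile.

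For \eqref{eq:Qq_rExp} and \eqref{eq:Qqstar_rExp}, your derivation via direct inversion of the conditional CDF $F_{R\mid\bm W}$ under Definition~\ref{defn:mult_quantile} --- first with $\mu=\PR_{\bm W}$ (so the Radon--Nikodym factor is $1$) and then with $\mu=\mu^\star$ (giving the $\log\{\vol{\SSS^{d-1}}r_{\W}(\bm w)\}$ correction) --- is correct and carefully tracks how the scalar identity lifts to radial functions under $(\radd,\rmult)$. The paper cites Appendix~\ref{proof:stab_exp} for these two formulas, but the proof actually given there addresses only \eqref{eq:MRS_stability_exp}, so your argument supplies a derivation the paper nominally claims but does not in fact provide. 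The $\xi\to 0$ consistency check against Proposition~\ref{prop:MRS_stability_GP} is a sensible sanity check, though not needed for the argument. Overall this is a complete and correct proof that matches the paper's approach where they overlap and sensibly fills the gap where the paper is silent.
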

Proofs of stability properties~\eqref{eq:Qq_rGP}
and~\eqref{eq:Qq_rExp} are given in Appendix~\ref{proof:stab_exp}
and~\ref{proof:stab_gP}.\ We note that in
equation~\eqref{eq:MRS_stability_exp}, equality holds when replacing
the location $\loc$ by any $K\in\bigstar$ with $K\subseteq\loc$.\

Given a probability set $\QS_q$ of a random vector $\bm Z$, the stability properties established above allow extrapolation to probability sets with probability $q^\prime\geq q$.\ If $\bm Z$ follows a radial generalised Pareto distribution as in Proposition~\ref{prop:MRS_stability_GP}, then $\bm Z\mid \bm Z \notin \QSq$ follows a radial generalised Pareto distribution with scale $\Sigma^\prime = \Sigma + \xi\rmult(\QSq - \loc)$, and for $ q^\prime\geq q$,
\begin{IEEEeqnarray}{rCl}
\label{eq:rs_stability_gP}
\QS_{q^\prime} &=& \QSq \radd \{B_{(1-q)/(1-q^\prime)}(\Orig)^{\xi}-B_{1}(\Orig)\}\rmult\Sigma^\prime\rdiv \xi, 
\end{IEEEeqnarray}
and $\QS_{q^\prime}^\star$ is obtained according to equality~\eqref{eq:Qqstar_rGP}.\ Similarly, if $\bm Z$ follows a radial exponential distribution as in Proposition~\ref{prop:MRS_stability_exp}, then $\bm Z\mid \bm Z \notin \QSq$ follows a radial exponential distribution with same scale $\Sigma$, and for $ q^\prime\geq q$,
\begin{IEEEeqnarray}{rCl}
\label{eq:rs_stability_exp}
  \QS_{q^\prime} &=& \QSq \radd B_{[\log\{(1-q)/(1-q^\prime)]\}}(\Orig)\rmult\Sigma,
\end{IEEEeqnarray}
and $\QS_{q^\prime}^\star$ obtained according to equality~\eqref{eq:Qqstar_rExp}.\

In Section~\ref{sec:inference}, we propose a statistical inference method
utilising the family of radial generalised Pareto distributions to extrapolate to extreme regions lying beyond the range of observed data.\ 


\section{Statistical inference}
\label{sec:inference}
\subsection{Standardisation of margins}
\label{sec:standardisation}

Suppose that $\obsdatbold_1, \dots, \obsdatbold_n$, are observations
drawn randomly from the distribution of the random variable
$\obsrvbold = (O_1, \dots, O_d)^\top$.\
When the tails of the original distribution decay exponentially, we
have the flexibility to model a broader range of extremal dependence
structures, not only allowing for both asymptotic independence and
asymptotic dependence, but also allowing for more complex types of
dependencies such as when some coordinates exhibit positive extremal
dependence while others exhibit negative extremal dependence
\citep{keefpaptawn13, nolde2021linking}.\ When $\bm{O}$ does not have
exponential decay, the transformation
$X_j = F_L^{-1}(\widehat{F}_j(O_j))$ must be applied for
$j=1,\dots,d$, where $F_L^{-1}$ is the distribution function of the
standard Laplace distribution, and
\begin{equation}
  \widehat{F}_j(o) =
  \begin{cases}
    \Big[1-\left\{1 - {\xi}_{j, -} \left(\dfrac{o - u_{j,-}}{{\sigma}_{j,-}}\right)\right\}_+^{-1/{\xi}_{j,-}}\Big] \widetilde{F}_j(u_{j,-})& o \leq u_{j,-}\\
    \widetilde{F}_j(o)& u_{j,-} < o \leq u_{j,+}\\
    1-[1-\widetilde{F}(u_{j,+})]\left\{1 + {\xi}_{j,+} \left(\dfrac{o - u_{j,+}}{{\sigma}_{j,+}}\right)\right\}_+^{-1/{\xi}_{j,+}}& o > u_{j,+}
  \end{cases} .
  \label{eq:PIT_transformation}
\end{equation}
$\widetilde{F}_j(o) = (n+1)^{-1}\sum_{i=1}^n \mathbbm{1}[\obsdat_{ij}
\leq o]$, and $(\widehat{\sigma}_1, \widehat{\xi}_1)$ and
$(\widehat{\sigma}_{j,-}, \widehat{\xi}_{j,-})$ and
$(\widehat{\sigma}_{j,+}, \widehat{\xi}_{j,+})$ are scale and shape
parameters of generalised Pareto distributions for the lower and upper
tail of the $j^{\text{th}}$ margin, which are used to model the tail
decay below and above the thresholds $u_{j,-}$ and $u_{j,+}$,
respectively.

In practice, the thresholds $u_{j,-}$ and $u_{j,+}$ can be modelled as
quantiles of the marginal distribution.\ By adopting a Bayesian
approach, we can assign a posterior distribution to the joint
distribution of these quantiles.\ This posterior enables us to sample
candidate thresholds, above which separate GP models are implemented
for the tails.\ This approach naturally accounts for uncertainty in
the marginal distributions' parameters, as the posterior distribution
of the quantiles propagates this uncertainty forward into the tail
modelling process.\ Adopting a Frequentist approach, we can estimate
GP parameters using maximum likelihood to the exceedances above
empirical quantiles, and propagate the uncertainty in the quantile
estimate and the corresponding GP parameters forward using a bootstrap
approach.\ 




\subsection{Quantile regression}
\label{sec:QR}
Quantile regression methods are typically implemented using a pinball
loss function~\citep{koenker2005quantile} and often without a
distributional model for the density.\ Although~\cite{yumoy01} propose
the use of the asymmetric Laplace distribution for the model density
due to the equivalence of the negative log density of the
asymmetric Laplace with the pinball loss function, naively treating the
asymmetric Laplace as an adequate model for the data is precarious in
a Bayesian setting.\ For example, \cite{waldmann2013bayesian} show
that the resulting posterior prediction intervals have poor
frequentist calibration properties, and this is especially pronounced
for tail quantiles, which are essential in our setup.\ Second, the
scale parameter of the asymmetric Laplace distribution is arbitrary in
a Bayesian framework and even maximum likelihood based estimators are
known to lead to inaccurate quantile estimates,
see~\cite{fasiolo2021fast}.\

We adopt a \textit{generalised linear model} based approach for
quantile regression, requiring an adequate distributional model for
the density of $R\mid \bm W$.\ In particular, we assume that
$R\mid \bm W=\bm w$ follows a Gamma distribution and model the
logarithm of its conditional $q$-quantile $\log \rquant(\bm w)$ using
a a finite-dimensional continuously specified Gaussian process prior
on $\SSS^{d-1}$ (see Supplementary Material~\ref{supp:Posterior}).\
\color{black} We model the logarithms of the radial function
$\rquant$ by approximations of Mat\'ern Gaussian fields on
$\SSS^{d-1}$ using the stochastic partial differential equation
approach by \cite{lindetal11}, with $\alpha=2$ (see their
Equation~(2)) which is also the default option in the \texttt{R-INLA}
package
(\texttt{www.r-inla.org}).\ 
\color{black}

There are at least two strengths behind this choice.\ First, due to
Proposition~\ref{prop:rof}, the Gamma distribution serves as a good
approximation for the density in the tail region of the distribution
of $R\mid \bm W$.\ Since estimators of high quantiles are not
influenced by the bulk of the distribution, a likely misspecification
between our choice and the true density of $R\mid \bm W$ in the body
of the distribution is not of concern.\ Second, our choice exploits
the form of the decay of the conditional density and allows for
Bayesian inference for the conditional quantile under a model that can
adequately describe the behaviour in the tail.\ In this paper, we
adopt a Gamma quantile regression models due to the simplicity that it
affords, but remark that further improvements may be achieved by
using, for example, a truncated Gamma distribution, so as to further
eliminate effects from the body of the distribution of $R\mid \bm W$
\citep{wadsworth2022statistical}.\

Given i.i.d.\ observations $\bm o_1,\ldots,\bm o_n$ from a random
vector $\bm O$, we obtain the standardised data
$\bm x = \{\bm x_1,\ldots,\bm x_n\}$ via
$\bm x_i
=(F^{-1}(\widetilde{F}_1(o_{i,1})),\ldots,F^{-1}(\widetilde{F}_d(o_{i,d})))$
as described in Section~\ref{sec:standardisation}.\ To infer the
quantile set $\QSq$ of $\bm X$, we treat
$\dat = \{(r_i,\bm w_i)\,:\,i=1,\dots,n\}$, with
$(r_i,\bm w_i)=(\lVert \bm x_i\rVert, \bm x_i/\lVert \bm x_i\rVert)$,
as observations from $(R,\bm W)$ and apply the Gamma quantile
regression method detailed above.

\subsection{Conditional likelihood function}
\label{sec:likelihood}
Given that $\bm X$ has standard Laplace margins, we assume that
exceedances of $\QSq$ follow a radial GP distribution with location
$\QSq$, scale $\G$, radial shape $\xi$, and directional shape $\W$.\
Below we describe a likelihood-based statistical inference approach
for these parameters.\ 

We recall from equation~\eqref{eq:W_from_L} that the set $\W$ can be
expressed as $f_{\bm W}(\bm w)= r_{\LL}(\bm w)^d/(d\vol{\LL})$, where
$\LL$ is a star-body.\ Based on Section~\ref{sec:density_angles}, we
motivate three models for $\W$ given by
\begin{IEEEeqnarray}{rCl}
  \Mh:\LL=\B, \quad \quad \Mg:\LL=\G, \quad \quad
  \Mgh:\LL=\B\rmult\G,
\end{IEEEeqnarray}
for $\B \in \bigstar$ independent of $\G$.\ We note that
the latent star-bodies $\B$ in models $\Mh$ and $\Mgh$ do not have the
same interpretation, but the slight abuse of notation allows to
simplify the notation for the parameter space in later sections
without loss of interpretation.\ 
The nested structure of model $\Mg$ within the parameter space of
model $\Mgh$ translates into a bias-variance trade-off as the latter
offers additional flexibility at the cost of a possibly increased
variance for the latent set $\G$.\ A similar trade-off occurs for $\G$
between models $\Mh$ and $\Mgh$ since the former ignores possible
information contained in the observed directions.\

Conditionally on the quantile set $\QSq$ and the data $\dat$, the
likelihood of $\bm \theta = (\G,\LL,\xi)$, is given by
$\smash{L}(\bm \theta \mid \QSq, \dat):= \prod_{i=1}^n f_{R, \bm
  W}(r_i, \bm w_i\mid \bm \theta,\QSq)$.\ Letting
$\mathcal{S}_q:=\{i\in \{1,\dots,n\}\,:\, r_i > \rquant(\bm w_i)\}$
denote the set of random indices corresponding to exceedances of
$\QSq$, and
$\mathcal{S}_q^\prime:=\{1,\ldots,n\}\backslash \mathcal{S}_q$,
$\smash{L}$ can be expressed in terms of contributions of the radii
$\{r_1,\ldots,r_n\}$ above and below $\rquant$ via
\begin{IEEEeqnarray}{rCl}
  && L(\bm \theta\mid \QSq, \dat) = \left[\prod_{i\in
      \mathcal{S}_q^\prime}f_{R\mid R\leq \rquant(\bm W),\bm
      W}(r_i)f_{\bm W}(\bm w_i)\right]\left[\prod_{i \in
      \mathcal{S}_q}f_{R\mid R> \rquant(\bm W),\bm W}(r_i)f_{\bm
      W}(\bm w_i)\right].
  \label{eq:likelihood_alldata}
\end{IEEEeqnarray}
Under the assumption that $\dat$ is a random sample,
Corollary~\ref{cor:equality_of_W_distributions} suggests that all
directions $\{\bm w_1,\ldots,\bm w_n\}$ may be used in the likelihood for
the inference of $f_{\bm W} = r_{\LL}^d/(d\vol{\LL})$ used to model
$\bm W\mid R>r_{\QS_q}(\bm W)$.\ In particular, when models $\Mh$ or
$\Mgh$ are used, substantial gains in the inference for $\G$ can be
attained by also including in inference the directions at which
non-exceedances occur, as illustrated in the simulation study of
Supplementary Material~\ref{sec:sim_study}.\ Based on
Theorem~\ref{thm:PPconvergence}, non-exceedance radii
$\{r_i:i\in\mathcal{S}_q^{C}\}$ are assumed not to carry
information about $\G$ and $\LL$; we pose that $\LL$ is constant with
respect to them.\ Denoting by $\mathcal{S}_{\bm w}$ the set of indices
of at least all exceedances and at most all observations---or
$\mathcal{S}_{q}\subseteq \mathcal{S}_{\bm
  w}\subseteq\{1,\ldots,n\}$---the likelihood thus reduces to
\begin{IEEEeqnarray}{rCl}
  L(\bm \theta\mid \QSq,\dat) &\propto&
  \prod_{i\in\mathcal{S}_q} f_{R\mid R> \rquant(\bm W),\bm W}(r_i\mid \bm w_i) \prod_{i\in\mathcal{S}_{\bm w}} f_{\bm W}(\bm w_i)\nonumber\\
    &=& \exp\left\{-\lvert \mathcal{S}_{\bm w} \rvert \log
      (d\vol{\LL}) \right\}\prod_{i\in \mathcal{S}_q} f_{R_E\mid \bm
      W}\left((r_i - \rquant(\bm w_i))/ {\rG}_q(\bm
      w_i)\mid\bm w_i\right) \prod_{i\in\mathcal{S}_{\bm
        w}}r_{\LL}(\bm w_j)^d,\quad
  \label{eq:likelihood}
\end{IEEEeqnarray}
where
$f_{R_E\mid \bm W}(z\mid \bm w) = [1+\xi(\bm w) z]_+^{-1/\xi(\bm w)-1}/\rG(\bm w)$.\ 
The likelihood function~\eqref{eq:likelihood} is amenable to standard
likelihood based inference using either frequentist or Bayesian
methods when parametric models are selected for $\mathcal{W}$ and
$\G$.\ When interest is in semi-parametric models, a key complication
arises when using the likelihood in expression~\eqref{eq:likelihood}.\
Evaluating the likelihood function~\eqref{eq:likelihood} requires
computing the constant $d\vol{\LL}$ which ensures the density of
$\bm W$ integrates to one.\ This normalising constant is in several
cases difficult to compute exactly, which makes inference difficult.\
However, using the \textit{Poisson transform} \citep{bake94}, we can
map the likelihood into an \textit{equivalent} likelihood function
$L( \bm \theta,\beta \mid \QSq,\dat)$ of a Poisson point process
defined in an expanded space given by
\begin{align}\label{eq:likelihood-v2}
  L(\bm \theta,\beta)
  = \exp\left[-\lvert \mathcal{S}_{\bm w} \rvert
  e^\beta (d\vol{\LL})\right] \prod_{i\in
  \mathcal{S}_q}f_{R_E\mid \bm W}\left((r_i - \rquant(\bm w_i))/ {\rG}_q(\bm
      w_i)\mid\bm w_i\right) \prod_{i\in\mathcal{S}_{\bm w}}e^\beta r_{\LL}(\bm w_i)^d,
\end{align}
where the latent variable $\beta$ estimates the normalising constant
$d\vol{\LL}$; it is inferred as another parameter at no loss of
information \citep{barthelme2015poisson}.\ It is worth noting that
through this approach, we still need to compute the volume of $\LL$.\
However, since our parametrisation assumes that $\log r_\LL$ is linear
within, $\vol{\LL}$ can be estimated numerically in a stable and
efficient manner 
based on the method introduced by \cite{simpsetal16}, see also
\cite{yuanetal17,Lindgren_ExpLik_2023} and
\cite{papastathopoulos2023bayesian} for further applications of this
method.\ \color{black} 
Inference can be performed either using frequentist methods or in a
fully Bayesian manner, that is, by assigning suitable prior
distributions on $\G$, $\LL$, and $\beta$
\citep{Lindgren_ExpLik_2023}.\ 
More details on statistical inference for the latent variables and on
how the fitted models can be used to perform rare event probability
estimation are found in Supplementary
Material~\ref{sup:Statistical_inference}.

\subsection{Assessment of convergence to limit distribution}
\label{sec:validation_selection}

In this section, we outline a novel diagnostic, allowing assessment of
the weak convergence in Theorem~\ref{thm:PPconvergence}.\ Following
Remarks~\ref{rem:alternative_renormalization} and
\ref{rem:uniform_over_the_ball}, observe that under the conditions of
Theorem~\ref{thm:PPconvergence} we have
\begin{IEEEeqnarray*}{rCl}
  &&H_{\bm W}\left(\frac{R - r_{\QS_{q}^\star}(\bm
      W)}{r_{\G_{q}^\star}(\bm W)}\right)^{1/d} \bm W ~\Big|~ \left\{R
    > r_{\QS_{q}^\star}\left(\bm W\right)\right\}\cind \bm U_{B_1(\bm
    0)}, \quad \text{as $q\to 1$},
\end{IEEEeqnarray*}
where $\bm U_{B_1(\bm 0)}$ is a random variable with a uniform
distribution \textit{in} the $d$-dimensional unit ball ${B_1(\bm 0)}$,
that is,
$\PR_{\bm U_{B_1(\bm 0)}}(A)= \vol{A\cap B_1(\bm 0)}/\vol{\SSS^{d-1}}$
for any Borel set $A\subseteq B_1(\bm 0)$.\ Here,
$r_{\QS_{q}^\star}:=b_{1/(1-q)}^\star $ and
$r_{\G_{q}^\star}:=a_{1/(1-q)}^\star$, where $b_{1/(1-q)}^\star$ and
$a_{1/(1-q)}^\star$ are given by the functions~\eqref{eq:b_n_star}
and~\eqref{eq:a_n_star}, that is,
\[
  r_{\QS_{q}^\star}(\bm w) = r_{\QS_{q}}(\bm w) + r_{\G_{q}}(\bm
  w)[\{\vol{\SSS^{d-1}}r_{\W}(\bm w)\}^{\xi} -1]/\xi  \quad \text{and}\quad  r_{\G_{q}}^\star(\bm w)= r_{\G_{q}}(\bm
  w)\{\vol{\SSS^{d-1}}r_{\W}(\bm w)\}^{\xi}, 
\]
for $\bm w \in \SSS^{d-1}$ and $q$ such that $\inf\{r_{\QS_{q}^\star}(\bm w)\,:\,\bm w \in \SSS^{d-1}\} \geq 0$.\

We note the power $1/d$ used in the transformation of the renormalised
radius as this transformation results in the aforementioned limit
distribution, possessing particularly convenient features that allow
us to build a simple diagnostic based on the second order properties
\citep{Ripley1976,Ripley1977modelling} of a stationary random point
measure.\ 
Here, stationarity means that the distribution of $P^\star$ is invariant under rotations in $\RR^d$.\ 
In particular, the stationary random point
measure
\begin{IEEEeqnarray}{rCl}
  P^\star := \sum_{i=1}^n\delta\left[ H_{\bm W_i}\left(\frac{R_i -
        r_{\QS_{q}^\star}(\bm W_i)}{r_{\G_{q}^\star}(\bm
        W_i)}\right)^{1/d}\bm W_i\right]\mathbbm{1}_{R_i >
    r_{\QS_{q}^\star}\left(\bm W_i\right)},
  \label{eq:stationary_point_measure}
\end{IEEEeqnarray}
has size
$n^\star:=\sum_{i=1}^n\mathbbm{1}_{\lVert \bm X_i\rVert >
  r_{\QS_{q}^\star}\left(\bm X_i/\lVert \bm X_i\rVert\right)}$.\ In
what follows, we develop a method to assess if it is statistically
distinguishable from a random point measure with constant intensity on
$B_{1}(\bm 0)$.\

For a random point measure $\Pi_{n^\star}:=\sum_{i=1}^{n^\star}\delta_{\bm U_i}$ on
$B_{1}(\bm 0)$,
we define the
normalised reduced second-order measures $K_{B}$ and $K_{C}$ by
\begin{IEEEeqnarray}{rCl}
  \label{eq:lambda_K}
  &&K_{B}(r) =\lambda^{-1}\mathbb{E} \bigg[\sum_{j\neq i}
  \mathbbm{1}_{\bm U_{j}}\{B_r(\bm U_{i})\} \bigg] \quad\text{and} \quad
  K_{C}(r) = \lambda^{-1}\mathbb{E} \bigg[\sum_{j\neq i}
  \mathbbm{1}_{\bm U_{j}}\{C_r(\bm U_{i})\} \bigg],
\end{IEEEeqnarray}
where $\lambda$
is the average intensity of the
point measure over $B_1(\bm 0)$, 
$B_r(\bm U_j)$ is the ball of radius $r$ centred at $\bm U_j$,
and $C_r(\bm U_j)$ is the spherical sector with apex at~$\bm 0$,
extending outwards in the direction of $\bm U_j$, with half aperture angle
$r$.\ 
The specific geometries of $B_r$ and $C_r$ are chosen such that they reveal regions of deviation from constant intensity of an underlying random point measure occurring in radial-directional and strictly directional neighbourhoods, respectively.\ 
In particular, if $\{\bm U_i\,:\,i=1,\dots,n^\star\}$ is a random sample of uniformly
distributed directions arising from the distribution of
$U_{B_1(\bm 0)}$, then we have
\begin{IEEEeqnarray}{rCl}
\label{eq:K_uniform}
  &&K_{B}(r) = \int_{B_1(\bm 0)} \frac{\vol{B_r(\bm x)\cap B_1(\bm
      0)}}{\vol{B_1(\bm 0)}} d\bm x \quad\text{and} \quad K_{C}(r) =
  \int_{B_1(\bm 0)} \frac{\vol{C_r(\bm x)\cap B_1(\bm
      0)}}{\vol{B_1(\bm 0)}} d\bm x.
\end{IEEEeqnarray}

Hence, given an observed point pattern
$P^\star$ obtained via
transformation~\eqref{eq:stationary_point_measure} of a random sample
from $\PR_{\bm
  X}$ and a sample from the posterior distribution of
$\QS_q^\star$ and $\G_q^\star$, we can compare the functions
$K_{B}$ and
$K_{C}$ given by expression~\eqref{eq:K_uniform} with the empirical
estimates of the measures~\eqref{eq:lambda_K},
\begin{IEEEeqnarray}{rCl}
  \label{eq:k_hat}
  && \widehat{K}_{B}(r\mid P^\star) =\frac{\vol{B_1(\bm
      0)}}{(n^\star)^2}\sum\limits_{\substack{\bm u_i, \bm u_j\in P^\star\\ \bm u_i\neq \bm u_j}}\mathbbm{1}_{\bm u_j}\{B_r(\bm u_i)\} \quad \text{and}\quad
  \widehat{K}_{C}(r\mid P^\star) =\frac{\vol{B_1(\bm
      0)}}{(n^\star)^2}\sum\limits_{\substack{\bm u_i, \bm u_j\in P^\star\\ \bm u_i\neq \bm u_j}}\mathbbm{1}_{\bm u_j}\{C_r(\bm u_i)\}. \qquad 
\end{IEEEeqnarray}


To evaluate the variability of the function
$\widehat{K}_S(\cdot\mid \Pi_{n^\star})$, where $S$ represents the
sets $B$ or $C$, on random point measures $\Pi_{n^\star}$ with
constant intensity on $B_1(\bm 0)$, we construct a
$(1-\alpha)$-envelope denoted by
$\mathcal{E}_{S, n^\star}(r)=[\mathcal{E}_{S,n^\star}^{\text{low}}(r),
\mathcal{E}_{S,n^\star}^{\text{low}}(r)]$ which satisfies, for $I_B=[0,2]$ and $I_C=[0,\pi]$, that
\begin{IEEEeqnarray}{rCl}
\label{eq:envelopes}
&& \PR\left[\left\{ \mathcal{E}_{S,n^\star}^{\text{low}}(r) \leq
  \widehat{K}_{S}\left(r\mid \Pi_{n^\star}\right)\leq
  \mathcal{E}_{S,n^\star}^{\text{upp}}(r)\,:\,r\in I_S\right\}\right]
  =
  1-\alpha.
\end{IEEEeqnarray}

The envelope bounds are estimated using Monte-Carlo
simulation.\ Specifically, we draw $m$ independent samples of size
$n^\star$ from the distribution $\PR_{U_{B_1(\bm 0)}}$, say $\{\Pi_{n^\star}^{(j)}:j=1,\ldots,m\}$, each
corresponding to a realisation of a random point measure with constant
intensity in $B_1(\bm 0)$.\ For each sample $\Pi_{n^\star}^{(j)}$, we compute the discretised
estimates $\widehat{K}_{S}(\cdot\mid \Pi_{n^\star}^{(j)})$ at a predefined sequence of $r$ values.\
The lower and upper bounds of the envelope are defined in a manner
akin to \cite{Bolin_excursions_2018}, that is, 
as the empirical $(\rho/2)$- and $(1-\rho/2)$-quantiles of the $m$ values of $\{\widehat{K}_S(r\mid \Pi_{n^\star}^{(j)}):j=1,\ldots,m\}$, where $\rho\in(0,1)$ is chosen
such that $100(1-\alpha)\%$ of the $m$ discretised function estimates
lie entirely within the envelope.\
Then, evidence against the assumption that $P^\star$ has constant intensity---observed whenever the estimates
$\widehat{K}_B(\cdot\mid P^\star)$ or $\widehat{K}_C(\cdot\mid P^\star)$ do not lie entirely within their $(1-\alpha)$-envelopes---suggests evidence against the validity of the approximation of the distribution of the observed extremes
via the limit distribution of Theorem~\ref{thm:PPconvergence}.\ 


More details on model selection and validation, on classical goodness-of-fit methods, and on simultaneous
uncertainty bands are found in the Supplementary
Material~\ref{sup:Validation}.\

\section{Newlyn wave heights data analysis}
\label{sec:wave}


We apply our methodology to a dataset of dimension
$d=3$ consisting of hourly measurements of wave height $O_H$, in
meters, wave period $O_P$, in seconds, and surge $O_S$, in meters,
measured over the period 1971--1977 at the Newlyn port in south-west
England.\ The dataset was first analysed
in~\cite{coles1994statistical} in the case where asymptotic dependence was
assumed between all three variables.\ Although a typical assumption, \cite{wadsworth2022statistical} show that assymptotic indepence is a more reasonable assumption.\ Here, we revisit this data with a more flexible approach also enabling the modelling of negative dependence.\ Following previous literature,
we analyse componentwise maxima over 15-hour periods, resulting in a
dataset of $n=2,894$ observations.\ The margins $O_H, O_P, O_S$
of the data are unknown and we thus standardise them to standard Laplace using methods
from Section~\ref{sec:standardisation}, resulting in observations $\widetilde{\bm x} := \{\bm x_i = (x_{H,i},x_{P,i},x_{S,i})\,:\,i=1,\ldots,n\}$ interpreted as random draws from $\bm X=(X_H,X_P,X_S)$.\

We begin by fitting to $\widetilde{\bm x}$ the hierarchical Bayesian gamma quantile
regression model for the quantile set $\QSq$ at the $q=0.9$ probability level.\ 
For each of $20$ samples $\{\QS_{q,i}\,:\, i=1,\ldots,20\}$ from the posterior distribution of $\QS_q$, we assume that the exceedances of $\QS_{q,i}$ follow a radial exponential distribution exactly, and fit the models~$\Mh$, $\Mg$, and $\Mgh$ to them.\ For each model and $\QS_{q,i}$, we sample $50$ observations $\{(\G_{j,i},\W_{j,i})\,:\, j=1,\ldots,50\}$ from the posterior distribution of $\G,\W\mid\QS_{q,i}$.\  
It is deemed via model checks that
$\Mg$ fitted on exceedance directions only provides the best overall joint
fit (see Supplementary Material~\ref{sec:wave-additional-plots}).\

Figure~\ref{fig:wave-fit-figs}
shows the posterior mean of $\QS_q$, $\G$ and $\W$ obtained from model $\Mg$, as well as the posterior means of extrapolated $\QS_{0.95}$ and $\QS_{0.95}^\star$, corresponding to complement return sets with return period $T=300$-hour.\ The probability sets $\QS_{0.95}$ and $\QS_{0.95}^\star$ were inferred via expressions~\eqref{eq:rs_stability_exp}, using the assumption that a radial exponential distribution holds exactly above $\QSq$.\ 

The model captures the extremal dependence structure of
our dataset well as observed through the correspondence between the
posterior mean of $\G$ and the scaled sample cloud.\
Figure~\ref{fig:wave-fit-figs} also shows the directions of exceedances of the posterior means of $\QS_{0.95}$ and $\QS_{0.95}^{\star}$ which, as expected, seem to respectively follow $\W$ and the uniform density on $\SSS^2$ well.\ 

\begin{figure}[t!]
    \centering
    \begin{overpic}[width=0.31\textwidth,trim=0 60 0
    0]{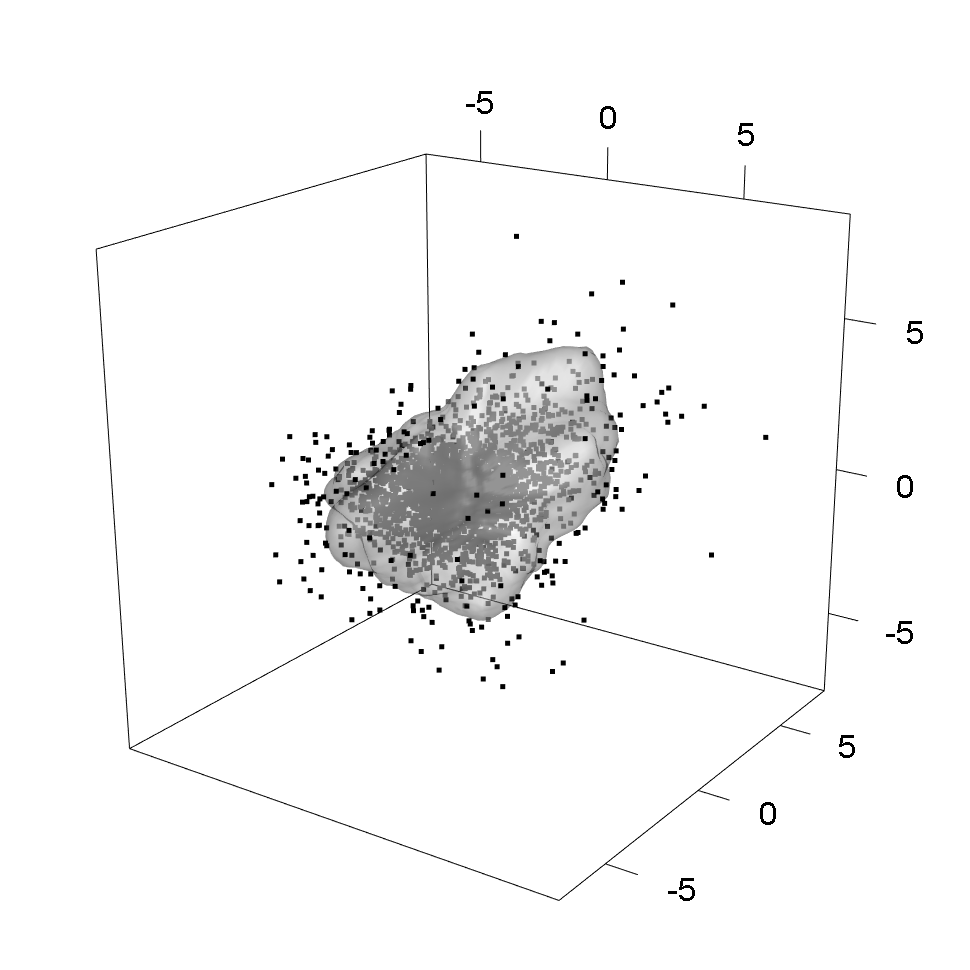}
    \put (63,85) {\scriptsize$X_H$}
    \put (96,42) {\scriptsize$X_S$}
    \put (82,8) {\scriptsize$X_P$}
    \end{overpic}
    \hspace{1.5em}
    \begin{overpic}[width=0.31\textwidth,trim=0 60 0
    0]{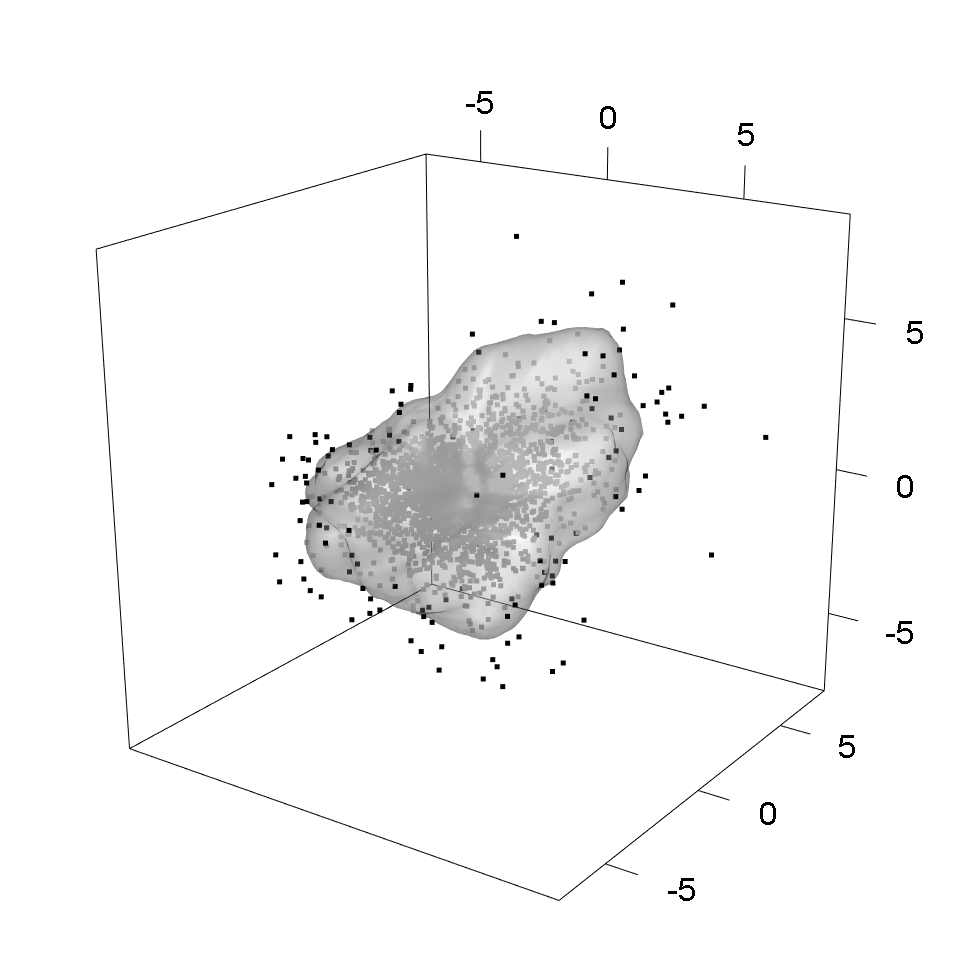}
    \put (63,85) {\scriptsize$X_H$}
    \put (96,42) {\scriptsize$X_S$}
    \put (82,8) {\scriptsize$X_P$}
    \end{overpic}
    \hspace{0.5em}
    \begin{overpic}[width=0.30\textwidth,trim=0 60 0
    0]{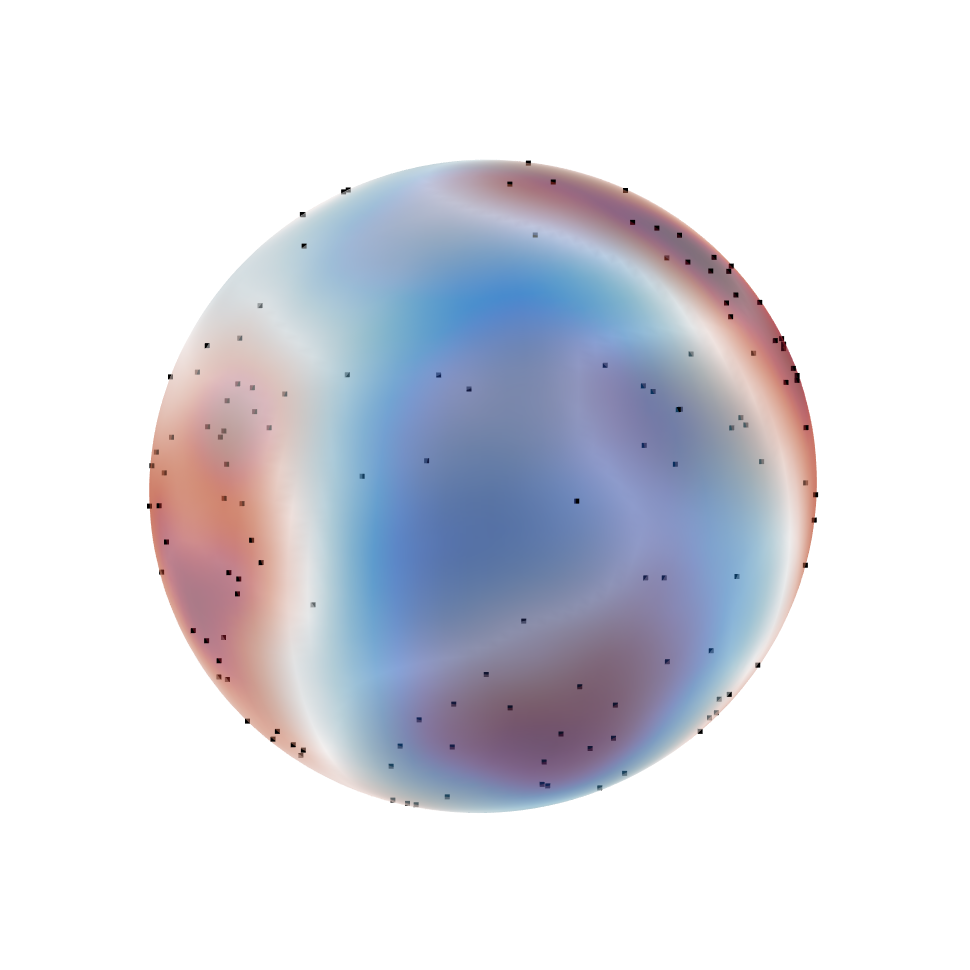}
    \put (73,73) {\small$\SSS^2$}
    \put (14,10) {\scriptsize $\color{BrickRed}\blacksquare\color{black}$ \tiny$-4.17$}
    \put (14,6) {\scriptsize $\color{RoyalBlue}\blacksquare\color{black}$ \tiny$-1.17$}
    \end{overpic}
    \hspace{-3em}
    \begin{overpic}[width=0.31\textwidth,trim=0 60 0
    0]{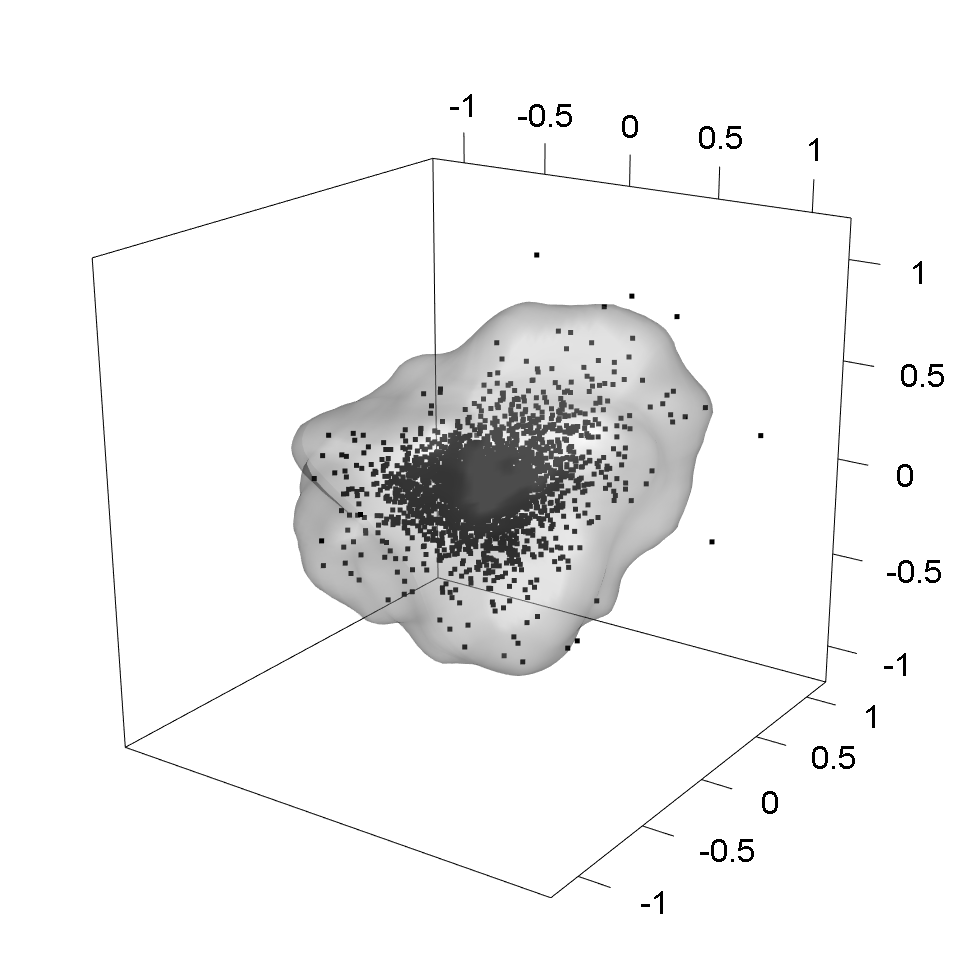}
    \put (63,85) {\tiny$X_H/\log(n/2)$}
    \put (96,44) {\tiny$X_S/$}
    \put (82,8) {\tiny$X_P/\log(n/2)$}
    \end{overpic}
    \hspace{1.5em}
    \begin{overpic}[width=0.31\textwidth,trim=0 60 0
    0]{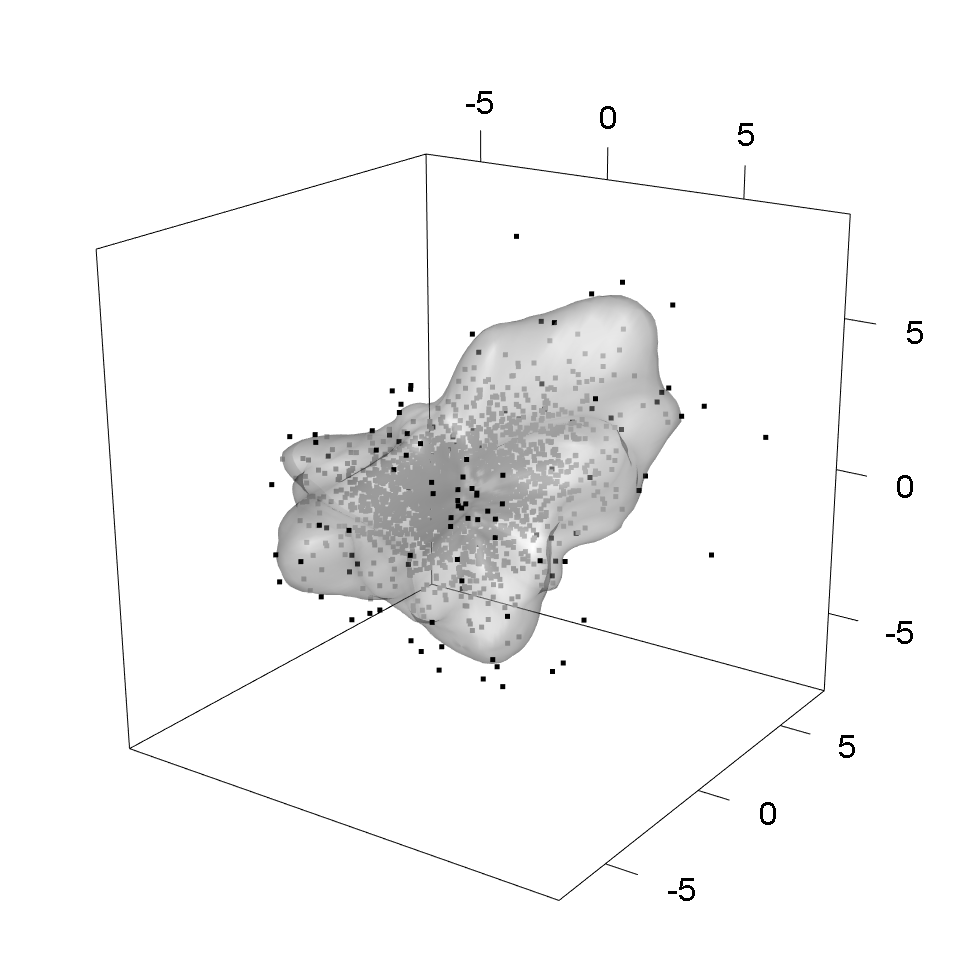}
    \put (-11,44) {\tiny$\log(n/2)$}
    \put (63,85) {\scriptsize$X_H$}
    \put (96,42) {\scriptsize$X_S$}
    \put (82,8) {\scriptsize$X_P$}
    \end{overpic}
    \hspace{0.5em}
    \begin{overpic}[width=0.30\textwidth,trim=0 60 0
    0]{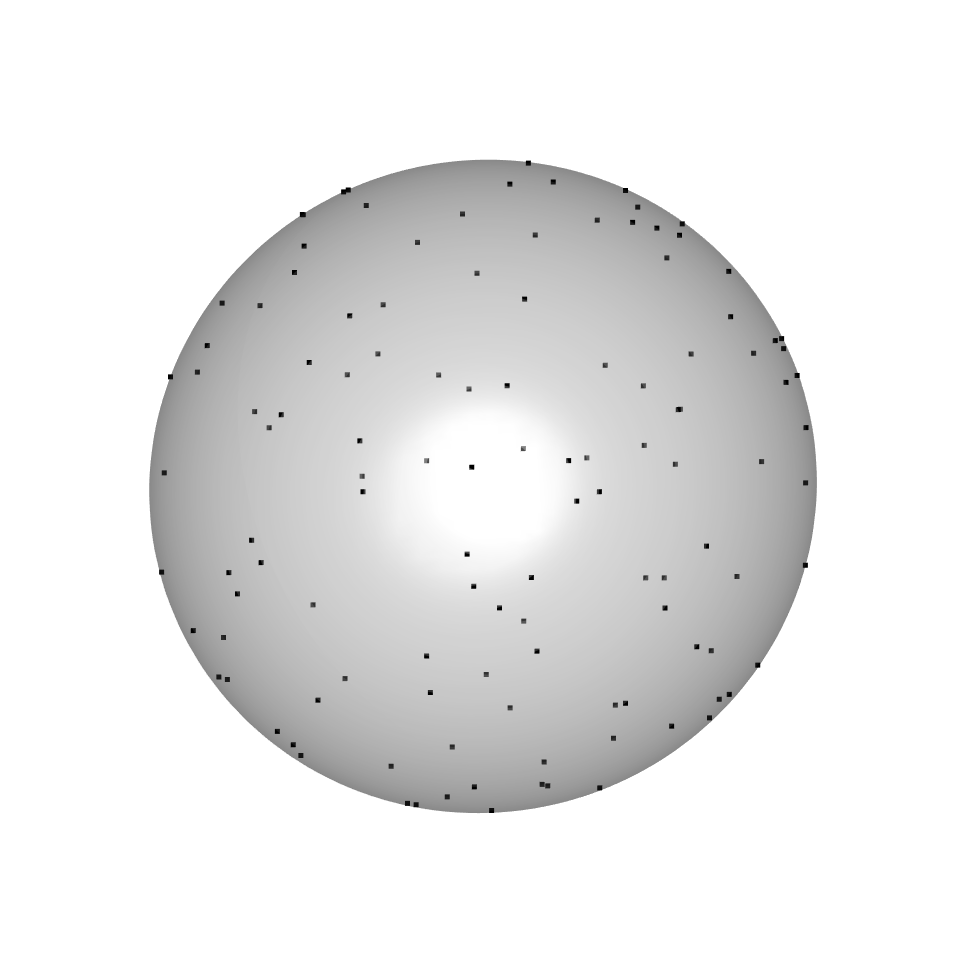}
    \put (73,73) {\small$\SSS^2$}
    \put (14,7) {\scriptsize $\color{gray}\blacksquare\color{black}$ \tiny$-\log(4\pi)$}
    \end{overpic}
    \caption{ Posterior means of model variables for the
      Newlyn wave data.\ \textit{Top-left}:\ quantile set $\QS_{0.9}$.\ \textit{Bottom-left}:\ boundary of the scaling set
      $\G$.\ \textit{Top-centre}:\ quantile set $\QS_{q}$.\ \textit{Bottom-centre}:\ isotropic
      probability set $\QS_{q}^\star$.\
      \textit{Top-right}:\ $\log$-density of $\bm W\mid R>r_{\QS_{q}}$ plotted over       $\SSS^2$ with observations in $\RS_{1/(1-q)}$ projected onto $\SSS^2$.\
      \textit{Bottom-right}:\ logarithm of the density of the uniform distribution on $\SSS^2$ with observations in $\RS_{1/(1-q)}^\star$ projected onto
      $\SSS^2$.\ The probability level $q$ is set to $0.95$.
      }
    \label{fig:wave-fit-figs}
\end{figure}

To assess model adequacy, we use the method developed in Section~\ref{sec:validation_selection}.\ We apply transformation~\eqref{eq:stationary_point_measure} to $\widetilde{\bm x}$ for each of the posterior samples $(\QS_{q,i},\G_{i,j},\W_{i,j})$, with $i=1,\ldots,20$ and $j=1,\ldots,50$, resulting in 1000 associated observed point patterns $P^\star_{i,j}$.\ Variability in the posterior samples entails varying numbers of atoms $n^\star_{i,j}$ in each $P^{\star}_{i,j}$; to enable a global comparison under a single 0.95-envelope, we thin each $P^{\star}_{i,j}$ by removing uniformly at random the number of atoms needed to reach $\min_{i,j} n_{i,j}^\star$.\ Each thinned $P^{\star}_{i,j}$ is then used to compute $\widehat{K}_B(r\mid P^\star_{i,j})$ and $\widehat{K}_C(r\mid P^\star_{i,j})$, the estimates~\eqref{eq:k_hat}, at a sequence of
distances and half aperture angles $r$.\ These are shown in Figure~\ref{fig:wave-kplot-returnlvl}.\ 
A vast
majority of these function estimates stay entirely within the the 0.95-envelopes created from a uniform sample of size $\min_{i,j}n^\star_{i,j}$, thus providing evidence that the weak
convergence given in Theorem~\ref{thm:PPconvergence} may hold well for
the data at hand.\ Furthermore, the QQ plots in
Figures~\ref{fig:wave-qq} in Supplementary Material
\ref{sec:wave-additional-plots} show that there is good agreement with
the empirical and model-based estimates in their abilities to
extrapolate exceedances to extreme values.\
Figure~\ref{fig:wave-pp-angle} in Supplementary Material
\ref{sec:wave-additional-plots} shows PP plots for directions
generated from the posterior distribution of $r_{\W}$ have good
agreement with the empirical distribution of the directions.\ Via
plots of posterior mean and prediction intervals for $\chi_q(A)$,
$A\subseteq\left\{H,P,S\right\}$, Supplementary~\ref{sec:wave-additional-plots} also shows the strong ability
of our model to estimate joint tail probabilities.

\begin{figure}[t!]
    \centering
    \includegraphics[width=0.28\textwidth]{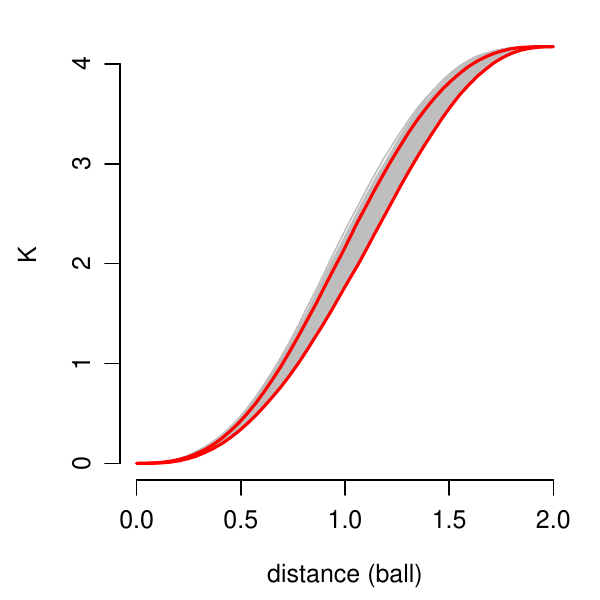}
    \includegraphics[width=0.28\textwidth]{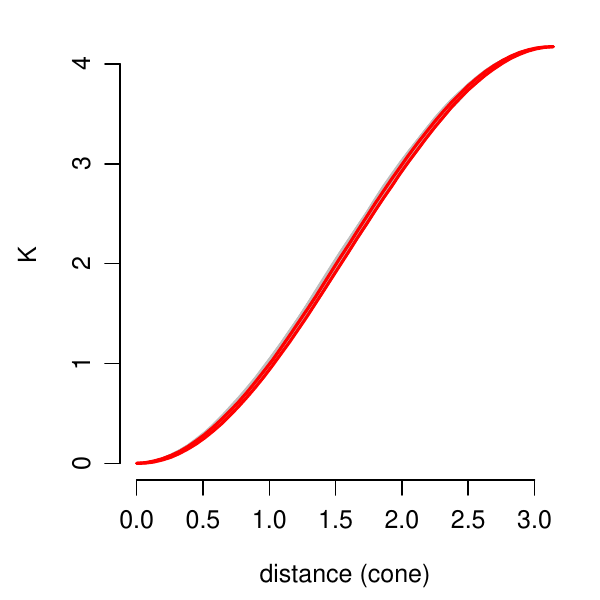}
    \includegraphics[width=0.32\textwidth,trim=0 25 0 0]{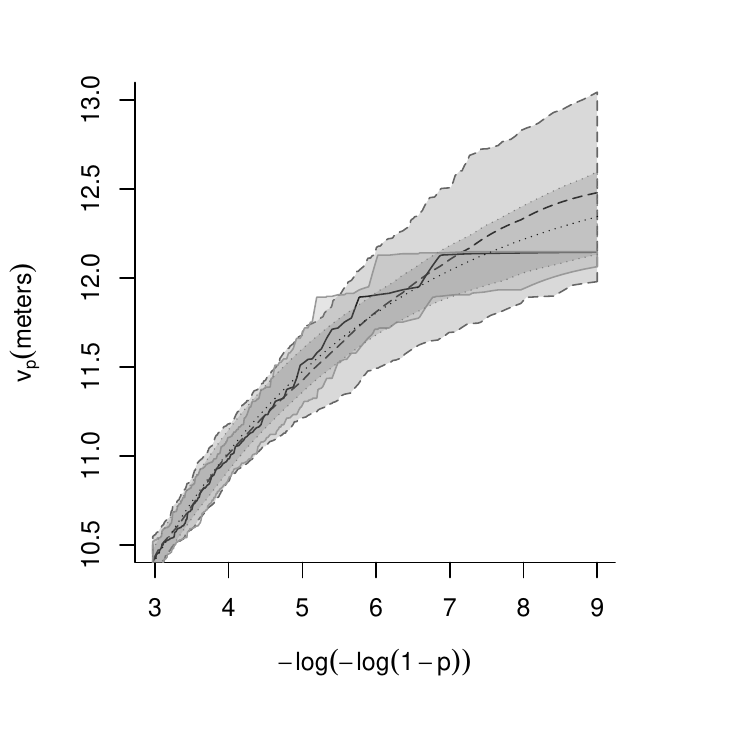}
    \caption{\textit{Left and centre}: Score values $K_B(r)$ and $K_C(r)$ from
      1000 posterior values of $r_{\QS^{\star}}(\bm{w})$ and
      $r_{\G^{\star}}(\bm{w})$ for increasing distances $r$, with
      simultaneous 95\% prediction intervals in red.\ \textit{Right}: Estimated
      return levels for sea-wall height.\ Presented are results from
      empirical fits (solid black line), GP (dotted black line), and
      our semi-parametric method (dashed black line).\ Grey regions
      correspond to 95\% confidence bands.\ }
      \label{fig:wave-kplot-returnlvl}
\end{figure}

\cite{coles1994statistical} introduce a structure variable
$Q(v;\bm{O})$, interpreted as the volume of water (in cubic meters,
$m^3)$ overtopping the sea-wall per unit length (in meters, $m$) over
a fixed duration (in seconds, $s$), and measured in
$m^3 s^{-1} m^{-1}$ for a sea-wall $v$ meters in height.\ 
More precisely,
\[
Q(v;\bm{O}) = a_1 O_S O_P \exp\{{a_2\left(v-O_S-l\right)}/{(O_P
  {O_H^\ast}^{{1}/{2}})}\}.
\]
The wave height component $O_H^\ast$ is a
calibration of the wave height marginal variable $O_H$ to approximate
the off-shore wave height, since measurements are taken
on-shore.\
We estimate the sea-wall height structure variable $v_p$ (in meters,
$m$) for which the discharge rate value $Q(v_p;\bm{O})$ is expected to
exceed the design standard of $0.002 m^3 s^{-1} m ^{-1}$ with
probability $p$.\ Setting $V=Q^{-1}(0.002; \bm{O})$, $v_p$ is the solution to
$\PR(V>v_p) = p$.\ 
As
in~\citet{Bortotetal01}, we fix the sea-wall design feature constants
to $a_1=0.25$, $a_2=26$, and tidal level relative to the seabed to
$l=4.3$.\ 

We obtain three separate estimates for $v_p$ at a range of
$p$ values.\ First, an empirical estimate of $v_p$ is obtained by
using the empirical quantile function on values of the structure
variable $V$ computed from the dataset.\ Second, a GP model is fitted
using the dataset's computed values of $V$ obtained from the dataset,
and quantiles are then obtained for $v_p$.\ Finally, we compare these with prediction intervals for $v_p$ obtained from our fitted model~$\Mg$ via the procedure described below.\

The method consists in sampling a collection $\{\widetilde{\bm O}_k = \{\bm O_{1,k},\ldots,\bm O_{n,k}\}\,:\,k=1,\ldots,l\}$ of new datasets, each comprising $n=2,894$ observations, 
to collect a sample of wall heights $\{v_{p,k}\,:\,k=1,\ldots,l\}$ where each $v_{p,k}$ is the $p$-quantile of $\{V_{j,k}=Q^{-1}(0.002; \bm{O}_{j,k})\,:\,j=1,\ldots,n\}$.\ For a sequence of $p\in(0.9,1)$, we consider the $p$-wise mean as well as the 95\% prediction intervals based on equal-tailed quantiles of the sample of sea wall heights.\ 

To generate a new data set $\widetilde{\bm O}_k$, we randomly sample
one of the 20 realisations from the posterior distribution of $\QS_q$,
label it $\QS_{q,k}$, as well as one of the 50 realisations from the
conditional distribution of $\G,\W\mid\QS_{q,k}$, and label it
($\G_k,\W_k$).\ The new sample $\widetilde{\bm X}_k$ is then
constructed by sampling with replacement $nq$ observations from the
original observations falling in $\QS_{q,k}$, and $n(1-q)$
observations from a radial exponential distribution with location
$\QS_{q,k}$, scale $\G_k$, and directional shape $\W_k$, before
transforming the sample to original margins using the probability
integral transform via the standard Laplace distribution function and
then inverse transform via the inverse marginal model specified in
Section~\ref{sec:standardisation}.\


Figure~\ref{fig:wave-kplot-returnlvl} includes the resulting estimates
of $v_p$ plotted against $-\log(-\log(1-p))$ for a range of $p$ values.\ Compared to the empirical and GP distribution fit
approach, our method accurately estimates the sea-wall height variable
$v_p$ across all values of $p\in(0,0.10)$.\ The larger prediction
intervals corresponding to our method can be attributed to our more holistic account of uncertainty via a joint model for extreme events, in contrast with a structured variable approach; our method hence reveals that risk may have been underestimated by such previous methods.\


\section{Discussion}
This work introduces a framework for defining probability sets and
return sets in multivariate extremes, emphasizing the role of these
sets in capturing the geometry of extreme events.\ Central to this
framework is the identification of the isotropic return set under a
radial-directional decomposition, which balances exceedances across
all directions and provides a natural representation of the geometry
of a sample cloud.\ Importantly, in our framework these return sets
arise through equivalence classes of normalising functions that lead
to the weak convergence of radially renormalised sample clouds to a
novel Poisson point process, further refining recent results on the
convergence of scaled sample clouds and allowing for extrapolation of
these sets.

Our construction of probability sets shares conceptual similarities to
other constructions, such as the total probability sets and
environmental contours \citep{Mackay2021contours}, which have been
used in structural reliability and risk assessment. These sets have
been motivated by the multivariate probability integral transform of
\cite{Rosenblatt52}, which may be burdensome to implement in
high-dimensional settings due to its reliance on sequential
conditioning, and recent efforts have focussed on methods that bypass
this transformation, see for example \cite{Mackay2023contours}. In a
similar spirit, our framework simplifies implementation by relying on
a one-dimensional probability integral transform under a
radial-directional decomposition, making it more computationally
tractable in multivariate extremes.\ This is already evidenced by the
work of \cite{campbell2024piecewiselinear}, which leverage piece-wise
linear models for the radial functions of the limit set, the quantile
set and the directional set, to infer probabilities of rare events in
dimensions $d=4$ and $d=5$, with statistical guarantees.\
Additionally, our approach shares conceptual similarities with the
framework of \cite{hallin2021distribution} for optimal transport-based
center-outward ranks, where a probability integral transform that does
not rely on sequential conditioning, pushes the multivariate
distribution forwards to the uniform distribution over unit
ball. However, while this method focusses on center-outward ordering,
our work leverages related ideas in the context of extremes to define
isotropic return sets that can reflect the geometry of extreme
multivariate events and facilitates their extrapolation beyond the
observed data.

In parallel to our work, \cite{simpson2024inference} developed a pair
of environmental contours under a radial-directional decomposition,
termed ``new environmental contours''.\ While their paper acknowledges
that our framework appeared on \textit{arXiv} during the review of
their manuscript, it is important to clarify that this approach
defines only two contours, one of which coincides with the
\textit{quantile set} introduced in our framework, which also appeared
in \cite{mackay2024modelling}. In contrast, here we provide a general
definition of an infinite class of return sets, which includes the
isotropic return set, exhibiting particularly desirable properties by
aligning closely with the structure of the underlying sample. By
integrating a new probabilistic theory with geometric insights, our
work establishes a foundation that advances existing state-of-the-art
theory, opening pathways for innovation in statistical inference and
risk assessment.


\subsection*{Acknowledgments}
We thank Jennifer L.\ Wadsworth for insightful discussions on the
framework of geometric extremes, and Finn Lindgren for guidance with
the \texttt{inlabru} and \texttt{excursions} packages in the
\textsf{R} computing language.\ Part of this paper was written during
a visit of IP and LDM at King Abdullah University of Science and
Technology and during a visit of RC at the University of
Edinburgh.\ 
LDM acknowledges funding by the School of Mathematics, University of
Edinburgh.\ RC acknowledges funding by the EPSRC DTP EP/W523811/1 fund
at Lancaster University, with additional funding provided by NSERC PGS
D (Canada), and the FRQNT doctoral fund (Qu\'{e}bec).\

\bibliographystyle{agsm}

\clearpage
\appendix
\begin{center}
\Large{\textbf{APPENDIX}}
\end{center}
\renewcommand{\theequation}{A.\arabic{equation}}
\setcounter{equation}{0}
\section{Proofs}
\subsection{Proof of Theorem~\ref{thm:univariate_vc}}
\label{sec:min-max-stable}
The fact that~\eqref{eq:RV_univariate_standard_RL} imply
\eqref{eq:RV_univariate_R_cond} and~\eqref{eq:RV_univariate_L_cond}
follows from Assumption $0\in S$, which implies that
$\pi=\PR(X>0)\in(0,1)$ and $1-\pi=\PR(X<0)\in(0,1)$, and convergence
of types \citep[Theorem 1.4.2 in][]{bill08}, with normalizing
sequences satisfying
\begin{IEEEeqnarray*}{rCl}
  b_{n,\sfR} \sim (b_n-m) + [(\pi^{-\xiR}-1)/\xiR]a_n& \quad
  \text{and} \quad &a_{n,\sfR} \sim a_n \pi^{-\xiR},\\
  b_{n,\sfL} \sim (\beta_n-m) + [((1-\pi)^{-\xiL}-1)/\xiL]\alpha_n&
  \quad \text{and} \quad &a_{n,\sfL} \sim \alpha_n
  (1-\pi)^{-\xiL}
\end{IEEEeqnarray*}
as $n\to \infty$.\ To show that the sequence of mean measures in
expression~\eqref{eq:RV_univariate_RL} converges vaguely in
$M_+((\overline{\Espace}_{\xiL} \times \{-1\}) \cup
(\overline{\Espace}_{\xiR} \times \{+1\})\color{black})$, we need to
show that for any
$h\in C_K^+((\overline{\Espace}_{\xiL} \times \{-1\}) \cup
(\overline{\Espace}_{\xiR} \times \{+1\})\color{black})$
\begin{IEEEeqnarray}{rCl}
  &\lim_{n\to\infty}&\int_{(\overline{\Espace}_{\xiL} \times
    \{-1\}) \cup (\overline{\Espace}_{\xiR} \times \{+1\})} h(z,
  w) n \PR\left[\left(\frac{|X| - r_{n}^a(X/|X|)}{ r_n^b(X/|X|)},
      X/|X|\right) \in dz \times dw \right]\label{eq:vg_RL_sequence}\\\nonumber\\
  &=& \int_{(\overline{\Espace}_{\xiL} \times \{-1\}) \cup
    (\overline{\Espace}_{\xiR} \times \{+1\})} h(z,
  w)\Lambda(dz\times dw),\nonumber
\end{IEEEeqnarray}
with $\Lambda$ defined in expression
\eqref{eq:Lambda_univariate}.\ Note that because
$(\overline{\Espace}_{\xiL} \times \{-1\}) \cup
(\overline{\Espace}_{\xiR} \times \{+1\})$ is a disjoint union,
then $h(z, w) = h_{\sfL}(z) \mathbbm{1}_{\{w=-1\}} + h_{\sfR}(z) \mathbbm{1}_{\{w=1\}}$,
with $h_{\sfL} \in C_K^+(\overline{\Espace}_{\xiL})$ and
$h_{\sfR} \in C_K^+(\overline{\Espace}_{\xiR} )$.\ Using total
probability, the integral in expression~\eqref{eq:vg_RL_sequence}
equals
\begin{IEEEeqnarray*}{rCl}
  & (1-\pi)&\int_{\overline{\Espace}_{\xiL}} h_{\sfL}(z) n
  \PR\left[\left(\frac{|X| - r_{n}^a(X/|X|)}{ r_n^b(X/|X|)} \in
      dz ~\Big|~ X/|X|=-1\right)\right] +\\\\
  &&+ \pi\int_{\overline{\Espace}_{\xiR}} h_{\sfR}(z) n
  \PR\left[\left(\frac{|X| - r_{n}^a(X/|X|)}{ r_n^b(X/|X|)} \in dz
      ~\Big|~ X/|X|=+1\right)\right].
\end{IEEEeqnarray*}
Convergences~\eqref{eq:RV_univariate_R_cond} and
\eqref{eq:RV_univariate_L_cond} guarantee that as $n\to \infty$, the
latter expression converges to
\begin{IEEEeqnarray*}{rCl}
  && (1-\pi)\int_{\overline{\Espace}_{\xiL}} h_{\sfL}\, d \nuL
  + \pi\int_{\overline{\Espace}_{\xiR}} h_{\sfR}\, d\nuR
  =\int_{(\overline{\Espace}_{\xiL} \times \{-1\}) \cup
    (\overline{\Espace}_{\xiR} \times \{+1\})} h \,d \Lambda,
\end{IEEEeqnarray*}
completing the proof.
\subsection{Convergence in distribution of renormalised sample minimum
  and maximum}
\label{sec:univariate_supplementary}
By replacing $1$ with $\pi + (1-\pi)$ in $1 - \PR_n:=\overline{\PR}_n$
and by applying the law of total probability to $\PR_n$, we average
the conditional probabilities given $\{X\geq 0\}$ and $\{X < 0\}$ over
$\pi$ and $1-\pi$ and note that, due to conditions~\eqref{eq:RV_univariate_R_cond} and~\eqref{eq:RV_univariate_L_cond},
which guarantee that both $a_{n, \{+1\}} z_{+} + b_{n, \{+1\}}$ and
$a_{n, \{-1\}} (-z_{-}) + b_{n, \{-1\}} $ are ultimately positive, we
find that
\begin{IEEEeqnarray}{rCl}
  n \overline{\PR}_n&=&(1-\pi) n \PR\left(\frac{|X|-b_{n, \sfL}}{a_{n,
        \sfL}} > - z_{-}~\Big| X < 0\right) + \pi n
  \PR\left(\frac{|X|-b_{n, \sfR}}{a_{n, \sfR}} > z_{+}~\Big| X >
    0\right),
  \label{eq:RCsets_at_inf}
\end{IEEEeqnarray}
for all sufficiently large
$n$.\ Thus, the convergences~\eqref{eq:RV_univariate_R_cond} and
\eqref{eq:RV_univariate_L_cond} lead to
\begin{equation}
  \lim_{n\to \infty}\PR\left( \frac{m_n + b_{n, \sfL}}{a_{n, \sfL}} >
    z_{-}, \frac{M_n - b_{n, \sfR}}{a_{n, \sfR}} < z_{+} \right) =
  G_{0}(-\zL, \zR \mid \pi),\quad \text{as $n \to \infty$},
\end{equation}
at continuity points of the limit distribution
$G_0$ given by expression~\eqref{eq:ETT_limit_form}.

\color{black}
\if0\blind{
\subsection{Proof of Corollary~\ref{prop:min_max_stable}}
Let $\pi=\PR(X/|X| = 1) = \PR(X >
0)$.\ It can be verified that for any $z_{-}, z_{+}\in\RR$
\begin{IEEEeqnarray}{rCl}
  &&\PR\left( \frac{-m_n + b_{n, \textsf{L}}}{a_{n, \textsf{L}}} \leq
    z_{-}, \frac{M_n - b_{n, \sfR}}{a_{n, \sfR}} \leq
    z_{+} \right) = \PR\left(-a_{n, \textsf{L}} z_{-} + b_{n,
      \textsf{L}} < X \leq
    a_{n, \sfR} z_{+} + b_{n, \sfR}\right)^n = \nonumber\\
  &&\quad = \left[1 - \frac{1}{n}n\left\{1-\PR\left(-a_{n, \textsf{L}}
        z_{-} + b_{n, \textsf{L}} < X \leq a_{n, \sfR}
        z_{+} + b_{n, \sfR}\right)\right\}\right]^n.
  \label{eq:min_max_prelimit}
\end{IEEEeqnarray}
Thus, it suffices to show that
\begin{IEEEeqnarray*}{rCl}
  \lim_{n\to \infty} n\left[1-\PR\left(-a_{n, \textsf{L}} z_{-} +
      b_{n, \textsf{L}} < X \leq a_{n, \sfR} z_{+} + b_{n,
        \sfR}\right)\right] = (1-\pi)(1-\xiL \zL)_+^{-1/\xiL} +
  \pi (1+\xiR \zR)_+^{-1/\xiR} \in [0,\infty],
\end{IEEEeqnarray*}
since then, expression~\eqref{eq:min_max_prelimit} can be written as
\[
  [1- \{(1-\pi)(1-\xiL \zL)_+^{-1/\xiL} + \pi (1+\xiR
  \zR)_+^{-1/\xiR}\}/n + o(1/n)]^n,
\]
so that convergence at continuity points of the limit distribution
follows at once.\ We have
\begin{IEEEeqnarray}{rCl}
  &&n\left[1-\PR\left(-a_{n, \textsf{L}} z_{-} + b_{n, \textsf{L}} < X \leq
      a_{n, \sfR} z_{+} + b_{n, \sfR}\right)\right] = \nonumber\\
  && n (1-\pi) -n(1-\pi) \PR\left(-a_{n, \textsf{L}} z_{-} + b_{n, \textsf{L}}
    < X \leq a_{n, \sfR} z_{+} + b_{n, \sfR}~\Big| X \leq 0\right)+\nonumber\\
  &&+ n\pi -n\pi \PR\left(-a_{n, \textsf{L}} z_{-} + b_{n, \textsf{L}} < X \leq
    a_{n, \sfR} z_{+} + b_{n, \sfR}~\Big| X > 0\right),
    \label{eq:prelim_eqn_Prop1}
  \end{IEEEeqnarray}
  where the equality follows from total probability and by expressing
  the constant $1$ as $\pi + (1-\pi)$.\ Due to the condition $0\in S$
  from Assumption~\ref{thm:univariate_vc}, the sequences
  $-a_{n,\textsf{L}} \zL + b_{n,\textsf{L}}$ and $a_{n,\sfR} \zR + b_{n,\sfR}$ are
  necessarily tending to the upper and lower end points of the
  distribution of $X$, respectively.\ Thus, these two sequences are
  eventually negative and positive, respectively.\ Therefore,
  expression~\eqref{eq:prelim_eqn_Prop1} reduces to
\begin{IEEEeqnarray*}{rCl}
  &&n (1-\pi) -n(1-\pi) \PR\left(\frac{X-b_{n, \textsf{L}}}{a_{n,
        \textsf{L}}} > -z_{-}~\Big| X \leq 0\right)+ n\pi -n\pi
  \PR\left(\frac{X-b_{n, \sfR}}{a_{n, \sfR}} \leq
    z_{+}~\Big| X > 0\right)\\
  &&= (1-\pi) n \PR\left(\frac{X-b_{n, \textsf{L}}}{a_{n, \textsf{L}}} \leq
    - z_{-}~\Big| X \leq 0\right) + \pi n \PR\left(\frac{X-b_{n,
        \sfR}}{a_{n, \sfR}} > z_{+}~\Big| X > 0\right),
\end{IEEEeqnarray*}
The result follows from Assumption~\ref{thm:univariate_vc}.}  \fi

\color{black}

\subsection{Convergence to Poisson point process}
\label{sec:PP_justification}
\color{black}
\begin{proof}[Proof of Theorem~\ref{thm:PPconvergence}]
  To prove the weak convergence of the random point measure to the
  stated limit Poisson random measure, we need to show that for any
  $h \in C_K^+(\color{black}\cup_{\bm w \in
    \SSS^{d-1}}\overline{\Espace}_{\xi(\bm w)}\times \{\bm
  w\}\color{black})$
  \begin{IEEEeqnarray*}{rCl}
    && \nu_n(h) := \int_{\mathbb{S}^{d-1}}\int_{\color{black}
      \overline{\Espace}_{\xi(\bm w)}\color{black}} h(z, \bm w)\,
    \nu_{n, \bm w}(d z) \PR_{\bm X/\lVert\bm X\rVert}(d \bm w) \to
    \int_{\SSS^{d-1}}
    \int_{\color{black}\overline{\Espace}_{\xi(\bm
        w)}\color{black}} h(z, \bm w)\, \nu_{\bm w} (dz) \PR_{\bm
      X/\lVert\bm X\rVert}(d \bm w)=: \nu(h),
  \end{IEEEeqnarray*}
  as $n\to \infty$.\ It suffices to show
  that $\lvert \nu_n(h) - \nu(h)\rvert \to 0$ as $n\to \infty$.\ Since
  $h$ is continuous on a compact set, say $K_h\subset \Emult$, we can
  find a compact cover
  $\mathbb{S}^{d-1}\times [r_{\inf}, r_{\sup}]\supset K_h$, with
  $r_{\inf} = \inf\{\lVert \bm x\rVert\,:\, \bm x \in K_h\}$ and
  $r_{\sup} = \sup\{\lVert \bm x\rVert\,:\, \bm x \in K_h\}$, such
  that
  \begin{IEEEeqnarray*}{rCl}
    \lvert \nu_n(h) - \nu(h) \rvert &\leq& \int_{\mathbb{S}^{d-1}}
    \lVert h\rVert_\infty(\bm w) \lvert \mu_{n, \bm w} \rvert
    ([r_{\inf}, r_{\sup}]) \PR_{\bm X/\lVert \bm X\rVert}(\bm w),\\
    & \leq& \int_{\mathbb{S}^{d-1}} \lVert h\rVert_\infty(\bm w)
    \Delta(\bm w) \PR_{\bm X/\lVert\bm X\rVert}(d \bm w)\quad \text{for
      all $n > n_0$},
  \end{IEEEeqnarray*}
  where
  $\lVert h\rVert_{\infty}(\bm w)=\sup\{h(z, \bm w)\,:\, z\in
  \overline{\Espace}_{\xi(\bm w)}\} < \infty$.\ Since
  \[
    \lim_{n\to\infty}(\nu_{n,\bm w}(h)-\nu_{\bm w}(h)) = 0,
  \]
  dominated convergence gives that
  $\lvert \nu_n(h) - \nu(h) \rvert \to 0$ as $n \to \infty$.
\end{proof}
\color{black}

\subsection{Properties of probability sets}
\label{sec:radon_nikodym}
\begin{proof}[Proof of Proposition~\ref{prop:properties_probability_set}]
  \color{black} $(i)$ Using total probability and the change of the
  integrating measure $\PR_{\bm W}$ to $\mu$, we have
  \begin{IEEEeqnarray*}{rCl}
    \PR(R > r_{\QS_q(\PR_{\bm W}\,\|\, \mu)}(\bm W))&=& \int_{\suppW}
    \PR(R > r_{\QS_q(\PR_{\bm W}\,\|\, \mu)}(\bm w) \mid \bm W = \bm
    w) \,\frac{d\PR_{\bm W}}{d \mu}(\bm w)\, \mu(d \bm w)
  \end{IEEEeqnarray*}
  Multiplication and division by $1/(1-q)$ gives
  \begin{IEEEeqnarray*}{rCl}
    \PR(R > r_{\QS_q(\PR_{\bm W}\,\|\, \mu)}(\bm
    W))&=&(1-q)\int_{\suppWclean} \frac{\PR(R > r_{\QS_q(\PR_{\bm
          W}\,\|\, \mu)}(\bm w) \mid \bm W = \bm w)}{1-q}
    \,\frac{d\PR_{\bm W}}{d \mu}(\bm w)\, \mu(d \bm w).
  \end{IEEEeqnarray*}
  From the definition \eqref{eq:r_probability_set} of
  $r_{\QS_q(\PR_{\bm W}\,\|\, \mu)}(\bm w)$, we have that for any
  $q > q_l(\PR_{\bm W}\,\|\,\mu^\star)$,
  \[
    \frac{\PR(R > r_{\QS_q(\PR_{\bm W}\,\|\, \mu)}(\bm w) \mid \bm W =
      \bm w)}{1-q} = \left(\frac{d\PR_{\bm W}}{d \mu}(\bm
      w)\right)^{-1}.
  \]
  Thus, since $\mu$ is a probability measure on $\suppWclean$,
  \[
    \PR[R > r_{\QS_q(\PR_{\bm W}\,\|\, \mu)}(\bm W)] =
    (1-q)\int_{\suppWclean} \mu(d\bm w) = (1-q).
  \]

  Next, we consider the conditional distribution of $\bm W$ given
  $R >r_{\QS_q(\PR_{\bm W}\,\|\, \mu)}(\bm W)$.\ For any measurable set
  $B\subset \suppWclean$, we have
  \begin{IEEEeqnarray*}{rCl}
    \PR\left(\bm W\in B \mid R >r_{\QS_q(\PR_{\bm W}\,\|\, \mu)}(\bm
      W)\right)&=& \PR\left(R >r_{\QS_q(\PR_{\bm W}\,\|\, \mu)}(\bm
      W), \bm W\in B\right)/\PR(R >r_{\QS_q(\PR_{\bm W}\,\|\, \mu)}(\bm
    W))
  \end{IEEEeqnarray*}
  Because $\PR[R > r_{\QS_q(\PR_{\bm W}\,\|\, \mu)}(\bm W)]= (1-q)$,
  it follows that for any $q>q_l(\PR_{\bm W}\,\|\,\mu^\star)$,
  \begin{IEEEeqnarray*}{rCl}
    \PR\left(\bm W\in B \mid R >r_{\QS_q(\PR_{\bm W}\,\|\, \mu)}(\bm
      W)\right)&=& (1-q)^{-1}\int_B \PR\left(R >r_{\QS_q(\PR_{\bm
          W}\,\|\,
        \mu)}(\bm w) \mid \bm W=\bm w\right) \PR_{\bm W}(d\bm w)\\\\
    &=& (1-q)^{-1}\int_B (1-q) \frac{d\mu}{\PR_{\bm W}}(\bm
    w) \PR_{\bm W}(d\bm w)\\ \\
    &=&\int_{B} \mu(\bm w) = \mu(B),
  \end{IEEEeqnarray*}
  which completes the proof of $(i)$.

  \noindent $(ii)$ 
  For each $z$ in some open interval $(a,b)$ that is a continuity
  point of the limit measure $\nu_{\bm w}$, we have by Assumption that
  \ref{ass:main_convergence_assumptions}
  \begin{equation}
    \lim_{n\to\infty}n \PR\left[\frac{R-b_n(\bm W)}{a_n(\bm
        W)}>z~\Big|~\bm W=\bm w\right] \varrho_\mu(\bm w) = [1+\xi(\bm w)z]_+^{-1/\xi(\bm
      w)}\varrho_\mu(\bm w), 
    \label{eq:convergence_survival}
  \end{equation}
  for $\bm w\in\suppWclean$, where
  $\varrho_\mu = d\PR_{\bm W}/d\mu>0$.\ Convergence
  \eqref{eq:convergence_survival} is equivalent to
  $\lim f_n(z\,;\,\bm w) = g(z\,;\, \bm w)$ for $\bm w\in\suppWclean$,
  where
  \begin{IEEEeqnarray}{rCl}
    f_n(z\,;\,\bm w)&:=&1\Big/\left(n\PR\left[\frac{R-b_n(\bm W)}{a_n(\bm
          W)}>z~\Big|~\bm W=\bm w\right] \varrho_\mu(\bm w)\right) \\ \nonumber\\
    g(z\,;\,\bm w)&=&[1+\xi(\bm w)z]_+^{1/\xi(\bm w)}/\varrho_\mu(\bm w),
    \label{eq:convergence_survival_reciprocal}
  \end{IEEEeqnarray}
  for $\bm w\in\suppWclean$.\ Since a sequence of non-decreasing
  functions in $z$ converges to a limit function that is
  non-decreasing in $z$, Lemma 1.1.1 in \cite{haanferr06} applies and
  for each $y\in (g(a\,;\,\bm w), g(a\,;\,\bm w))$ that is a
  continuity point of $g^{-1}$, we have
  $\lim_{n \to \infty}f_n^{-1}(y \,;\, \bm w) = g^{-1}(y\,;\,\bm w)$,
  where $f_n^{-1}$ and $g^{-1}$ denote the left-continuous inverses of
  $f_n$ and $g$, respectively.\ That is,
  \[
    \lim_{n \to \infty}\frac{V_{\bm w}(n y \,;\, \mu) - b_n(\bm w)}{a_n(\bm
      w)} = \frac{\{y \varrho_\mu(\bm w)\}^{\xi(\bm w)}-1}{\xi(\bm
      w)},\quad \bm w\in \suppWclean,
  \]  
  where
  \[
    V_{\bm w}(y \,;\, \mu) = \inf\left\{z\in\RR \,:\, \left(n \PR(R>z
        \mid \bm W=\bm w)\varrho_\mu(\bm w)\right)^{-1}\geq y\right\},
  \]
  is defined for $y > 1$.\ Note that the function
  $V_{\bm w}(n \,;\, \mu)$ satisfies
  $V_{\bm w}(1/(1-q) \,;\, \mu) = r_{\QS_{q}(\PR_{\bm W}\,\|\,
    \mu)}(\bm w)$.

  Since $\inf\{\xi(\bm w)\,:\, \bm w\in\suppW\} > -1$, it suffices to
  show that
  \begin{equation}
    \lim_{n \to \infty}\frac{V_{\bm w}(n y \,;\, \mu) - V_{\bm
        w}(n\,;\,\mu)}{V_{\bm w}(n \{1+\xi(\bm w)\}^{1/\xi(\bm w)}
      \,;\, \mu) - V_{\bm w}(n \,;\, \mu)} = \frac{y^{\xi(\bm
        w)}-1}{\xi(\bm w)},\quad \bm w\in \suppWclean.
    \label{eq:to_prove}
  \end{equation}
  The left-hand side of expression \eqref{eq:to_prove} equal
  \begin{IEEEeqnarray*}{rCl}
    &&\lim_{n \to \infty}\cfrac{\cfrac{V_{\bm w}(n y \,;\,
        \mu)-b_n(\bm w)}{a_n(\bm w)} - \cfrac{V_{\bm w}(n \,;\,
        \mu)-b_n(\bm w)}{a_n(\bm w)}}{\cfrac{V_{\bm w}(n (1+\xi(\bm
        w))^{1/\xi(\bm w)} \,;\, \mu)-b_n(\bm w)}{a_n(\bm w)} -
      \cfrac{V_{\bm w}(n\,;\, \mu)-b_n(\bm w)}{a_n(\bm
        w)}}\\ \\
    &&=\cfrac{\cfrac{\{y \varrho_\mu(\bm w)\}^{\xi(\bm w)}-1}{\xi(\bm
        w)} - \cfrac{\varrho_\mu(\bm w)^{\xi(\bm w)}-1}{\xi(\bm
        w)}}{\cfrac{\{(1+\xi(\bm w))^{1/\xi(\bm w)} \varrho_\mu(\bm
        w)\}^{\xi(\bm w)}-1}{\xi(\bm w)}- \cfrac{\varrho_\mu(\bm
        w)^{\xi(\bm w)}-1}{\xi(\bm w)}}\\\\
    &&=\cfrac{\rho_\mu(\bm w)^{\xi(\bm w)}(y^{\xi(\bm w)}-1)/\xi(\bm
      w)}{\varrho_\mu(\bm w)^{\xi(\bm w)}} = \cfrac{y^{\xi(\bm
        w)}-1}{\xi(\bm w)}, \qquad \bm w \in \suppWclean.
  \end{IEEEeqnarray*}
\end{proof}

\subsection{Density convergence of rescaled radial excess
  variable}\label{sec:exp-excesses}
\label{sec:proof_prop1}
Below, we use Lemma~\ref{lemma:Potter_bounds}, a direct
consequence of the Corollary 3.3 of \cite{de1987regular}.
\begin{lemma}[Potter bounds]
  \label{lemma:Potter_bounds}
  Let $V:\RR^d\to \RR$ and suppose there exist $\rho
  \in\RR$, $h:\RR_+\to\RR_+$, and
  $\G\in\bigstar$ such that $\lim_{t\to\infty}V(t\bm x)/h(t)=g_\G(\bm
  x)^{\rho}$ uniformly on
  $\SSS^{d-1}$.\ Then, for any
  $\varepsilon>0$, there exists
  $t_0=t_0(\varepsilon)>0$ such that for all $t>t_0$,
  \begin{IEEEeqnarray*}{rCl}
  \label{eq:Potter_bounds}
   (1-\varepsilon)\lVert \bm x \rVert^{-\varepsilon}g_\G(\bm x)^{\rho} &\leq& \frac{V(t\bm x)}{h(t)}\leq (1+\varepsilon)\lVert \bm x \rVert^{\varepsilon}g_\G(\bm x)^{\rho},\qquad \forall\,\bm x\in\Rstar.
  \end{IEEEeqnarray*}
\end{lemma}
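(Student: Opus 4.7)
The plan is to reduce the multivariate Potter-type bound on $\Rstar$ to a combination of (a) the given uniform convergence of $V(t\bm w)/h(t)$ on the compact sphere $\SSS^{d-1}$ and (b) classical one-dimensional Potter bounds applied to the radial normalising function $h$. The 1-homogeneity of the gauge $g_\G$ is what glues the two together. Specifically, for any $\bm x\in\Rstar$, I would write $\bm x=r\bm w$ with $r=\lVert\bm x\rVert>0$ and $\bm w=\bm x/\lVert\bm x\rVert\in\SSS^{d-1}$, and use $g_\G(\bm x)^\rho=r^\rho g_\G(\bm w)^\rho$. The target inequality then reads, after dividing through by $g_\G(\bm w)^\rho$,
\[
(1-\varepsilon)\,r^{\rho-\varepsilon}\ \le\ \frac{V(tr\bm w)}{h(t)\,g_\G(\bm w)^\rho}\ \le\ (1+\varepsilon)\,r^{\rho+\varepsilon},
\]
uniformly in $\bm w\in\SSS^{d-1}$ and $r>0$, once $t>t_0(\varepsilon)$.

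The main step is the decomposition
\[
\frac{V(tr\bm w)}{h(t)}\ =\ \frac{V(tr\bm w)}{h(tr)}\,\cdot\,\frac{h(tr)}{h(t)}.
\]
The first factor is handled directly by the hypothesis: given any $\delta>0$, the uniform convergence on $\SSS^{d-1}$ supplies an $s_0(\delta)$ with $\lvert V(s\bm w)/h(s)-g_\G(\bm w)^\rho\rvert<\delta$ for all $s>s_0$ and every $\bm w\in\SSS^{d-1}$; hence for $tr>s_0$ the first factor equals $g_\G(\bm w)^\rho(1+O(\delta))$ uniformly in $\bm w$.

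The second factor is where the classical 1D Potter machinery enters, and it is also the subtlest point. I would first observe that $h$ inherits regular variation of index $\rho$ from the hypothesis: fix $\bm w_0\in\SSS^{d-1}$ with $g_\G(\bm w_0)>0$, which exists because $\G\in\bigstar$ has $\bm 0$ in its interior, hence $g_\G$ is strictly positive and bounded away from $0$ on $\SSS^{d-1}$. Since $V(t\bm w_0)\sim g_\G(\bm w_0)^\rho h(t)$, the same convergence evaluated at $tr$ gives $h(tr)/h(t)\sim V(tr\bm w_0)/V(t\bm w_0)$; combined with the homogeneity identity, one verifies that $h(tr)/h(t)\to r^\rho$ for every $r>0$, so $h$ is regularly varying of index $\rho$. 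Corollary 3.3 of \cite{de1987regular} then supplies, for any $\eta>0$, a $t_1(\eta)$ such that
\[
(1-\eta)\min(r^{\rho-\eta},r^{\rho+\eta})\ \le\ \frac{h(tr)}{h(t)}\ \le\ (1+\eta)\max(r^{\rho-\eta},r^{\rho+\eta}), \qquad t>t_1,\ r>0.
\]

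Finally, I would combine the two bounds: choose $\delta,\eta$ small enough (in terms of $\varepsilon$ and a crude upper bound for $\sup_{\bm w\in\SSS^{d-1}}g_\G(\bm w)^\rho$) that
\[
(1+\delta)(1+\eta)\,\max(r^{\rho-\eta},r^{\rho+\eta})\ \le\ (1+\varepsilon)\,r^{\rho+\varepsilon},
\]
and analogously for the lower bound, then set $t_0=\max\{s_0/r,t_1\}$ where the $r$-dependence is absorbed into the Potter envelope. The hard part is really the intermediate claim that $h$ must be regularly varying of index $\rho$: the hypothesis is stated only on $\SSS^{d-1}$, so a careful reading of the argument above (or a direct appeal to the multivariate regular variation framework of \cite{de1987regular}) is needed to extract this. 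Once that is secured, the rest is bookkeeping between the spherical uniform error $\delta$ and the one-dimensional Potter slack $\eta$.
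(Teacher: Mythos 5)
The paper itself gives no proof of this lemma --- it only asserts that the statement is ``a direct consequence of Corollary 3.3 of de Haan and Resnick (1987)'' --- so your attempt to rebuild it from the factorisation $V(tr\bm w)/h(t)=[V(tr\bm w)/h(tr)]\cdot[h(tr)/h(t)]$ plus one-dimensional Potter bounds is a legitimate, more self-contained route. But as written it has a genuine gap at exactly the point you flag as ``the hard part'': the regular variation of $h$. Your verification is circular --- from the hypothesis you only obtain $V(tr\bm w_0)/V(t\bm w_0)\sim h(tr)/h(t)$, which identifies the two ratios without evaluating either. Worse, under the hypothesis as literally stated (convergence only on $\SSS^{d-1}$) the conclusion is actually false: take $V(\bm y)=h(\lVert\bm y\rVert)\,g_\G(\bm y/\lVert\bm y\rVert)^{\rho}$ for an arbitrary positive $h$ that is not regularly varying. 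Then $V(t\bm x)/h(t)=g_\G(\bm x)^{\rho}$ identically on $\SSS^{d-1}$, yet the claimed two-sided bound at $\bm x=r\bm w$ would force $h(tr)/h(t)\to r^{\rho}$, i.e.\ $h\in\mathrm{RV}_{\rho}$. What rescues the lemma in its applications is that the convergences in Proposition~\ref{prop:RV1} are assumed on all of $\Rstar$, not just on the sphere; evaluating the hypothesis at $r\bm w_0$ and at $\bm w_0$ and using $g_\G(r\bm w_0)^{\rho}=r^{\rho}g_\G(\bm w_0)^{\rho}$ then does yield $h\in\mathrm{RV}_{\rho}$. You must take that stronger input as part of the hypothesis rather than claim to derive it from convergence on $\SSS^{d-1}$ alone.

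The second gap is uniformity in $\bm x$. The threshold $t_0$ may depend only on $\varepsilon$, but your construction sets $t_0=\max\{s_0/r,t_1\}$, which blows up as $r=\lVert\bm x\rVert\to 0$; for small $r$ the argument $tr$ of the first factor is not large, and the uniform convergence on the sphere tells you nothing about $V$ near the origin. Relatedly, your combination step yields an upper envelope proportional to $\max(r^{\rho-\eta},r^{\rho+\eta})$, which for $r<1$ equals $r^{\rho-\eta}$ and cannot be dominated by $(1+\varepsilon)r^{\rho+\varepsilon}$, so the stated one-sided exponents cannot be recovered for $\lVert\bm x\rVert<1$ by this route. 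Both problems disappear if the bound is claimed only for $\lVert\bm x\rVert$ bounded away from zero --- which is all the paper ever uses, since the lemma is applied to integrals over $s\geq\rquant(\bm w)$, and is also how the cited Corollary 3.3 is phrased --- but your argument as it stands does not deliver the ``$\forall\,\bm x\in\Rstar$'' claim.
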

\color{black}

\noindent \begin{proof}[Proof of Proposition~\ref{prop:RV1}] Consider
  the following relation
  \begin{equation}
    \frac{\partial }{\partial z}\, \PR\left[\frac{R-\rquant(\bm w)}{r_{\G_q}(\bm
        w)}\leq z ~\Big|~R> \rquant(\bm W), \bm W=\bm w\right] = \frac{ r_{\G_q}(\bm w)z_q(\bm w)^{d-1}f_{\bm X}(z_q(\bm w)\bm w)}{\int_{\rquant(\bm w)}^{r_{\QS_1}(\bm w)} f_{R,\bm W}(s, \bm
      w)~\dd s},
    \label{eq:dens_ratio}
  \end{equation}
  where $z_q(\bm w)=r_{\QS_q}(\bm w) + r_{\G_q}(\bm w) z$, $z\in[0,\{r_{\QS_1}(\bm w)-r_{\QS_q}(\bm w)\}/r_{\G_q}(\bm w)]$, and $\bm w\in\SSS^{d-1}$.\
  
\item[$(i)$] Write the integral in expression~\eqref{eq:dens_ratio} using the change of variable $s=\{1-(t \lVert \bm w\rVert)^{-1}\}\,r_{\QS_1}(\bm w)$, with $L_q(\bm w):= r_{\QS_1}(\bm w)/\{r_{\QS_1}(\bm w)-\rquant(\bm w)\}$, as
    \begin{IEEEeqnarray*}{rCl}
      \int_{\rquant(\bm w)}^{r_{\QS_1(\bm w)}} f_{R, \bm W}(s, \bm
      w)~\dd s
    &=& r_{\QS_1}(\bm w)^{d}\int_{L_q(\bm w)}^{\infty}\left(1-1/t\right)^{d-1}t^{-2}\{f_{\bm X}[(1-1/t)r_{\QS_1}(\bm w)\bm w]/\psi_{\textsf{B}}(t)\}\psi_{\textsf{B}}(t)\,\dd t.
    \end{IEEEeqnarray*}
    Due to convergence $(i)$, the map
    $(0,\infty) \ni s \mapsto f_{\bm X}[(1-1/s)r_{\QS_1}(\bm w)\bm w]/\psi_{\textsf{B}}(s) \in \RR^+$ is slowly varying at~$\infty$.\ Additionally, convergence $(i)$ implies that
    $\psi_{\mathsf{B}}(s)\in \text{RV}_{1+1/\xi}^\infty$, hence
    there exists $l_{\textsf{B}}\in \text{RV}_0^\infty$ such that
    \smash{$\psi_{\mathsf{B}}(s)=l_{\mathsf{B}}(s) s^{1+1/\xi}$}.\
    Thus, by Karamata's theorem~\citep[Theorem
    2.1][]{resnick2007heavy},
    \[
      \frac{\int_{\rquant(\bm w)}^{r_{\QS_1(\bm w)}} f_{R, \bm W}(s, \bm
      w)~\dd s} {\{r_{\QS_1}(\bm w)-\rquant(\bm w)\}f_{R, \bm W}[\rquant(\bm w), \bm
      w]}= -\xi\{1+o(1)\},\qquad \text{as $q\to 1$}.
    \]
    Using the assumption that $f_{\bm X}(r\bm w)\sim \psi_{\mathsf{B}}[r_{\QS_1}(\bm w)/\{r_{\QS_1}(\bm w)-r\}]g_{\G}(\bm w)^{1+1/\xi}$ as $r\to r_{\QS_1}(\bm w)$, and setting $r_{\G_q}(\bm w)=-\xi\{r_{\QS_1}(\bm w)-\rquant(\bm w)\}$, expression~\eqref{eq:dens_ratio} converges to
    \begin{IEEEeqnarray*}{rCl}
      & & \frac{r_{\G_q}(\bm
      w)z_q(\bm w)^{d-1}f_{\bm X}[z_q(\bm w)\bm w]}{-\xi\{r_{\QS_1}(\bm w)-\rquant(\bm w)\}\rquant(\bm w)^{d-1}f_{\bm X}[\rquant(\bm w)\bm
      w]} \{1+o(1)\} = \\\\
      &&=\frac{z_q(\bm w)^{d-1}}{\rquant(\bm w)^{d-1}}\left[\frac{r_{\QS_1}(\bm w)-z_q(\bm w)}{r_{\QS_1}(\bm w)-\rquant(\bm w)}\right]^{-1-1/\xi}\{1+o(1)\}\to(1 + \xi
    z)_+^{-1-1/\xi},\qquad \text{as $q\to 1$}.
  \end{IEEEeqnarray*}
\item[$(ii)$] Consider first the integral term in expression~\eqref{eq:dens_ratio}.\ The integral is equal to
    \begin{IEEEeqnarray*}{rCl}
    &\int_{\rquant(\bm w)}^\infty& s^{d-1}f_{\bm X}(s\bm w)~\dd s=
        \int_{\rquant(\bm w)}^\infty s^{d-1}
        \exp[-\psi_{\textsf{L}}(s) \{- \{\log f_{\bm X}(s \bm
        w)\}/\psi_{\textsf{L}}(s)\}]\dd s.
    \end{IEEEeqnarray*}
    Due to Lemma~\ref{lemma:Potter_bounds} with
    $V(\bm x) = -\log f_{\bm X}(\bm x)$ and $h=\psi$, for all
    $\varepsilon>0$ there exists $q_0>q$ such that
    \begin{IEEEeqnarray*}{rCl}
    \int_{\rquant(\bm w)}^\infty s^{d-1}
        \exp[-(1+\varepsilon)\psi_{\textsf{L}}(s)g_\G(\bm w)^\rho]\dd s &\leq &\int_{\rquant(\bm w)}^\infty s^{d-1}f_{\bm X}(s\bm w)~\dd s\leq
        \int_{\rquant(\bm w)}^\infty\exp[-(1-\varepsilon)\psi_{\textsf{L}}(s)g_\G(\bm w)^\rho]\dd s.
    \end{IEEEeqnarray*}
    Since $\varepsilon$ can be made arbitrarily small, we have    
    \begin{IEEEeqnarray*}{rCl}
        \int_{\rquant(\bm w)}^\infty s^{d-1}f_{\bm X}(s\bm w)~\dd s
        &\sim&
        \int_{\rquant(\bm w)}^\infty s^{d-1}\exp[-\psi_{\textsf{L}}(s)g_\G(\bm w)^\rho]\dd s
    \end{IEEEeqnarray*}
    Without loss, $\psi_{\textsf{L}}$ can be taken smooth~(Lemma 1.4,~\cite{sene06}) so that $d^k \psi_{\textsf{L}}(t) /d t^k$ exists for all $t>0$ and $k\in \mathbb{N}$.\ Therefore,
    \begin{IEEEeqnarray*}{rCl}
        \int_{\rquant(\bm w)}^\infty s^{d-1}f_{\bm X}(s\bm w)~\dd s
        &\sim&
        \int_{\rquant(\bm w)}^\infty s^{d-1}\left[\frac{\frac{d}{ds}\exp[-\psi_{\textsf{L}}(s)g_\G(\bm w)^\rho]}{-g_\G(\bm w)^\rho\psi^\prime_{\textsf{L}}(s)}\right]\dd s \\\\
        &=& r_{\QS_q}(\bm w)^{d-1}\left[\frac{\exp[-\psi_{\textsf{L}}(r_{\QS_q}(\bm w))g_\G(\bm w)^\rho]}{g_\G(\bm w)^\rho\psi^\prime_{\textsf{L}}(r_{\QS_q}(\bm w))}\right]\{1+o(1)\}, \qquad \text{as $q\to1$}.
    \end{IEEEeqnarray*}
    Using Lemma~\ref{lemma:Potter_bounds} again, we have that equation~\eqref{eq:dens_ratio} converges to
    \begin{IEEEeqnarray*}{rCl}
        &&\frac{r_{\G_q}(\bm w)z_q(\bm w)^{d-1}}{r_{\QS_q}(\bm w)^{d-1}}g_\G(\bm w)^{\rho}\psi^\prime_{\textsf{L}}(r_{\QS_q}(\bm w))\exp[-g_\G(\bm w)^\rho\{\psi_{\textsf{L}}(z_q(\bm w))-\psi_{\textsf{L}}(r_{\QS_q}(\bm w))\}]\{1+o(1)\}\\
        &&= r_{\G_q}(\bm w)g_\G(\bm w)^{\rho}\psi^\prime_{\textsf{L}}(r_{\QS_q}(\bm w))\exp[-g_\G(\bm w)^\rho\{\psi_{\textsf{L}}(z_q(\bm w))-\psi_{\textsf{L}}(r_{\QS_q}(\bm w))\}]\{1+o(1)\}, \qquad \text{as $q\to1$}.
    \end{IEEEeqnarray*}
    A Taylor expansion of $\psi_{\textsf{L}}(z_q(\bm w))$ about
    $\rquant$ and local uniform convergence give
    $\psi_{\textsf{L}}(z_q(\bm w)) = \psi_{\textsf{L}}(r_{\QS_q}(\bm
    w))+\psi^\prime_{\textsf{L}}(r_{\QS_q}(\bm w))r_{\G_q}(\bm
    w)+o(1)$.\ Thus,
    \begin{IEEEeqnarray*}{rCl}
        &&r_{\G_q}(\bm w)g_\G(\bm w)^{\rho}\psi^\prime_{\textsf{L}}(r_{\QS_q}(\bm w))\exp[-g_\G(\bm w)^\rho\{\psi^\prime_{\textsf{L}}(r_{\QS_q}(\bm w))r_{\G_q}(\bm w)+o(1)\}]\{1+o(1)\}\\
        && = r_{\G_q}(\bm w)g_\G(\bm w)^{\rho}\psi^\prime_{\textsf{L}}(r_{\QS_q}(\bm w))\exp[-g_\G(\bm w)^\rho\{\psi^\prime_{\textsf{L}}(r_{\QS_q}(\bm w))r_{\G_q}(\bm w)]\}]\{1+o(1)\}\to \exp(-z),\qquad \text{as $q\to 1$},
    \end{IEEEeqnarray*}
    whenever
    $r_{\G_q}(\bm w)\sim\{\psi^\prime_{\textsf{L}}(r_{\QS_q}(\bm
    w))g_\G(\bm w)^\rho\}^{-1}$.\
\item[$(iii)$] Consider first the integral term in expression~\eqref{eq:dens_ratio}.\ The integral is equal to
  \begin{IEEEeqnarray*}{rCl}
    \int_{\rquant(\bm w)}^\infty f_{R, \bm W}(s, \bm w)~\dd s&=&
    \int_{\rquant(\bm w)}^\infty s^{d-1} \{f_{\bm X}(s \bm
    w)/\psi_{\textsf{H}}(s)\}\{\psi_{\textsf{H}}(s)/s^{1-d}\} ~\dd s.
  \end{IEEEeqnarray*}
  Due to convergence $(iii)$, the map
  $(0,\infty) \ni s \mapsto \{f_{\bm X}(s \bm
  w)/\psi_{\textsf{H}}(s)\} \in \RR^+$ is slowly varying at $\infty$
  for any $\bm w\in \SSS^{d-1}$.\ Further, convergence $(iii)$ implies
  that $\psi_{\mathsf{H}}(s)\in \text{RV}_{-(d+\xi^{-1})}^\infty$,
  hence there exists $l_{\textsf{H}}\in \text{RV}_0^\infty$ such that
  \smash{$\psi_{\mathsf{H}}(s)=l_{\mathsf{H}}(s) s^{-
      \{d+(1/\xi)\}}$}.\ By Karamata's theorem~\citep[Theorem
  2.1][]{resnick2007heavy},
  \[
    \frac{ \int_{\rquant(\bm w)}^\infty f_{R, \bm W}(s, \bm w)~\dd s}
    {\rquant(\bm w) f_{R, \bm W}(\rquant(\bm w), \bm w)}= \xi\,
    \{1+o(1)\},\qquad \text{as $q\to 1$}.
  \]
  Thus, when $r_{\G_q}(\bm w) = \xi \, r_{\QS_q}(\bm w)$ and as $q\to 1$,
  expression~\eqref{eq:dens_ratio} converges to
  \begin{IEEEeqnarray*}{rCl}
    &\frac{1}{\xi}& \, \frac{r_{\G_q}(\bm w)}{\rquant(\bm
      w)}\left[\frac{z_q(\bm w)}{r_{\QS_q}(\bm w)}\right]^{d-1} \frac{f_{\bm X}
      [z_q(\bm w) \bm w]}{f_{\bm
        X}[\rquant(\bm w)\,\bm w]} \{1+o(1)\} = \left[1+\frac{r_{\G_q}(\bm w)
        z}{r_{\QS_q}(\bm w)}\right]^{-1-1/\xi} \{1+o(1)\}\to (1 + \xi
    z)_+^{-1-1/\xi}.
  \end{IEEEeqnarray*}
\end{proof}

\subsection{Stability of the radial Exponential distribution}
\label{proof:stab_exp}
\begin{proof}[Proof of stability equation~\eqref{eq:MRS_stability_exp}]
  Let the random vector $\bm Z\in\RR^d$ follow an $\rExp$ distribution
  with location $\loc$, scale $\Sigma$, and directional component
  $\W$.\ Then,
\begin{IEEEeqnarray*}{rCl}
  &&\PR\left[\bm Z\in [\loc \radd B_{r_1+r_2}(\Orig)\rmult \Sigma]^\prime\mid \bm Z \in \{\loc +B_{r_1}(\Orig)\rmult \Sigma\}^\prime\right] = \frac{\PR\left[\bm Z\in \{\loc +B_{r_1+r_2}(\Orig)\rmult\Sigma\}^\prime\right]}{\PR\left[\bm Z \in \{\loc +B_{r_1}(\Orig)\rmult\Sigma\}^\prime\right]}\\
  && \quad = \frac{\PR\left[\{\lVert\bm Z\rVert-r_{\loc}\left(\bm
        Z/\lVert \bm Z\rVert\right)\}/r_{\Sigma}\left(\bm Z/\lVert \bm
        Z\rVert\right)>r_1+r_2\right]}{\PR\left[\{\lVert\bm
      Z\rVert-r_{\loc}\left(\bm Z/\lVert \bm
        Z\rVert\right)\}/r_{\Sigma}\left(\bm Z/\lVert \bm
        Z\rVert\right)>r_1\right]} = \exp\{-r_2\} = \PR\left[\bm Z \in
    \{\loc +B_{r_2}(\Orig)\}^\prime\rmult\Sigma\right].
\end{IEEEeqnarray*}
\end{proof}

\subsection{Stability of the radial generalised Pareto distribution}
\label{proof:stab_gP}
\begin{proof}[Proof of stability equation~\eqref{eq:MRS_stability_GP}]
  Let the random vector $\bm Z\in\RR^d$ follow an $\rGP$ distribution
  with location $\loc$, scale $\Sigma$, shape $\xi\in\RR$, and
  directional component $\W$.\ Then,
\begin{IEEEeqnarray*}{rCl}
  &&\PR\left[\bm Z\in [\loc \radd B_{r_1+r_2}(\Orig)\rmult \Sigma]^\prime\mid \bm Z \in \{\loc +B_{r_1}(\Orig)\rmult \Sigma\}^\prime\right] = \frac{\PR\left[\bm Z\in \{\loc +B_{r_1+r_2}(\Orig)\rmult\Sigma\}^\prime\right]}{\PR\left[\bm Z \in \{\loc +B_{r_1}(\Orig)\rmult\Sigma\}^\prime\right]}\\
  && \quad =\frac{\PR\left[\{\lVert\bm Z\rVert-r_{\loc}\left(\bm
        Z/\lVert \bm Z\rVert\right)\}/r_{\Sigma}\left(\bm Z/\lVert \bm
        Z\rVert\right)>r_1+r_2\right]}{\PR\left[\{\lVert\bm
      Z\rVert-r_{\loc}\left(\bm Z/\lVert \bm
        Z\rVert\right)\}/r_{\Sigma}\left(\bm Z/\lVert \bm
        Z\rVert\right)>r_1\right]}
  = \frac{\left[1+\xi(r_1+r_2)\right]_+^{-1/\xi}}{\left[1+\xi r_1\right]_+^{-1/\xi}}= \left[1+\xi\frac{r_2}{1+\xi r_1}\right]^{-1/\xi}\\
  && \quad = \PR\left[\bm Z \in \left\{\loc +
      \frac{B_{r_2}(\Orig)\rmult\Sigma}{B_1(\Orig)\radd
        B_{\xi}(\Orig)\rmult B_{r_1}(\Orig)}\right\}^\prime\right].
\end{IEEEeqnarray*}
\end{proof}

\subsection{Marginal density of directional variable under homothetic density}
\label{sec:proof_W_density}
\begin{proof}[Proof of Proposition~\eqref{prop:angle_distribution}]
  Suppose that $f(\bm x) = f_0(\gG(\bm x))$, $\bm x\in \RR^d$.\ Then,
  \begin{IEEEeqnarray}{rCl}\label{eq:angle-density-derivation}
    f_{\bm W}(\bm w)&=&\int_0^\infty f_{R, \bm W}(r, \bm w)\, \dd r =
    \frac{1}{\gG(\bm w)^d}\int_0^\infty s^{d-1}f_0(s)\, \dd s =
    \frac{1}{d\,\vol{\G}\,\gG(\bm w)^d},
  \end{IEEEeqnarray}
  where the last equality follows from~\citet[][see Section
  3.1]{balknold10}.
\end{proof}

\newpage
\appendix
\renewcommand{\thesection}{\arabic{section}}
\title{\noindent \bf\Large Supplementary Material for ``Statistical inference for radially-stable generalized Pareto distributions and return level-sets in geometric extremes"}

\maketitle
\section{Technical background}
\label{supp:Technical_background}

\subsection{Vague convergence}
\label{sec:vg}
For a detailed exposition of the concept of vague convergence, see
\cite{resnick2007heavy}.\ Let $\mathcal{E}$ be a locally compact metric space
with countable base (e.g., $\RR^d$) and let $M_+(\mathcal{E})$ be the class of
nonnegative Radon measures on Borel subsets of $\mathcal{E}$.\ If
$\mu_n\in M_+(\mathcal{E})$ for $n\geq 0$, then we say that $\mu_n$ converges
vaguely to $\mu$ (presented as $\mu_n \vg \mu$) if 
\[
  \lim_{n\to \infty} \int_{\mathcal{E}} f d\mu_n = \int_{\mathcal{E}} f
  d\mu.
\]
for all bounded continuous functions $f$ with compact support.

\subsection{Multivariate regular variation}
\label{sec:background_MRV}
To understand radial stability properties in a $d$-dimensional setup,
we revisit classical multivariate extreme value theory through
consideration of multivariate regular variation (MRV).\ Given a
  random vector $\bm{X}\in\mathbb{R}^d$, a common assumption is that
  its margins follow the same univariate distribution.\ If this is not
  guaranteed, standardisation is required through the probability
  integral transform with the true distribution function and the
  desired target quantile function.\ When the true distribution
  function is not known, the empirical distribution function is used
  in the bulk of the data, while the Generalised Pareto distribution
  function with fitted parameters is used outside of high left and
  right thresholds to avoid issues the arise when using the empirical
  distribution function in the tail \citep{coletawn91}.
Under the choice of standard exponential margins,
a random vector $\bm X$ is multivariate regularly varying on the cone
$\mathcal{E}=[-\infty,\infty]^d \setminus\{-\infty\}^d$ if there
exists a limit measure $\nu_{\textsf{MRV}}$ such that as $n\to \infty$
\begin{equation}
  n\,\PR[\bm X - (\log n)\bm 1 \in \cdot] \vg \nu_{\textsf{MRV}}(\cdot), \quad \text{in $M_+(\mathcal{E})$},
  \label{eq:MRV}
\end{equation}
A consequence of convergence~\eqref{eq:MRV} is that
$\nu_{\textsf{MRV}}(t \bm 1 + B)=\exp(-t)\nu_{\textsf{MRV}}(B)$,
implying domain of attraction properties for the distribution of
$\bm X$.\ \color{black} For instance, after appropriate recentering,
the distribution function of the componentwise maxima
$\bm M_n := (\max_{i=1,\dots, n} X_{ij} \,:\, j=1,\dots, d)$ of a
random sample $\bm X_i=(X_{ij}\,:\,j=1,\dots, d)$, $i=1,\dots,n$, from
the distribution of $\bm X$, converges weakly to
\[
  \PR\left(\bm M_n - (\log n)\bm 1\leq \bm z\right)=\PR\left\{\bm X_i
    - (\log n)\bm 1 \in [-\bm \infty, \bm z]\,:\,i=1,\dots,n\right\}
  \wk \exp\left[-\Lambda_{\textsf{MRV}}(\bm z)\right],
\]
where the exponent function
$\Lambda_{\textsf{MRV}}(\bm z):=\nu_{\textsf{MRV}}([-\infty, \bm
z]')$.\ The measure $\Lambda_{\textsf{MRV}}$ is in one-to-one
correspondence with a Radon measure $H_{\textsf{MRV}}$ on the
unit-simplex
$\Delta_+^{d-1}=\{\bm \omega \in \RR^{d}_+ \,:\, \lVert \bm w
\rVert_1=1\}$, where
$\lVert \bm x \rVert_p = (\sum_{i=1}^d \lvert x_i\rvert^p)^{1/p}$ for
$\bm x = (x_1,\ldots,x_d)\in\RR^{d}$, termed the spectral measure,
satisfying the mass--moment constraint $\HMRV(\Delta_+^{d-1})=d$ and
$\int_{\Delta_+^{d-1}} w_i H(d \bm w) = 1$, for $i=1,\dots,d$.\ The
correspondence is given by
\[
  \Lambda_{\textsf{MRV}}(B) = \int_{\Delta_+^{d-1}} \max\left[\bm w \exp(-\bm
    x)\right] H_{\textsf{MRV}}(d\bm w),
\]
where vector algebra is interpreted component-wise.\ Similarly, the
vector of renormalised threshold exceedances converges in distribution
to a multivariate Pareto distribution.\ That is,
\[
  \lim_{r\to \infty}\PR\left[\bm X -r\bm 1\in [-\bm \infty,\bm x] \mid \max \bm
    X > r\right] = \frac{\Lambda_{\textsf{MRV}}(\bm x \wedge \bm
    0)-\Lambda_{\textsf{MRV}}(\bm x)}{\Lambda_{\textsf{MRV}}(\Orig)},
\]
at continuity points $\bm x \in\mathcal{E}$ of the limit distribution
function of the multivariate generalised Pareto distribution, where
$\Lambda_{\textsf{MRV}}(\bm x)=\nu_{\textsf{MRV}}([-\infty, \bm
x]')$.\ Multivariate generalised Pareto distributions arise as the
only non-trivial limit distributions when renormalised multivariate
exceedances are considered~\citep{roo+ta:06}.\ These distributions
possess appealing theoretical and practical properties
\cite{engehitz19} and are often thought to be the
multivariate analogue of the generalised Pareto distribution
\cite{pick75} by being the only threshold-stable multivariate
distributions~\citep{kiriliouk2019peaks}.\ The practical importance of
their threshold-stability lies in the preservation of the model's
distributional type at higher levels, which is advantageous for
extrapolation into the joint tail region of the
distribution.\ 

Both component-wise maxima and multivariate threshold exceedance
approaches can be unified under a single theme.\ Let $r_q$ be the
$q$-th quantile of the distribution of $\lVert \bm X \rVert_1$.\ Then
the overarching assumption that unifies these approaches is that as
$q\to 1$, the point process of recentered exceedances
$\bm X_i- r_q\bm 1 \mid \lVert \bm X_i\rVert_1 > r_q$, converges in
distribution to a Poisson point process on
$(0,\infty\,]\times \Delta_+^{d-1}$ with mean measure
\begin{IEEEeqnarray}{rCl}
  \lim_{q\to 1}\PR\left[\lVert \bm X\rVert_1 - r_q>z, \frac{\bm
      X}{\lVert \bm X\rVert_1} \in B~\Big|~ \lVert \bm X\rVert_1 >
    r_{q}\right] = \exp(-z) \HMRV(B),\quad \text{$z>0$,
    $B\subset \Delta_+^{d-1}$.}
  \label{eq:MRV_polar}
\end{IEEEeqnarray}
In practice, convergence~\eqref{eq:MRV_polar} is typically assumed to
hold with $\HMRV$ placing mass in the interior of the unit-simplex
$\Delta_+^{d-1}$,
$(\Delta_+^{d-1})^\circ:=\{\bm \omega\in\RR_+^d\,:\, \min_i\omega_i >
0, \lVert\bm \omega\rVert_1=1\}$.\ Although useful for theoretical
investigations, $\HMRV$ placing mass in $(\Delta_+^{d-1})^\circ$ as a
\textit{dependence assumption} can be inadequate in practice since it
imposes a strong form of dependence in the joint tail, which may not
be supported by the data at hand, see for example
Section~\ref{sec:wave}.\

\subsection{Background on starshaped sets}
\label{appendix:star_background}
An expository review and survey of material about the fundamental
mathematical notion of starshaped sets, emphasising their
geometric, analytical, combinatorial, and topological properties, is
given by~\cite{hansen2020starshaped} and references therein.

If $\bm a$ and $\bm b$ are different elements of $\RR^d$, then
$[\bm a:\bm b]=\{(1-t)\bm a + t \bm b\,:\, 0 \leq t\leq 1\}$ denotes the segment with
endpoints $\bm a$ and $\bm b$, $[\bm a:\bm b)=\{(1-t)\bm a + t \bm b\,:\, t \geq 0\}$ denotes
the half-line with origin $\bm a$ through $\bm b$,
$(\bm a:\bm b)=\{(1-t)\bm a + t \bm b\,:\, 0 \leq t\in \RR\}$ denotes the line through
$\bm a$ and $\bm b$, $]\bm a:\bm b]=\{(1-t)\bm a + t \bm b\,:\, 0 < t\leq 1\}$ denotes the
segment $[\bm a:\bm b]$ without its endpoint $\bm a$,
$[\bm a:\bm b[=\{(1-t)\bm a + t \bm b\,:\, 0 \leq t < 1\}$ denotes the segment
$[\bm a: \bm b]$ without its endpoint $\bm b$ and
$]\bm a:\bm b)=\{(1-t)\bm a + t \bm b\,:\, t > 0\}$ denotes the ray $[\bm a, \bm b)$ without
its origin $\bm a$.\ A set $G\subset \RR^d$ is called starshaped if there
exists $\bm x\in G$ such that every point $\bm y\in G$ is visible from $\bm x$ in
the sense that $[\bm x:\bm y]\subset G$.\ If $G$ is a set and $\bm x\in G$, the
star of $\bm x$ in $G$ is the set of all points in $G$ that are visible
from $\bm x$, that is, $\text{st}(\bm x:G)=\{\bm y\in G\,:\, [\bm x:\bm y]\subset G\}$.\ A
set $G$ is starshaped if and only if there exists $\bm x\in G$ such that
$\text{st}(\bm x:G)=G$.\ The kernel of a set $G$, denoted by
$\text{ker}\,G$, is the convex set of all of its star centres, that
is, $\text{ker}\,G= \{\bm x\in G\,:\, \text{st}(\bm x:G)=G\}$.\ A star set is
called a star-body if it is a compact set on a regular domain, that is if $G$
is a compact starshaped set such that $G^\circ$ is connected and
$G=\overline{G^{\circ}}$.\ Let $G$ be a closed proper subset of
$\RR^d$ with non-empty interior.\ Then $G$ is said to be strongly
starshaped at $\bm a$ whenever there exists $\bm a\in\RR^d$ such that for
every $\bm w\in\SSS^{d-1}$, the halfline $\bm a+[\Orig: \bm w)$ does not
intersect the boundary $\partial G$ of $G$ more than once.\ A strongly
starshaped set at $a$ is a starshaped set with
$a \in \text{ker}\, G$.\ A set $G$ is said to be strongly starshaped
if it is strongly starshaped at $\Orig$.\ We write $\Delta_{\bm w}$ for
the half-line through the origin $[\Orig:\bm w)$ passing through
$\bm w\in\SSS^{d-1}$.\ If $G$ is starshaped at $\bm m\in \text{ker}\,G$,
we say that $\bm z\in G$ is the last point of a ray $[\bm m: \bm x)$ in $G$ if
$\bm z\in[\bm m:\bm x)$ and there is no point $\bm y\in[\bm m: \bm x)\cap G$ such that
$\bm y\notin [\bm m: \bm z]$.\ Let $G$ be a starshaped set such that
$(\text{ker}\, G)^\circ \not= \varnothing$.\ If
$\bm m\in (\text{ker}\, G)^\circ$ and $\bm x\in \overline{G}$, then
$[\bm m: \bm x[\subset G$.\ If $G$ is a closed starshaped set such that
$(\text{ker}\, G)^\circ\not= \varnothing$, then $G^\circ$ is connected
and $G=\overline{(G^\circ)}$, whence a regular domain.


\label{appendix:background}
\subsection{Radial addition and gauges of starshaped sets}
\label{appendix:radial}
If $G$ is a closed starshaped set, $\bm m\in \text{ker}\,G$ and
$\bm w\in\SSS^{d-1}$, then there are two cases.\ There exists a last
point $\bm p$ of $\bm m+\Delta_{\bm w}$ in $G$ or
$\bm m + \Delta_{\bm w} \subset G$.\ In the first case, let
\[
  \rho_{\bm m,G}(\bm w) = \sup\{\lambda \in \RR\,:\, \bm m + \lambda \bm w \in G\}.
\]
The \textit{radial function} of $G$ at $\bm m$ is the function
$r_{\bm m,G}\,:\, \SSS^{d-1}\to \RR_+$ defined by
\[
  r_{\bm m,G} (\bm w)=
  \begin{cases}
    \rho_{\bm m,G}(\bm w)& \bm m+ \Delta_{\bm w} \not\subseteq G\\
    +\infty& \text{$\bm m+\Delta_{\bm w}\subseteq G$.}
  \end{cases}
\]
We have the equivalence
$\bm m \in (\text{ker}\,G)^\circ \Leftrightarrow r_{\bm m, G} > 0$.\ Also, if
$G$ is compact and $\bm m \in (\text{ker}\,G)^\circ$, then $r_{\bm m,G}$ is a
Lipschitz function.

Let $\bigstar$ denote the space of star-bodies in $\RR^d$ and
$\radd,\rmult\,:\,\bigstar \times \bigstar \to \bigstar$ be the
operators of \textit{radial addition} and \textit{radial
  multiplication} respectively, defined by means of radial functions
as as
\begin{IEEEeqnarray}{rCl}
  \rho_{G_1 \radd G_2} (\bm w) &=& \rho_{G_1}(\bm w) + \rho_{G_2}(\bm
  w), \qquad G_1, G_2 \in\bigstar,\\\\
  \rho_{G_1 \rmult G_2} (\bm w) &=& \rho_{G_1}(\bm w) \rho_{G_2}(\bm w),
  \qquad G_1, G_2 \in\bigstar,
\end{IEEEeqnarray}
as illustrated in Figure~\ref{fig:set_operations}.\ Multiplication of a set $G \in \bigstar$ by a nonnegative scalar $c$
is equivalent to radial multiplication of $G$ by $B_c(\Orig)$, that is,
$c G:=B_c(\Orig)\rmult G$, $c \geq 0$.\ The set $\bigstar$ equiped with the binary
operation $\radd$ is a commutative monoid.\ In other words,
$(\bigstar, \radd)$ satisfies the following.
\begin{description}
\item[Identity element:] There exists an element $\Orig\in\bigstar$ such
  that $\Orig \radd G = G \radd \Orig = G$ $\forall G \in\bigstar$.
\item[Commutative law:] For $G_1$ and $G_2$ in $\bigstar$,
  $G_1 \radd G_2 = G_2 \radd G_1$.
\end{description}
Additionally, we have
\begin{description}
\item[Scalar Multiplication Closure:] The result of $\rmult$
  between a compact star set and a non-negative scalar remains within
  the space of compact star sets.
  
\item[Distributive Laws:] The operations $\radd$ and $\rmult$ obey
  distributive laws similar to those in a ring.\ For example,
  $(G_1 + G_2) \rmult B_c(\Orig) = G_1 \rmult B_c(\Orig) + G_2 \rmult B_c(\Orig)$.
  
\item[Associativity:] Both $\radd$ and
  $\rmult$ are associative operations.\ For example, $(G_1 + G_2)
  + G_3 = G_1 + (G_2 + G_3)$.
  
\end{description}

The monoid $(\bigstar, \radd)$ is endowed with its algebraic
preordering $\leq$ defined by $G_1 \leq G_2$ whenever there exists
$H\in\bigstar$ such that $G_1 \radd H = G_2$.\ In this case, we define
radial subtraction $H = G_2 - G_1$ through ordinary difference of
radial functions.\

The algebraic structure on the space of compact star sets with the
operations $\radd$ and $\rmult$ provides a mathematical
framework to explore relationships, transformations, and compositions
of these sets.\ This algebraic approach facilitates the development of
consistent theories and methodologies when dealing with operations
involving compact star sets.\ 

Last, let $G_1$ and $G_2$ be two star-bodies such that $\rho_{S_i}$ is
continuous on $\SSS^{d-1}$.\ Then the dual Brunn-Minkowski inequality
\citep{lutwak1988intersection} yields
\[
  \lvert G_1 \radd G_2\rvert^{1/d} \leq \lvert G_1\rvert^{1/d} + \lvert G_2\rvert^{1/d},
\]
and equality holds if $d=1$ or $d\geq 2$ and $G_2 = r G_1$, for some
$r>0$.\ If $G$ is compact and strongly starshaped, then the volume of
$G$ can be expressed by the formula
\begin{equation}
  \vol{G}=\frac{1}{d}\int\limits_{\SSS^{d-1}} \rG(\bm w)^d d\bm w,
  \label{eq:vol}
\end{equation}
see~\cite{klain1997invariant}.\ The reciprocal of the radial function
of a set $G$ starshaped at $\Orig$ is called the gauge function
$g_G\,:\,\Rstar \to \RR_+$, defined by
$\gG(\bm x) = \inf\{t\geq 0\,:\,\bm x\in t G\}$ for $\bm x\in\Rstar$.\

\begin{figure}[t!]
\centering
\includegraphics[width=0.9\textwidth]{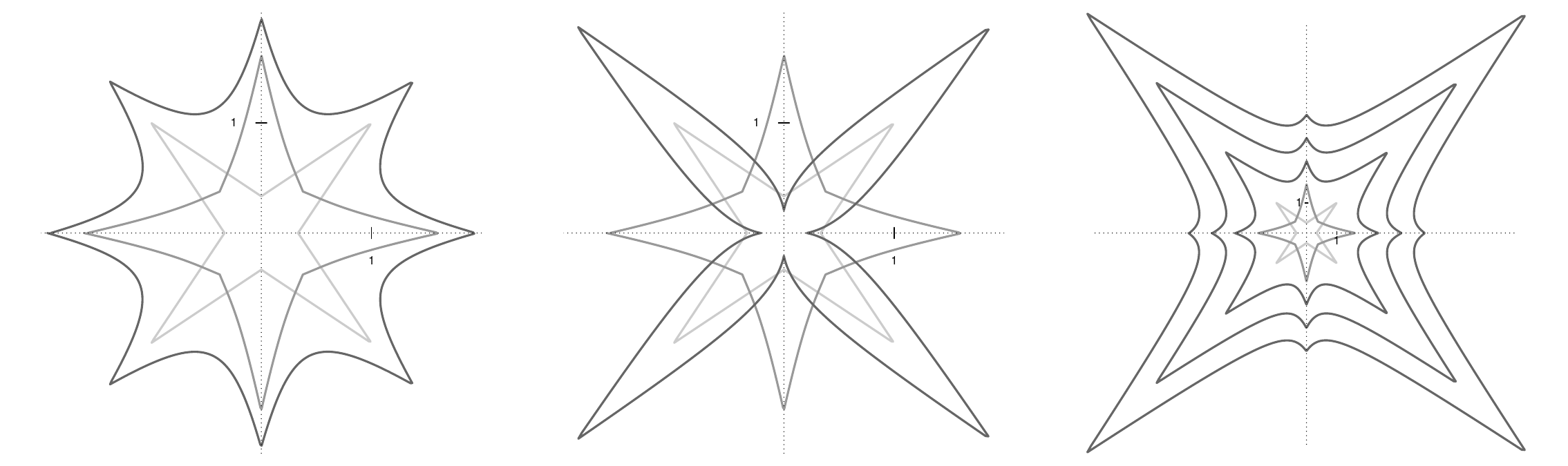}
\caption{Left:\ $\loc \radd \G$.\ Centre:\ $\loc \rdiv \G$.\ Right:
  $\loc \radd B_{-\log(1-q)}(\Orig)\rmult\G$, for $B_r(\bm x)$ the ball of radius $r$ centred at $\bm x$, and $q\in\{0.9,0.99,0.999\}$.\
  $\loc$ in lightest grey, $\G$ in mid grey, result of operations in
  dark grey.\ \label{fig:set_operations}}
\end{figure}

\section{Remainder term and quality of convergence}
\label{appendix:proofs}
\subsection{Rates of convergence}
\label{sec:rof}
Let
\begin{equation}
  \fL(t \bm x)=\exp\{-g_{\G_L}(t\bm x) + \text{remainder}\}(1+\oo(1)), \quad \text{as $t\to \infty$},
  \label{eq:rof}
\end{equation}
where the remainder is interpreted as the evaluation $\roc(t \bm x)$
of a function $\roc\,:\, \mathbb{K}\to \RR$ at
$t\bm x \in \mathbb{K}$.\ The domain of the function $\roc$ is assumed
to be a solid cone $\mathbb{K}$ in $\Rstar$.\ A characterisation of
the map $\roc$ under second-order conditions~\citep{de1996generalized}
is given in Section~\ref{sec:rof}.\ A functional
$\roc\,:\,\Rstar\to\RR$ is said to determine the rate of convergence
to $\gG$ in expression~\eqref{eq:RV1} if it satisfies
\begin{equation}
  \fL(t\bm x) = \exp\left(-\left[t \gG(\bm x)
      \{1+\oo(1)\}\right]\right)=\exp\left[-\left\{\gG(t\bm x) +
      \roc(t\bm x)\right\}\right]\{1+\oo(1)\},
  \quad\bm x \in \Rstar,
  \label{eq:rof_defn}
\end{equation}
as $t\to \infty$.\ Two functionals $\roc_1$ and $\roc_2$ that
determine the rate of convergence to $\gG$ are asymptotically
equivalent, that is,
$\lim_{t\to \infty} \roc_1(t \bm x)/\roc_2(t \bm x) = 1$, $\forall \bm
x\in\Rstar$.\ As such, the set of all functionals
$\roc$ satisfying expression~\eqref{eq:rof_defn} forms an equivalence
class.\ To better understand the properties of the members of this
class, it is helpful to consider the special case where the leading
order term of
$\roc$ is equal to the product of a positive scaling function
$\lfirst(t)$ and a function $\qfirst(\bm x)$, that is, when $\roc(t
\bm x) = \OO(\lfirst(t)\qfirst(\bm x))$ as $t\to \infty$, for $\bm
x\in\Rstar$.\ Assumption~\ref{ass:RV2} delineates a sufficient
condition subject to which the rate of convergence has such a
behaviour.
\begin{assumption}
  \label{ass:RV2}
  Suppose that condition $(ii)$ of Proposition~\ref{prop:RV1} holds.\
  There exists a non-decreasing function
  $\lfirst\,:\,(0,\infty) \to (0,\infty)$ and a function
  $\qfirst\,:\,\Rstar\to \RR$ which is not everywhere zero, such that
  \begin{equation}
    \lim_{r\to\infty}\frac{\{-\log \fL(t \bm x)/t\} - \gG(\bm x)}{\afirst(t)} =
    \qfirst(\bm x),
    \label{eq:RV2}
  \end{equation}
  where $\afirst(t)=\lfirst(t)/t$ for $t>0$.
\end{assumption}
Due to condition $(ii)$ of Proposition~\ref{prop:RV1} we have that the
scale sequence $\afirst(t) = \oo(1)$ as $t\to\infty$.\ This implies
that $\lfirst(t) = \oo(t)$ as $t\to \infty$.\ Also
Assumption~\eqref{ass:RV2} requires that $-\log \fL$ is a
\textit{multivariate extended regularly varying} function in the sense
of Definition 1 of~\cite{pappau23}.\ To be precise,
convergence~\eqref{eq:RV2} is equivalent to assuming that there exists
$\pfirst\,:\,\Rstar\to \RR$ which is not everywhere zero, such that
\begin{equation}
  \lim_{t\to\infty}\frac{[-\log \fL(t\bm x)/\gG(t\bm x)] - [-\log \fL(t\bm
    1)/\gG(t\bm 1)]}{\afirst(t)} = \pfirst(\bm x), \qquad \bm x\in\Rstar.
  \label{eq:ERVd}
\end{equation}
The correspondence between the two limit functions in~\eqref{eq:RV2}
and~\eqref{eq:ERVd} is
$\pfirst(\bm x)=[\qfirst(\bm x)/\gG(\bm x)]-[\qfirst(\bm 1)/\gG(\bm
1)]$.\ From this, it immediately follows that $p_1(\bm 1)=0$.\
Also, it is known that the only possible non-trivial limit functions
that can arise from convergence~\eqref{eq:ERVd} are those that satisfy
the functional equation
\[
  [\pfirst(t \bm x)-\pfirst(t \bm 1)]/t^{\gamma - 1} =\pfirst(\bm x) ,
  \qquad t > 0,\qquad \bm x\in\Rstar,
\]
for some $\gamma \in \RR$, see Theorem 2 of~\cite{pappau23} for a
proof.\ These observations lead to Proposition~\ref{prop:q1} which
identifies the key properties of the scaling function
$\afirst(t)=\lfirst(t)/t$ and of the limit function $\qfirst(\bm x)$
under Assumption~\eqref{ass:RV2}.
\begin{proposition}
  \label{prop:q1}
  Suppose that Assumption~\ref{ass:RV2} holds.\ There exists
  $\gamma < 1$ such that $\afirst(t)\in\textsf{RV}_{\gamma-1}^\infty$
  and $\qfirst$ is $\gamma$--homogeneous, that is,
  $\qfirst(t \bm x)=t^\gamma \qfirst(\bm x)$ $\forall t>0$ and
  $\bm x\in\Rstar$.
\end{proposition}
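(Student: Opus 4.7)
The plan is to apply a convergence-of-types argument to Assumption~\ref{ass:RV2} at two related scalings and extract from this an asymptotic multiplicative functional relation that pins down both $a_1$ and $q_1$. Concretely, I would fix $s>0$ and an admissible direction $\bm x\in\Rstar$ with $q_1(\bm x)\neq 0$---such $\bm x$ exists since $q_1$ is, by hypothesis, not everywhere zero. Applying~\eqref{eq:RV2} at the point $s\bm x$ (using that $g_G$ is $1$-homogeneous) and separately applying~\eqref{eq:RV2} along the scaling variable $ts$ at the point $\bm x$, I would obtain the two asymptotic identities
\begin{equation*}
-\log f_{[L]}(ts\bm x) - ts\,g_G(\bm x) \;=\; t\,a_1(t)\,q_1(s\bm x) + o(t\,a_1(t)) \;=\; ts\,a_1(ts)\,q_1(\bm x) + o(ts\,a_1(ts)),
\end{equation*}
valid as $t\to\infty$.

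Equating the two right-hand sides and dividing through, I expect to obtain $a_1(ts)/a_1(t)\to q_1(s\bm x)/[s\,q_1(\bm x)]$ as $t\to\infty$. Since the left-hand side depends on $s$ alone, the limit $\lambda(s):=q_1(s\bm x)/q_1(\bm x)$ must be independent of the choice of $\bm x$ on $\{q_1\neq 0\}$; this is precisely the assertion that $a_1$ is regularly varying at infinity, so by the standard characterization theorem there exists $\gamma\in\RR$ with $\lambda(s)/s=s^{\gamma-1}$, that is $\lambda(s)=s^\gamma$ and $a_1\in\textsf{RV}_{\gamma-1}^\infty$. The relation $q_1(s\bm x)=s^\gamma q_1(\bm x)$ then holds on $\{q_1\neq 0\}$, extends trivially to the zero set, and by continuity of scaling gives the claimed $\gamma$-homogeneity of $q_1$ on all of $\Rstar$.

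To control $\gamma$, I would combine the fact $a_1(t)=o(1)$---recorded immediately after Assumption~\ref{ass:RV2} as a consequence of condition $(ii)$ of Proposition~\ref{prop:RV1}---with the non-triviality of $q_1$. The former forces $\gamma-1\leq 0$, hence $\gamma\leq 1$. The boundary case $\gamma=1$ would render $q_1$ itself $1$-homogeneous, so that the correction term $a_1(t)\,q_1(\bm x)$ in~\eqref{eq:RV2} could be absorbed into a redefinition of the leading $1$-homogeneous function $g_G$, contradicting the role of $g_G$ as the uniquely determined first-order limit in condition $(ii)$ of Proposition~\ref{prop:RV1} together with the assumed non-triviality of $q_1$. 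This yields the strict bound $\gamma<1$.

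The main obstacle is the final step---excluding $\gamma=1$---which is subtler than the other pieces and requires combining uniqueness of the first-order expansion with the non-triviality of $q_1$. A cleaner parallel route, which I would pursue as a cross-check, is to invoke Theorem~2 of \citet{pappau23} directly on the reformulation~\eqref{eq:ERVd}: that theorem classifies the admissible non-trivial limit functions $p_1$ and immediately yields both the functional equation $[p_1(t\bm x)-p_1(t\bm 1)]/t^{\gamma-1}=p_1(\bm x)$ quoted in the text and the admissible range of $\gamma$, so that homogeneity of $q_1$ then follows from the definition $p_1(\bm x)=q_1(\bm x)/g_G(\bm x)-q_1(\bm 1)/g_G(\bm 1)$ together with the $1$-homogeneity of $g_G$.
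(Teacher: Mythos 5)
Your core argument — apply~\eqref{eq:RV2} at the scaled argument $s\bm x$ and at the scaled time $ts$, equate, and deduce that $a_1(ts)/a_1(t)$ converges to a limit independent of $t$ — is exactly the paper's calculation, and the deduction of $a_1\in\textsf{RV}_{\gamma-1}^\infty$ from the convergence-of-types / characterization theorem follows identically. Your "cleaner parallel route" for the homogeneity of $q_1$ (invoking Theorem~2 of \citet{pappau23} on~\eqref{eq:ERVd} and translating back via $p_1(\bm x)=q_1(\bm x)/g_{\G}(\bm x)-q_1(\bm 1)/g_{\G}(\bm 1)$ and $1$-homogeneity of $g_{\G}$) is in fact the paper's \emph{primary} route for that part, so what you present as a cross-check is what the paper actually does. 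Your direct derivation of homogeneity from the scaling identity is also fine and slightly more self-contained, and the exclusion of $\gamma>1$ via $a_1(t)=o(1)$ matches the paper.

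The genuine gap is in your exclusion of $\gamma=1$, and you rightly flag it as the weak point. Your "absorption" argument does not close: if $\gamma=1$ then $a_1\in\textsf{RV}_0^\infty$, so $a_1$ is slowly varying and may still satisfy $a_1(t)\to 0$ (e.g.\ $a_1(t)=1/\log t$). The correction in the expansion then reads $\ell_1(t)\,q_1(\bm x)$ with $\ell_1(t)=t\,a_1(t)\in\textsf{RV}_1^\infty$ slowly modulated; because $\ell_1$ is not asymptotically constant, this term is \emph{not} a function of $t\bm x$ alone and so cannot be folded into a redefinition of the $1$-homogeneous gauge $g_{\G}$. Put differently, $q_1$ being $1$-homogeneous does not make the correction a $1$-homogeneous function of $t\bm x$ — only if $a_1(t)\to c>0$ would it, and that is ruled out by $a_1\to 0$. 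So the "contradiction with uniqueness of the first-order limit" never arises, and the absorption argument does not exclude $\gamma=1$.

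What your proposal misses — and what the paper leans on at this step — is the hypothesis that $\ell_1$ is \emph{non-decreasing}. You never use this assumption, which is itself a signal that the argument cannot reproduce the paper's conclusion: the strict bound $\gamma<1$ is exactly where the monotonicity of $\ell_1$ enters. The paper argues that in the borderline case the combination of $a_1(t)\to0$ and $\ell_1$ non-decreasing yields a contradiction (the paper's printed indices in that sentence appear garbled, but monotonicity of $\ell_1$ is the ingredient it intends to deploy). To complete your proof along the paper's lines you should show that $\gamma=1$, $a_1\in\textsf{RV}_0$ with $a_1\to0$, together with $\ell_1(t)=t\,a_1(t)$ non-decreasing, is impossible — or else retreat to the range of $\gamma$ delivered directly by Theorem~2 of \citet{pappau23}, which is the other route the paper gestures at.
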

From Proposition~\ref{prop:q1} we obtain
$\lfirst(t)\in \text{RV}_{\gamma}^\infty$.\ Because Assumption~\ref{ass:RV2} requires $\lfirst(t)$ to be a non-decreasing function,
it suffices to consider the following two cases.\ First, the case when
$\lfirst(t)$ converges to a constant $K\in (0,\infty)$ and second, the
case when $\lfirst(t)$ grows without bound.\ For the former case we
consider without loss of generality the case $\lfirst(t)=1$
$\forall t>0$.\ This is because positive real constants can be
absorbed in the limit function $\qfirst$.\ Below, we treat the two
cases in turn.

First, when $\lfirst(t)=1$, the leading order term of $\roc(t\bm x)$
is constant in $t$ which implies that the rate of convergence is
determined by the function $\qfirst$, that is,
$\roc(t \bm x) = \qfirst(t \bm x) + \oo(1)$, as $t\to \infty$.\ Also,
because $\lfirst\in \textsf{RV}_{0}^\infty$, Proposition~\ref{prop:q1}
yields that $\qfirst$ is a $0$--homogeneous function.\ Consequently,
\begin{equation}
  \fL(t \bm x) = \exp\left[-\left\{t \gG(\bm x) + \qfirst(\bm x)\right\}\right] \{1+\oo(1)\}, \qquad
  \text{as $t\to \infty$}.
  \label{eq:approx_g0}
\end{equation}
This special case gives a characterisation of the rate of convergence
which is independent of the distance from the origin $t$.\ This
behaviour is specific to the case $l(t)=1$ and is waived when
$\lfirst(t)$ grows without bound because in the latter case, the rate
of convergence may have non-negligible remainder terms.\

Using the same working principle as in Assumption~\ref{ass:RV2}, below
in Assumption~\ref{ass:RV3} we outline a sufficient condition subject
to which the rate of convergence exhibits a so-called
\textit{second-order} behaviour~\citep{de1996generalized, pappau23}.
\begin{assumption}
  \label{ass:RV3} Suppose that Assumption~\ref{ass:RV2} holds with
  $\lim_{t\to\infty} \lfirst(t) = \infty$.\ There exists a
  non-decreasing function $\lsecond$ and a function
  $\qsecond\,:\,\Rstar\to \RR$ which is not equal to the product of
  $\qfirst$ and a $0$--homogeneous function on
  $\Rstar$, such that for $\asecond(t)=\lsecond(t)/\lfirst(t)$
  \begin{equation}
    \lim_{t\to\infty} \frac{([\{-\log f_{L}(t \bm x)/t\} - \gG(\bm x)]/\afirst(t)) - \qfirst(\bm x)}{\asecond(t)} =
    \qsecond(\bm x).
    \label{eq:RV3}
  \end{equation}
\end{assumption}
It is clear that when Assumption~\ref{ass:RV3} holds with
$\lsecond(t)=1$, then
\begin{equation}\label{eq:gu1u2-generic}
  \fL(t \bm x) = \exp[-\{t \gG(\bm x) + \lfirst(t)\,\qfirst(\bm x) + \qsecond(\bm x)\} \{1+\oo(1)], \qquad \text{as $t\to \infty$}.
\end{equation}
More generally, we can construct an asymptotic expansion of
$-\log \fL(t\bm x)$ of the form
\begin{equation}
  -\log \fL(t\bm x) = \gG(t \bm x) + \sum_{j=1}^{\infty}l_j(t)
  u_j(\bm x), \quad\text{as $t\to\infty$}.
  \label{eq:expansion}
\end{equation}
for a sequence of positive scale functions $\{l_j\}_{j=1}^M$ with
$M\in\mathbb{N}\cup\{\infty\}$, satisfying $l_j(t)=\oo(l_{j+1}(t))$ as
$t\to \infty$, and a sequence of functions $\{u_j\}_{j=1}^M$
mapping $\Rstar$ into $\RR$.\ The function $u_j$ captures the
contribution of $-\log\fL(t\bm x)$ at a given order of the series and
$l_{j}$ is the scaling factor associated with this term.\ The $j$-th
term $l_j(t)u_j(\bm x)$ in the sum represents the $j$-th
approximation to $-\log \fL(t\bm x)$ as $t\to \infty$, and the scaling
factors $l_j$ determine the rate of convergence to $u_{j-1}$,
with $u_0(\bm x):=\gG(\bm x)$.

\begin{figure}[htbp!]
  \centering
  \includegraphics[scale=.2, trim= 100 0 100 0]{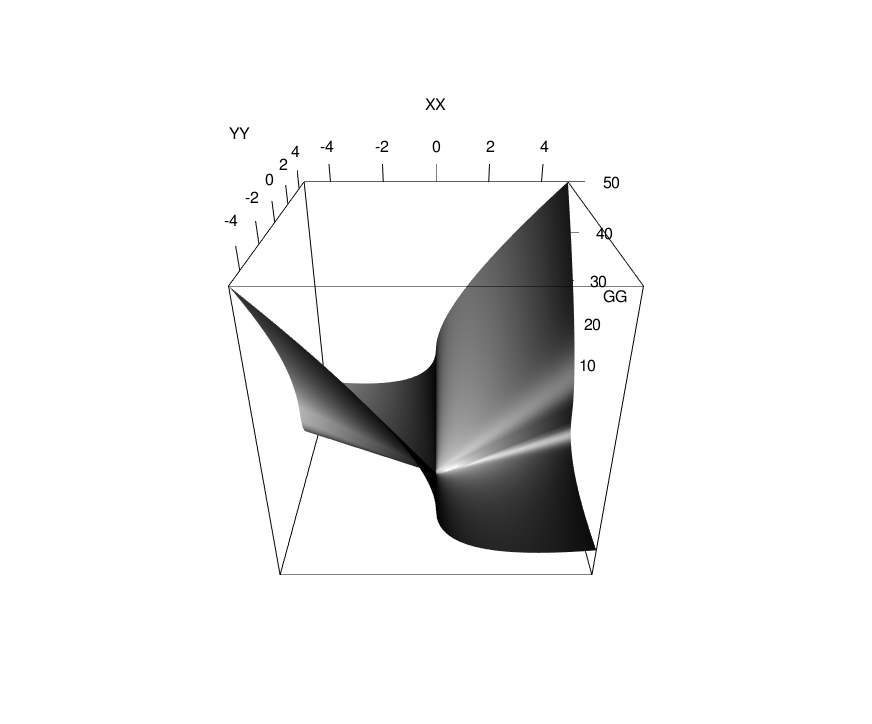}
  \includegraphics[scale=.2, trim= 100 0 100 0]{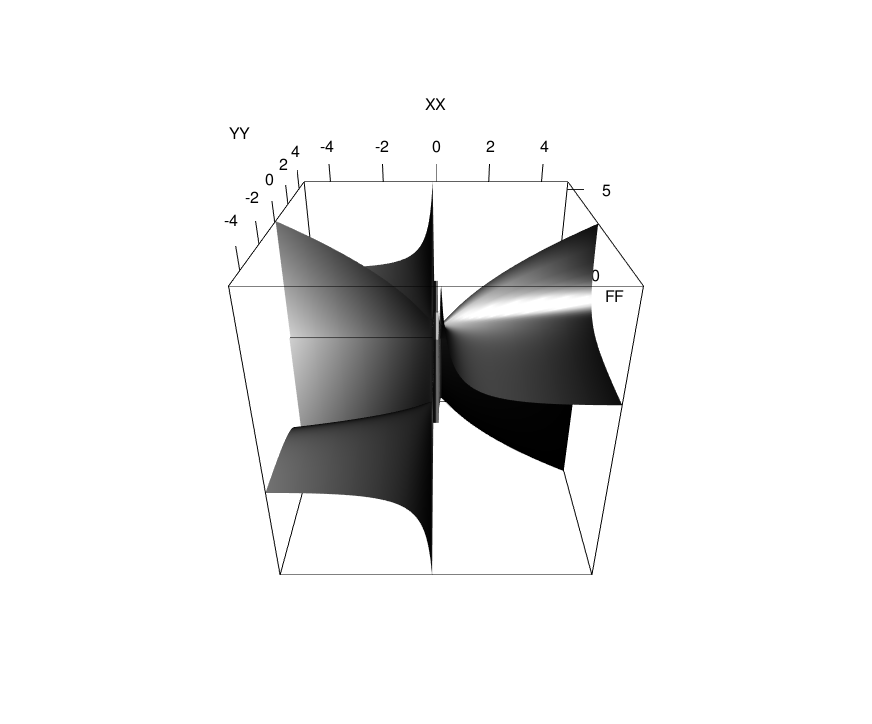}
  \includegraphics[scale=.2, trim= 100 0 100 0]{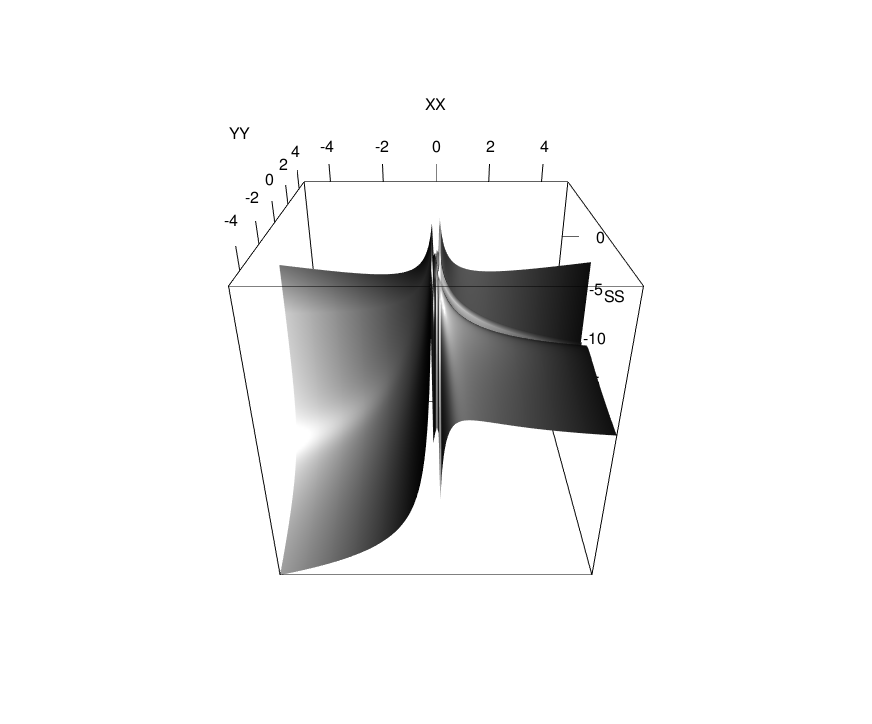}
  \includegraphics[scale=.2, trim= 100 0 100 0]{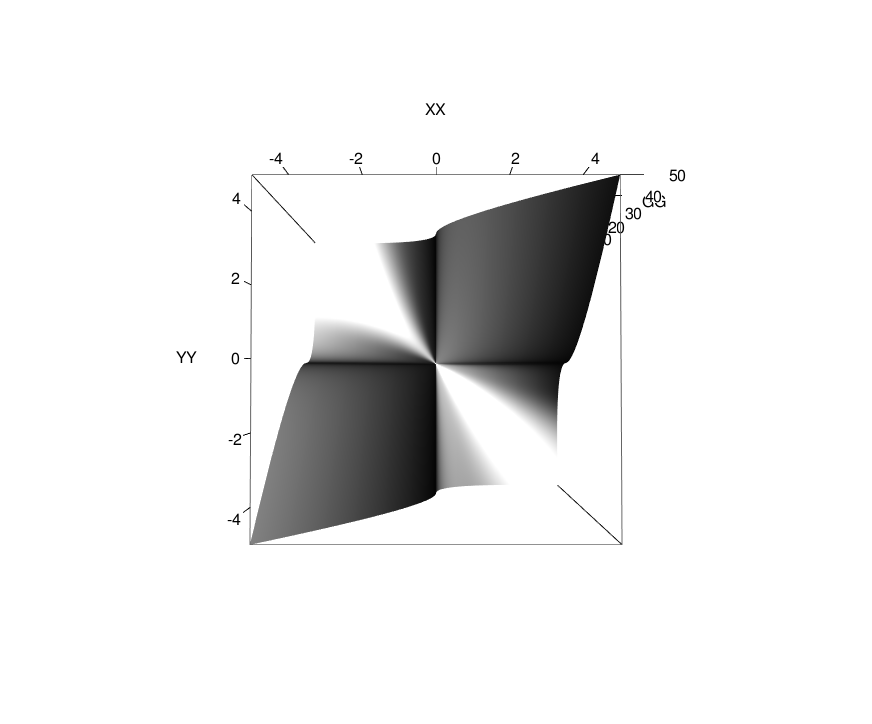}
  \includegraphics[scale=.2, trim= 100 0 100 0]{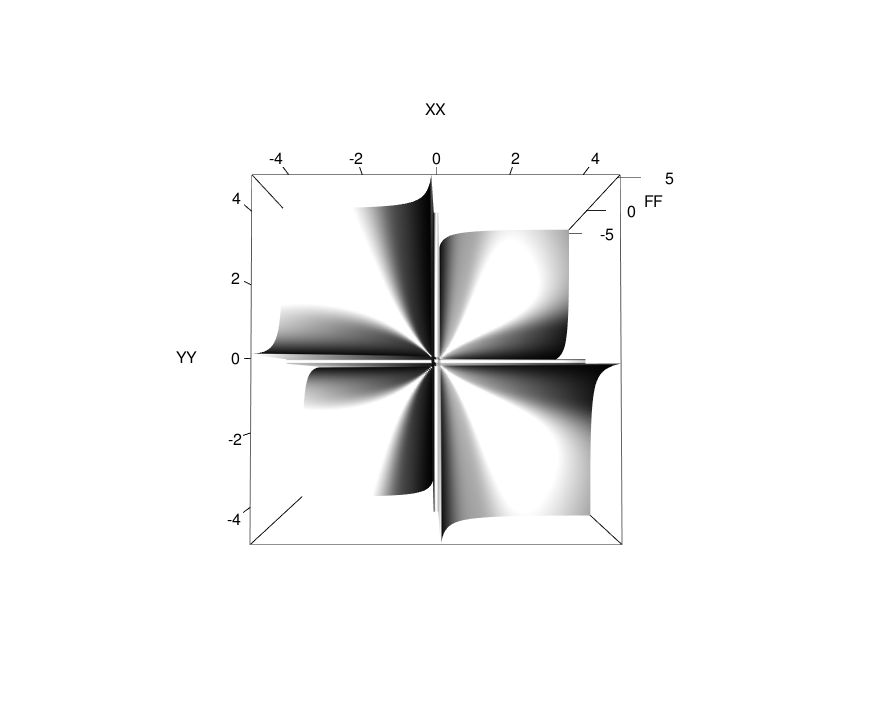}
  \includegraphics[scale=.2, trim= 100 0 100 0]{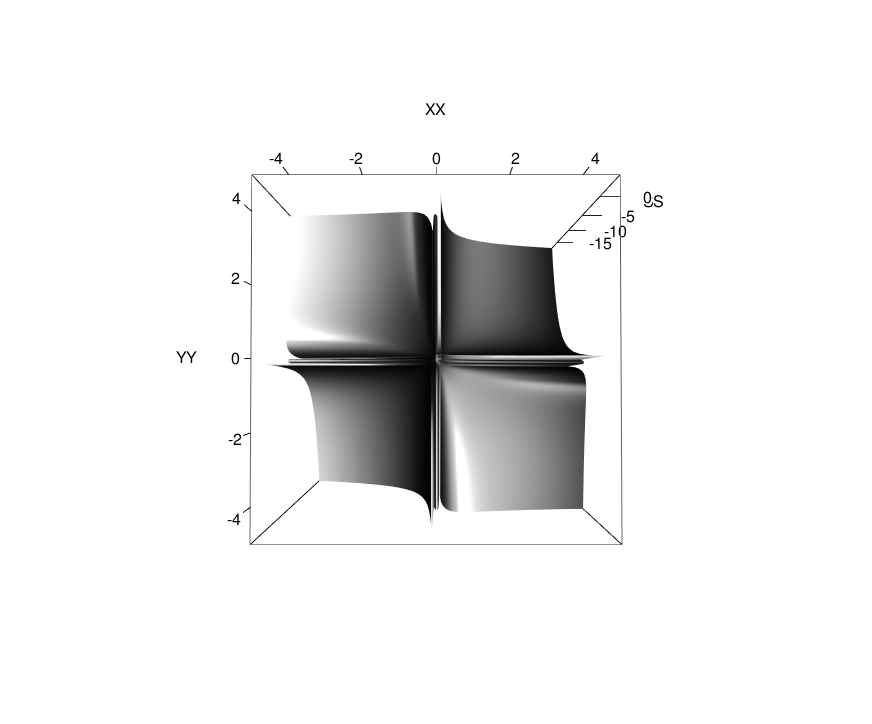}
  \includegraphics[scale=.2, trim= 100 0 100 0]{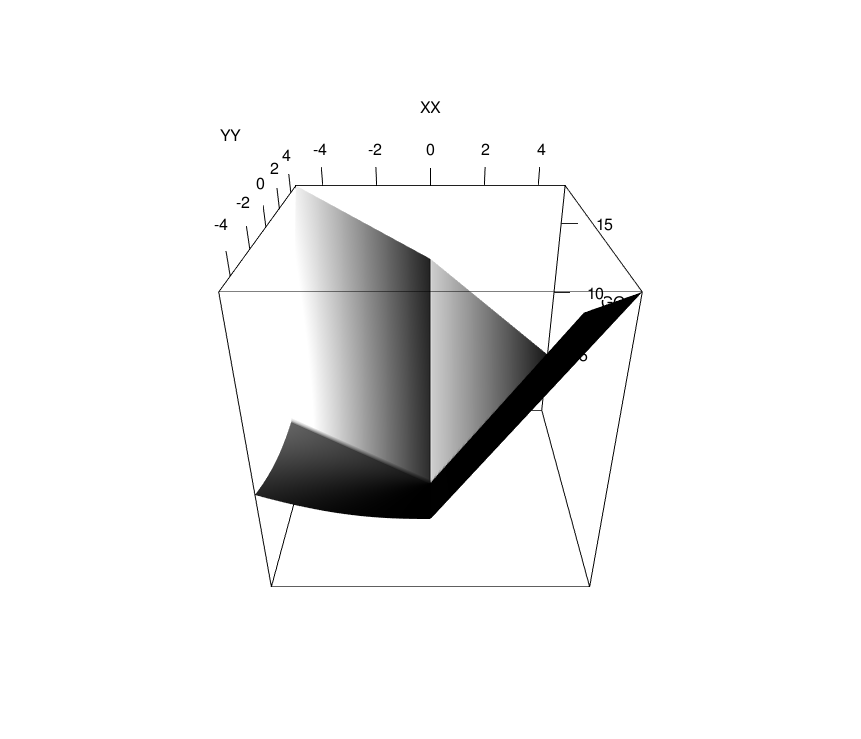}
  \includegraphics[scale=.2, trim= 100 0 100 0]{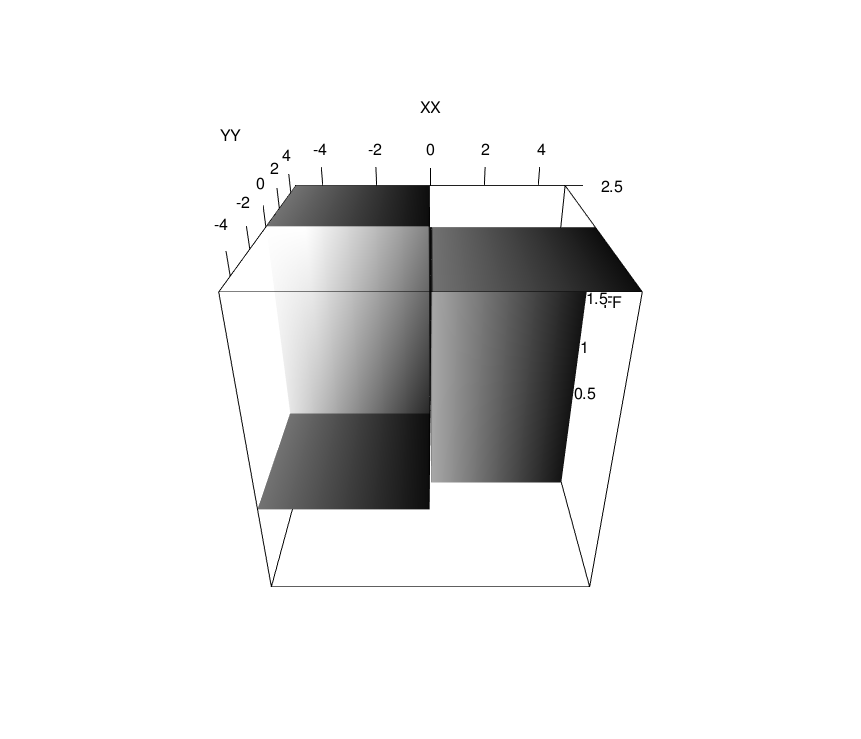}
  \includegraphics[scale=.2, trim= 100 0 100 0]{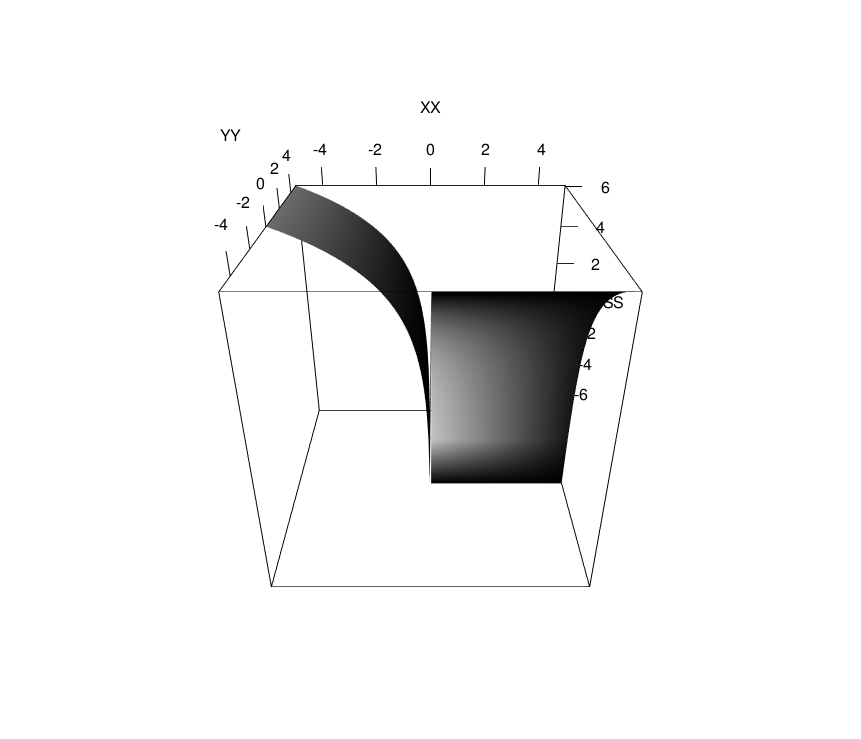}
  \includegraphics[scale=.2, trim= 100 0 100 0]{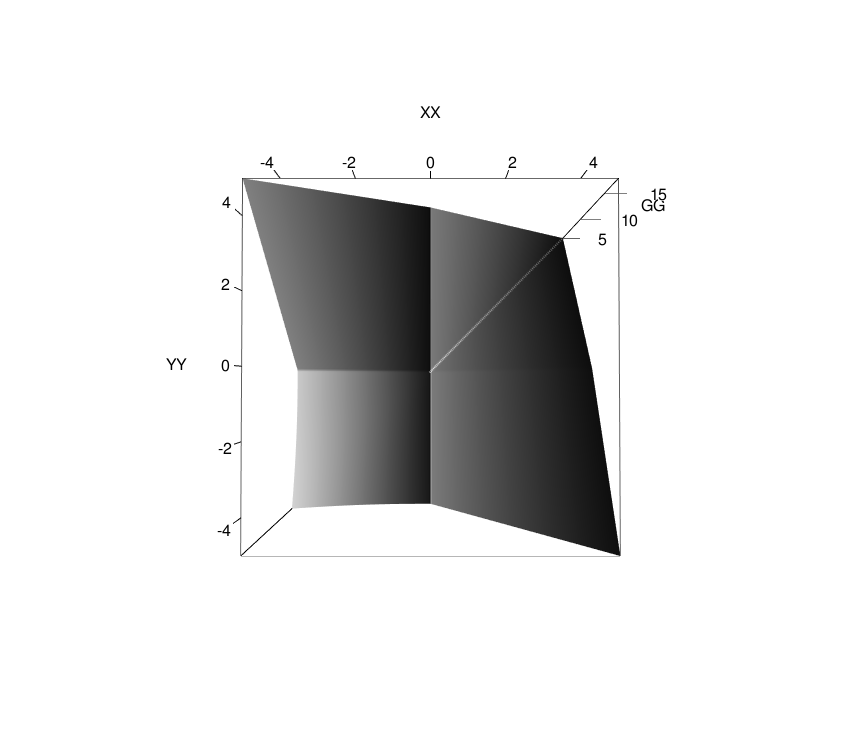}
  \includegraphics[scale=.2, trim= 100 0 100 0]{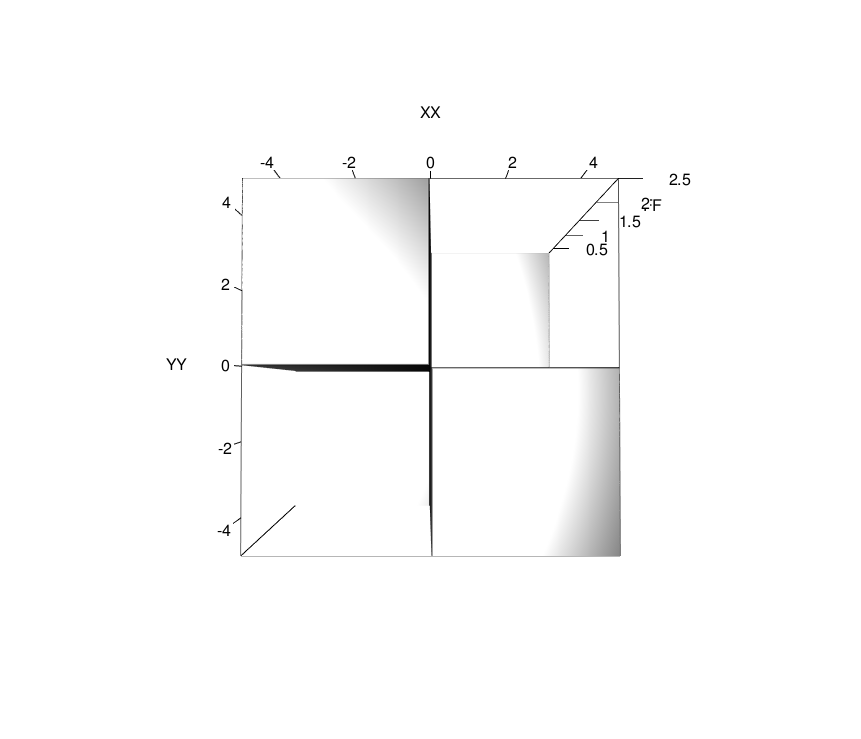}
  \includegraphics[scale=.2, trim= 100 0 100 0]{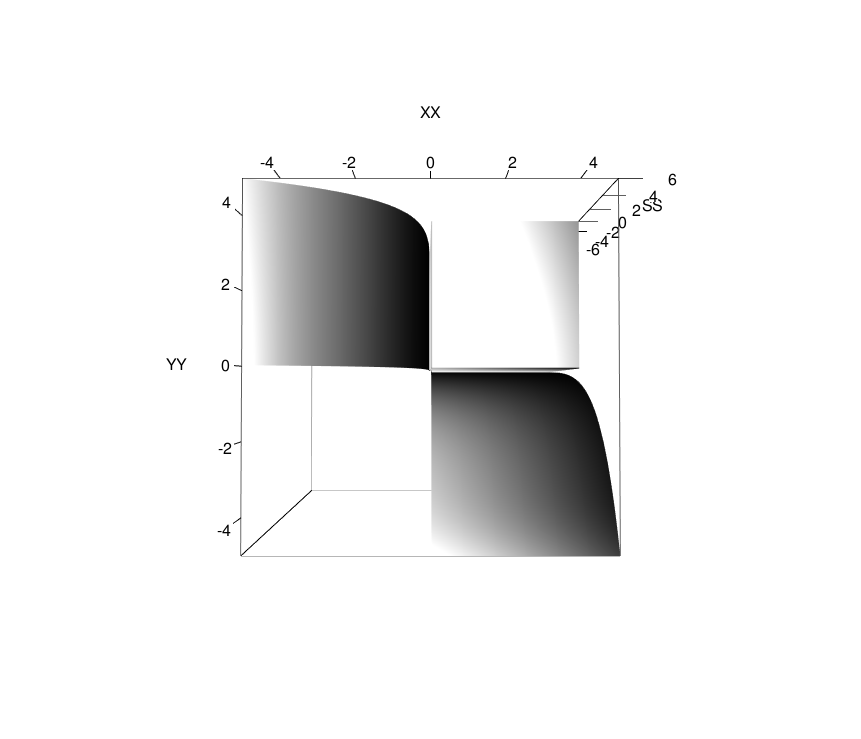}
  \caption{Limit functions in asymptotic expansion of the logarithm of
    the density of the multivariate normal copula with Laplace
    marginals.\ \textit{Left:} gauge function $\gG$ of limit set
    $\G$.\ \textit{Centre:} limit function $\qfirst$.\ \textit{Right:}
    Limit function $\qsecond$.\ \textit{Bottom:} alternative viewing
    angle obtained by rotating the images shown in top row by
    $60^\circ$ about $x$ axis}
  \label{fig:rof}
\end{figure}

\begin{figure}[htbp!]
  \centering
  \includegraphics[scale=.22, trim= 100 0 100 0]{log_grof.png}
  \includegraphics[scale=.22, trim= 100 0 100 0]{loglambda1rof.png}
  \includegraphics[scale=.22, trim= 100 0 100 0]{loglambda2rof.png}
  \includegraphics[scale=.22, trim= 100 0 100 0]{log_grof_rot.png}
  \includegraphics[scale=.22, trim= 100 0 100 0]{loglambda1rof_rot.png}
  \includegraphics[scale=.22, trim= 100 0 100 0]{loglambda2rof_rot.png}
  \caption{Limit functions in asymptotic expansion of the logarithm of
    the density of the bivariate max-stable copula with logistic
    dependence and Laplace marginal distributions.\ \textit{Left:}
    gauge function $\gG$ of limit set $\G$.\ \textit{Centre:} limit
    function $\qfirst$.\ \textit{Right:} Limit function $\qsecond$.\
    \textit{Bottom:} alternative viewing angle obtained by rotating
    the images shown in top row by $60^\circ$ about $x$ axis}
  \label{fig:rof2}
\end{figure}
\begin{proposition}
  \label{prop:rof} Suppose that Assumption~\ref{ass:RV3} holds for the
  random vector $\bm X_L$.\ There exists $\gamma < 1$ and $\rho\leq 0$
  such that 
  \[
    \qsecond (t \bm x) = t^\gamma \qsecond(\bm x) + c
    t^\gamma\frac{t^\rho - 1}{ \rho} \qfirst(\bm x).
  \]
  \label{rem:single_case}
  and the solution of this functional equation is
  \begin{equation}
    \qsecond (\bm x) = \lVert \bm x \rVert^{\gamma+\rho}
    h\left(\frac{\bm x}{\lVert \bm x\rVert}\right) + c \lVert \bm x
    \rVert^{\gamma} \log \lVert \bm x
    \rVert \qfirst\left(\frac{\bm x} {\lVert \bm x
        \rVert}\right), \qquad \bm x\in\Rstar.
    \label{eq:g0star_sol}
  \end{equation}
  for a measurable function $h$ that maps $\SSS^{d-1}$ into $\RR$.
\end{proposition}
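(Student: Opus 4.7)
The plan is to derive a functional equation satisfied by $q_2$ via a self-similarity argument, and then to solve it explicitly.

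First, I would exploit the two-term expansion implied by Assumption~\ref{ass:RV3}, namely
\[
  -\log f_L(t\bm x)/t = g_{\G}(\bm x) + a_1(t)\, q_1(\bm x) + a_1(t)\, a_2(t)\, q_2(\bm x) + o\bigl(a_1(t)\, a_2(t)\bigr), \qquad t\to\infty,
\]
and apply it in two different ways to the quantity $-\log f_L((ts)\bm x)/(ts)$ for fixed $s > 0$: first at scale $ts$ with vector $\bm x$; second at scale $t$ with vector $s\bm x$, then divided by $s$. By the $1$-homogeneity of $g_\G$ and the $\gamma$-homogeneity of $q_1$ given by Proposition~\ref{prop:q1}, the leading $g_\G$-terms cancel and the identity reduces to
\[
  a_1(ts)\, q_1(\bm x) + a_1(ts)\, a_2(ts)\, q_2(\bm x) = a_1(t)\, s^{\gamma-1}\, q_1(\bm x) + \frac{a_1(t)\, a_2(t)}{s}\, q_2(s\bm x) + o\bigl(a_1(t)\, a_2(t)\bigr).
\]

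Second, dividing through by $a_1(t)\, a_2(t)$ and sending $t\to\infty$, I would rearrange to obtain
\[
  \frac{q_2(s\bm x)}{s} = \left[\lim_{t\to\infty}\frac{a_1(ts)\, a_2(ts)}{a_1(t)\, a_2(t)}\right] q_2(\bm x) + \left[\lim_{t\to\infty}\frac{a_1(ts)/a_1(t) - s^{\gamma-1}}{a_2(t)}\right] q_1(\bm x).
\]
Existence of the second limit is a second-order regular variation condition on $a_1$ with auxiliary function $a_2$; by the classification of \citet{de1996generalized}, this forces $a_2 \in \text{RV}_\rho^\infty$ for some $\rho \leq 0$ and identifies the limit as $c\, s^{\gamma-1}(s^\rho - 1)/\rho$ for a constant $c$, with $(s^\rho - 1)/\rho$ read as $\log s$ when $\rho = 0$. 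The first limit then equals $s^{\gamma-1+\rho}$. Multiplying through by $s$ produces the announced functional equation.

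Third, to extract the solution, I would specialise the functional equation to $s = \|\bm x\|$ and $\bm x = \|\bm x\|\, \bm w$ with $\bm w = \bm x/\|\bm x\|$, giving
\[
  q_2(\bm x) = \|\bm x\|^{\gamma+\rho}\, q_2(\bm w) + c\, \|\bm x\|^\gamma\, \frac{\|\bm x\|^\rho - 1}{\rho}\, q_1(\bm w).
\]
Setting $h(\bm w) := q_2(\bm w)$ yields the homogeneous part; in the marginal case $\rho = 0$, L'Hôpital converts $(\|\bm x\|^\rho - 1)/\rho$ into $\log\|\bm x\|$, matching the formula in the proposition. The hard part will be the second step: establishing that $a_2$ inherits regular variation at a specific index $\rho \leq 0$, and identifying the exact form of the limiting scaling factor. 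This relies on the second-order regular variation machinery of \citet{de1996generalized} and must use the non-triviality hypothesis on $q_2$ in Assumption~\ref{ass:RV3}—namely that $q_2$ is not the product of $q_1$ with a $0$-homogeneous function—to exclude the degenerate case $c=0$ and to pin down $\rho$ uniquely.
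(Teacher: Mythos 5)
Your proposal matches the paper's proof in structure: you derive the same self-similarity identity by comparing $-\log f_L$ at scales $ts$ and $t$, cancel the leading terms by the $1$-homogeneity of $\gG$ and the $\gamma$-homogeneity of $\qfirst$, pass to the limit in $t$, and invoke \cite{de1996generalized} to identify the parametric form of the scaling factors before specialising to $s=\lVert\bm x\rVert$. The one adjustment worth making concerns your description of what the non-triviality hypothesis on $\qsecond$ is doing: you state it serves to exclude the degenerate case $c=0$ and to pin down $\rho$, but in the paper its role is one step earlier, namely to establish that the two limits $\lim_{t\to\infty}\asecond(at)/\asecond(t)$ and $\lim_{t\to\infty}\big[\afirst(at)/\afirst(t)-a^{\gamma-1}\big]/\asecond(t)$ exist at all. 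The argument is by contradiction: if these quantities had distinct subsequential limits $A\neq A'$ and $B\neq B'$ along sequences $t_n$ and $t_n'$, subtracting the two instances of the derived identity would force $0=(A-A')a^\gamma\qsecond(\bm x)+a(B-B')\qfirst(\bm x)$ for all $\bm x\in\Rstar$, making $\qsecond$ a $0$-homogeneous multiple of $\qfirst$ and contradicting Assumption~\ref{ass:RV3}. Once existence of the limits is secured by that argument, the de Haan--Stadtm\"uller classification supplies $\rho\leq 0$ and the form $c\,a^{\gamma-1}(a^\rho-1)/\rho$, with $c\in\RR$ allowed to be zero. Apart from this reattribution, your route is the same as the paper's.
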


Proposition~\ref{prop:rof} provides key insight into the rate of
convergence to the limit function in Proposition~\ref{prop:RV1}
(\textit{ii}).\ This result is shown to hold in
Appendix~\ref{sec:limit-behav} for certain known copulas in
Examples~\ref{sec:MVN_supplementary}, \ref{sec:MVL_supplementary}, and~\ref{sec:Student-t_Lap}.\ We note that a similar approach can be used
to study the rate of convergence in Proposition~\ref{prop:RV1}
(\textit{i}) and (\textit{iii}) for light- and heavy-tailed
distributions.\

\begin{proof}[Proof of Proposition~\ref{prop:q1}]  
  The homogeneity of $\qfirst$ follows directly from the
  correspondence between $\qfirst$ and $\pfirst$ and the fact that
  $\pfirst$ is $(\gamma-1)$--homogeneous.\ It remains to show that
  $\afirst\in\text{RV}_{\gamma-1}^\infty$.\ For any $a>0$ and
  $\bm x\in\Rstar$ we have 
  \begin{IEEEeqnarray*}{rCl}
    \frac{[-\log \fL\{t (a \bm x)\}/t] - \gG(a \bm x)}{\afirst(t)} &=&
    a \frac{[-\log \fL\{(a t) \bm x\}/(at)] -  \gG(\bm x)}{\afirst(a t)}
    \frac{{\afirst(a t)} }{{\afirst(t)}  }\cdot  
  \end{IEEEeqnarray*}
  By Assumption~\ref{ass:RV2}, the LHS converges to $\qfirst(a\bm x)$
  as $t\to \infty$ and the second factor of the RHS converges to
  $\qfirst(\bm x)$ as $t\to \infty$.\ Hence, the third factor of RHS
  must converge as $t\to \infty$ which implies that there exists
  $\gamma \in \RR$ such that $\afirst\in\text{RV}_{\gamma-1}^\infty$.\
  The case $\gamma > 1$ is not admissible due to homogeneity of $\gG$ as
  the latter ensures that the rate of convergence $\roc(t\bm x)$ is
  not growing faster than $\gG(t\bm x)$, that is,
  $\lim_{t\to \infty} \roc(t\bm x)/\gG(t\bm x) = 0$ for all
  $\bm x\in\Rstar$.\ When $\gamma =1$ we have
  $\afirst\in\text{RV}_{-1}^\infty$, which implies that
  $\lfirst(t)\in\text{RV}_{0}^\infty$.\ Thus, the case $\gamma=1$ is
  not admissible because it requires $\lfirst(t)\to 0$ as
  $t\to \infty$ for $\roc(t\bm x)/\gG(t\bm x)$ to converge to zero for
  all $\bm x\in\Rstar$, which contradicts the assumption that
  $\lfirst$ is a non-decreasing function.
\end{proof}


\begin{proof}[Proof of Proposition~\ref{prop:rof}]
  Since for any $a>0$ we have
  \begin{IEEEeqnarray*}{rCl}
    &&\cfrac{\cfrac{-t^{-1}\log f_{[L]} \{t\, (a \bm x)\} -\gG (a \bm
        x)}{\afirst(t)} - \qfirst(a \bm x) } {\asecond(t)} 
        =\cfrac{a
      \cfrac{ -(at)^{-1}\log f_{[L]} \{(a t) \,\bm x\} - \gG (\bm
        x)}{\afirst(a t)} \cfrac{\afirst(at)}{\afirst(t)} - a^{\gamma}
      \qfirst(\bm x)}{\asecond(a t)} \cfrac{\asecond(a
      t)}{\asecond(t)}
    \\\\
    && =\cfrac{\asecond(a t)}{\asecond(t)} a^\gamma\cfrac{ \cfrac{
        -(at)^{-1}\log f_{[L]} \{(a t) \,\bm x\} - \gG (\bm x)}{\afirst(a
        t)} - a \qfirst(\bm x)}{\asecond(a t)} \\
        && \hspace{1cm}+ a
    \cfrac{\Bigg(\cfrac{\afirst(at)}{\afirst(t)} -
      a^{\gamma-1}\Bigg)}{\asecond(t)} \cfrac{ -(at)^{-1}\log f_{[L]} \{(a
      t) \,\bm x\} - \gG (\bm x)}{\afirst(a t)}\cdot,
  \end{IEEEeqnarray*}
  it follows that
  \[
    \qsecond(a \bm x) = \lim_{t \to \infty} \frac{\asecond(a
      t)}{\asecond(t)} a^\gamma \qsecond(\bm x) + a
    \lim_{t\to \infty} \cfrac{\Bigg(\cfrac{\afirst(at)}{\afirst(t)} -
      a^{\gamma-1}\Bigg)}{\asecond(t)} \qfirst(\bm x).
  \]
  Suppose there exist $a>0$, and sequences $t_n, t_n'$ such that
  \[
    \lim_{t_n \to \infty} \frac{\asecond(a t_n)}{\asecond(t_n)} = A,
    \lim_{t_n' \to \infty} \frac{\asecond(a t_n')}{\asecond(t_n')} =
    A', \lim_{t_n\to \infty} \cfrac{\Bigg(\cfrac{\afirst(at_n)}{\afirst(t_n)}
      - a^{\gamma-1}\Bigg)}{\asecond(t_n)}=B \quad \text{and}
    \lim_{t_n'\to \infty} \cfrac{\Bigg(\cfrac{\afirst(at_n')}{\afirst(t_n')} -
      a^{\gamma-1}\Bigg)}{\asecond(t_n')}=B',
  \] with $A\neq A'$ and $B\neq B'$.\ Then, for all
  $\bm x \in \Rstar$,
  \[
    0 = (A-A')a^\gamma \qsecond(\bm x) + a (B-B') \qfirst(\bm x)
  \]
  By assumption, $\qsecond$ is not equal to the product of $\qfirst$
  and a $0$--homogeneous functional.\ Hence, $A=A'$ and $B=B'$.\ This
  implies that there exists $\rho\leq 0$ and $c \in \RR$ such that
  \[
    \lim_{t\to\infty}\frac{\asecond(t x)}{\asecond(t)} = x^\rho
    \qquad \text{and}\qquad \lim_{t\to \infty}
    \cfrac{\Bigg(\cfrac{\afirst(at)}{\afirst(t)} -
      a^{\gamma-1}\Bigg)}{\asecond(t)} = c x^{\gamma-1}\frac{x^\rho -
      1}{\rho},\qquad \text{$\forall x > 0$,}
  \]
  for a proof see~\cite{de1996generalized}.\ Consequently, we arrive at
  the functional equation
  \begin{equation}
    \qsecond \left(\lVert \bm x \rVert \frac{\bm x}{\lVert \bm x
        \rVert} \right) = \lVert \bm x\rVert^{\gamma+\rho}
    \qsecond\left(\frac{\bm x}{\lVert \bm x \rVert}\right) + c
    \lVert \bm x
    \rVert^{\gamma}\frac{\lVert \bm x
      \rVert^\rho-1}{\rho}\qfirst\left(\frac{\bm x}{\lVert \bm x \rVert}\right), \qquad \bm x\in\Rstar.
    \label{eq:fnc_eqn}
  \end{equation}
  where $\qfirst$ is a $\gamma$--homogeneous functional on $\Rstar$
  and $\qsecond$ is an unknown function.\ Let
  \begin{equation}
    \qsecond^\star \left( \bm x \right) = \lVert \bm x\rVert^{\gamma+\rho}
    \widetilde{\lambda}_2\left(\frac{\bm x}{\lVert \bm x \rVert}\right) + c
    \lVert \bm x
    \rVert^{\gamma}\frac{\lVert \bm x
      \rVert^\rho-1}{\rho}\qfirst\left(\frac{\bm x}{\lVert \bm x \rVert}\right), \qquad \bm x\in\Rstar,
    \label{eq:fnc_eqn_sol}
  \end{equation}
  where $\widetilde{\lambda}_2$ be an arbitrary map from
  $\Rstar$ into $\RR$.\ A simple calculation shows that
  the function $\qsecond^\star$ satisfies the functional
  equation~\eqref{eq:fnc_eqn}.\ This proves that the general solution
  of the functional equation~\eqref{eq:fnc_eqn} is $\qsecond^\star$.\
  Last, by Assumption~\ref{ass:RV3} we have
  $\asecond(t)=\oo(\afirst(t))$ as
  $t\to\infty$.\ 


\end{proof}

\section{Convergence to gauge functions for $d$-dimensional copulas}\label{sec:limit-behav}
\subsection{Multivariate normal copula}
\label{sec:MVN_supplementary}
The negative logarithm of the probability density function of the
standard $d$-dimensional normal copula with standard Laplace margins
and positive-definite precision matrix $\mathsf{Q}$ is
\color{black}
standard $d$-dimensional normal copula with standard Laplace margins
and positive-definite precision matrix $\mathsf{Q}$ is
\begin{IEEEeqnarray}{rCl}
  -\log f(t \bm x) &=& -\frac{1}{2} \log |\mathsf{Q}| + d\log 2 + t\sum\limits_{i=1}^{d} \left|x_{t,i}\right| + \frac{1}{2} \bm H(t, \bm x)^\top (\mathsf{Q}-I)
  \bm H(t, \bm x) 
  \label{eq:MVN_logdens}
\end{IEEEeqnarray}
\color{black} where
$\bm H\,:\,\mathbb{R}_+\times \mathbb{R}^d \to \mathbb{R}^d$ is
defined by $\bm H(t, \bm x) = (H(t, x_i)\,:\, i=1,\dots, d)$, with
$H(t, y):=\Phi^{-1}\{F_L(t y)\}$, for $t>0$ and $y\in\mathbb{R}$.\
By Mill's ratio, each individual component of $\bm H(t, \bm x)$ admits the asymptotic
expansion
\begin{IEEEeqnarray}{rCl}
  H(t, x_i) &=&
  \begin{cases}
    0 & \text{$x_i = 0$}\\
    \text{sgn}(x_i)\left[2(\log 2 + \lvert t x_i\rvert)- \log\{4\pi (\log 2 + \lvert t x_i\rvert)\}\right]^{1/2}+o(t^{-1/2})  & \text{otherwise,} 
  \end{cases}
  \label{eq:asexp}
\end{IEEEeqnarray}
as $t\to \infty$, also implying that
\begin{IEEEeqnarray}{rCl}
  H(t, x_i)^2 &=&
  \begin{cases}
    0 & \text{$x_i = 0$}\\
    2(\log 2 + \lvert t x_i\rvert)- \log\{4\pi (\log 2 + \lvert t x_i\rvert)\}+o(1)  & \text{otherwise,}
  \end{cases}
  \label{eq:MVN_square}
\end{IEEEeqnarray}
as $t\to\infty$.

Suppose that $x_i\neq 0$ for all $i\in\{1,\dots, d\}$.\ Since
\[
  2(\log 2 + |t x_i|) > \log\{4\pi (\log 2 + |t x_i|)\}
\]
for all sufficiently large $t$, the binomial series gives that as
$t\to \infty$,
\begin{equation}
  H(t, x_i) = \text{sgn}(x_i) \sum_{k=0}^{\infty}  a_k(t, x_i)+o(t^{-1/2}).
  \label{eq:MVN_binomial_series}
\end{equation}
where
\begin{IEEEeqnarray*}{rCl}
  a_k(t, x_i) &=& \frac{{1/2 \choose k}(-1)^k [\log\{4\pi (\log 2 + t
    \lvert x_i \rvert)\}]^k }{\{2(\log 2 + t \lvert x_i
    \rvert)\}^{k-(1/2)}} 
\end{IEEEeqnarray*}
Let $\mathsf{A}=\mathsf{Q}-\mathsf{I}$ and consider the quadratic form
in expression~\eqref{eq:MVN_logdens}
\begin{IEEEeqnarray}{rCl}
  \lefteqn{ \frac{1}{2}\bm H(t, \bm x)^\top \mathsf{A}
    \bm H(t, \bm x)=}\label{eq:MVN_quadratic}\\
  &=& \frac{1}{2}\sum_{k=0}^\infty [\text{sgn}(\bm x)\bm a_k(t, \bm
  x)]^\top \, \mathsf{A} \, [\text{sgn}(\bm x)\bm a_k(t, \bm x)] + 
  \sum_{k=0}^\infty \sum_{k'=k+1}^\infty [\text{sgn}(\bm x)\bm
  a_{k}(t, \bm x)]^\top \, \mathsf{A} \, [\text{sgn}(\bm x)\bm
  a_{k'}(t, \bm x)] \nonumber
\end{IEEEeqnarray}
where $\bm a_k(t, \bm x)=(a_k(t, x_i)\,:\,i=1,\dots, d)^\top$ is a
column vector with $d$ rows, satisfying
\begin{IEEEeqnarray*}{rCl}
  \bm a_k(t, \bm x)
  & = & \left\{
    \begin{array}{lr}
      (2 t |\bm x|)^{1/2} + (2 t |\bm x|)^{-1/2}\log 2 + O(t^{-3/2}), & \text{$k=0$}\\\\
      -[\{(\log t) /(2t|\bm x|)^{1/2}\}+\{\log (4\pi |\bm x|)/(2t|\bm x|)^{1/2}\}]+ O(t^{-3/2}\log t),  & \text{$k=1$}\\\\
      (\log t)^k/(2t|\bm x|)^{k-(1/2)}+O((\log t)^{k-1}/t^{k-(1/2)})& \text{$k\geq 2$},
    \end{array}
  \right.
\end{IEEEeqnarray*}
as $t\to\infty$.\ Here, all $1$-dimensional real-valued terms are
recycled to match the length of the vector $\bm x$, and vector algebra
is interpreted as elementwise.

We consider each summand of the two terms in expression
\eqref{eq:MVN_quadratic} separately.\ The summand associated with
$k=0$ in the first term is
\begin{IEEEeqnarray*}{rCl}
  && \frac{1}{2}\left[\left\{(2 t |\bm x|)^{1/2} +\frac{\log 2}{ (2 t |\bm
        x|)^{1/2}}+ O(t^{-3/2})\right\}\text{sgn}(\bm x)\right]^\top
  \, \mathsf{A} \, \left[\left\{(2 t |\bm x|)^{1/2} + \frac{\log 2}{(2 t |\bm x|)^{1/2}}+O(t^{-3/2})\right\}\text{sgn}(\bm x)\right] = \\
  &=& t\, \left[\text{sgn}(\bm x) \vert \bm x\rvert^{1/2} +
    \text{sgn}(\bm x)\frac{\log 2}{ 2 t( \lvert \bm
      x\rvert)^{1/2}}\right]^\top \, \mathsf{A} \, \left[\text{sgn}(\bm
    x) \vert \bm x\rvert^{1/2} + \text{sgn}(\bm x)\frac{\log 2}{ 2 t(
      \lvert \bm x\rvert)^{1/2}}\right] + O(t^{-1/2}) \\
  &=& t \left[\text{sgn}(\bm x) \vert \bm
    x\rvert^{1/2}\right]^\top \mathsf{A} \, \left[\text{sgn}(\bm x)
    \vert \bm x\rvert^{1/2}\right] + \color{black}\color{black}
  \left\{\text{sgn}(\bm x) \vert \bm x\rvert^{1/2}\right\}^\top
  \mathsf{A} \, \left\{\text{sgn}(\bm x) \frac{\log2}{\lvert \bm x
      \rvert^{1/2}}\right\} + O(t^{-1/2})\\
  &=& t \left\{\text{sgn}(\bm x) \vert \bm
    x\rvert^{1/2}\right\}^\top \mathsf{Q} \, \left\{\text{sgn}(\bm x)
    \vert \bm x\rvert^{1/2}\right\} - t \sum\limits_{i=1}^{d} |x_i| + \color{black}\color{black}
  \left\{\text{sgn}(\bm x) \vert \bm x\rvert^{1/2}\right\}^\top
  \mathsf{A} \, \left\{\text{sgn}(\bm x) \frac{\log2}{\lvert \bm x
      \rvert^{1/2}}\right\} + O(t^{-1/2}).
\end{IEEEeqnarray*}
and a similar calculation shows that the terms associated with $k\geq 1$ decays to zero according to $O\{(\log t)^{2 k}/t^{2 k - 1}\}$.\ 
  Now, we consider the second term in \eqref{eq:MVN_quadratic}.\ Here,
  the summand associated with $(k,k')=(0,1)$
\begin{IEEEeqnarray*}{rCl}
  && \left[\{2(\log 2 + t \lvert \bm x\rvert)\}^{1/2}\text{sgn}(\bm
    x)\right]^\top \, \mathsf{A} \, \left[- \frac{\log\{4\pi (\log 2 +
      t \lvert \bm x \rvert)\}}{2 \{2 (\log 2
      + t \lvert \bm x\rvert)\}^{1/2}}\text{sgn}(\bm x)\right] = \\
  &=& -\log t \left\{\text{sgn}(\bm x) \lvert \bm
    x\rvert^{1/2}\right\}^\top \, \mathsf{A} \, \left\{\text{sgn}(\bm
    x)(2
    \lvert \bm x\rvert^{1/2})^{-1}\right\}-\\
  &&-\frac{1}{2} \left\{\text{sgn}(\bm x) \lvert \bm x\rvert^{1/2}\right\}^\top
  \, \mathsf{A} \, \left\{\text{sgn}(\bm x) 
    \log (4\pi \lvert \bm x \rvert) \lvert \bm x\rvert^{-1/2}\right\}+O(t^{-1})
\end{IEEEeqnarray*}
Thus,
\begin{equation}
  -\log f(t\bm x) = t g(\bm x) + (\log t)\, \qfirst(\bm x) + \qsecond
  (\bm x) + o(1) \quad \text{as $t\to\infty$},
  \label{eq:MVN_outer_expansion}
\end{equation}
where
\begin{IEEEeqnarray}{rCl}
  g(\bm x)&=& \left[\text{sgn}(\bm x) \vert \bm
    x\rvert^{1/2}\right]^\top \mathsf{Q} \, \left[\text{sgn}(\bm x)
    \vert \bm x\rvert^{1/2}\right]\label{eq:g_MVN_Laplace}\\
  \qfirst(\bm x) &=& -\frac{1}{2}\left[\text{sgn}(\bm x) \lvert \bm
    x\rvert^{1/2}\right]^\top \, (\mathsf{Q}-\mathsf{I}) \,
  \left[\text{sgn}(\bm x) \lvert \bm x\rvert^{-1/2}\right]\nonumber\\
    \qsecond(\bm x) &=& \left[\text{sgn}(\bm x) \vert \bm
      x\rvert^{1/2}\right]^\top (\mathsf{Q}-\mathsf{I}) \, \left[\text{sgn}(\bm
      x) \log 2 \lvert \bm x \rvert^{-1/2}\right]-
  \nonumber\\
  & & - 
  \frac{1}{2} \left[\text{sgn}(\bm x) \lvert \bm
    x\rvert^{1/2}\right]^\top \, (\mathsf{Q}-\mathsf{I}) \,
  \left[\text{sgn}(\bm x) \log (4\pi \lvert \bm x
      \rvert) \lvert \bm x\rvert^{-1/2}\right] - \frac{1}{2}\log
  |\mathsf{Q}| + d\log 2
\end{IEEEeqnarray}

The asymptotic expansion in \eqref{eq:MVN_outer_expansion} is
valid for large $t$ when all $x_i\neq 0$, and loses validity when at
least one $x_i$ is equal to 0 This is a result of the asymptotic
expansion of \eqref{eq:asexp} and \eqref{eq:MVN_square}, both
depending on $t$ and $x_i$, resulting in nonuniform behavior.\ When
$x_i = 0$, the expansions \eqref{eq:asexp} and \eqref{eq:MVN_square}
collapse to zero, while for $x_i \neq 0$, the leading term grows as
$t^{1/2}$ with logarithmic corrections.\ This difference in growth
rates reflects the nonuniformity in $x_i$, as the expansion behaves
differently near the origin and away from it, complicating the
asymptotic analysis of $-\log f(t \bm x)$ whenever $\bm x$ is near
an axis.\

In what follows, we show that a uniform asymptotic expansion exists,
by proving ${-\log f(t\bm{x})}/{t}$ converges locally uniformly to a
continuous gauge function via the method of continuous convergence
\citep{resnick2007heavy}.\ That is, we prove that
$-\log f(t\bm x)/t \to g(\bm x)$ as $t\to\infty$, uniformly on compact
sets in the variable $\bm x\in\mathbb{R}^d$, by showing that
${-\log f(t\bm{x}_t)}/{t}$ converges to $g(\bm x)$ whenever
$\bm x_t\to \bm x\in \mathbb{R}^d$.\ Consequently, because $\SSS^{d-1}$
is compact, the convergence is uniform on $\SSS^{d-1}$ and therefore,
the conditions of Proposition \ref{prop:RV1} are
satisfied.\ 
To prove this assertion, it suffices to consider three disjoint
index sets
$A,B,C\subseteq\left\{1,\dots,d\right\}\cup
\left\{\emptyset\right\}$ such that
$A\cup B\cup C = \left\{1,\dots,d\right\}$, and 
\begin{itemize}
\item[$(i)$] $ x_{t,i} \rightarrow x_{i}$ and
  $ \left|t x_{t,i}\right|\rightarrow \infty$, where $x_i\neq0$, as
  $t\rightarrow\infty$ for $i\in A$.
\item[$(ii)$] $x_{t,i} \rightarrow 0 $ and
  $ \left|t x_{t,i}\right|\rightarrow \infty$, as $t\rightarrow\infty$
  for $i\in B$.
\item[$(iii)$] $x_{t,i} \rightarrow 0 $ and
  $ \left|t x_{t,i}\right|\rightarrow c_i$, where $c_i \in\mathbb{R}$,
  as $t\rightarrow \infty$ for $i\in C$.
\end{itemize}

\noindent Decomposing the quadratic form in expression
\eqref{eq:MVN_logdens} based on this partition, leads to
\begin{align*}
&\bm H(t, \bm{x}_t)^\top \mathsf{A}
  \bm H(t, \bm{x}_t) \\
  & =   \bm H(t, \bm{x}_{A,t})^\top \mathsf{A}_{AA}
  \bm H(t, \bm{x}_{A,t}) +   \bm H(t, \bm{x}_{B,t})^\top \mathsf{A}_{BB}
  \bm H(t, \bm{x}_{B,t}) +    \bm H(t, \bm{x}_{C,t})^\top \mathsf{A}_{CC}
  \bm H(t, \bm{x}_{C,t}) \\
  & \hspace{0.8cm}+ 2 \bm H(t, \bm{x}_{A,t})^\top \mathsf{A}_{AB}
  \bm H(t, \bm{x}_{B,t}) + 2 \bm H(t, \bm{x}_{A,t})^\top \mathsf{A}_{AC}
  \bm H(t, \bm{x}_{C,t}) + 2 \bm H(t, \bm{x}_{B,t})^\top \mathsf{A}_{BC}
  \bm H(t, \bm{x}_{C,t}),
\end{align*}
where
$\mathsf{A}_{MM^{\prime}}=(\mathsf{A}_{ij})_{i\in M, j\in M^\prime}$
is the submatrix of $\mathsf{A}$ with rows indexed by $M$ and columns
indexed by $M^{\prime}$.\ In what follows, we derive the growth rates
of each term in this decomposition separately.\ 
By using equations \eqref{eq:asexp} and \eqref{eq:MVN_square} and the
binomial series, we obtain
\begin{align}
  \frac{1}{2}\bm H&(t, \bm{x}_{A,t})^\top \mathsf{A}_{AA}
  \bm H(t, \bm{x}_{A,t}) \\
  =& \frac{1}{2}\sum_{k=0}^\infty [\text{sgn}( \bm{x}_{A,t})\bm a_k(t,  \bm{x}_{A,t})]^\top \, \mathsf{A}_{AA} \, [\text{sgn}( \bm{x}_{A,t})\bm a_k(t,  \bm{x}_{A,t})] \label{eq:MVN_quadraticA} \\
  &+ 
  \sum_{k=0}^\infty \sum_{k'=k+1}^\infty [\text{sgn}( \bm{x}_{A,t})\bm
  a_{k}(t,  \bm{x}_{A,t})]^\top \, \mathsf{A}_{AA} \, [\text{sgn}( \bm{x}_{A,t})\bm
  a_{k'}(t,  \bm{x}_{A,t})] \nonumber
\end{align}
where
$\bm a_k(t, \bm{x}_{A,t})=(a_k(t, x_{t,i})\,:\,i=1,\dots, d)^\top$ is
a column vector with $|A|$ rows, defined in a similar manner to
$\bm a_k(t, \bm x)$ in expression \eqref{eq:MVN_quadratic}, restricted
to the subset $A$.\ 
The summand associated with $k=0$ in the first term of expression
\eqref{eq:MVN_quadraticA} is \scriptsize
\begin{align*}
    &\frac{1}{2}\left[\left\{(2 t |\bm{x}_{A,t}|)^{1/2} +\frac{\log 2}{ (2 t |\bm{x}_{A,t}|)^{1/2}}+ O(t^{-3/2})\right\}\text{sgn}(\bm{x}_{A,t})\right]^\top
  \, \mathsf{A}_{AA} \, \left[\left\{(2 t |\bm{x}_{A,t}|)^{1/2} + \frac{\log 2}{(2 t |\bm{x}_{A,t}|)^{1/2}}+O(t^{-3/2})\right\}\text{sgn}(\bm x)\right] \\
  &= t\, \left[\text{sgn}(\bm{x}_{A,t}) \vert \bm{x}_{A,t}\rvert^{1/2} +
    \text{sgn}(\bm{x}_{A,t})\frac{\log 2}{ 2 t( \lvert \bm{x}_{A,t}\rvert)^{1/2}}\right]^\top \, \mathsf{A}_{AA} \, \left[\text{sgn}(\bm{x}_{A,t}) \vert \bm{x}_{A,t}\rvert^{1/2} + \text{sgn}(\bm{x}_{A,t})\frac{\log 2}{ 2 t(
      \lvert \bm{x}_{A,t}\rvert)^{1/2}}\right] + O(t^{-1/2}) \\
      &=t \left[\text{sgn}(\bm{x}_{A,t}) \vert \bm{x}_{A,t}\rvert^{1/2}\right]^\top \mathsf{A}_{AA} \, \left[\text{sgn}(\bm{x}_{A,t})
    \vert \bm{x}_{A,t}\rvert^{1/2}\right] + 
  \left\{\text{sgn}(\bm{x}_{A,t}) \vert \bm{x}_{A,t}\rvert^{1/2}\right\}^\top
  \mathsf{A}_{AA} \, \left\{\text{sgn}(\bm{x}_{A,t}) \frac{\log2}{\lvert \bm{x}_{A,t}
      \rvert^{1/2}}\right\} + O(t^{-1/2})\\
     &= t \left\{\text{sgn}(\bm{x}_{A,t}) \vert \bm{x}_{A,t}\rvert^{1/2}\right\}^\top \mathsf{Q}_{AA} \, \left\{\text{sgn}(\bm{x}_{A,t})
    \vert \bm{x}_{A,t}\rvert^{1/2}\right\} - t \sum\limits_{i\in A} |x_{t,i}| +
  \left\{\text{sgn}(\bm{x}_{A,t}) \vert \bm{x}_{A,t}\rvert^{1/2}\right\}^\top
  \mathsf{A}_{AA} \, \left\{\text{sgn}(\bm{x}_{A,t}) \frac{\log2}{\lvert \bm{x}_{A,t}
      \rvert^{1/2}}\right\} + O(t^{-1/2}),
\end{align*}
\normalsize
and the $k$th summand in the first term of expression
\eqref{eq:MVN_quadraticA}, with $k \geq 1$, decays to zero with rate
$O\{(\log t)^{2 k}/t^{2 k - 1}\}$.\ 
Likewise, the first summand in the second term of expression
\eqref{eq:MVN_quadraticA}, associated with $k=0$ and $k^\prime =1$, is
\begin{align*}
  &\left[\{2(\log 2 + t \lvert \bm{x}_{A,t}\rvert)\}^{1/2}\text{sgn}(\bm{x}_{A,t})\right]^\top \, \mathsf{A}_{AA} \, \left[- \frac{\log\{4\pi (\log 2 +
             t \lvert \bm{x}_{A,t} \rvert)\}}{2 \{2 (\log 2
             + t \lvert \bm{x}_{A,t}\rvert)\}^{1/2}}\text{sgn}(\bm{x}_{A,t})\right]\\
  =&-\log t \left\{\text{sgn}(\bm{x}_{A,t}) \lvert \bm{x}_{A,t}\rvert^{1/2}\right\}^\top \, \mathsf{A}_{AA} \, \left\{\text{sgn}(\bm{x}_{A,t}) (2
     \lvert \bm{x}_{A,t}\rvert^{1/2})^{-1}\right\}\\
           & - \frac{1}{2} \left\{\text{sgn}(\bm{x}_{A,t}) \lvert \bm{x}_{A,t}\rvert^{1/2}\right\}^\top
             \, \mathsf{A}_{AA} \, \left\{\text{sgn}(\bm{x}_{A,t}) \lvert \bm{x}_{A,t}\rvert^{-1/2}  \log (4\pi \lvert \bm{x}_{A,t} \rvert)\right\}+O(t^{-1}).
\end{align*}
Putting this together, obtain
\begin{align*}
  \frac{1}{2}\bm H&(t, \bm{x}_{A,t})^\top \mathsf{A}_{AA}
                    \bm H(t, \bm{x}_{A,t}) \\
  =& t \left\{\text{sgn}(\bm{x}_{A,t}) \vert \bm{x}_{A,t}\rvert^{1/2}\right\}^\top \mathsf{Q}_{AA} \, \left\{\text{sgn}(\bm{x}_{A,t})
     \vert \bm{x}_{A,t}\rvert^{1/2}\right\} - t \sum\limits_{i\in A} |x_{t,i}| \\
                  &-\log t \left\{\text{sgn}(\bm{x}_{A,t}) \lvert \bm{x}_{A,t}\rvert^{1/2}\right\}^\top \, \mathsf{A}_{AA} \, \left\{\text{sgn}(\bm{x}_{A,t}) (2
                    \lvert \bm{x}_{A,t}\rvert^{1/2})^{-1}\right\}-\\
                  &+
                    \left\{\text{sgn}(\bm{x}_{A,t}) \vert \bm{x}_{A,t}\rvert^{1/2}\right\}^\top
                    \mathsf{A}_{AA} \, \left\{\text{sgn}(\bm{x}_{A,t}) \frac{\log2}{\lvert \bm{x}_{A,t}
                    \rvert^{1/2}}\right\} \\
                  &-\frac{1}{2} \left\{\text{sgn}(\bm{x}_{A,t}) \lvert \bm{x}_{A,t}\rvert^{1/2}\right\}^\top
                    \, \mathsf{A}_{AA} \, \left\{\text{sgn}(\bm{x}_{A,t}) \lvert \bm{x}_{A,t}\rvert^{-1/2}
                    \log (4\pi \lvert \bm{x}_{A,t} \rvert)\right\}
                    + O(t^{{1}/{2}})
\end{align*}
Since $\left|t x_{t,i}\right|\rightarrow\infty$ for all $i\in B$ as
$t\rightarrow\infty$, a similar analysis based the Mill's ratio
approximation yields
\begin{align*}
    \frac{1}{2}\bm H&(t, \bm{x}_{B,t})^\top \mathsf{A}_{BB}
  \bm H(t, \bm{x}_{B,t}) \\
  =& t \left\{\text{sgn}(\bm{x}_{B,t}) \vert \bm{x}_{B,t}\rvert^{1/2}\right\}^\top \mathsf{Q}_{BB} \, \left\{\text{sgn}(\bm{x}_{B,t})
    \vert \bm{x}_{B,t}\rvert^{1/2}\right\} - t \sum\limits_{i\in B} |x_{t,i}| \\
    &-\log t \left\{\text{sgn}(\bm{x}_{B,t}) \lvert \bm{x}_{B,t}\rvert^{1/2}\right\}^\top \, \mathsf{A}_{BB} \, \left\{\text{sgn}(\bm{x}_{B,t}) (2
    \lvert \bm{x}_{B,t}\rvert^{1/2})^{-1}\right\}-\\
    &+
  \left\{\text{sgn}(\bm{x}_{B,t}) \vert \bm{x}_{B,t}\rvert^{1/2}\right\}^\top
  \mathsf{A}_{BB} \, \left\{\text{sgn}(\bm{x}_{B,t}) \frac{\log2}{\lvert \bm{x}_{B,t}
      \rvert^{1/2}}\right\} \\
  &-\frac{1}{2} \left\{\text{sgn}(\bm{x}_{B,t}) \lvert \bm{x}_{B,t}\rvert^{1/2}\right\}^\top
  \, \mathsf{A}_{BB} \, \left\{\text{sgn}(\bm{x}_{B,t}) \lvert \bm{x}_{B,t}\rvert^{-1/2}
    \log (4\pi \lvert \bm{x}_{B,t} \rvert)\right\}
      + O(t^{{1}/{2}})
\end{align*}
\\
And finally note that 
$\bm H(t, \bm{x}_{C,t})^\top \mathsf{A}_{CC}\bm H(t, \bm{x}_{C,t}) \rightarrow \Phi^{-1}\{F_L(\bm{x}^C)\}^{\top}\mathsf{A}_{CC}\Phi^{-1}\{F_L(\bm{x}^C)\}$ as $t\rightarrow\infty$.
\\
For the mixed quadratic arguments of \eqref{eq:MVN_logdens}, start with the expansion using Mill's ratio.
\begin{align*}
  \bm H&(t, \bm{x}_{A,t})^\top \mathsf{A}_{AB}
  \bm H(t, \bm{x}_{B,t}) \\
  =& \sum_{k=0}^\infty [\text{sgn}( \bm{x}_{A,t})\bm a_k(t,  \bm{x}_{A,t})]^\top \, \mathsf{A}_{AB} \, [\text{sgn}( \bm{x}_{B,t})\bm a_k(t,  \bm{x}_{B,t})] \label{eq:MVN_quadratic} \\
  &+ 
  \underset{k,k^\prime \geq 0,k\neq k^\prime}{\sum\sum} [\text{sgn}( \bm{x}_{A,t})\bm
  a_{k}(t,  \bm{x}_{A,t})]^\top \, \mathsf{A}_{AB} \, [\text{sgn}( \bm{x}_{B,t})\bm
  a_{k'}(t,  \bm{x}_{B,t})] \nonumber
  \\
  =& t \left\{\text{sgn}(\bm{x}_{A,t}) \vert \bm{x}_{A,t}\rvert^{1/2}\right\}^\top \mathsf{Q}_{AB} \, \left\{\text{sgn}(\bm{x}_{B,t})
    \vert \bm{x}_{B,t}\rvert^{1/2}\right\} \\
    &- t \left\{\text{sgn}(\bm{x}_{A,t}) \vert \bm{x}_{A,t}\rvert^{1/2}\right\}^\top\, \left\{\text{sgn}(\bm{x}_{B,t})
    \vert \bm{x}_{B,t}\rvert^{1/2}\right\}\\
    & -\log t  \left\{\text{sgn}(\bm{x}_{A,t}) \vert \bm{x}_{A,t}\rvert^{-1/2}\right\}^\top \textsf{A}_{AB}\left\{\text{sgn}(\bm{x}_{B,t}) \vert \bm{x}_{B,t}\rvert^{1/2}\right\}\\
     &-\log t  \left\{\text{sgn}(\bm{x}_{A,t}) \vert \bm{x}_{A,t}\rvert^{1/2}\right\}^\top \textsf{A}_{AB}\left\{\text{sgn}(\bm{x}_{B,t}) \vert \bm{x}_{B,t}\rvert^{-1/2}\right\}\\
     &+\log 2  \left\{\text{sgn}(\bm{x}_{A,t}) \vert \bm{x}_{A,t}\rvert^{-1/2}\right\}^\top \textsf{A}_{AB}\left\{\text{sgn}(\bm{x}_{B,t}) \vert \bm{x}_{B,t}\rvert^{1/2}\right\}\\
      &+\log 2  \left\{\text{sgn}(\bm{x}_{A,t}) \vert \bm{x}_{A,t}\rvert^{1/2}\right\}^\top \textsf{A}_{AB}\left\{\text{sgn}(\bm{x}_{B,t}) \vert \bm{x}_{B,t}\rvert^{-1/2}\right\}\\
      &-\left\{\log\left(4\pi\left|\bm{x}_{A,t}\right|\right)\text{sgn}(\bm{x}_{A,t}) \vert \bm{x}_{A,t}\rvert^{-1/2}\right\}^\top \textsf{A}_{AB} \left\{\text{sgn}(\bm{x}_{B,t}) \vert \bm{x}_{B,t}\rvert^{1/2}\right\}\\
      &-\left\{\text{sgn}(\bm{x}_{A,t}) \vert \bm{x}_{A,t}\rvert^{1/2}\right\}^\top \textsf{A}_{AB} \left\{\log\left(4\pi\left|\bm{x}_{B,t}\right|\right)\text{sgn}(\bm{x}_{B,t}) \vert \bm{x}_{B,t}\rvert^{-1/2}\right\}
      + O(t^{{1}/{2}})
\end{align*}
We note that Mill's ratio is suitable here, as $\left|tx_{t,i}\right|\rightarrow\infty$ as $t\rightarrow\infty$ for $i\in A,B$.
Next, we consider the term $\bm H(t, \bm{x}_{A,t})^\top \mathsf{A}_{AC} \bm H(t, \bm{x}_{C,t})$, only applying Mill's ratio to components in $A$,
\begin{align*}
    \bm H&(t, \bm{x}_{A,t})^\top \mathsf{A}_{AC}
  \bm H(t, \bm{x}_{C,t}) \\
  =& \sum_{k=0}^\infty [\text{sgn}( \bm{x}_{A,t})\bm a_k(t,  \bm{x}_{A,t})]^\top \, \mathsf{A}_{AC}  H(t, \bm{x}_{C,t}) 
  \\
  =& \left\{\text{sgn}( \bm{x}_{A,t})(2 t |\bm{x}_{A,t}|)^{1/2} + \text{sgn}( \bm{x}_{A,t})\log 2 (2 t |\bm{x}_{A,t}|)^{-1/2} + O(t^{-3/2})\right\}^\top  \mathsf{A}_{AC}  H(t, \bm{x}_{C,t})  \\
  &+ \left\{-\text{sgn}( \bm{x}_{A,t})\frac{\log t}{(2t|\bm{x}_{A,t}|)^{1/2}}-\text{sgn}( \bm{x}_{A,t})\frac{\log (4\pi |\bm x|)}{(2t|\bm{x}_{A,t}|)^{1/2}}+ O(t^{-3/2}\log t)\right\}^\top \mathsf{A}_{AC}  H(t, \bm{x}_{C,t})\\
  =& \left\{\text{sgn}( \bm{x}_{A,t})(2 t |\bm{x}_{A,t}|)^{1/2} \right\}^\top\mathsf{A}_{AC}  H(t, \bm{x}_{C,t}) + \left\{\text{sgn}( \bm{x}_{A,t})\log 2 (2 t |\bm{x}_{A,t}|)^{-1/2}\right\}^\top\mathsf{A}_{AC}  H(t, \bm{x}_{C,t})\\
  &- t^{-1/2}\log t \left\{\text{sgn}( \bm{x}_{A,t})(2|\bm{x}_{A,t}|)^{-1/2}\right\}^\top \mathsf{A}_{AC}  H(t, \bm{x}_{C,t}) - \left\{\text{sgn}( \bm{x}_{A,t})\frac{\log (4\pi |\bm x|)}{(2t|\bm{x}_{A,t}|)^{1/2}}\right\}^\top \mathsf{A}_{AC}  H(t, \bm{x}_{C,t})\\
  &+ o(1)
\end{align*}
Similarly, we have
\begin{align*}
    \bm H&(t, \bm{x}_{B,t})^\top \mathsf{A}_{BC}
  \bm H(t, \bm{x}_{C,t}) \\
  =& \left\{\text{sgn}( \bm{x}_{B,t})(2 t |\bm{x}_{B,t}|)^{1/2} \right\}^\top\mathsf{A}_{BC}  H(t, \bm{x}_{C,t}) + \left\{\text{sgn}( \bm{x}_{B,t})\log 2 (2 t |\bm{x}_{B,t}|)^{-1/2}\right\}^\top\mathsf{A}_{BC}  H(t, \bm{x}_{C,t})\\
  &- t^{-1/2}\log t \left\{\text{sgn}( \bm{x}_{B,t})(2|\bm{x}_{B,t}|)^{-1/2}\right\}^\top \mathsf{A}_{BC}  H(t, \bm{x}_{C,t}) - \left\{\text{sgn}( \bm{x}_{B,t})\frac{\log (4\pi |\bm x|)}{(2t|\bm{x}_{B,t}|)^{1/2}}\right\}^\top \mathsf{A}_{BC}  H(t, \bm{x}_{C,t})\\
  &+ o(1)
\end{align*}
Combining all terms together leads to
\[
  -\frac{\log f(t \bm x_t)}{t} \to g(\bm x)\quad\text{whenever
    $\bm x_t \to \bm x$},
\]
where $g$ is defined in expression \eqref{eq:g_MVN_Laplace}.

\subsection{Multivariate Laplace copula}
\label{sec:MVL_supplementary}
The joint density of the $d$-variate Laplace distribution in standard
Laplace margins with positive definite precision matrix
$\QQ$ is 
\begin{align*}
  \fL(t\bm{x}) = \frac{2^{1+\frac{d}{2}}\lvert\QQ\rvert^{{1}/{2}}}{\left(2\pi\right)^{{d}/{2}}}t^v\left(\frac{1}{2^2}\bm{x}^\top \QQ\bm{x}\right)^{{v}/{2}}K_v \left\{t\left(\bm{x}^\top \QQ\bm{x}\right)^{{1}/{2}}\right\}
\end{align*}
for $t>0$, where $v={(2-d)}/{2}$ and $K_v$ is the modified Bessel
function of the second kind~\citep{kotz2001laplace}.\ Therefore, $-\log \fL(t\bm{x})=$
\begin{align*}
   =-\left({1+\frac{d}{2}}\right)\log 2 + \frac{d}{2}\log\left(2\pi\right) - \frac{1}{2}\log\left|\QQ\right| + v\log(2/t) - \frac{v}{2}\log\left(\bm{x}^\top \QQ\bm{x}\right) - \log \left[K_v \left\{t\left(\bm{x}^\top \QQ\bm{x}\right)^{{1}/{2}}\right\}\right]
\end{align*}
Asymptotically, we have $K_v(z) \sim \left(\pi/2z\right)^{{1}/{2}}e^{-z}\left(1+\OO(z^{-1})\right)$ as $z\rightarrow\infty$~\citep{gradshteyn2014table}.\ Applying the
negative logarithm, obtain
\begin{align*}
  -\log K_v(z) \sim -\frac{1}{2}\log\left(\frac{\pi}{2}\right) + \frac{1}{2}\log z + z + \OO(z^{-1}).
\end{align*}
Substituting this in the expression for $-\log \fL(t\bm{x})$, obtain 
\begin{align*}
  -\log \fL(t\bm{x}) \sim& -\left(1+\frac{d}{2}\right)\log 2 + \frac{d}{2}\log 2\pi - \frac{1}{2}\left|\QQ\right| -v\log t + v\log 2 -\frac{v}{2}\log \left(\bm{x}^\top \QQ\bm{x}\right)\\
                         &\hspace{0.3cm}-\frac{1}{2}\log\left(\frac{\pi}{2}\right) + \frac{1}{2}\log t + \frac{1}{4}\log \left(\bm{x}^\top \QQ\bm{x}\right) + t\left(\bm{x}^\top \QQ\bm{x}\right)^{{1}/{2}} + \OO(t^{-1})\\
  =& t\gG(\bm{x}) + \left(\log t\right)\qfirst(\bm{x}) + \qsecond (\bm{x}) + \OO(t^{-1}),
\end{align*}
for $t\rightarrow\infty$, where the gauge function is
$\gG(\bm{x}) = \left(\bm{x}^\top \QQ\bm{x}\right)^{{1}/{2}}$.\ The
higher order terms are given by $\qfirst(\bm{x}) = \frac{1}{2}-v$ and
$\qsecond(\bm{x}) = -\left(1+\frac{d}{2}-v\right)\log 2 +
\frac{d}{2}\log 2\pi -\frac{1}{2}\log\left|\QQ\right| -
\frac{1}{2}\log\left(\frac{\pi}{2}\right) +
\left(\frac{1}{4}-\frac{v}{2}\right)\log\left(\bm{x}^\top
  \QQ\bm{x}\right)$.



\subsection{Multivariate max-stable Logistic distribution with standard Fr{\'e}chet marginal distributions}\label{sec:ms-frechet}
For positive entries $\bm{x}\in\RR_+^d$, the joint density in standard
Fr{\'e}chet margins is
\begin{align*}
    f(\bm{x}) = \left(\sum\limits_{\pi\in\Pi}(-1)^{|\pi|}\prod\limits_{s\in\pi}V_{s}(\bm{x})\right) \exp\left\{-V(\bm{x})\right\}
\end{align*}
where
$\smash{V(\bm{x})=\left(\sum_{j=1}^d
    x_j^{-{1}/{\theta}}\right)^\theta}$ is a $-1$-homogeneous exponent
function with dependence parameter $\theta\in(0,1)$, and 
$$
    V_{s}(\bm{x}) = \left(-\theta\right)^{-|s|}\left[\prod_{k=0}^{|s|-1} (\theta-k)\right]\left(\prod_{k\in s}x_k\right)^{-\frac{1}{\theta}-1}\left(\sum_{j=1}^d
    x_j^{-{1}/{\theta}}\right)^{\theta-\left|s\right|}
$$
is the $-(\left|s\right|+1)$-homogeneous $\left|s\right|$-order
partial derivative of $V$ with respect to inputs whose indices are in
$s$.\ Using the intuition in Proposition \ref{prop:RV1} $(iii)$ and
setting $\psi(t)=t^{-(d+1)}$, we have the following convergence
\begin{align*}
    \frac{f(t\bm{x})}{t^{-(d+1)}}\rightarrow (-1)^{d}V_{\left\{1,\dots,d\right\}}(\bm{x}) = g(\bm{x})^{-(d+1)}
\end{align*}
as $t\rightarrow\infty$, where $V_{\left\{1,\dots,d\right\}}$ is the partial derivative of the exponent function with respect to all $d$ components.\ Note that in the context of Proposition \ref{prop:RV1} $(iii)$, this convergence implied that $\xi=1$.\ 
In the logistic setting, the expression of $g$ is therefore given by
$$
    g(\bm{x}) =  \left(-\theta\right)^{-d}\left[\prod_{k=0}^{d-1} (\theta-k)\right] \left(\prod_{k=1}^d x_k\right)^{-\frac{1}{\theta}-1}\left(\sum_{j=1}^d
    x_j^{-{1}/{\theta}}\right)^{\theta-d}
$$
Higher-order terms are given by
\begin{align*}
    u_k(\bm{w}) &= \frac{f(t\bm{x}) - t^{-(d+1)}g(\bm{x})^{-(d+1^{-1})} -  \sum_{i=1}^{k-1} t^{-\sum_{\ell=i+1}^{k}\sum_{j=1}^{\ell} (d+j)} u_i(\bm{x})}{t^{-\sum_{j=1}^{k+1} (d+j)} } \\ & \rightarrow \sum\limits_{\pi\in\Pi^{(d+1+k)}}(-1)^{|\pi|}\prod\limits_{s\in\pi}V_{s}(\bm{x}) \ ; \ t\rightarrow\infty
\end{align*}
for $k=1,\dots,d-1$,
where $\Pi^{(n)}\subset\Pi$ is such that for all $\pi\in\Pi^{(n)}$, $\sum_{s\in\pi}|s|=n$ for $n\in\left\{d+2,d+3,\dots,2d\right\}$.

\subsection{Multivariate inverted max-stable distributions, standard exponential margins}\label{sec:inv-ms-limit}
The class of multivariate inverted max-stable distributions is usually represented in exponential margins, and has a joint distribution function
$$
    F(\bm{x}) = e^{-\ell(\bm{x})}
$$
where $\ell$ is the 1-homogeneous \emph{stable tail dependence function}, and is defined by the $-1$-homogeneous exponent function $V$, introduced in Supplementary Material \ref{sec:ms-frechet}, through the relation $\ell(\bm{x}) = V({1}/{\bm{x}})$.\ The joint density in exponential margins is given by
$$
    f(\bm{x}) = \left(\sum\limits_{\pi\in\Pi}(-1)^{|\pi|}\prod\limits_{s\in\pi}\ell_{s}(\bm{x})\right) \exp\left\{-\ell(\bm{x})\right\}
$$
where $\ell_s(\bm{x})=V_s({1}/{\bm{x}})(-1)^{|s|}\prod_{j\in s}x_j^{-2}$ is $(1-|s|)$-homogeneous.\ Here, we use the limiting behaviour in Proposition \ref{prop:RV1} $(ii)$ to see limiting behaviour
\begin{align*}
    -\log f(t\bm{x})= tg(\bm{x}) + u(\bm{x})
\end{align*}
where the gauge function is given by $g(\bm{x})=\ell(\bm{x})$ and higher order term
$$
    u(\bm{x}) = -\log\left[(-1)^d\prod\limits_{j=1}^d\ell_{\left\{j\right\}}(\bm{x}) + o(1)\right]
$$
as $t\rightarrow\infty$.

\subsection{Multivariate max-stable and inverted max-stable copula, logistic dependence}
\label{sec:logistic_supplementary}
The joint density function in Fr\'{e}chet margins is 
\begin{align*}
    f_F(\bm{z}) = \left(\sum\limits_{\pi\in\Pi}(-1)^{|\pi|}\prod\limits_{s\in\pi}V_{s}(\bm{z})\right) \exp\left\{-V(\bm{z})\right\}
\end{align*}
where $\smash{V(\bm{z})=\left(\sum_{j=1}^d z_j^{-{1}/{\theta}}\right)^\theta}$ is a $-1$-homogeneous exponent function with dependence parameter $\theta\in(0,1)$, and $V_{s}$ is the $\left|s\right|$-order partial derivative of $V$ with respect to inputs whose indices are in $s$.\ Let $\Pi$ be the set of all partitions of the set of indices $\left\{1,\dots,d\right\}$, and let $\pi$ be the set of all partitions of an arbitraty element in $\Pi$.\ To obtain the joint density in Laplace margins, change of variables in implemented.\

Suppose $z_t(x_j) = z(tx_j)$ for $t>0$ and $j\in\left\{1,\dots,d\right\}$.\ If $x_j<0$ (or $z_t(x_j)<\left(\log 2\right)^{-1}$), then we perform the change of variables from Fr\'{e}chet to Laplace margins
\begin{align*}
    z_t(x_j) =& \left(-\log\left(\frac{1}{2}e^{tx_j}\right)\right)^{-1}=\left(-tx_j\right)^{-1}\left(1-\log 2 \left(-tx_j\right)^{-1} + O(t^{-2})\right)
\end{align*}
with derivative given by
\begin{align*}
    \frac{d}{d(tx_j)}z_t(x_j) = \left(-tx_j\right)^{-2}\left(1-2\log 2 \left(-tx_j\right)^{-1} + O(t^{-2})\right).
\end{align*}
If $x_j>0$ (or $z_t(x_j)>\left(\log 2\right)^{-1}$), then
\begin{align*}
    z_t(x_j) =& \left(-\log\left(1-\frac{1}{2}e^{-tx_j}\right)\right)^{-1}=2e^{tx_j} - \frac{1}{2} + O\left(e^{-tx_j}\right)
\end{align*}
with derivative given by
\begin{align*}
    \frac{d}{d(tx_j)}z_t(x_j) =  2e^{tx_j} + O\left(e^{-tx_j}\right).
\end{align*}
Lastly, if $x_j=0$, then $z_t(x_j)=(\log 2)^{-1}$.\ By the inverse function and chain rules,
\begin{align*}
    \frac{d}{d(tx_j)} z_t(x_j) &= \frac{f_L(tx_j)}{f_F\left(F^{-1}_F\left(F_L(tx_j)\right)\right)}=\begin{cases}
        \frac{e^{-t|x_j|}}{e^{tx_j}\left(\log\left(\frac{1}{2}e^{tx_j}\right)\right)^2}&;\;\;x_j<0\\
        \frac{\frac{1}{2}e^{-t|x_j|}}{\left(1-\frac{1}{2}e^{-tx_j}\right)\left(\log\left(1-\frac{1}{2}e^{-tx_j}\right)\right)^2}&;\;\;x_j>0\end{cases}\xrightarrow{x_j\rightarrow0} (\log2)^{-2}
\end{align*}
For a vector $\bm{x}=\left(x_1,\dots,x_d\right)^\top$, let $A,B,C\subset\left\{1,\dots,d\right\}$ be the set of indices such that $x_j$ is positive, negative, and zero for $j\in A,B,C$, respectively such that $|A|+|B|+|C|=d$.\ By change of variables, the joint density for the max-stable distribution with logistic dependence in Laplace margins is 
\begin{align*}
    f_L(t\bm{x}) =&\left|\prod\limits_{j=1}^d\frac{d}{d(tx_j)}z_t(x_j)\right|f_F\left(z(tx_1),\dots,z(tx_d)\right)\\
    =&(-1)^{d+1} \left\{\prod\limits_{\ell=0}^{d-1} \left(1-\frac{\ell}{\theta}\right)\right\} 2^{-\frac{|A|}{\theta}}t^{\left(\frac{1}{\theta}-1\right)|B| +1 -\frac{d}{\theta}}(\log 2)^{-2|C|}\left(\prod\limits_{k\in B} \left(-x_k\right)^{^{\frac{1}{\theta}-1}}\right)\left(\sum\limits_{k\in B} \left(-x_k\right)^{{1}/{\theta}}\right)^{\theta-d}\\
    &\times \exp\left\{-t\left[\frac{1}{\theta}\sum\limits_{j\in A}x_j + \left(\sum\limits_{k\in B} \left(-x_k\right)^{{1}/{\theta}}\right)^\theta \left(1 +O\left(e^{-\frac{t}{\theta}\min_{j\in A}x_j}\right)+ O(t^{-1})\right)\right]\right\}(1+o(1))
\end{align*}
Applying the negative logarithm, obtain the following expression
\begin{align*}
    -\log f_L(t\bm{x}) =& -\log\left[(-1)^{d+1} \left\{\prod\limits_{\ell=0}^{d-1} \left(1-\frac{\ell}{\theta}\right)\right\} 2^{-\frac{|A|}{\theta}}(\log 2)^{-2|C|}\right]-\left(\frac{1}{\theta}-1\right)\sum\limits_{k\in B}\log(-x_k) \\&- (\theta-d)\log\left(\sum\limits_{k\in B} \left(-x_k\right)^{{1}/{\theta}}\right)+\left\{-\left(\frac{1}{\theta}-1\right)|B| -1 +\frac{d}{\theta}\right\}\log t\\
    &+t\left[\frac{1}{\theta}\sum\limits_{j\in A}x_j + \left\{\sum\limits_{k\in B} \left(-x_k\right)^{{1}/{\theta}}\right\}^\theta \left\{1 +O\left(e^{-\frac{t}{\theta}\min_{j\in A}x_j}\right)+ O(t^{-1})\right\}\right] + o(1)\\
    =& t g(\bm{x}) + (\log t) \qfirst(\bm{x}) + \qsecond(\bm{x})
\end{align*}
where, when letting $t\rightarrow\infty$, the gauge function is
\begin{align*}
    g(\bm{x}) = \frac{1}{\theta}\sum\limits_{j\in A}x_j + \left\{\sum\limits_{k\in B} \left(-x_k\right)^{{1}/{\theta}}\right\}^\theta
\end{align*}
the higher order terms are given by
\begin{align*}
    \qfirst(\bm{x}) =-\left(\frac{1}{\theta}-1\right)|B| -1 +\frac{d}{\theta}
\end{align*}
and
\begin{align*}
    \qsecond(\bm{x}) =& -\log\left[\frac{(-1)^{d+1}2^{-\frac{|A|}{\theta}}}{(\log 2)^{-2|C|}}\left\{\prod\limits_{\ell=0}^{d-1} \left(1{-}\frac{\ell}{\theta}\right)\right\} \right]{-}\left(\frac{1}{\theta}{-}1\right)\sum\limits_{k\in B}\log(-x_k) {-} (\theta-d)\log\left\{\sum\limits_{k\in B} \left(-x_k\right)^{{1}/{\theta}}\right\}
\end{align*}
There are 2 special cases to consider:
\begin{itemize}
    \item \textbf{special case 1:} Suppose $x_j>0\;\forall\;j\in\left\{1,\dots,d\right\}$ and let $x_{(d)}=\min_{j=1,\dots,d}x_j$.\ Here, the joint log-density is 
    \begin{align*}
        -\log f_L(t\bm{x}) =& -\log\left[2^{-1} (-1)^{d} \left\{\prod\limits_{\ell=1}^{d-1} \left(1-\frac{\ell}{\theta}\right)\right\}\right] + t\left\{\frac{1}{\theta}\sum\limits_{j=1}^d x_j + \left(1-\frac{d}{\theta}\right)x_{(d)}\right\}+\\
        & +2^{-1}e^{-tx_{(d)}}\left(1+o(1)\right) + o(1)= tg(\bm{x}) + \qfirst(\bm{x}) + o(1)
    \end{align*}
    where the gauge function is 
    $$
        g(\bm{x})=\frac{1}{\theta}\sum\limits_{j=1}^d x_j + \left(1-\frac{d}{\theta}\right)\min_{k=1,\dots,d}x_k
    $$
    and the higher order term is 
    $$
        \qfirst(\bm{x}) = -\log\left[2^{-1} (-1)^{d+1} \left\{\prod\limits_{\ell=1}^{d-1} \left(1-\frac{\ell}{\theta}\right)\right\}\right],\qquad \text{as $t\rightarrow\infty$.}
    $$
    \item \textbf{special case 2:} Suppose $x_j<0\;\forall\;j\in\left\{1,\dots,d\right\}$.\ Here, we have
    \begin{align*}
        -\log f_L(t\bm{x}) =& tg(\bm{x}) + (\log t)\qfirst(\bm{x}) + \qsecond(\bm{x}) + o(1)
    \end{align*}
    where the gauge function is 
    $$
        g(\bm{x}) = \left\{\sum\limits_{j=1}^d \left(-x_j\right)^{{1}/{\theta}}\right\}^\theta
    $$
    the higher order terms are given by
    $$
        \qfirst(\bm{x}) = d-1
    $$
    and
    $$
        \qsecond(\bm{x})=-\log\left[(-1)^{d+1} \left\{\prod\limits_{\ell=0}^{d-1} \left(1-\frac{\ell}{\theta}\right)\right\}\left\{\prod\limits_{j=1}^d \left(-x_j\right)\right\}^{\frac{1}{\theta}-1}\left\{\sum\limits_{j=1}^d \left(-x_j\right)^{{1}/{\theta}}\right\}^{\theta-d}\right], \qquad \text{as $t\rightarrow\infty$.}
    $$
    
\end{itemize}
To study the case of \textbf{inverted logistic dependence}, first recall the definition of the joint distribution function in Fr\'{e}chet margins
$$
    F_F(\bm{z}) = \exp\left\{-V(\bm{z})\right\}
$$
By evaluating at the univariate Frechet quantile function, get the Logistic copula
\begin{align*}
    C_{\textit{log.}}(\qfirst,\dots,u_d) =& F_F\left(-\frac{1}{\log \qfirst},\dots,-\frac{1}{\log u_d}\right)=\exp\left\{-V\left(-\frac{1}{\log \qfirst},\dots,-\frac{1}{\log u_d}\right)\right\}
\end{align*}
where $V$ is specified to be the exponent function commonly associated with logistic dependence.\ The survival copula associated with the inverted logistic distribution is therefore given by
\begin{align*}
    \bar{C}_{\textit{inv.log.}}(\qfirst,\dots,u_d)=&C_{\textit{log.}}(1-\qfirst,\dots,1-u_d)=\exp\left\{-V\left(-\frac{1}{\log (1-\qfirst)},\dots,-\frac{1}{\log (1-u_d)}\right)\right\}
\end{align*}
So $x_j>0$ (or $u_j>{1}/{2}$) in the logistic case corresponds to $x_j<0$ (or $u_j<{1}/{2}$) in the inverted logistic case.\ This leads to the decomposition of the log-joint density in Laplace margins $-\log f_L(t\bm{x}) = tg(\bm{x}) + (\log t)\qfirst(\bm{x}) + g^\ast_0(\bm{x})$
where the gauge function is
\begin{align*}
    g(\bm{x}) =&  \left(\sum\limits_{j\in A} x_j^{{1}/{\theta}}\right)^\theta + \frac{1}{\theta}\sum\limits_{k\in B}(-x_k)
\end{align*}
the higher-order terms are given by
\begin{align*}
    \qfirst(\bm{x}) = -\left(\frac{1}{\theta}-1\right)|A| -1 +\frac{d}{\theta}
\end{align*}
and
\begin{align*}
    \qsecond(\bm{x}) =& -\log\left[(-1)^{d+1} \left\{\prod\limits_{\ell=0}^{d-1} \left(1{-}\frac{\ell}{\theta}\right)\right\} 2^{-\frac{|B|}{\theta}}(\log 2)^{-2|C|}\right]{-}\left(\frac{1}{\theta}{-}1\right)\sum\limits_{j\in A}\log(x_j)- (\theta-d)\log\left(\sum\limits_{j\in A} x_j^{{1}/{\theta}}\right)
\end{align*}
In the case where $x_j>0$ $\forall j\in\left\{1,\dots,d\right\}$ we have $-\log f_L(t\bm{x}) = tg(\bm{x}) + (\log t)\qfirst(\bm{x}) + \qsecond(\bm{x}) + o(1)$
where the gauge function is
$$
    g(\bm{x}) = \left(\sum\limits_{j=1}^d x_j^{{1}/{\theta}}\right)^\theta
$$
with higher-order terms given by
$$
    \qfirst(\bm{x}) = d-1
$$
and
$$
    \qsecond(\bm{x})=-\log\left[(-1)^{d+1} \left\{\prod\limits_{\ell=0}^{d-1} \left(1-\frac{\ell}{\theta}\right)\right\}\left(\prod\limits_{j=1}^d x_j\right)^{\frac{1}{\theta}-1}\left(\sum\limits_{j=1}^d x_j^{{1}/{\theta}}\right)^{\theta-d}\right]
$$
as $t\rightarrow\infty$.\ In the case where $x_j<0$ $\forall j\in\left\{1,\dots,d\right\}$ we have
\begin{align*}
        -\log f_L(t\bm{x}) =& tg(\bm{x}) + \qfirst(\bm{x}) + o(1)
\end{align*}
where the gauge function is 
$$
    g(\bm{x})=\frac{1}{\theta}\sum\limits_{j=1}^d (-x_j) + \left(1-\frac{d}{\theta}\right)\min_{k=1,\dots,d}\left|x_k\right|
$$
and the higher order term is 
$$
    \qfirst(\bm{x}) = -\log\left[2^{-1} (-1)^{d} \left\{\prod\limits_{\ell=1}^{d-1} \left(1-\frac{\ell}{\theta}\right)\right\}\right], \qquad \text{as $t\rightarrow\infty$}.
$$


The joint density function in Fr\'{e}chet margins is 
\begin{align*}
    f_F(\bm{z}) = \left(\sum\limits_{\pi\in\Pi}(-1)^{|\pi|}\prod\limits_{s\in\pi}V_{s}(\bm{z})\right) \exp\left\{-V(\bm{z})\right\}
\end{align*}
where $\smash{V(\bm{z})=\left(\sum_{j=1}^d z_j^{-{1}/{\theta}}\right)^\theta}$ is a $-1$-homogeneous exponent function with dependence parameter $\theta\in(0,1)$, and $V_{s}$ is the $\left|s\right|$-order partial derivative of $V$ with respect to inputs whose indices are in $s$.\ Let $\Pi$ be the set of all partitions of the set of indices $\left\{1,\dots,d\right\}$, and let $\pi$ be the set of all partitions of an arbitraty element in $\Pi$.\ To obtain the joint density in Laplace margins, change of variables in implemented.\

Suppose $z_t(x_j) = z(tx_j)$ for $t>0$ and $j\in\left\{1,\dots,d\right\}$.\ If $x_j<0$ (or $z_t(x_j)<\left(\log 2\right)^{-1}$), then we perform the change of variables from Fr\'{e}chet to Laplace margins
\begin{align*}
    z_t(x_j) =& \left(-\log\left(\frac{1}{2}e^{tx_j}\right)\right)^{-1}=\left(-tx_j\right)^{-1}\left(1-\log 2 \left(-tx_j\right)^{-1} + O(t^{-2})\right)
\end{align*}
with derivative given by
\begin{align*}
    \frac{d}{d(tx_j)}z_t(x_j) = \left(-tx_j\right)^{-2}\left(1-2\log 2 \left(-tx_j\right)^{-1} + O(t^{-2})\right).
\end{align*}
If $x_j>0$ (or $z_t(x_j)>\left(\log 2\right)^{-1}$), then
\begin{align*}
    z_t(x_j) =& \left(-\log\left(1-\frac{1}{2}e^{-tx_j}\right)\right)^{-1}=2e^{tx_j} - \frac{1}{2} + O\left(e^{-tx_j}\right)
\end{align*}
with derivative given by
\begin{align*}
    \frac{d}{d(tx_j)}z_t(x_j) =  2e^{tx_j} + O\left(e^{-tx_j}\right).
\end{align*}
Lastly, if $x_j=0$, then $z_t(x_j)=(\log 2)^{-1}$.\ By the inverse function and chain rules,
\begin{align*}
    \frac{d}{d(tx_j)} z_t(x_j) &= \frac{f_L(tx_j)}{f_F\left(F^{-1}_F\left(F_L(tx_j)\right)\right)}=\begin{cases}
        \frac{e^{-t|x_j|}}{e^{tx_j}\left(\log\left(\frac{1}{2}e^{tx_j}\right)\right)^2}&;\;\;x_j<0\\
        \frac{\frac{1}{2}e^{-t|x_j|}}{\left(1-\frac{1}{2}e^{-tx_j}\right)\left(\log\left(1-\frac{1}{2}e^{-tx_j}\right)\right)^2}&;\;\;x_j>0\end{cases}\xrightarrow{x_j\rightarrow0} (\log2)^{-2}
\end{align*}
For a vector $\bm{x}=\left(x_1,\dots,x_d\right)^\top$, let $A,B,C\subset\left\{1,\dots,d\right\}$ be the set of indices such that $x_j$ is positive, negative, and zero for $j\in A,B,C$, respectively such that $|A|+|B|+|C|=d$.\ By change of variables, the joint density for the max-stable distribution with logistic dependence in Laplace margins is 
\begin{align*}
    f_L(t\bm{x}) =&\left|\prod\limits_{j=1}^d\frac{d}{d(tx_j)}z_t(x_j)\right|f_F\left(z(tx_1),\dots,z(tx_d)\right)\\
    =&(-1)^{d+1} \left\{\prod\limits_{\ell=0}^{d-1} \left(1-\frac{\ell}{\theta}\right)\right\} 2^{-\frac{|A|}{\theta}}t^{\left(\frac{1}{\theta}-1\right)|B| +1 -\frac{d}{\theta}}(\log 2)^{-2|C|}\left(\prod\limits_{k\in B} \left(-x_k\right)^{^{\frac{1}{\theta}-1}}\right)\left(\sum\limits_{k\in B} \left(-x_k\right)^{{1}/{\theta}}\right)^{\theta-d}\\
    &\times \exp\left\{-t\left[\frac{1}{\theta}\sum\limits_{j\in A}x_j + \left(\sum\limits_{k\in B} \left(-x_k\right)^{{1}/{\theta}}\right)^\theta \left(1 +O\left(e^{-\frac{t}{\theta}\min_{j\in A}x_j}\right)+ O(t^{-1})\right)\right]\right\}(1+o(1))
\end{align*}
Applying the negative logarithm, obtain the following expression
\begin{align*}
    -\log f_L(t\bm{x}) =& -\log\left[(-1)^{d+1} \left\{\prod\limits_{\ell=0}^{d-1} \left(1-\frac{\ell}{\theta}\right)\right\} 2^{-\frac{|A|}{\theta}}(\log 2)^{-2|C|}\right]-\left(\frac{1}{\theta}-1\right)\sum\limits_{k\in B}\log(-x_k) \\&- (\theta-d)\log\left(\sum\limits_{k\in B} \left(-x_k\right)^{{1}/{\theta}}\right)+\left\{-\left(\frac{1}{\theta}-1\right)|B| -1 +\frac{d}{\theta}\right\}\log t\\
    &+t\left[\frac{1}{\theta}\sum\limits_{j\in A}x_j + \left\{\sum\limits_{k\in B} \left(-x_k\right)^{{1}/{\theta}}\right\}^\theta \left\{1 +O\left(e^{-\frac{t}{\theta}\min_{j\in A}x_j}\right)+ O(t^{-1})\right\}\right] + o(1)\\
    =& t g(\bm{x}) + (\log t) \qfirst(\bm{x}) + \qsecond(\bm{x})
\end{align*}
where, when letting $t\rightarrow\infty$, the gauge function is 
\begin{align*}
    g(\bm{x}) = \frac{1}{\theta}\sum\limits_{j\in A}x_j + \left\{\sum\limits_{k\in B} \left(-x_k\right)^{{1}/{\theta}}\right\}^\theta
\end{align*}
the higher order terms are given by
\begin{align*}
    \qfirst(\bm{x}) =-\left(\frac{1}{\theta}-1\right)|B| -1 +\frac{d}{\theta}
\end{align*}
and
\begin{align*}
    \qsecond(\bm{x}) =& -\log\left[\frac{(-1)^{d+1}2^{-\frac{|A|}{\theta}}}{(\log 2)^{-2|C|}}\left\{\prod\limits_{\ell=0}^{d-1} \left(1{-}\frac{\ell}{\theta}\right)\right\} \right]{-}\left(\frac{1}{\theta}{-}1\right)\sum\limits_{k\in B}\log(-x_k) {-} (\theta-d)\log\left\{\sum\limits_{k\in B} \left(-x_k\right)^{{1}/{\theta}}\right\}
\end{align*}
There are 2 special cases to consider:
\begin{itemize}
    \item \textbf{special case 1:} Suppose $x_j>0\;\forall\;j\in\left\{1,\dots,d\right\}$ and let $x_{(d)}=\min_{j=1,\dots,d}x_j$.\ Here, the joint log-density is 
    \begin{align*}
        -\log f_L(t\bm{x}) =& -\log\left[2^{-1} (-1)^{d} \left\{\prod\limits_{\ell=1}^{d-1} \left(1-\frac{\ell}{\theta}\right)\right\}\right] + t\left\{\frac{1}{\theta}\sum\limits_{j=1}^d x_j + \left(1-\frac{d}{\theta}\right)x_{(d)}\right\}+\\
        & +2^{-1}e^{-tx_{(d)}}\left(1+o(1)\right) + o(1)= tg(\bm{x}) + \qfirst(\bm{x}) + o(1)
    \end{align*}
    where the gauge function is 
    $$
        g(\bm{x})=\frac{1}{\theta}\sum\limits_{j=1}^d x_j + \left(1-\frac{d}{\theta}\right)\min_{k=1,\dots,d}x_k
    $$
    and the higher order term is 
    $$
        \qfirst(\bm{x}) = -\log\left[2^{-1} (-1)^{d+1} \left\{\prod\limits_{\ell=1}^{d-1} \left(1-\frac{\ell}{\theta}\right)\right\}\right],\qquad \text{as $t\rightarrow\infty$.}
    $$
    \item \textbf{special case 2:} Suppose $x_j<0\;\forall\;j\in\left\{1,\dots,d\right\}$.\ Here, we have
    \begin{align*}
        -\log f_L(t\bm{x}) =& tg(\bm{x}) + (\log t)\qfirst(\bm{x}) + \qsecond(\bm{x}) + o(1)
    \end{align*}
    where the gauge function is 
    $$
        g(\bm{x}) = \left\{\sum\limits_{j=1}^d \left(-x_j\right)^{{1}/{\theta}}\right\}^\theta
    $$
    the higher order terms are given by
    $$
        \qfirst(\bm{x}) = d-1
    $$
    and
    $$
        \qsecond(\bm{x})=-\log\left[(-1)^{d+1} \left\{\prod\limits_{\ell=0}^{d-1} \left(1-\frac{\ell}{\theta}\right)\right\}\left\{\prod\limits_{j=1}^d \left(-x_j\right)\right\}^{\frac{1}{\theta}-1}\left\{\sum\limits_{j=1}^d \left(-x_j\right)^{{1}/{\theta}}\right\}^{\theta-d}\right], \qquad \text{as $t\rightarrow\infty$.}
    $$
    
\end{itemize}
To study the case of \textbf{inverted logistic dependence}, first recall the definition of the joint distribution function in Fr\'{e}chet margins
$$
    F_F(\bm{z}) = \exp\left\{-V(\bm{z})\right\}
$$
By evaluating at the univariate Frechet quantile function, get the Logistic copula
\begin{align*}
    C_{\textit{log.}}(\qfirst,\dots,u_d) =& F_F\left(-\frac{1}{\log \qfirst},\dots,-\frac{1}{\log u_d}\right)=\exp\left\{-V\left(-\frac{1}{\log \qfirst},\dots,-\frac{1}{\log u_d}\right)\right\}
\end{align*}
where $V$ is specified to be the exponent function commonly associated with logistic dependence.\ The survival copula associated with the inverted logistic distribution is therefore given by
\begin{align*}
    \bar{C}_{\textit{inv.log.}}(\qfirst,\dots,u_d)=&C_{\textit{log.}}(1-\qfirst,\dots,1-u_d)=\exp\left\{-V\left(-\frac{1}{\log (1-\qfirst)},\dots,-\frac{1}{\log (1-u_d)}\right)\right\}
\end{align*}
So $x_j>0$ (or $u_j>{1}/{2}$) in the logistic case corresponds to $x_j<0$ (or $u_j<{1}/{2}$) in the inverted logistic case.\ This leads to the decomposition of the log-joint density in Laplace margins $-\log f_L(t\bm{x}) = tg(\bm{x}) + (\log t)\qfirst(\bm{x}) + g^\ast_0(\bm{x})$
where the gauge function is
\begin{align*}
    g(\bm{x}) =&  \left(\sum\limits_{j\in A} x_j^{{1}/{\theta}}\right)^\theta + \frac{1}{\theta}\sum\limits_{k\in B}(-x_k)
\end{align*}
the higher-order terms are given by
\begin{align*}
    \qfirst(\bm{x}) = -\left(\frac{1}{\theta}-1\right)|A| -1 +\frac{d}{\theta}
\end{align*}
and
\begin{align*}
    \qsecond(\bm{x}) =& -\log\left[(-1)^{d+1} \left\{\prod\limits_{\ell=0}^{d-1} \left(1{-}\frac{\ell}{\theta}\right)\right\} 2^{-\frac{|B|}{\theta}}(\log 2)^{-2|C|}\right]{-}\left(\frac{1}{\theta}{-}1\right)\sum\limits_{j\in A}\log(x_j)- (\theta-d)\log\left(\sum\limits_{j\in A} x_j^{{1}/{\theta}}\right)
\end{align*}
In the case where $x_j>0$ $\forall j\in\left\{1,\dots,d\right\}$ we have $-\log f_L(t\bm{x}) = tg(\bm{x}) + (\log t)\qfirst(\bm{x}) + \qsecond(\bm{x}) + o(1)$
where the gauge function is
$$
    g(\bm{x}) = \left(\sum\limits_{j=1}^d x_j^{{1}/{\theta}}\right)^\theta
$$
with higher-order terms given by
$$
    \qfirst(\bm{x}) = d-1
$$
and
$$
    \qsecond(\bm{x})=-\log\left[(-1)^{d+1} \left\{\prod\limits_{\ell=0}^{d-1} \left(1-\frac{\ell}{\theta}\right)\right\}\left(\prod\limits_{j=1}^d x_j\right)^{\frac{1}{\theta}-1}\left(\sum\limits_{j=1}^d x_j^{{1}/{\theta}}\right)^{\theta-d}\right]
$$
as $t\rightarrow\infty$.\ In the case where $x_j<0$ $\forall j\in\left\{1,\dots,d\right\}$ we have
\begin{align*}
        -\log f_L(t\bm{x}) =& tg(\bm{x}) + \qfirst(\bm{x}) + o(1)
\end{align*}
where the gauge function is 
$$
    g(\bm{x})=\frac{1}{\theta}\sum\limits_{j=1}^d (-x_j) + \left(1-\frac{d}{\theta}\right)\min_{k=1,\dots,d}\left|x_k\right|
$$
and the higher order term is 
$$
    \qfirst(\bm{x}) = -\log\left[2^{-1} (-1)^{d} \left\{\prod\limits_{\ell=1}^{d-1} \left(1-\frac{\ell}{\theta}\right)\right\}\right], \qquad \text{as $t\rightarrow\infty$}.
$$

\subsection{Multivariate Student-$t_{\nu}$ copula, Student-$t_{\nu}$ margins, $\nu>0$}
Suppose $\mathsf{Q}$ positive definite and $\nu>0$.
The joint density can be expressed as $f(\bm{x})=f_0(\gG(\bm{x}))$, 
where the homothetic function $f_0$ and gauge function is fiven by
\begin{alignat*}{2}    
    & f_0(s) &&= k_{\nu,\mathsf{Q}}(1+\nu^{-1}s^2)^{-\frac{1}{2}\left(\nu+d\right)},\\
    & \gG(\bm{x}) &&= \left(\bm{x}^\top \QQ\bm{x}\right)^{{1}/{2}},
\end{alignat*}
where $k_{\nu,\mathsf{Q}}={\Gamma\left(\frac{\nu+d}{2}\right)}/{\left\{\Gamma\left(\frac{\nu}{2}\right)\nu^{{d}/{2}}\pi^{{d}/{2}}\left|\mathsf{Q}\right|^{-{1}/{2}}\right\}}$.\ 
By the third condition of Proposition~\ref{prop:RV1}, setting $\psi(t)=k_{\nu,\mathsf{Q}}\nu^{\frac{1}{2}(\nu+d)}t^{-(\nu+d)}$, and for $t>0$ large, the higher-order term is 
\begin{align*}
    u(\bm{x})=& \frac{\frac{f(t\bm{x})}{\psi(t)} - g(\bm{x})^{-(\nu+d)}}{t^{-2}}\\
    =& t^2\left[\psi(t)^{-1}k_{\nu,\mathsf{Q}} \nu^{\frac{1}{2}(\nu+d)}t^{-(\nu+d)}\gG(\bm{x})^{-(\nu+d)}\left\{1+\nu t^{-2}\gG(\bm{x})^{-2}\right\}^{-\frac{1}{2}(\nu+d)} - \gG(\bm{x})^{-(\nu+d)}\right]\\
    =& t^2\left[\gG(\bm{x})^{-(\nu+d)}\left\{1-\frac{1}{2}(\nu+d)\nu t^{-2}\gG(\bm{x})^{-2} + O(t^{-4})\right\} - \gG(\bm{x})^{-(\nu+d)}\right]\\
    =&-\frac{1}{2}(\nu+d)\nu \gG(\bm{x})^{-2-(\nu+d)} + O(t^{-2})\\
    =&-\frac{1}{2}(\nu+d)\nu \gG(\bm{x})^{-2-(\nu+d)} + o(1)
\end{align*}

\subsection{Multivariate Student-$t_{\nu}$ copula, Student-$t_{\nu}$ margins, $\nu<0$}
Suppose $\mathsf{Q}$ positive definite and $\nu<0$ and $\nu<2-d$ \citep{paptawn13}.
For support $\left\{\bm{x}\in\mathbb{R}^d \,:\, 1+\nu^{-1}\bm{x}^\top\QQ\bm{x}>0\right\}$, the joint density is given by
\begin{alignat*}{2}    
    & f(\bm{x}) &&= k_{\nu,\mathsf{Q}}\left(1+\nu^{-1}\bm{x}^\top\QQ\bm{x}\right)^{-\frac{1}{2}\left(\nu+d\right)}
\end{alignat*}
where
$k_{\nu,\mathsf{Q}}=\pi^{-{d}/{2}}\left|\QQ\right|^{{1}/{2}}\left|\nu\right|^{1-\frac{d}{2}}\left(\left|\nu\right|-d\right)^{-1}\Gamma\left(\frac{\left|\nu\right|}{2}\right)\Gamma\left(\frac{\left|\nu\right|-d}{2}\right)^{-1}$.\
Note the homothetic form $f(\bm{x})=f_0(\gG(\bm{x}))$, where
$f_0(s)=k_{\nu,\mathsf{Q}}(1-s^2)^{-\frac{1}{2}\left(\nu+d\right)}$
for $s\in(-1,1)$.\ Therefore, we extract the gauge function
$$
    \gG(\bm{x})=\left(-\nu^{-1}\bm{x}^\top\QQ\bm{x}\right)^{{1}/{2}}
$$
Therefore, $\rG(\cdot)=\gG(\cdot)^{-1}$ is a $-1$-homogeneous radial
function, and $\rG(\bm{x})\bm{x}$ lies on the boundary of the support
(i.e., $f\left\{\rG(\bm{x})\bm{x}\right\}=0$).\ \color{black} Using
the transformation of the density function incondition in
Proposition~\ref{prop:RV1} $(i)$, we have the rate of convergence
{black}
\begin{align*}
    f\left\{\rG(\bm{x})\left(\bm{x}-t^{-1}\bm{1}\right)\right\} =& k_{\nu,\mathsf{Q}}\left[1-\rG\left\{\rG(\bm{x})\left(\bm{x}-t^{-1}\bm{1}\right)\right\}^{-2}\right]_+^{-\frac{1}{2}(\nu+d)}\\
    =& k_{\nu,\mathsf{Q}}\left[1-\rG(\bm{x})^2\rG\left(\bm{x}-t^{-1}\bm{1}\right)^{-2}\right]_+^{-\frac{1}{2}(\nu+d)}\\
    =& k_{\nu,\mathsf{Q}}\left[1-\rG(\bm{x})^2\left\{\rG(\bm{x}) -\nabla \rG(\bm{x}) t^{-1} + O(t^{-2})\right\}^{-2}\right]_+^{-\frac{1}{2}(\nu+d)}\\
    =& k_{\nu,\mathsf{Q}}t^{\frac{1}{2}(\nu+d)}\left\{-\frac{2}{\rG(\bm{x})}\nabla \rG(\bm{x}) + O(t^{-1}) \right\}\\
\end{align*}
Therefore,
\begin{align*}
    \frac{f\left\{\rG(\bm{x})\left(\bm{x}-t^{-1}\bm{1}\right)\right\}}{k_{\nu,\mathsf{Q}}t^{\frac{1}{2}(\nu+d)}} =& -\frac{2}{\rG(\bm{x})}\nabla \rG(\bm{x}) + o(1)
\end{align*}

\subsection{Multivariate Student-$t_{\nu}$ copula, $\nu>0$, Laplace margins}
\label{sec:Student-t_Lap}
The multivariate t-distribution with positive definite precision
matrix $\mathsf{Q}=\left(q_{ij}\right)_{i,j=1}^d$ and with
univariate t-distribution margins with $\nu$ degrees of freedom is
given has joint density
\begin{align*}
    f(\bm{z}) =& k_{\nu,\mathsf{Q}}\left(1+\frac{1}{\nu}\sum\limits_{j=1}^d q_{jj}z_j^2 + \frac{2}{\nu}\sum\limits_{1\leq j<k\leq d}q_{jk}z_jz_k\right)^{-\frac{1}{2}\left(\nu+d\right)}
\end{align*}
where $k_{\nu,\mathsf{Q}}={\Gamma\left(\frac{\nu+d}{2}\right)}/{\left\{\Gamma\left(\frac{\nu}{2}\right)\nu^{{d}/{2}}\pi^{{d}/{2}}\left|\mathsf{Q}\right|^{-{1}/{2}}\right\}}$.
Perform the change of variables to the standard Laplace distribution, where we take advantage of a univariate t-distribution analoge of Mill's ratio~\citep{soms1976asymptotic}.\ Let $f_{t_\nu}$ and $F_{t_\nu}$ be the density and distribution functions of the univariate t-distribution with $\nu$ degrees of freedom, respectively.\ 
\\ \\
Suppose $z=z(tx)>0$ (or $x>0$), then for $t>0$ large, the change of variables from the t-distribution to the standard Laplace distribution
\begin{align*}
    tx =& 
    -\log\left[2\left\{1-F_{t_\nu}(z(tx))\right\}\right]=
    -\log \left[\frac{2\Gamma\left(\frac{\nu+1}{2}\right)\nu^{\left(\frac{1-\nu}{2}\right)}}{\Gamma\left(\frac{\nu}{2}\right)\sqrt{\nu\pi}}\right] + \nu\log z(tx)  + O\left(z(tx)^{-2}\right).
\end{align*}
Inverting this transformation, obtain
\begin{align*}
    z(tx) =& c_\nu e^{\frac{t}{\nu}x}\left(1+O\left(e^{\frac{-2t}{\nu}}\right)\right),
\end{align*}
with partial derivative with respect to $tx$ given by
\begin{align*}
    \frac{d}{dtx}z(tx)=&\frac{c_\nu}{\nu} e^{\frac{t}{\nu}x}\left(1+O\left(e^{\frac{-2t}{\nu}}\right)\right),
\end{align*}
where $c_\nu=\left\{{2\Gamma\left(\frac{\nu+1}{2}\right)\nu^{\left(\frac{1-\nu}{2}\right)}}{\Gamma\left(\frac{\nu}{2}\right)^{-1}{(\nu\pi)^{-{1}/{2}}}}\right\}^{{1}/{\nu}}$.\ When $z(tx)<0$ (or $x<0$), the transformation to Laplace margins is 
\begin{align*}
    tx =& \log\left[2F_{t_\nu}(z(tx))\right]=\log2 +\log\left[1-F_{t_\nu}(-z(tx))\right].
\end{align*}
Negating,
\begin{align*}
    -tx =& -\log2 -\log\left(1-F_{t_\nu}(-z(tx))\right)
    = -\log \left[\frac{2\Gamma\left(\frac{\nu+1}{2}\right)\nu^{\left(\frac{1-\nu}{2}\right)}}{\Gamma\left(\frac{\nu}{2}\right)\sqrt{\nu\pi}}\right] + \nu\log (-z(tx))  + O\left(z(tx)^{-2}\right).
\end{align*}
Inverting this transformation,
\begin{align*}
    z(tx)=& -c_\nu e^{\frac{t}{\nu}\left|x\right|}\left(1+O\left(e^{\frac{-2t}{\nu}}\right)\right),
\end{align*}
with partial derivative
\begin{align*}
    \frac{d}{dtx}z(tx)=&\frac{c_\nu}{\nu} e^{\frac{t}{\nu}\left|x\right|}\left(1+O\left(e^{\frac{-2t}{\nu}}\right)\right).
\end{align*}
Therefore, by change of variables, obtain the joint density in Laplace margins
\begin{align*}
    f_L(t\bm{x}) =& \left|\prod\limits_{j=1}^d\frac{d}{d(tx_j)}z(tx_j)\right|f(z(tx_1),\dots,z(tx_d))\\
    =&\frac{c_\nu^d k_{\nu,\mathsf{Q}}}{\nu^d}\exp\left\{\frac{t}{\nu}\sum\limits_{j=1}^d \left|x_j\right|\right\}\frac{q_{j^\star j^\star}c_{\nu}^2}{\nu}\exp\left\{-t\left(1+\frac{d}{\nu}\right)\max_{j=1,\dots,d}\left|x_j\right|\right\}(1+o(1)),
\end{align*}
where $j^\star$ is the index such that $\left|x_{j^\star}\right|=\max_{j=1,\dots,d}\left|x_j\right|$.\ Taking the negative logarithm, we obtain 
\begin{align*}
    -\log f_L(t\bm{x}) =& tg(\bm{x}) + \qfirst(\bm{x}) + o(1),
\end{align*}
where the gauge function is 
\begin{align*}
    g(\bm{x}) =& -\frac{1}{\nu}\sum\limits_{j=1}^d \left|x_j\right| + \left(1+\frac{d}{\nu}\right)\max_{j=1,\dots,d}\left|x_j\right|,
\end{align*}
and the higher order term is 
\begin{align*}
    \qfirst(\bm{x}) =& -\log\left( \frac{c_\nu^{d+2} k_{\nu,\mathsf{Q}}q_{j^\star j^\star}}{\nu^{d+1}}\right).
\end{align*}

\subsection{Wishart distribution}
\label{sec:limit_Wishart}
For a $d\times d$ positive definite matrix $\bf{X}$, the Wishart
distribution with $\nu$ degrees of freedom and positive definite scale
matrix $\bf{V}$ has density function
$$
    f_{\mathbf{X}}(\mathbf{X}) = 2^{-{(\nu d)}/{2}}\det(\mathbf{V})^{-{\nu}/{2}}\Gamma_d({\nu}/{2})^{-1}\det(\mathbf{X})^{\frac{1}{2}(\nu - d - 1)}e^{-\frac{1}{2}\text{tr} \left(\mathbf{V}^{-1}\mathbf{X}\right)}
$$
Thus,
$$
    -\log f_{\mathbf{X}}(t\mathbf{X}) = tg(\mathbf{X}) + (\log t)u_1(\mathbf{X}) + u_2(\mathbf{X})
$$
where
$g(\mathbf{X})=\frac{1}{2}\text{tr}
\left(\mathbf{V}^{-1}\mathbf{X}\right)$ is 1-homogeneous, and
$u_1(\mathbf{X})=\left(-\frac{d}{2}(\nu - d - 1)\right)$ and
$u_2(\mathbf{X})=-\frac{1}{2}(\nu-d-1)\log\det(\mathbf{X}) +
\log\left(2^{-{(\nu
      d)}{2}}\det(\mathbf{V})^{-{\nu}/{2}}\Gamma_d({\nu}/{2})^{-1}\right)$
are higher-order terms.\ 

\newpage
\section{Directional densities}
\label{supp:angle}
\subsection{Multivariate normal distribution, standard normal margins}
\label{sec:angle_MVN}

The joint density can be written in the form $f(\bm{x})=f_0(\gG(\bm{x}))$, where $f(s)=(2\pi)^{-{d}/{2}}\left|\QQ\right|^{{1}/{2}}\exp\left\{-{s^{2}}/{2}\right\}$.\ By~\eqref{eq:angle-density-derivation},
\begin{align*}
    f_{\bm{W}}(\bm{w}) =& \gG(\bm{x})^{-d} \int\limits_{0}^{\infty} s^{d-1}f_0(s)ds=(2\pi)^{-{d}/{2}}\left|\QQ\right|^{{1}/{2}}\left(\bm{w}^\top\QQ\bm{w}\right)^{-{d}/{2}}\int\limits_{0}^{\infty} s^{d-1}e^{-\frac{1}{2}s^2}ds\\
    =& \Gamma({d}/{2})2^{\frac{d}{2}-1}(2\pi)^{-{d}/{2}}\left|\QQ\right|^{{1}/{2}}\left(\bm{w}^\top\QQ\bm{w}\right)^{-{d}/{2}}=\Gamma({d}/{2})2^{-1}\pi^{-{d}/{2}}\left|\QQ\right|^{{1}/{2}}\left(\bm{w}^\top\QQ\bm{w}\right)^{-{d}/{2}}
\end{align*}

\begin{figure}[h!]
\centering
\includegraphics[width=0.4\textwidth]{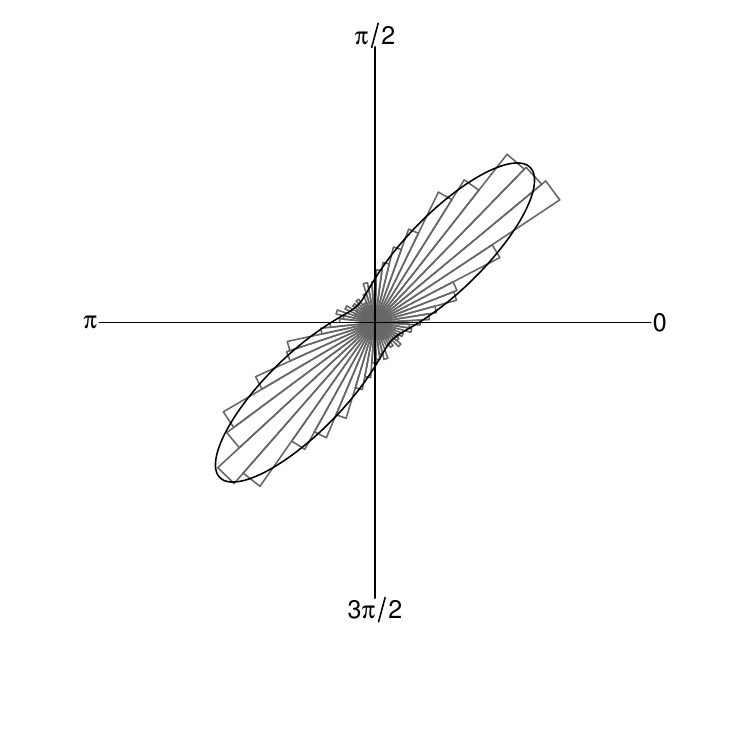}
\caption{Directional density for the bivariate Gaussian distribution, standard Gaussian margins, with $\QQ^{-1}_{11}=\QQ^{-1}_{22}=1$, $\QQ^{-1}_{12}=\QQ^{-1}_{21}=0.8$, plotted over an empirical sample}
\label{fig:supp-angle-mvn}
\end{figure}




\subsection{Multivariate Laplace distribution, standard Laplace margins}
\label{sec:angle_MVL}

We have $\fL(\bm{x})=f_0(\gG(\bm{x}))$, where $f_0(s)=|\QQ|^{{1}/{2}}(2\pi)^{-{d}/{2}}s^v K_v(s)$.\ By~\eqref{eq:angle-density-derivation}, the density of angles is 
\begin{align*}
    f_{\bm{W}}(\bm{w}) =& \int\limits_0^\infty r^{d-1}f_0(r\gG(\bm{w}))dr=\gG(\bm{w})^{-d} \int\limits_0^\infty s^{d-1}f_0(s)ds\\
    =& |\QQ|^{{1}/{2}}(2\pi)^{-{d}/{2}} \left(\bm{w}^\top\QQ\bm{w}\right)^{-{d}/{2}} \int\limits_0^\infty s^{d+v-1}K_v(s)ds=\Gamma({d}/{2})2^{\frac{d}{2}-1}(2\pi)^{-{d}/{2}}\left|\QQ\right|^{{1}/{2}}\left(\bm{w}^\top\QQ\bm{w}\right)^{-{d}/{2}}\\
    =& \Gamma({d}/{2})2^{-1}\pi^{-{d}/{2}}\left|\QQ\right|^{{1}/{2}}\left(\bm{w}^\top\QQ\bm{w}\right)^{-{d}/{2}}
\end{align*}
where the expression for the integral can be found in Sections 6.5--6.7 of~\cite{gradshteyn2014table}.

\begin{figure}[h!]
\centering
\includegraphics[width=0.4\textwidth]{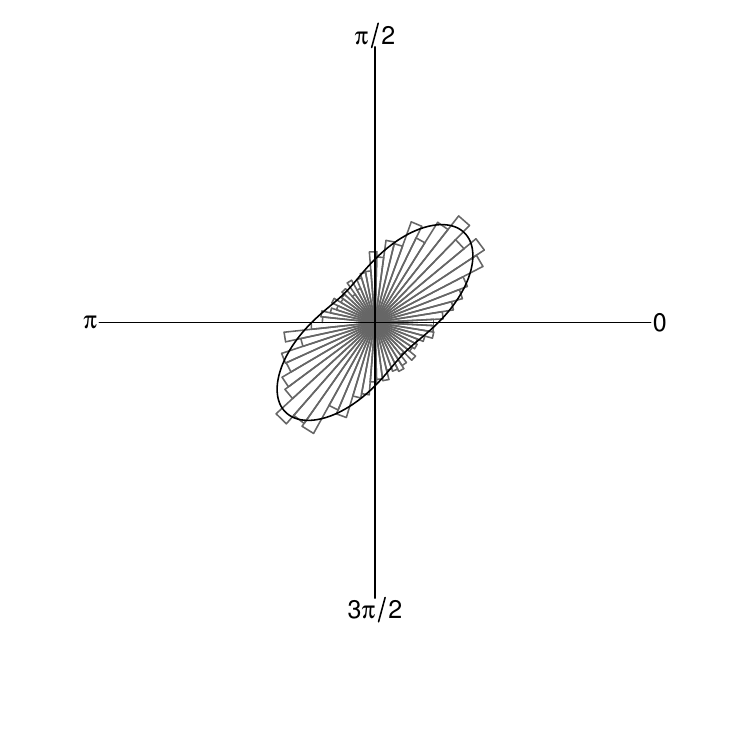}
\caption{Directional density for the bivariate Laplace distribution, standard Laplace margins, with  $\QQ^{-1}_{11}=\QQ^{-1}_{22}=1$, $\QQ^{-1}_{12}=\QQ^{-1}_{21}=0.5$, plotted over an empirical sample}
\label{fig:supp-angle-t}
\end{figure}

\subsection{Multivariate Student $t_{\nu}$ distribution, Student $t_{\nu}$ margins, $\nu>0$}
\label{sec:angle_t}

We have $f(\bm{x})=f_0(\gG(\bm{x}))$, where $f_0(s)=k_{\nu,\QQ}\left(1+\nu^{-1}s^{2}\right)^{-\frac{1}{2}\left(\nu+d\right)}$.\ By~\eqref{eq:angle-density-derivation}, the density of angles is 
\begin{align*}
    f_{\bm{W}}(\bm{w}) =& \int\limits_0^\infty r^{d-1}f_0(r\gG(\bm{w}))dr=\gG(\bm{w})^{-d} \int\limits_0^\infty s^{d-1}f_0(s)ds\\
    =& k_{\nu,\QQ}\left(\bm{w}^\top\QQ\bm{w}\right)^{-{d}/{2}}\int\limits_{0}^\infty s^{d-1}\left(1+\nu^{-1}s^{2}\right)^{-\frac{1}{2}\left(\nu+d\right)} ds\\
    =& \Gamma\left(\frac{\nu+d}{2}\right) \Gamma\left(\frac{\nu}{2}\right)^{-1}\nu^{-{d}/{2}}\pi^{-{d}/{2}}\left|\mathsf{Q}\right|^{{1}/{2}} \left(\bm{w}^\top\QQ\bm{w}\right)^{-{d}/{2}} \nu^{-1+{d}/{2}} \Gamma\left(\frac{\nu+d}{2}\right)^{-1} \Gamma\left(\frac{\nu}{2}\right)\Gamma\left(1+\frac{\nu}{2}\right)\\
    =& \Gamma\left(\frac{d}{2}\right)2^{-{1}/{2}}\pi^{-{d}/{2}}\left|\QQ\right|^{{1}/{2}}\left(\bm{w}^\top\QQ\bm{w}\right)^{-{d}/{2}}
\end{align*}

\begin{figure}[h!]
\centering
\includegraphics[width=0.4\textwidth]{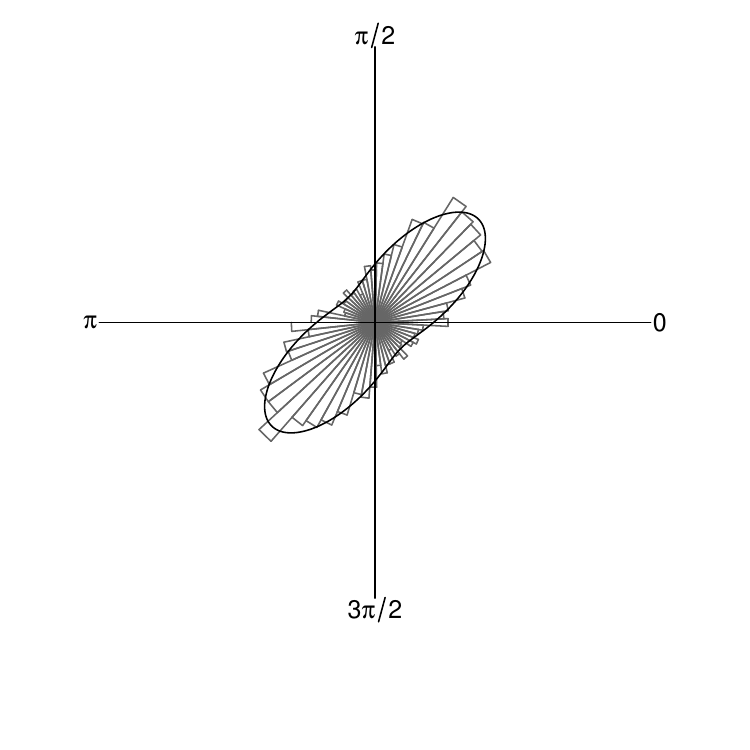}
\caption{Directional density for the bivariate Student $t_\nu$ distribution, $t_\nu$ margins, with $\nu=5$ and $\QQ^{-1}_{11}=\QQ^{-1}_{22}=1$, $\QQ^{-1}_{12}=\QQ^{-1}_{21}=0.6$, plotted over an empirical sample}
\label{fig:supp-angle-t}
\end{figure}

\subsection{Multivariate max-stable distribution, standard Fr\'{e}chet margins}
\label{sec:angle_log}

For a general $-1$-homogeneous exponent function,
\begin{align*}
    f_{\bm{W}}(\bm{w}) =& \int\limits_{0}^\infty r^{d-1}f(r\bm{w})dr\\
    =& \int\limits_0^\infty  r^{d-1}\left[\sum\limits_{\pi\in\Pi}(-1)^{|\pi|}\prod_{s\in\pi} V_s(r\bm{w})\right]e^{-V(r\bm{w})}dr\\
    =& \sum\limits_{\pi\in\Pi}(-1)^{|\pi|}\left[\prod_{s\in\pi} V_s(\bm{w})\right]\int\limits_0^\infty r^{d - \sum_{s\in\pi}(1+|s|) - 1}e^{-r^{-1} V(\bm{w})}dr\\
    =& \sum\limits_{\pi\in\Pi}(-1)^{|\pi|}\left[\prod_{s\in\pi} V_s(\bm{w})\right] \Gamma\left\{\sum_{s\in\pi}(1+|s|) - d\right\}V(\bm{w})^{\sum_{s\in\pi}(1+|s|) - d}
\end{align*}

\noindent Suppose we fix $d=2$, and we assume we have logistic dependence and Fr\'{e}chet margins, then the exponent function is given by $V(x,y)=(x^{-{1}/{\theta}} + y^{-{1}/{\theta}})^{\theta}$, and the true directional density $f_{\bm{W}}$ can be plotted against an empirical sample of angles (see Figure \ref{fig:supp-angle-ms}).

\begin{figure}[h!]
\centering
\includegraphics[width=0.4\textwidth]{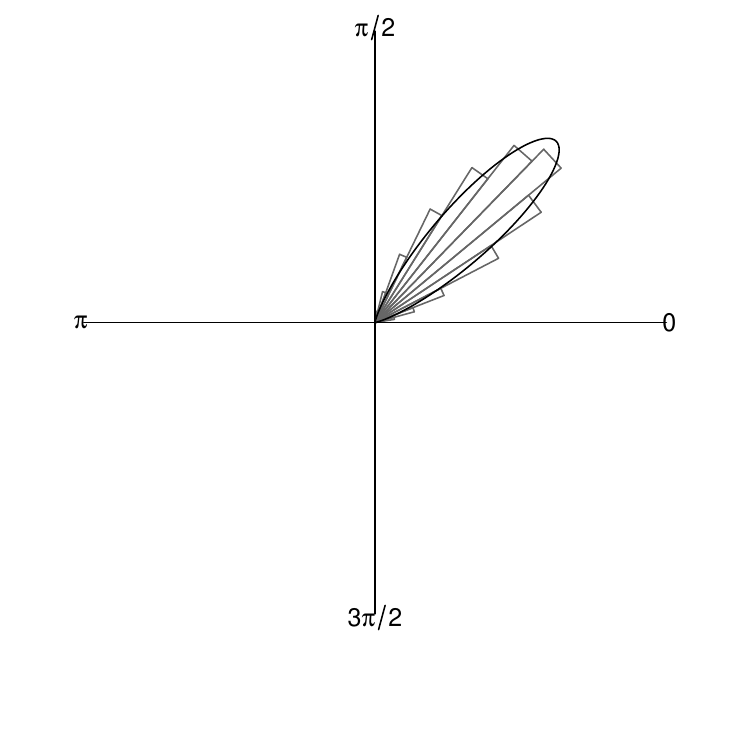}
\caption{Directional density for the bivariate max-stable logistic distribution, standard Fr\'{e}chet margins, with dependence parameter $\theta=0.3$, plotted over an empirical sample}
\label{fig:supp-angle-ms}
\end{figure}



\subsection{Multivariate inverted max-stable distributions, standard exponential margins}\label{sec:inv-ms-angle}
Recall, from \autoref{sec:inv-ms-limit} that the first order gauge function for inverted max-stable distributions in exponential margins is given by the 1-homogeneous stable tail dependence function, $g(\bm{x})=\ell(\bm{x})$.\ With this in mind, the density of angles in this setting is given by
\begin{align*}
    f_{\bm{W}}(\bm{w}) =& \int\limits_{0}^\infty r^{d-1}f(r\bm{w})dr\\
    =& \int\limits_0^\infty  r^{d-1}\left[\sum\limits_{\pi\in\Pi}(-1)^{|\pi|}\prod_{s\in\pi} \ell_s(r\bm{w})\right]e^{-\ell(r\bm{w})}dr\\
    =& \sum\limits_{\pi\in\Pi}(-1)^{|\pi|}\left[\prod_{s\in\pi} \ell_s(\bm{w})\right]\int\limits_0^\infty r^{d + \sum_{s\in\pi}(1-|s|) - 1}e^{-r \ell(\bm{w})}dr\\
    =& \sum\limits_{\pi\in\Pi}(-1)^{|\pi|}\left[\prod_{s\in\pi} \ell_s(\bm{w})\right] \Gamma\left\{d+\sum_{s\in\pi}(1-|s|) \right\}\ell(\bm{w})^{d+\sum_{s\in\pi}(1-|s|)}\\
    =& \left[\sum\limits_{\pi\in\Pi}(-1)^{|\pi|}\left[\prod_{s\in\pi} \ell_s(\bm{w})\right] \Gamma\left\{d+\sum_{s\in\pi}(1-|s|) \right\}g(\bm{w})^{\sum_{s\in\pi}(1-|s|)}\right] g(\bm{w})^d ,
\end{align*}
thus following the expression given in equation \eqref{eq:W_from_L} from Section \ref{sec:density_angles}.
Suppose we fix $d=2$, and we assume we have logistic dependence and standard exponential margins, then the exponent function is given by $\ell(x,y)=(x^{{1}/{\theta}} + y^{{1}/{\theta}})^{\theta}$, and the true directional density $f_{\bm{W}}$ can be plotted against an empirical sample of angles (see Figure \ref{fig:supp-angle-invms}).
\begin{figure}[h!]
\centering
\includegraphics[width=0.4\textwidth]{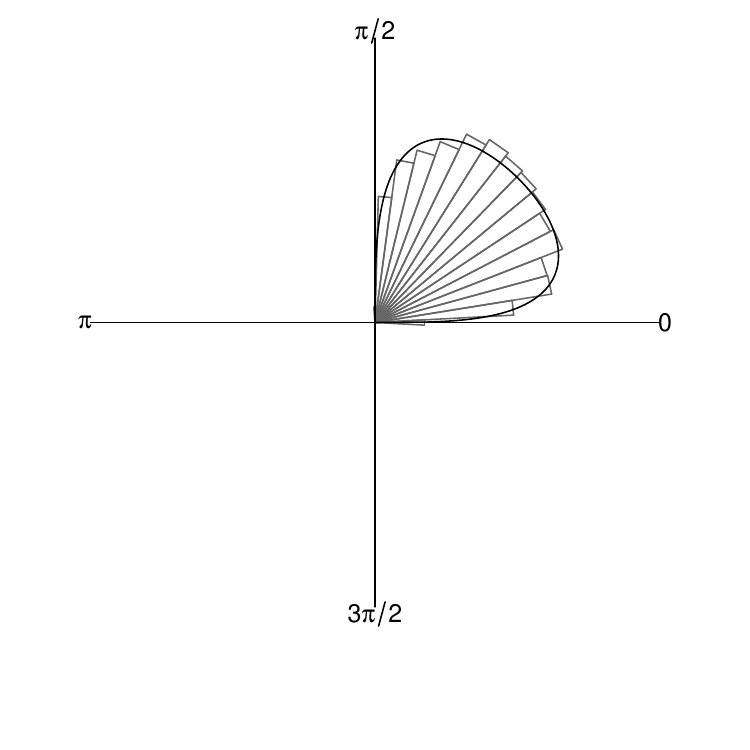}
\caption{Directional density for the bivariate inverted max-stable logistic distribution, standard exponential margins, with dependence parameter $\theta=0.7$, plotted over an empirical sample}
\label{fig:supp-angle-invms}
\end{figure}
\\
\subsection{Wishart distribution}
\label{sec:angle_Wishart}
The density of angles is computed using 
\begin{align*}
    f_{\mathbf{W}}(\mathbf{W}) =& \int\limits_{0}^\infty r^{d-1}f_{\mathbf{X}}(r\mathbf{W})dr\\
    =& 2^{-{(\nu d)}/{2}}\det(\mathbf{V})^{-{\nu}/{2}}\Gamma_d({\nu}/{2})^{-1} \det(\mathbf{W})^{\frac{1}{2}(\nu - d - 1)}\Gamma\left(\frac{d}{2}\left(1+\nu - d\right)\right)\left(\frac{1}{2}\text{tr}\left(\mathbf{V}^{-1}\mathbf{W}\right)\right)^{-\frac{d}{2}\left(1+\nu - d\right)}
\end{align*}
where in this setting, radii and angles are defined using a matrix norm; $R=\|\mathbf{X}\|$, $\mathbf{W}={\mathbf{X}}/{\|\mathbf{X}\|}$.

\section{Statistical inference}
\label{sup:Statistical_inference}
\subsection{Inference for latent variables}
\label{supp:Posterior}
Given $n$ independent
observations $\dat:=\{r_1\bm w_1,\ldots,r_n\bm w_n\}$ from
$\bm X=R\bm W\in \RR^d$, the latent quantities of interest are the set
$\G$, the quantile set $\QS_q$ and the set $\LL$.\ In what follows, we
detail the procedure to obtain realisations from the joint posterior
distribution of these quantities.\

We model the logarithms of the random radial functions $\rquant$,
${\rG}_q$ and $r_{\LL}$ as Mat\'ern (Gaussian) fields on $\SSS^{d-1}$
using the stochastic partial differential equation (SPDE) approach by
\cite{lindetal11}, using $\alpha=2$ (see their Equation~(2)) which is
also the default option in the \texttt{R-INLA} package
(\texttt{www.r-inla.org}).\ We denote the respective random intercepts
$\beta_{\QS},\beta_{\G},\beta_{\LL}\in\RR$ as well as their stochastic
weights in the finite element representation
\citep[Equation~(9)]{lindetal11} as
$\bm z_{\QS},\bm z_{\G},\bm z_{\LL}\in\RR^p$.
We denote by $\bm \theta=(\beta_{\QS},\bm z_{\QS},\beta_{\G},\bm z_{\G},\beta_{\LL},\bm z_{\LL})\in\RR^{3p+3}$ the set of all latent variables for our models, and it follows that the joint posterior distribution of $\bm \theta$ fully determines that of $\QSq$, $\G$ and $\LL$.\ Due to the hierarchical structure of all proposed models detailed in Sections~\ref{sec:QR} and~\ref{sec:likelihood}, the joint posterior
density of $\bm \theta$ factorises according to $\pi[\bm \theta \mid \dat]= \pi[\beta_{\G},\bm z_{\G}, \beta_{\LL},\bm z_{\LL} \mid \beta_{\QS},\bm z_{\QS},\dat]\pi[\beta_{\QS},\bm z_{\QS}\mid \dat]$.\
We first fit the Bayesian Gamma log-linear quantile regression model
described in Section~\ref{sec:QR} for a fixed probability $q$
to all observations $\dat$.\ Samples
$\{(\beta_{\QS,i},\bm z_{\QS,i}):i=1,\ldots,n_{\QS}\}$ from the posterior density
$\pi[\beta_{\QS},\bm z_{\QS}\mid \dat]$ map to a set of radial functions $\{\rquants{i}:i=1,\ldots,n_{\QS}\}$ which are then interpreted as
candidate radial functions of a $q$-th quantile set.\ 
Given a radial
function $\rquants{i}$, we define a set of exceedances through
\begin{equation}
\label{eq:set_Exc}
  \setExc_i = \left\{\left(r_j,\bm w_j\right)\in (0,\infty)\times\SSS^{d-1}\;:\; r_j>\rquants{i}(\bm w_j),\, r_j\bm w_j \in \RR^{d},\,j=1,\ldots,n\right\}.
\end{equation}
We then fit model $\Mh$, $\Mg$, or $\Mgh$ to each collection of
exceedances $\setExc_i$, as detailed in Section~\ref{sec:likelihood},
and also possibly include angles from non-exceedance data in the
likelihood.\ This procedure yields a conditional posterior density
$\pi[\beta_{\G},\bm z_{\G}, \beta_{\LL},\bm z_{\LL} \mid
\beta_{\QS,i},\bm z_{\QS,i},\dat]$ for each ${i=1,\ldots,n_{\QS}}$.\
Sampling $n_{\G \LL}$ realisations jointly from each
$\pi[\beta_{\G},\bm z_{\G}, \beta_{\LL},\bm z_{\LL} \mid
\beta_{\QS,i},\bm z_{\QS,i},\dat]$ provides an assembled sample of
$n_{\QS}\cdot n_{\G \LL}$ realisations from the joint posterior
distribution of $\bm \theta$,
\begin{equation}
\label{eq:samp_theta_ij}
    \left\{\bm \theta_{i,j}=\left(\beta_{\QS,i},\bm z_{\QS,i},\beta_{\G,(i,j)},\bm z_{\G,(i,j)},\beta_{\LL,(i,j)},\bm z_{\LL,(i,j)}\right)\in\bm \Theta\;:\;i=1,\ldots,n_{\QS},
    j=1,\ldots,n_{\G \LL}\right\}.
\end{equation}
For simplicity and without loss, we re-index the
sample~\eqref{eq:samp_theta_ij} from the posterior distribution of
$\bm \theta$ to $\{\bm \theta_{i}\,:\,i=1,\ldots,n_{\bm \theta}\}$
with $n_{\bm \theta}:=n_{\QS}\cdot n_{\G \LL}$ and use this notation
in the next sections.\ We shall also refer to the sampled latent
functions $\rquants{i}$, $\rGs{i}$, and $r_{\LL,i}$ constructed from
$\bm \theta_i$.\ 

Interval estimation for the latent fields $\rquant$, ${\rG}_q$, and $r_{\LL}$ is
accomplished via prediction
intervals~\citep[see][]{Bolin_excursions_2018}.\ A $(1-\alpha)$
prediction-interval for the value of a random field
$Y\,:\,\Omega \to \smash{\RR^{\SSS^{d-1}}}$ at an angle $\bm w\in\SSS^{d-1}$
is the closed line segment
$[q_\alpha(\bm w) \bm w : q_{1-\alpha}(\bm w) \bm w]$ 
where $q_{\alpha}(\bm w)$ is the $\alpha$-quantile of the distribution
of $Y(\bm w)$.\ A $(1-\alpha)$ prediction interval for the process $Y$
defined on $\SSS^{d-1}$ consists of the strip 
$R_{1-\alpha}:=\cup_{\bm w\in \SSS^{d-1}}R_{1-\alpha}(\bm w)$ defined through $R_{1-\alpha}(\bm w):=[q_\rho(\bm w) \bm w :
q_{1-\rho}(\bm w) \bm w]$
for some $\rho$ such that $q_{\rho}(\bm w)$ and $q_{1-\rho}(\bm w)$ satisfy
\[
  \PR\left[ q_\rho(\bm w) \leq Y(\bm w)\leq q_{1-\rho}(\bm w),\bm w\in\SSS^{d-1}\right] =
  1-\alpha.
\]
In the context of our latent fields $\rquant$, ${\rG}_1$, and $r_{\LL}$,
$(1-\alpha)$ prediction intervals consist of sets withing within which
the true functions lie entirely with probability $1-\alpha$.\
Prediction intervals can be obtained from a sample from the posterior
distribution of a parameter of interest using the \texttt{excursions}
package in \sfR.\
\citep{Bolin_excursions_2015,Bolin_excursions_2017}.\ \color{black}

\subsection{Rare event probability estimation}
\label{sup:Prediction}

Given $n$ independent observations
$\dat:=\{r_1\bm w_1,\ldots,r_n\bm w_n\}$ from
$\bm X=R\bm W\in \RR^d$, interest lies in the estimation of the
probability that a new draw from $R\bm W$ falls within some Borel set
$B\in\Rstar$ far enough from the origin ($\Orig\in\RR^d$).\
We provide a statistical inference framework for
${\PR_{B\mid\dat}:=\PR[R\bm W\in B\mid \dat]}$.\ In practice, common
types of sets of interest are boxes
$\{\bm x \in \RR^d: \bm a \leq \bm x \leq \bm b,\, \bm a,\bm
b\in\RR^d\}$ and sets of the form $\{r\bm w :r>h(\bm w)>0,r\in(0,\infty),\bm w \in \SSS^{d-1}\}$ for some positive function $h$ defined on $\SSS^{d-1}$, both
starshaped at $\Orig$.\ 
This is useful in our setting as it allows for an exact probability
calculation with respect to our model specification for exceedances.\ 
Simple adaptations can be made to consider more complex sets.\

For a set $B\in\Rstar$ starshaped at $\Orig$, define $S_B:=\{\bm w \in \SSS^{d-1}: r\bm w\in B, r\in(0,\infty]\} \subseteq
\SSS^{d-1}$ and consider, for any $\bm w \in S_B$, the
partition $I_{\inf}(\bm w)\cup I_B(\bm w)\cup I_{\sup} (\bm w)$ of
$(0,\infty)$, where
$I_{\inf}(\bm w):=(0,r_{B_{\inf}}(\bm w))$,
$I_B(\bm w):=[r_{B_{\inf}}(\bm w),r_{B}(\bm w)]$, and
$I_{\sup}(\bm w):=(r_{B}(\bm w),\infty)$ for the radial function
$r_{B}(\bm w)$ of $B$ and the function $r_{B_{\inf}}(\bm w)=\inf\{r>0:r\bm w \in B\}$.\ Then,
\begin{IEEEeqnarray}{rCl}
  \label{eq:P_B}
  \PR_{B\mid \dat}&=& \PR[R\in I_{B}(\bm W),\bm W \in S_B\mid \dat],
\end{IEEEeqnarray}
The posterior predictive distribution in~\eqref{eq:P_B} is given from the posterior density of $\bm \theta$ via
\begin{IEEEeqnarray}{rCl}
  \label{eq:P_B_from_posterior}
  \PR_{B\mid \dat}&=& \int_{\RR^{3p+3}}\PR[R\in I_{B}(\bm W),\bm W \in S_B\mid \bm \theta]\pi[\bm \theta\mid\dat]\,d\bm\theta,
\end{IEEEeqnarray}
where $\pi[\bm \theta\mid\dat]$ is obtained following the procedure described in Section~\ref{supp:Posterior}.\ Since $\QSq$ is a latent variable, $\rquant(\bm w)$ intersects with $I_{\inf}(\bm w)$, $I_B(\bm w)$, or $I_{\sup}(\bm w)$ with non-zero probability for all $\bm w\in S_B$.\ Hence, from total probability, $\PR_{B\mid \bm \theta}:=\PR[R\in I_{B}(\bm W),\bm W \in S_B\mid \bm \theta]$ of~\eqref{eq:P_B_from_posterior} satisfies
\begin{IEEEeqnarray}{rCl}
  \label{eq:cases}
  \PR_{B\mid \bm \theta}= \PR[R\bm W \in B, R>\rquant(\bm W)\mid \bm \theta] + \PR[R\bm W \in B, R\leq \rquant(\bm W)\mid \bm \theta],
\end{IEEEeqnarray}
and our model is best suited for sets $B$ such that the second term in the
equation~\eqref{eq:cases} is small.\ The first term in equation~\eqref{eq:cases} decomposes into
\begin{IEEEeqnarray}{rCl}
  \label{eq:post_pred_L}
  &&\PR[R\bm W \in B, R>\rquant(\bm W)\mid \bm \theta]= \PR[R\in I_B(\bm W)\mid \bm W \in S_B, R>\rquant(\bm W),\bm \theta]\times\nonumber\\
  &&\qquad \qquad \qquad \times \; \PR[\bm W \in S_B \mid
  R>\rquant(\bm W),\bm \theta]\PR[R>\rquant(\bm W)\mid \bm \theta].\ \quad
\end{IEEEeqnarray}
By the assumptions detailed in Section~\ref{sec:QR}, the last term $\PR[R>\rquant(\bm W)\mid \bm \theta]$ in
equation~\eqref{eq:post_pred_L} equals $1-q$ for all $\bm \theta\sim\pi[\bm \theta\mid\dat]$.\ 
Following our model formulation for $\bm W$, the second term in
equation~\eqref{eq:post_pred_L} corresponds to the predictive
distribution of the angles given $\bm \theta$ and is given by
\begin{IEEEeqnarray}{rCl}
    \label{eq:post_pred_angles}
  \PR[\bm W \in S_B \mid R>\rquant(\bm W),\bm \theta]&=&
  \int_{S_B}\widetilde{f}_{\bm W\mid \bm\theta}(\bm w\mid \bm\theta)d\bm w\bigg/\int_{\SSS^{d-1}}\widetilde{f}_{\bm W\mid \bm\theta}(\bm w\mid \bm\theta)d\bm w,\quad
\end{IEEEeqnarray}
where $\widetilde{f}_{\bm W}$ is chosen from $\Mh$, $\Mg$, or $\Mgh$
(see Section~\ref{sec:density_angles}).\ The integrals in~\eqref{eq:post_pred_angles} are computed
efficiently via numerical integration for $d=2$ or $3$.\ The first
term in~\eqref{eq:post_pred_L} is obtained through
\begin{IEEEeqnarray}{rCl}
  \label{eq:Expec_angles}
  &&\PR[R\in I_B(\bm W)\mid \bm W \in S_B, R>\rquant(\bm W),\bm \theta]=\EE_{\bm W\mid \bm W\in S_B,\dat}(\PR[R\in I_B(\bm W)\mid R>\rquant(\bm W),\bm \theta]),\quad~~ 
\end{IEEEeqnarray}
where $\PR[R\in I_B(\bm w)\mid \bm W = \bm w, R>\rquant(\bm w),\bm \theta]$, needed for the expectation calculation in equation~\eqref{eq:Expec_angles}, equals
\begin{IEEEeqnarray}{rCl}
  \label{eq:exact_prob}
  &&F_{R_E\mid \bm W}\left[\frac{\max\left\{r_{B}(\bm w),\rquant(\bm w)\right\}-\rquant(\bm w)}{{\rG}_q(\bm w)}\bigg\lvert \bm w\right]-F_{R_E\mid \bm W}\left[\frac{\max\left\{r_{B_{\inf}}(\bm w),\rquant(\bm w)\right\}-\rquant(\bm w)}{{\rG}_q(\bm w)}\bigg\lvert \bm w\right],\quad
\end{IEEEeqnarray}
with 
$F_{R_E\mid \bm W}(z\mid \bm w) = 1-[1+\xi(\bm w) z]_+^{-1/\xi(\bm w)}$.\
We approximate 
expectation~\eqref{eq:Expec_angles} through Monte-Carlo integration by
sampling angles
$\{\widetilde{\bm w}_1,\ldots,\widetilde{\bm w}_{n_w}\}$ from
$f_{\bm W}$ restricted to the set $S_B$ via
\begin{IEEEeqnarray}{rCl}
  \label{eq:MC_angles}
  n_{w}^{-1}\sum_{i=1}^{n_w}\PR[R\in I_B(\widetilde{\bm w}_i)\mid R>\rquant(\widetilde{\bm w}_i),\bm \theta] &\to& \PR[R\in I_B(\bm W)\mid \bm W \in S_B, R>\rquant(\bm W),\bm \theta],
\end{IEEEeqnarray}
as $n_w\to\infty$.\ The second term in equation~\eqref{eq:cases},
$\PR[R\bm W \in B, R\leq \rquant(\bm W)\mid \bm \theta]=: p$,
corresponds to a probability over a region in which our model is not
specified.\ We hence consider it as the success probability of the binomial random variable describing the number of observations
falling in the set $\QS_q \cap B$.\ For each
$i\in\{1,\ldots,n_{\bm \theta}\}$, we impose a binomial likelihood and
beta prior for the parameter $p$, leading to a closed form beta
posterior distribution.\ When $\QS_q \cap B\neq \varnothing$, we
sample a probability from the posterior distribution of $p$ and add it
accordingly in equation~\eqref{eq:cases}.\

In practice, inference for $\PR_{B\mid \dat}$ is performed by first collecting a sample $\{\bm \theta_1,\ldots,\bm \theta_{n_{\bm\theta}}\}$ from $\pi[\bm \theta\mid\dat]$ through the procedure described in Section~\ref{supp:Posterior}, and then calculating $\PR_{B\mid \bm \theta_i}$ for each $i\in\{1,\ldots,n_{\bm \theta}\}$.\ \textit{A posteriori} mean estimates corresponding to the Monte-Carlo integration
\begin{IEEEeqnarray}{rCl}
  \label{eq:P_B_MonteCarlo}
  n_{\bm \theta}^{-1}\sum_{i=1}^{n_{\bm \theta}}\PR_{B\mid \bm \theta_i}&\to& \PR_{B\mid \dat}, \quad \text{as } n_{\bm \theta} \to \infty,
\end{IEEEeqnarray}
approximating equation~\eqref{eq:P_B_from_posterior}.\ Equal-tailed credible intervals from the sample $\{\PR_{B\mid \bm\theta_i}:i=1,\ldots,n_{\theta}\}$ hence complete the inference procedure for $\PR_{B\mid \dat}$.\

\subsection{Model selection and validation}
\label{sup:Validation}
Our model formulation gives rise to various modelling choices needing
to be assessed and validated.\ For instance, as discussed in
Section~\ref{sec:likelihood}, selecting the most appropriate model
within the set of candidate models $\Mh$, $\Mg$, and $\Mgh$ given
observed data amounts to analysing the properties of the distribution
of $R,\bm W\mid R>\rquant(\bm W)$ with respect to the association
between $\gG$ and $f_{\bm W}$.\ Another modelling choice requiring
assessment is that of the set of angles contributing to
likelihood~\eqref{eq:likelihood-v2} since it translates into a
bias-variance trade-off for $\G$ and $\LL$.\ Further, hyper-parameters
values imposed on the prior distributions of $\QSq$, $\G$, and $\LL$
imply \textit{a priori} information on the differentiability
properties of their boundaries (see Section~\ref{supp:Posterior}), and
their impact on the posterior may need assessment.\ Finally, the usual
concern of the sensitivity of the posterior distribution of the latent
variables to the return period for the latent threshold function
$\rquant$ (Section~\ref{sec:QR}) remains.\ We detail possible methods
for model selection and validation below.\

The nested structure of model $\Mg$
within the parameter space of model $\Mgh$ raises the question of
whether model $\Mg$ can serve as a sensible simplification of model
$\Mgh$ given observed data
$\dat = \{r_1\bm w_1,\ldots, r_n\bm w_n\}$.\ As discussed in
Section~\ref{sec:likelihood}, evidence for a constant $r_{\B}$ in model $\Mgh$ points to evidence that $\Mg$ can serve as an
sensible simplification of $\Mgh$, or equivalently that the observed
data $\dat$ may come from a process following a homothetic density with respect to its gauge function $\gG$.\
Given a sample of random functions $\{\psi(\bm w)^\top\bm z_{\B,i}:i=1,\ldots,n_{\theta}\}$
from the posterior distribution of the latent random field
$\psi(\bm w)^\top\bm z_{\B}$, we obtain a $0.95$
prediction interval, $R_{0.95}$, for $\psi(\bm W)^\top\bm z_{\B}$ through the procedure described in Section~\ref{supp:Posterior}.\ 
We say there is evidence against a homothetic underlying desnity whenever there is some non-empty subset $S_0\subset\SSS^{d-1}$ such that $0\notin R_{0.95}(\bm w)$ for $\bm w \in S_0$.\

The sensitivity of the posterior distributions of the latent
parameters to the remaining modelling choices discussed in the
beginning of this section can be assessed through the quality of the
calibration of the posterior predictive distribution to the observed
data.\ Given a set of observed exceedances $\setExc_i$ of a sampled
latent function $\rquants{i}$ (see expression~\eqref{eq:set_Exc}), we
wish to assess the calibration of the predictive distribution for the
excess variable $R-\rquants{i}(\bm W)\mid R>\rquants{i}(\bm W)$.\ We thus
compare each observed excess $r_j-\rquants{i}(\bm w_j)$ with the
quantile of the Exponential$({\rG}_q(\bm w_j))$, with
$\log{\rG}_q(\bm w)=\beta_{\G} + \bm \psi(\bm w)^\top \bm z_{\G}$ and
$(\beta_{\G},\bm z_{\G})\sim\pi[\beta_{\G}, \bm
z_{\G}\mid\beta_{\QS,i},\bm z_{\QS,i},\dat]$.\ We draw exceedance
radii $\{\widetilde r_1,\ldots,\widetilde r_{n_s}\}$ from their
predictive distribution Exp$({\rG}_q(\bm w_j))$, and define the empirical
distribution function
$F_{\bm w_j,n_s}(r):={n_s}^{-1}\sum_{k=1}^{n_s}\mathbbm{1}[\widetilde
r_k\leq r]$, for $r\in(0,\infty)$.\ A probability-probability (PP)
plot for the exceedances $\setExc_i$ is then given by
\begin{IEEEeqnarray}{rCl}
\label{eq:pp_exc}
  \left\{\left(\frac{j}{\lvert \setExc_i \rvert +1} ,p_{(j)}\right):j=1,\ldots,\lvert\setExc_i\vert\right\},
\end{IEEEeqnarray}
where $p_{(j)}$ denotes the $j$-th order statistic of the sample $\{F_{\bm w_j,n_s}(r_j-\rquants{i}(\bm w_j)):j=1,\ldots,\lvert\setExc_i\vert\}$.\ A quantile-quantile (QQ) plot in unit exponential margins for the exceedances $\setExc_i$ is then easily obtained via
probability-probability (PP) plot for the exceedances $\setExc_i$ via
\begin{IEEEeqnarray}{rCl}
\label{eq:qq_exc}
  \left\{\left(-\log\left(1-\frac{j}{\lvert \setExc_i \rvert +1}\right) ,-\log\left(1-p_{(j)}\right)\right):j=1,\ldots,\lvert\setExc_i\vert\right\}.
\end{IEEEeqnarray}
In a similar fashion, we can obtain PP plots for the predictive
distribution of $\bm W\mid R>\rquants{i}(\bm W)$.\ This is achieved by
sampling angles
$\{\widetilde{\bm w}_1,\ldots,\widetilde{\bm w}_{n_s}\}$ from the
predictive density
$f_{\bm W\mid \beta_{\G},\bm z_{\G}, \beta_{\B},\bm z_{\B},\dat}$
where
$(\beta_{\G},\bm z_{\G}, \beta_{\B},\bm z_{\B})\sim\pi[\beta_{\G},\bm
z_{\G}, \beta_{\B},\bm z_{\B} \mid \beta_{\QS,i},\bm
z_{\QS,i},\dat]$.\ We transform the sampled angles and the observed
angles from $\setExc$ to spherical coordinates, respectively denoted
$\{\widetilde{\bm
  \varphi}_j=(\widetilde{\varphi}_{1,j},\ldots,\widetilde{\varphi}_{d-1,j})\in\Phi:j=1,\ldots,n_s\}$
and
$\{\bm
\varphi_j=(\varphi_{1,j},\ldots,\varphi_{d-1,j})\in\Phi:j=1,\ldots,\lvert\setExc_i\rvert\}$,
and consider the empirical distribution function
$F_{n_s}(\bm
\varphi):={n_s}^{-1}\sum_{k=1}^{n_s}\mathbbm{1}[\widetilde{\bm
  \varphi}_k\leq \bm \varphi]$.\ A PP plot for the calibration of the
predictive distribution of angles to the observed angles in
$\setExc_i$ is then given by expression~\eqref{eq:pp_exc} with
$p_{(j)}$ denoting the $j$-th order statistic of the sample
$\{F_{n_s}(\bm \varphi_j):j=1,\ldots,\lvert\setExc_i\vert\}$.\

\newpage
\section{Simulation study}
\label{sec:sim_study}

\subsection{Setting}
\label{sec:Setting}
We here describe the setting of a simulation study performed using
samples from classical parametric bivariate copulas.\ We provide a
brief summary of the properties of these
distributions.\ 

A random variable $\bm X_L$ is said to follow a bivariate Laplace
distribution with precision matrix $\mathsf{Q}=\mathsf{\Sigma}^{-1}$
for a positive definite covariance $\mathsf{\Sigma}$---with unit
diagonal terms and a correlation value $\rho\in[-1,1]$ on the
remaining off-diagonal terms
The parameter $\rho$ hence fully determines a member of the family of
the bivariate normal distribution as well as its extremal properties
and in the subsequent sections, we use $\rho=0.5$.\

A random vector $\bm X\in \RR^2$ has a bivariate normal distribution
with Laplace marginal distributions and precision matrix
$\mathsf{Q}=\Sigma^{-1}$ for a positive definite covariance
$\mathsf{\Sigma}$---with unit diagonal terms and a correlation value
$\rho\in[-1,1]$ on the remaining off-diagonal
terms
.\ In what follows, we use the bivariate normal distribution as an
instance of asymptotically independent distribution and draw samples
using the value $\rho=0.8$.\

A random vector $\bm X\in \RR^2$ has a bivariate max-stable logistic
distribution with Laplace margins 
parametrised by an exponent function
$V(x_1,\dots,x_d)=\left(\sum_{j=1}^d
  x_j^{-{1}/{\theta}}\right)^\theta$ 
where the dependence parameter $\theta\in(0,1]$.\ The max-stable
logistic distribution belongs the class of asymptotic dependent
distributions provided $\theta \neq 1$.\ For any $\theta\in(0,1)$,
$\alpha_{\mid i}=1$ and $\eta=1$.\ We carry the simulation study using
$\theta=0.3$.\ Table~\ref{tbl:true_extr_coeff} presents, for each
family of distributions, the values of extremal coefficients
associated with fixed parameters.\
\begin{table}[htbp!]
  \centering
  \begin{tabular}{llll}
    \toprule
    Extremal coefficient & Laplace  & Normal & Max-stable logistic \\
    \toprule
    $\alpha_{\mid i}$& $\rho=0.5$ & $\rho^2=0.64$ & 1 \\
    $\eta$& $\sqrt{(1-\rho^2)/(2-2\rho)}\approx 0.866$ & $(1+\rho)/2=0.9$ & 1\\
    \toprule
  \end{tabular}
  \label{tbl:true_extr_coeff}
  \caption{Values of the extremal coefficients $\alpha_{\mid i}$ and $\eta$ for the bivariate Laplace, normal, and max-stable logistic distributions with specified parameters.\ \label{tbl:true_extr_coeff}}
\end{table}

\subsection{Posterior distribution of latent variables}
\label{sec:sim_study_posterior}
We perform a simulation study to obtain realisations from the
posterior distribution of the latent sets $\QS_q$,
$\LL$, and $\G$, as well as from the posterior predictive distribution
of $R$ and $\bm W$.\ We centre the simulation study on the
bivariate Laplace, normal, and max-stable logistic families of
distributions.\ The bivariate Laplace distribution is chosen for the
homothetic structure of its density with respect to its gauge function $\gG$, implying
that its marginal density of the angles is exactly that assumed by model
$\Mg$.\ The bivariate max-stable logistic and normal distributions are
chosen respectively for their asymptotic dependence and independence properties.\ From each of these distributions, we
generate $100$ samples of size $n_1=1000$ and $100$ samples of size
$n_2=10000$, resulting in six cases (600 samples) with which we carry
our analyses.\
\begin{figure}[htbp!]
  \centering
  \includegraphics[width=\textwidth]{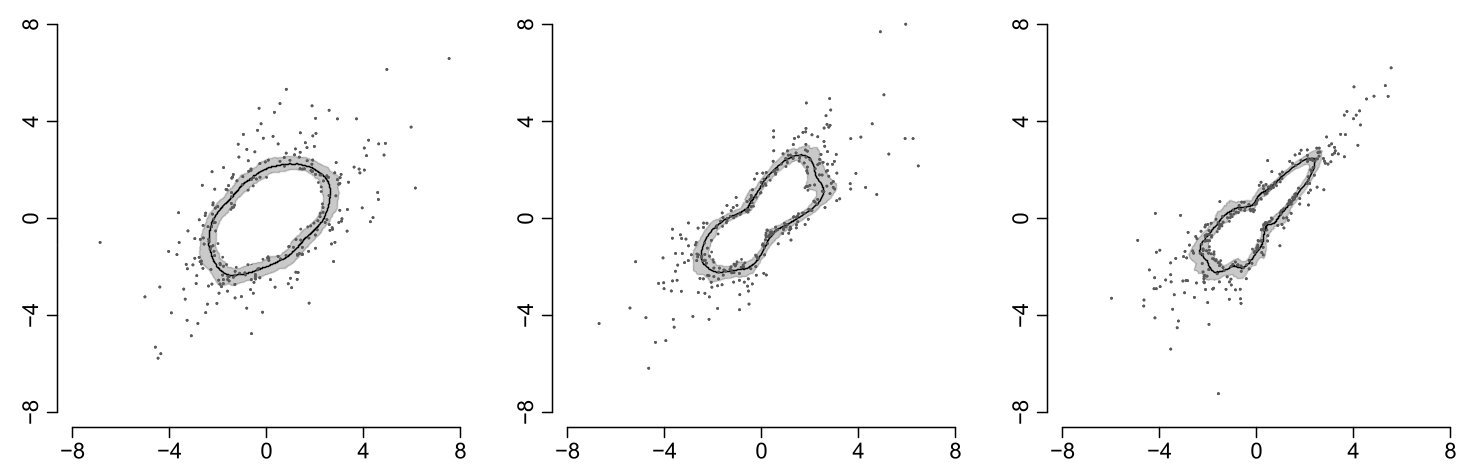}
  \caption{Angle-wise posterior mean and $0.95$ prediction intervals for
    $100$ samples from the posterior distribution of
    $r_{\QS_{0.8}}$ obtained from $n_1$ draws from a
    bivariate Laplace (left), normal (centre), max-stable logistic
    (right) distributions in Laplace margins.\ Points correspond to
    exceedances of at least one sampled posterior
    $r_{\QS_{0.8}}$.\ \label{fig:r_0q}}
  \includegraphics[width=\textwidth]{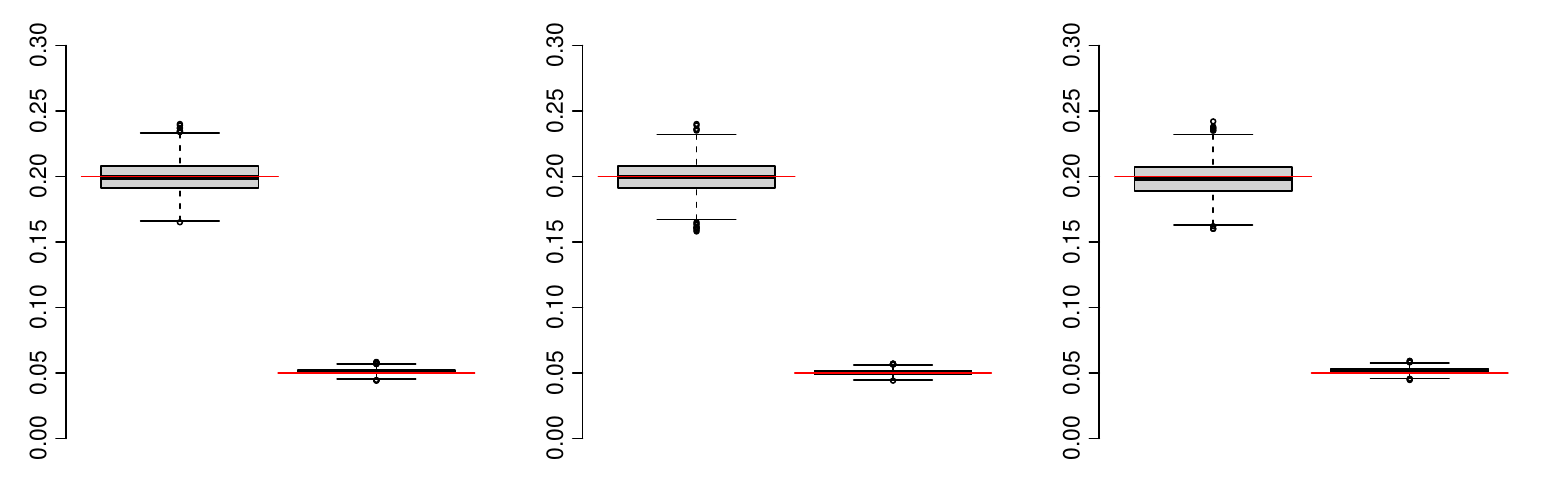}
  \caption{Boxplots of the proportion of data assigned as
    exceedances by each realisation from the posterior distribution
    of $\rquant$ for $n$ draws from the bivariate Laplace
    (left), normal (centre), max-stable logistic (right)
    distributions in Laplace margins.\ Within panels:\ $n_1$,
    $1-q_1=0.2$ (left), $n_2$, $1-q_2=0.05$ (right).\ Red segments:
    values of $1-q_k$ used in the fitting
    procedure.\ \label{fig:QR_prop_excesses}}
\end{figure}

Following the posterior sampling procedure described in
Section~\ref{supp:Posterior}, we first fit a bayesian Gamma quantile
regression to each of the $600$ samples described above using
$q_1=0.8$ for each sample of size $n_1$ and $q_2=0.98$ for each
samples of size $n_2$.\ Here, $q_1$ and $q_2$ are chosen such that the
expected number of exceedances of $\QS_{q_k}$,
$n_k\cdot\PR[R\bm W \in \QS_{q_k}^\prime\mid \dat]$, equals $200$ for
${k=\{1,2\}}$.\ Again for each sample, we obtain $20$ realisations $\{r_{\QS_{{q_k},i}}:i=1,\ldots,20\}$ 
from the posterior distribution of $r_{\QS_{q_k}}$ and define $20$
sets of exceedances accordingly.\ Figure~\ref{fig:r_0q} displays the
angle-wise posterior mean and $0.95$ prediction intervals
for $r_{\QS_{0.8}}$ resulting from fitting quantile regressions to
one sample of $n_1$ draws from each of the three distributions of
interest.\ Figure~\ref{fig:r_0q} also displays the data points defined
as exceedances for at least one realisation from the posterior
distribution of $r_{\QS_{0.8}}$.\ A brief assessment of the
performance of the quantile regression is made through
Figure~\ref{fig:QR_prop_excesses}, which consists of boxplots of the
number of exceedances of each realisation from the posterior
distribution of $r_{\QS_{q_k}}$ for each of the $600$ samples
described above grouped by family of distributions and sample size.\
\begin{figure}[htbp!]
  \centering
  \includegraphics[width=\textwidth]{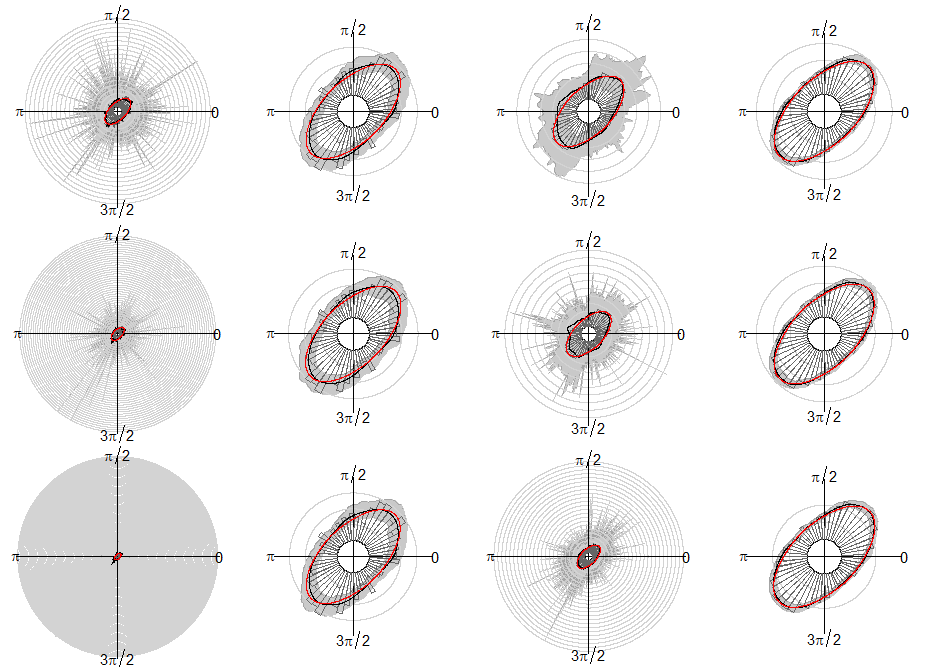}
  \caption{ Histogram of the angles $\{\varphi_j\in[0,2\pi]:\bm w_j = (\cos\varphi_j,\sin\varphi_j), j=1,\ldots,n_k\}$
    from a sample of $n$ observations from a bivariate Laplace
    distribution with angle-wise posterior mean of $\pi[\bm W \mid \G,\LL,\dat]$ (black line),
    $0.95$ prediction interval for
    $\pi[\bm W \mid \G,\LL,\dat]$ (grey band) output from model
    $\mathcal{M}_i$, and the true density $f_{\bm W}$ for
    the bivariate Laplace (red line).\ Rows from top to bottom:
    $\Mh$, $\Mg$, and $\Mgh$.\ First column:\ $\pi[\bm W \mid \G,\LL,\dat]$ fitted using exceedances of
    $\QS_{0.8}$ only (sample size $n_1$).\ Second column:\ 
    $\pi[\bm W \mid \G,\LL,\dat]$ fitted using all sampled
    angles (sample size $n_1$).\ Third column:\ $\pi[\bm W \mid \G,\LL,\dat]$
    fitted using exceedances of $\QS_{0.98}$ only (sample size $n_2$).\ Fourth column:\ $\pi[\bm W \mid \G,\LL,\dat]$ fitted using
    all sampled angles (sample size $n_2$).\ \label{fig:MVL_f_W}}
\end{figure}

Given the satisfying properties of the posterior distribution of
$r_{\QS_{q_k}}$, we proceed with the investigation of the joint model
for $R$ and $\bm W$.\ Following the procedure described in
Section~\ref{supp:Posterior}, we obtain realisations from the
posterior distributions of $r_{\G}$ and $r_{\LL}$---say
$\{r_{\G_i}:i=1,\ldots,n_\theta\}$ and
$\{r_{\LL_i}:i=1,\ldots,n_\theta\}$ for
$n_\theta=n_{\QS}n_{\G\LL}=20\cdot50$---using each model $\Mh$, $\Mg$,
and $\Mgh$ on each sample described above.\ Due to
expression~\eqref{cor:equality_of_W_distributions}, we consider the
two cases of including only the angles of exceedances of the sampled
quantile sets $\QS_{q_k,i}$ and including all observed angles
$(\bm w_1,\ldots,\bm w_n)$ in the model likelihood.\
Figure~\ref{fig:MVL_f_W} provides a comparison of the posterior
predictive density of angles $\pi[\bm W \mid \G,\LL,\dat]$ output by
models $\Mh$, $\Mg$, and $\Mgh$ fitted on samples of sizes $n_1$ and
$n_2$ from a multivariate Laplace distribution in Laplace margins.\ As
discussed in Section~\ref{sec:density_angles}, the marginal density of
angles for the multivariate Laplace distribution is known exactly
given the homothetic structure of its density with respect to its
gauge function $\gG$.\ This allows for validation of the performance
of $\Mh$, $\Mg$, and $\Mgh$.\ It is clear from
Figure~\ref{fig:MVL_f_W} that all three models recover the true or the
empirical marginal density of angles of the multivariate Laplace
distribution well.\ As expected, including all angles in the fitting
procedure and increasing the sample size both lead to reduction in the
uncertainty of the posterior predictive density
$\pi[\bm W \mid \G,\LL,\dat]$.\
\begin{figure}[htbp!]
  \centering
  \includegraphics[width=\textwidth]{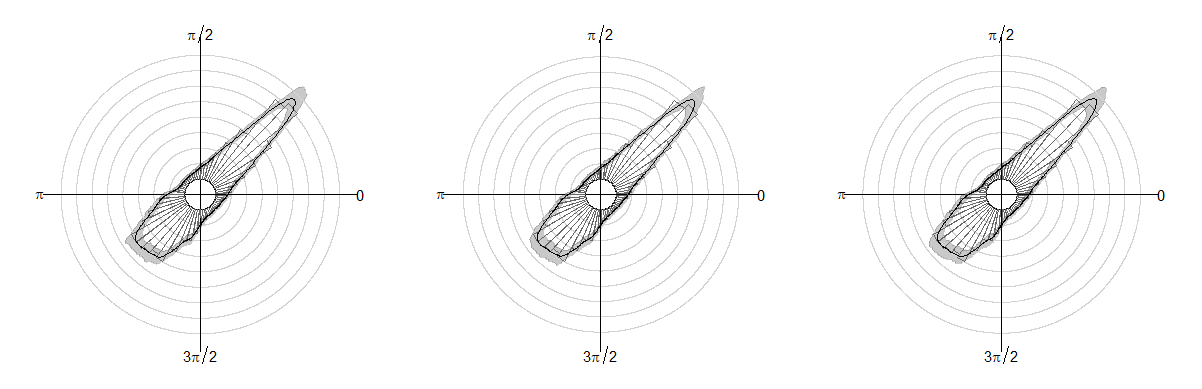}
  \caption{Histogram of the angles
    $\{\varphi_j\in[0,2\pi]:\bm w_j = (\cos\varphi_j,\sin\varphi_j), j=1,\ldots,n_2\}$ of $n_2$ draws from a
    bivariate max-stable logistic distribution, with angle-wise
    posterior mean (black line) and $0.95$ prediction intervals
    (grey) for the marginal distribution of angles
    $\pi[\bm W \mid \G,\LL,\dat]$.\ From left to right:\ outputs of
     $\Mh$, $\Mg$, and $\Mgh$.\ \label{fig:f_W_Log}}
  \includegraphics[width=\textwidth]{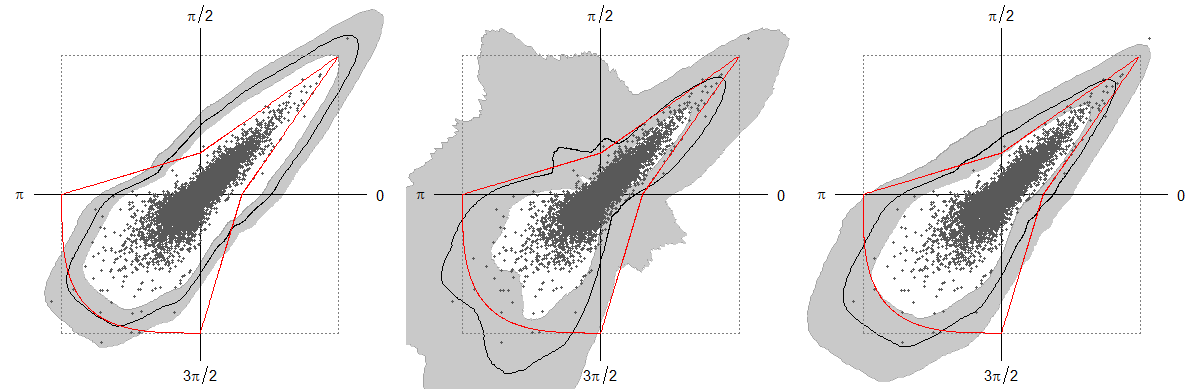}
  \caption{Estimated limit sets from $n_2$ draws from a scaled
    bivariate max-stable logistic distribution, with posterior mean
    (black line), $0.95$ prediction interval (grey), and true
    limit set (red line).\ From left to right:\ outputs of $\Mh$,
    $\Mg$, and $\Mgh$.\ \label{fig:limit_sets_Log}}
\end{figure}

The max-stable logistic distribution is an instance of family of
distributions for which $\Mg$ is known to be a wrong model for the
marginal density of the angles.\ However, it is observed from
Figure~\ref{fig:f_W_Log}---which displays the mean of the posterior
predictive density $\pi[\bm W \mid \G,\LL,\dat]$ as well as its $0.95$
prediction interval---that all three models possess the ability to
capture the empirical distribution of angles when incorporating all
sampled angles in the model likelihood.\
Figure~\ref{fig:limit_sets_Log} depicts some consequences of this.\
When the number of data used to fit $\pi[\bm W \mid \G,\LL,\dat]$ is
greater than the number of exceedances of $\QS_{q_k}$, the posterior
distribution of the gauge function $\gG$ obtained from $\Mg$ is
allowed to over-fit to the directional component of the model.\ This
generates bias and over-confidence in the posterior distribution of
the gauge function $\gG$ and of its associated limit boundary
$\partial \G$.\ Under $\Mh$, the decoupling of the dependence of the
density of angles $\pi[\bm W \mid \G,\LL,\dat]$ on the gauge function
$\gG$ (so that $\pi[\bm W \mid \G,\LL,\dat]=\pi[\bm W\mid \dat]$) is
expected to remove potential bias in $\gG$ coming from the observed
angles.\ A clear consequence, however, is the increase in the variance
of the posterior distribution of $\gG$.\

Figure~\ref{fig:MSD_limit_sets}
provides, for each of the modelling choices detailed in the beginning of this Section, a boxplot of the mean of
the mean integrated square distance (MISD) from each realisation of
the posterior of $\partial \G$.\
Model $\Mgh$, in all cases, seems to be the best---or approximately as
valid as the best---model in terms of MISD.\
\begin{figure}[htbp!]
  \centering
  \includegraphics[width=\textwidth]{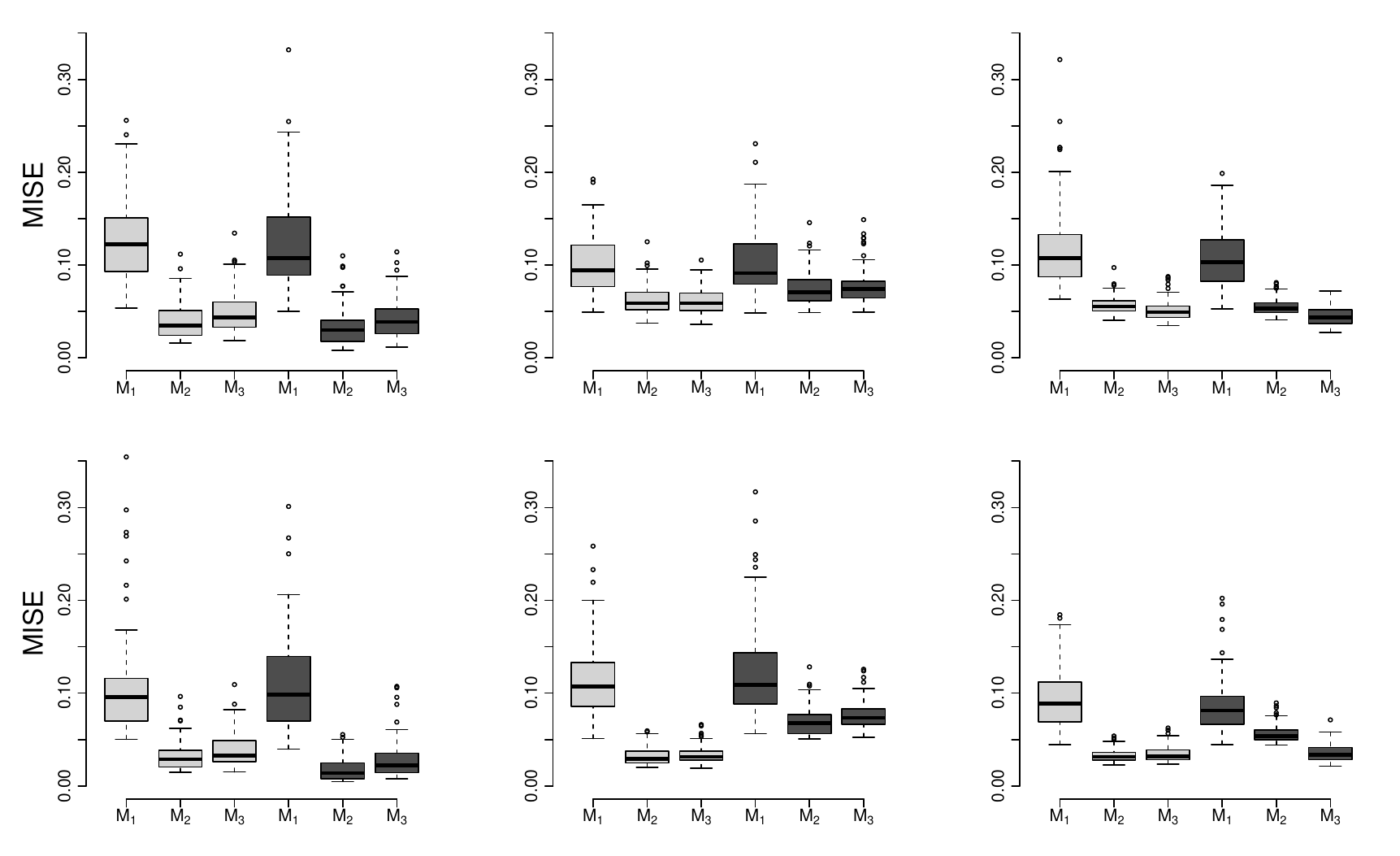}
  \caption{Boxplots of the means of the mean integrated squared error
    (MISE) from each posterior realisations of $\partial \G$ from
    each of the $600$ samples to the true limit sets.\ From top to
    bottom rows:\ $n_1$, $n_2$.\ From left to right columns:
    Bivariate Laplace, normal, and max-stable logistic
    distributions.\ Within panels:\ Results for models $\Mh$,
    $\Mg$, and $\Mgh$ using only angles of exceedances
    (light grey) and using all angles (dark
    grey).\ \label{fig:MSD_limit_sets}}
\end{figure}
Figure~\ref{fig:limit_sets_sim_study} presents the angle-wise mean of
the posterior distribution of the limit boundaries $\partial \G$
estimated from $\Mgh$ with the associated $0.95$ prediction
interval.\ As expected, increasing the sample size decreases uncertainty
in the estimated limit boundaries.\ One may also infer that when it is
reasonable to assume that observed extremes come from the same data
generating mechanism as all other observations, including all sampled
angles into the likelihood provides smoother estimates of $\partial \G$
than a procedure using only the exceedances of $\QS_{q_k}$.\ We argue that while there may be valid reasons to select $\Mh$ or
$\Mg$---respectively for evidence of a homothetic data generating
mechanism or of a poor matching between the density of angles
$\pi[\bm W \mid \G,\LL,\dat]$ and the homothetic density $\G^d/\vol{\G^d}$---, $\Mgh$
generally provides a better-behaved and more balanced choice.\
\begin{figure}[htbp!]
  \centering
  \includegraphics[width=\textwidth]{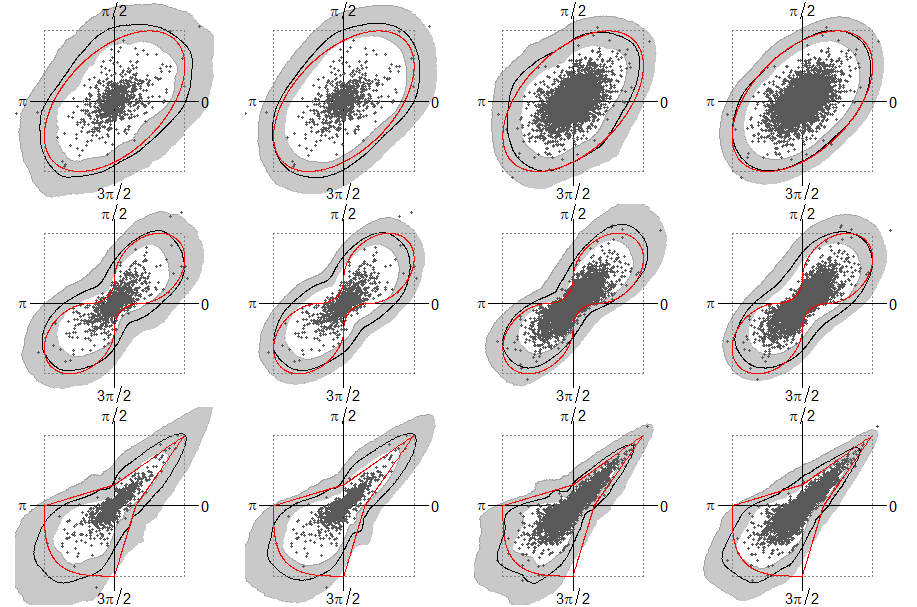}
  \caption{Limit set estimated from samples of size $n$ by $\Mgh$
    via the posterior mean of $\partial \G$ (black line) and $0.95$ prediction interval (grey bands) with true limit set
    (red line) and unit square (dotted grey).\ Rows from top to
    bottom rows:\ Draws from the bivariate normal, max stable
    logistic, and multivariate Laplace distributions.\ Columns from
    left to right:\ $\Mgh$ fitted only exceedances of
    $r_{\QS_{0.8}}$ (sample size $n_1$), $\Mgh$ fitted using all angles
    (sample size $n_1$), $\Mgh$ fitted only exceedances of
    $r_{\QS_{0.98}}$ (sample size $n_2$), $\Mgh$ fitted using all angles
    (sample size $n_2$).\ \label{fig:limit_sets_sim_study}}
\end{figure}

\subsection{Estimating rare event probabilities}
\label{sec:sim_study_predictive}
Based on the results of Figures~\ref{fig:MSD_limit_sets}
and~\ref{fig:limit_sets_sim_study}, we carry the remainder of this
simulation study under the framework of model $\Mgh$ and use all
observed angles for the cases of the bivariate Laplace and max-stable
logistic distributions and only those of exceedances of $\QS_{q_k}$ for the bivariate normal distribution.\ For each
of the $600$ samples defined in Section~\ref{sec:sim_study_posterior},
we obtain realisations from the posterior distribution of
$\PR_{B_i\mid \dat}$ according to the procedure described in
Section~\ref{supp:Posterior} to estimate the probability that a new
observation $R\bm W$ falls within extreme sets $B_1$, $B_2$, and $B_3$
as defined in Figure~\ref{fig:B_regions}.\ Figure~\ref{fig:prob_estim}
displays boxplots of the log-mean of the realisations from the
posterior predictive distribution of each set $B_i$ and each sample.\
\begin{figure}[htbp!]
  \centering
  \includegraphics[width=\textwidth]{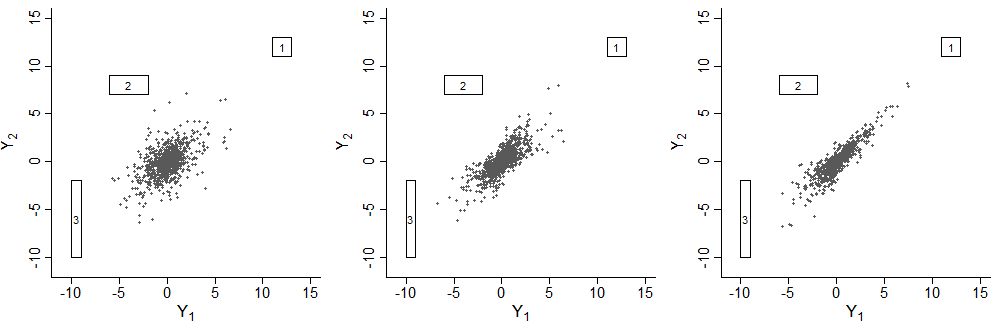}
  \caption{Regions $B_1=[11,13]\times[11,13]$,
    $B_2=[-6,-2]\times[7,9]$, and $B_3=[-10,-9]\times[-2,-10]$ for
    probability estimation superposed with samples of size $n_1$
    for the Gaussian (left), logistic (centre), and multivariate
    Laplace (right) distributions in Laplace
    margins.\ \label{fig:B_regions}}
\end{figure}
It is clear that the mean of predictive posterior distribution is
well-behaved for the sets $B_i$ corresponding to directional subsets of
$\SSS$ where the posterior distribution of the limit sets presented in
Section~\ref{sec:sim_study_posterior} is itself well-behaved.\
Table~\ref{tbl:coverage_probs} details the coverage obtained from
$100$ samples for the probability that a new observation $R\bm W$
falls within the sets $B_1$, $B_2$, and $B_3$ based on $0.95$ prediction
intervals based on the quantile method for the posterior predictive
distributions.\ The coverage properties of the prediction intervals seem
to be satisfying, except for the set $B_2$ where the true probability
is on the order of $10^{-16}$ and $10^{-14}$ for the bivariate normal
and max-stable logistic distributions respectively.\ We again note
that these correspond to regions in which the posterior distributions
of $\rG$ are biased in the case of the bivariate
normal distribution.\ We argue that reasonable coverage properties are
obtained in other cases.\
\begin{figure}[htbp!]
    \centering
    \includegraphics[width=0.8\textwidth]{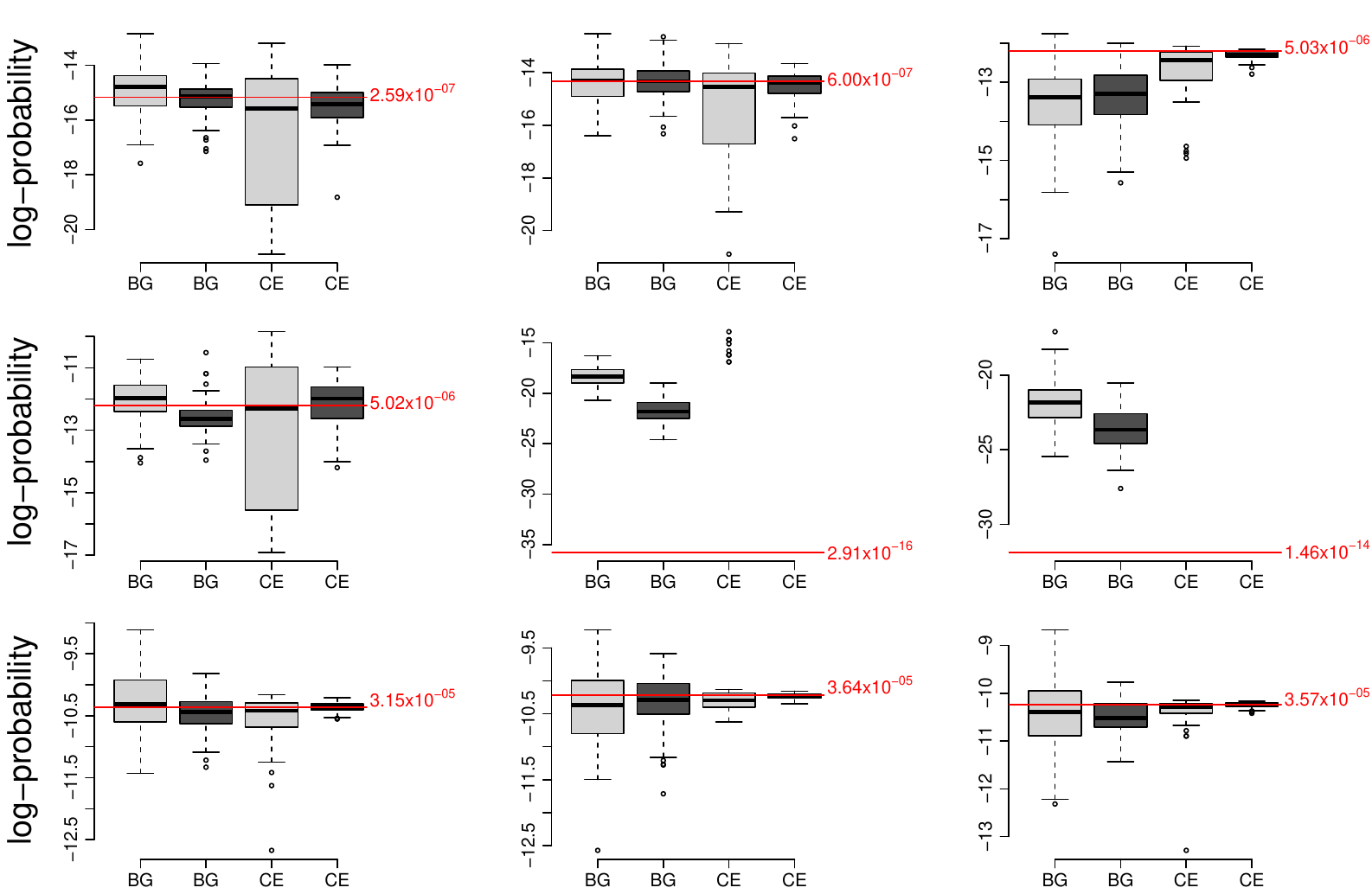}
    \caption{Boxplots of $100$ log-means of posterior predictive
    samples for regions $B_i$.\ Panels from left to right:\ Output of
    $\Mgh$ fitted on draws from the bivariate Laplace, Gaussian, and
    max-stable logistic distributions in Laplace margins.\ Panels
    from top to bottom:\ $\log \PR_{B_1\mid \dat}$,
    $\log \PR_{B_2\mid \dat}$, and $\log \PR_{B_3\mid \dat}$.\ Within
    panels, light grey:\ sample size $n_1$, dark grey:\ sample size $n_2$.\label{fig:prob_estim}}
\end{figure}
\begin{table}[htbp!]
  \centering
  \begin{tabular}{lllllll}
    \toprule
    Bivariate distribution & Laplace &   & Normal &  & Max-stable logistic & \\
    \midrule
    $n$ & 1000 & 10000 & 1000 & 10000 & 1000 & 10000\\
    \toprule
    Coverage for $\PR_{B_1\mid \dat}$& 0.97 & 0.94 & 0.99 & 0.97 & 0.56 & 0.57 \\
    Coverage for $\PR_{B_2\mid \dat}$& 0.96 & 0.88 & 0.13 & 0.70 & 0.89 & 0.85 \\
    Coverage for $\PR_{B_2\mid \dat}$& 0.98 & 0.97 & 0.97 & 0.94 & 0.92 & 0.92 \\
    \toprule
  \end{tabular}
  \label{tbl:coverage_probs}
  \caption{Coverage study of the $0.95$ prediction intervals from the posterior predictive distribution $\PR_{B_i\mid \dat}$ for the bivariate Laplace, normal, and max-stable logistic distributions.\ \label{tbl:coverage_probs}}
\end{table}

\subsection{Return sets}
We obtain realisations from the posterior distribution of the radial
function of the return set $\QS_{1- 1/T}$ using
equation~\eqref{eq:rs_stability_exp} on each realisation from the
posterior distribution of the radial functions of $\QS_q$ and $\G$.\
The angle-wise posterior mean of $r_{\QS_{1- 1/T}}$ hence defines a
the boundary of a return set with period $T$ as well as its associated
$0.95$ simultaneous prediction interval.\ Figure~\ref{fig:Ret_sets}
displays return level-sets return sets of period
$T\in\{100,1000,10000\}$ estimated from one sample of $n_2$ draws from
the bivariate Laplace, normal, and max-stable logistic distributions.\
\begin{figure}[htbp!]
  \centering
  \includegraphics[width=0.8\textwidth]{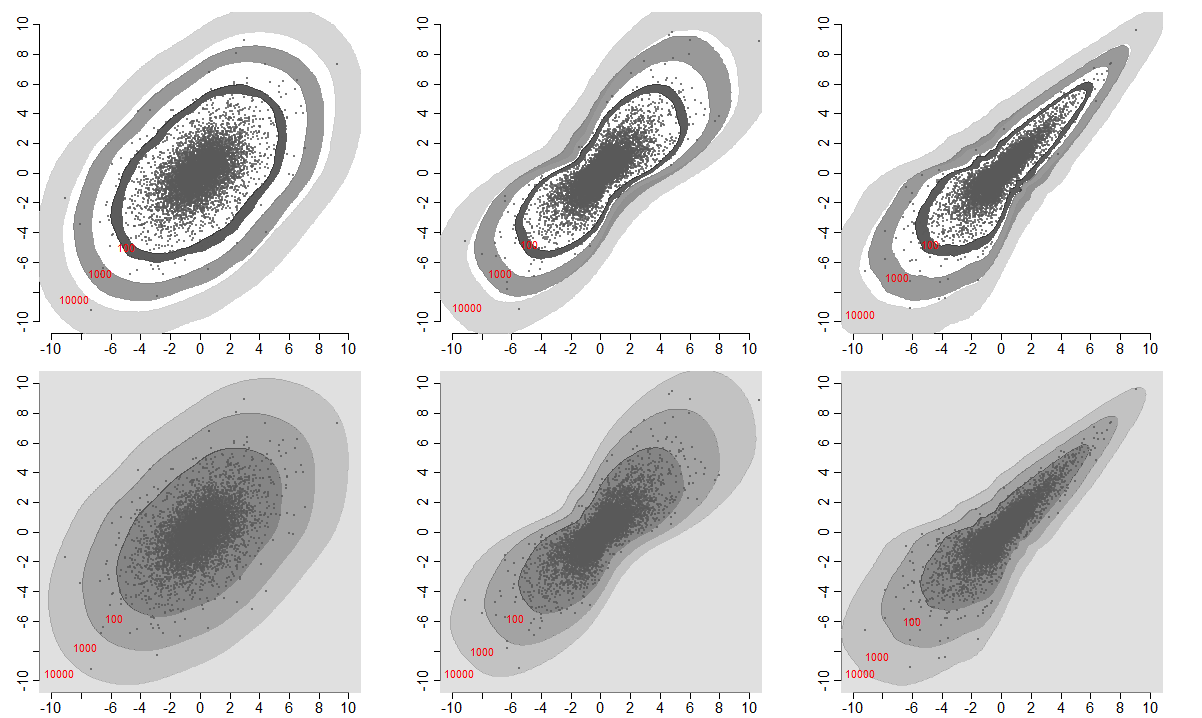}
  \caption{\textit{Top row}:\ $0.95$ prediction intervals for the
    boundaries $r_{\QS_{1- 1/T}}$ of the canonical return sets with
    associated return period $T$ displayed in red.\ \textit{Bottom row}:\ Posterior mean of the return
    sets with associated return period (red) defined by the angle-wise posterior
    mean of $r_{\QS_{1- 1/T}}$.\ Every return set contains all lighter-grey
    sets.\ Columns from left to right:\ Bivariate Laplace, normal, and
    max-stable logistic distributions in Laplace
    margins.\ \label{fig:Ret_sets}}
\end{figure}

\subsection{Estimation of extremal coefficients}
Figure~\ref{fig:alphas} displays boxplots of the means of the samples
from the posterior distribution of $\alpha_{\mid 1}$ obtained from
both estimation procedures on each of the $600$ samples.\
Table~\ref{tbl:coverage_alphas} presents a coverage study of the $0.95$ 
credible intervals for $\alpha_{\mid 1}$.\ 
\begin{table}[htbp!]
  \centering
  \begin{tabular}{lllllll}
    \toprule
    Bivariate distribution & Laplace &   & Normal &  & Max-stable  & \\
    \midrule
    $n$ & 1000 & 10000 & 1000 & 10000 & 1000 & 10000\\
    \toprule
    Coverage for $\alpha_{\mid 1}$, Method 1&  1.00&  1.00& 1.00&  1.00&  0.80 & 0.64\\
    Coverage for $\alpha_{\mid 1}$, Method 2&  1.00 &  1.00&  1.00&  1.00&  0.73 & 0.49 \\
    \toprule
  \end{tabular}
  \label{tbl:coverage_alphas}
  \caption{Coverage study of the $0.95$ credible intervals from the posterior distribution of $\alpha_{\mid 1}$ under both estimation methods for the bivariate Laplace, normal, and max-stable logistic distributions.\ \label{tbl:coverage_alphas}}
\end{table}
\begin{figure}[htbp!]
  \centering
  \includegraphics[width=\textwidth]{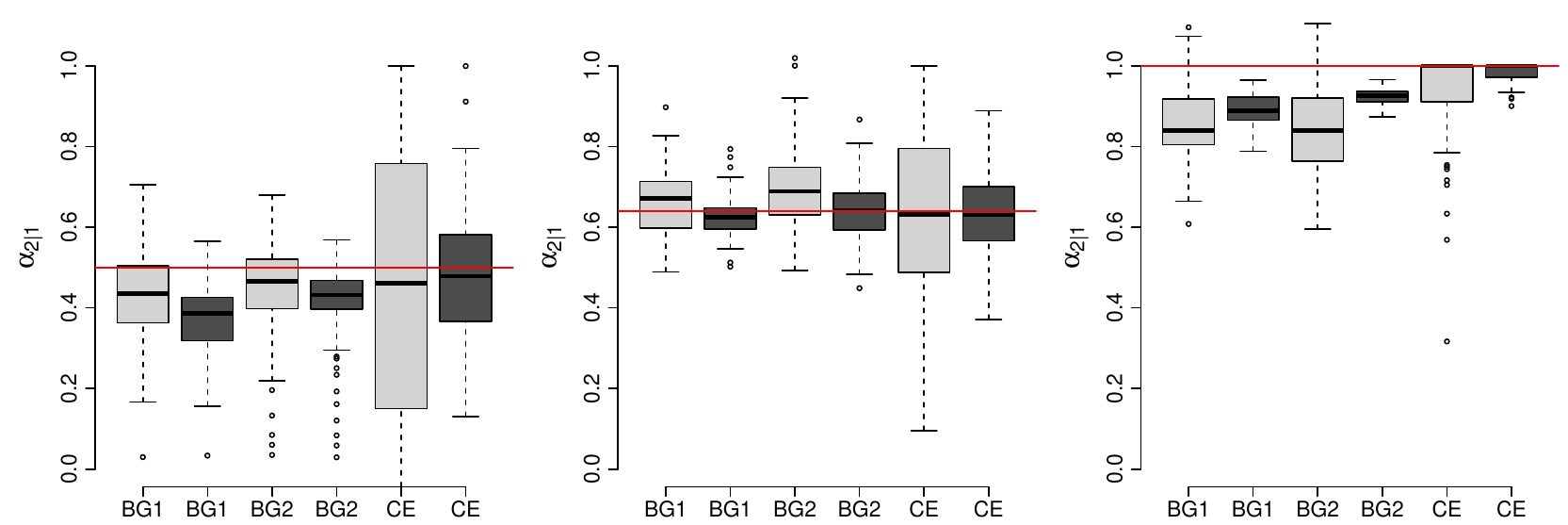}
  \caption{Boxplots of $100$ means of realisations from the
    posterior of $\alpha_{2\mid 1}$.\ Columns
    from left to right:\ Output of $\Mgh$ fitted on draws from the
    bivariate Laplace, Gaussian, and logistic distributions in
    Laplace margins.\ Within panels, 1:\ method 1 and $n_1$, 2:
    method 2 and $n_1$, 1:\ method 1 and $n_2$, 2:\ method 2
    and $n_2$.\ The true values of $\alpha_{\mid 1}$ are given
    by the red line.\ \label{fig:alphas}}
\end{figure}
\begin{figure}[htbp!]
  \centering
  \includegraphics[width=\textwidth]{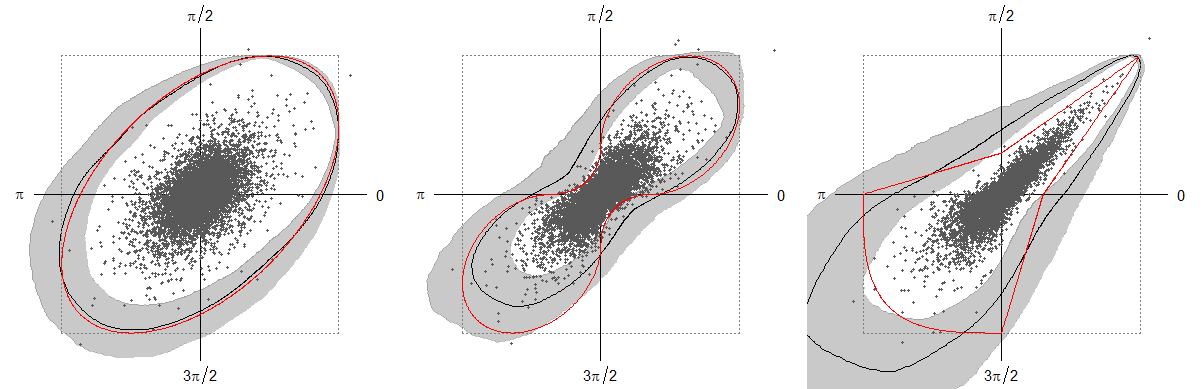}
  \caption{Posterior mean (black line) and associated $0.95$ prediction bands (grey) for the transformed posterior
    distribution $r_{\G_T}$ of the bivariate Laplace (left),
    normal (centre), and max-stable logistic (right) distributions with true limit set (red
    line).\ \label{fig:transformed_partialG}}
\end{figure}

Instances of transformed $r_{\G_T}$ are displayed in Figure~\ref{fig:transformed_partialG}.\ As can be observed, the transformation does not lead to a posterior $r_{\G_T}$ that respects the limit boundary conditions everywhere, but it is better-behaved in the directional region between the angles $\bm w^{(1)}$ and $\bm w^{(2)}$.\ The improvement in the estimation of $\eta$ through this transformation procedure is observed in the results of the simulation study.\ Figure~\ref{fig:etas} contains boxplots of the posterior means and 0.95 prediction intervals for $\eta$ resulting from $\partial \G$ and $\partial \G_T$.\ Table~\ref{tbl:coverage_etas} presents a coverage study of the $0.95$ credible intervals for $\eta$.\ We argue that our model, especially under the consideration of the transformed $\partial \G_T$, can provide meaningful estimates of $\eta$.
\begin{table}[htbp!]
  \centering
  \begin{tabular}{lllllll}
    \toprule
    Bivariate distribution & Laplace &   & Normal &  & Max-stable logistic & \\
    \midrule
    $n$ & 1000 & 10000 & 1000 & 10000 & 1000 & 10000\\
    \toprule
    Coverage for $\eta$, Method 1& 0.93 & 0.97  & 1.00 & 0.98 & 0.95 & 0.85 \\
    Coverage for $\eta$, Method 2& 0.95 & 0.96  & 1.00 & 1.00 & 0.99 & 0.67 \\
    \toprule
  \end{tabular}
  \label{tbl:coverage_etas}
  \caption{Coverage study of the $0.95$ credible intervals from the posterior distribution of $\eta$ under both estimation methods for the bivariate Laplace, normal, and max-stable logistic distributions.\ \label{tbl:coverage_etas}}
\end{table}
\begin{figure}[htbp!]
  \centering
  \includegraphics[width=\textwidth]{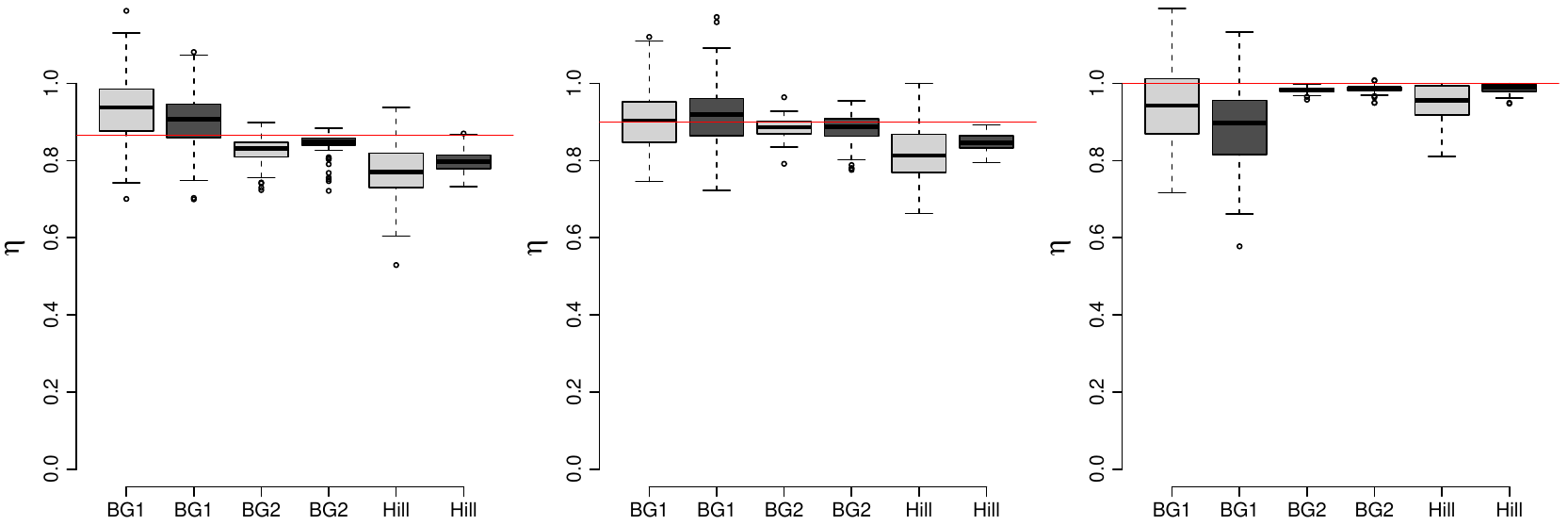}
  \caption{Boxplots of $100$ means of posterior predictive samples
    for $\eta$.\ Panels from left to right:\ Draws from the bivariate
    Laplace, Gaussian, and logistic distributions in Laplace
    margins.\ Within panels, 1:\ method 1, 2:\ method 2.\ The true
    values of $\eta$ are given by the red line.\ \label{fig:etas}}
\end{figure}

\newpage
\section{Case studies}\label{app:E}
\subsection{Posterior model fits on river data}
\label{sec:river-g-unit-level-sets}

On the bivariate Thames tributary data, we fit models using the three architectures $\Mh$, $\Mg$, and $\text{M}_3$ using data exceeding a high posterior quantile estimate and on all available data.\ Figures \ref{fig:river-gauge-unit-level-sets} and \ref{fig:river-angle-densities} show posterior mean limit set boundary and posterior directional densities for each of the six fitted models.\ In these figures, we see that the fitted model associated with the $\Mg$ architecture and fitted on exceedance data only is best.\ This model's posterior limit set agrees most with $\log n$-scaled data, while the posterior mean directional density is in good agreement with an empirical sample of angles.\ The posterior QQ plots in Figure \ref{fig:river-qq} corresponding to this optimal model shows good agreement with the radial exceedance model and the underlying exponential distribution, while the PP plots in \ref{fig:river-pp-angle} show general agreement between the posterior and the underlying directional models.

\begin{figure}[h!]
  \centering
  \begin{subfigure}{.35\textwidth}
    \centering
    \includegraphics[width=.82\linewidth]{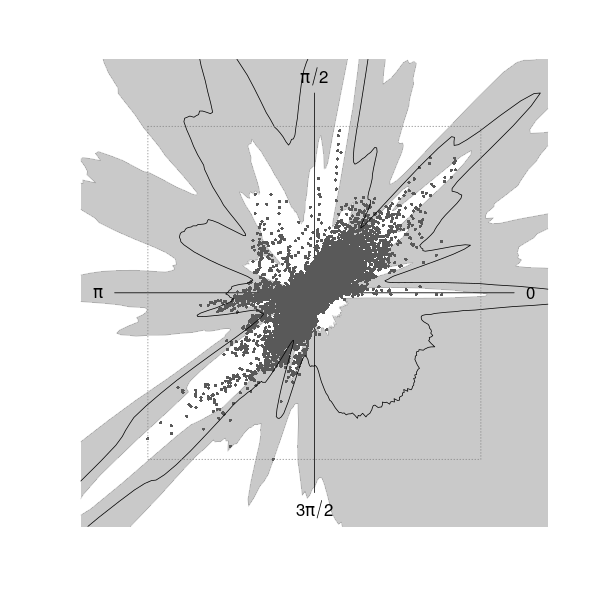}
    \caption{$\Mh$, exc.\ only}
    \label{fig:river-gauge-m1-exc}
  \end{subfigure}
  \begin{subfigure}{.35\textwidth}
    \centering
    \includegraphics[width=.82\linewidth]{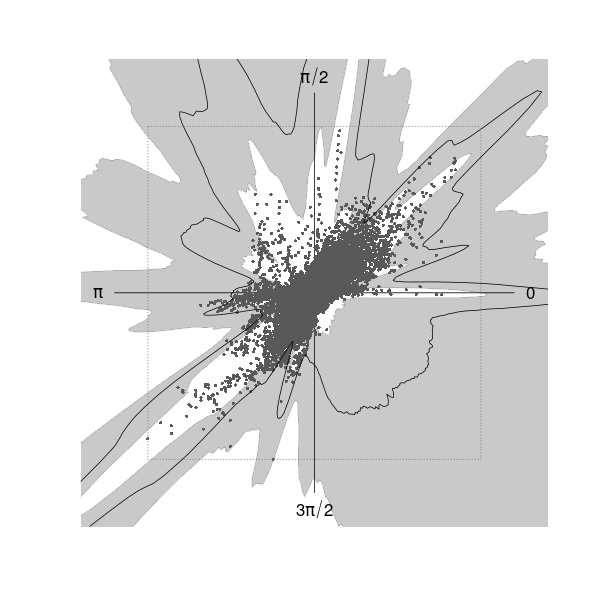}
    \caption{$\Mh$, all angles}
    \label{fig:river-gauge-m1-all}
  \end{subfigure}
  \begin{subfigure}{.35\textwidth}
    \centering
    \includegraphics[width=.82\linewidth]{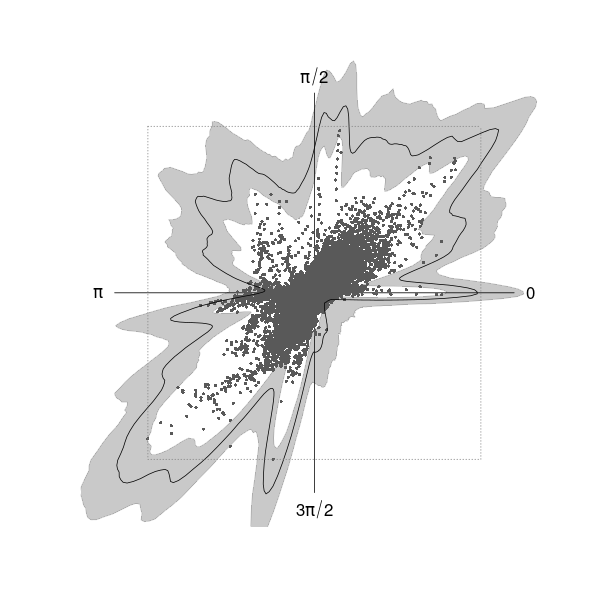}
    \caption{$\Mg$, exc.\ only}
    \label{fig:river-gauge-m2-exc}
  \end{subfigure}
  \begin{subfigure}{.35\textwidth}
    \centering
    \includegraphics[width=.82\linewidth]{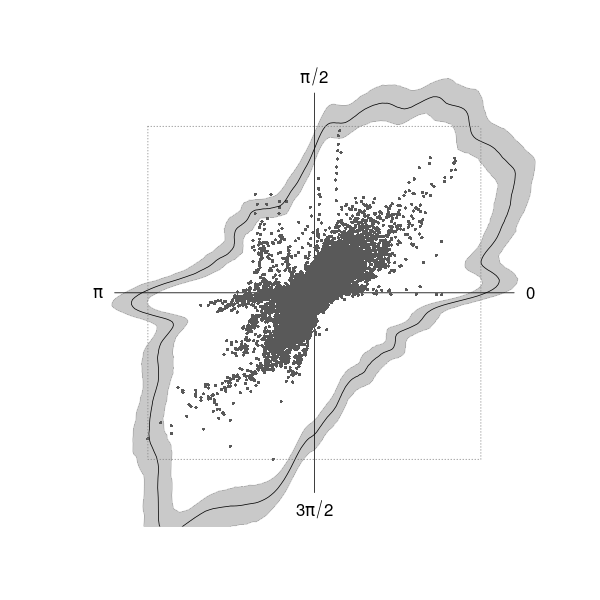}
    \caption{$\Mg$, all angles}
    \label{fig:river-gauge-m2-all}
  \end{subfigure}
  \begin{subfigure}{.35\textwidth}
    \centering
    \includegraphics[width=.82\linewidth]{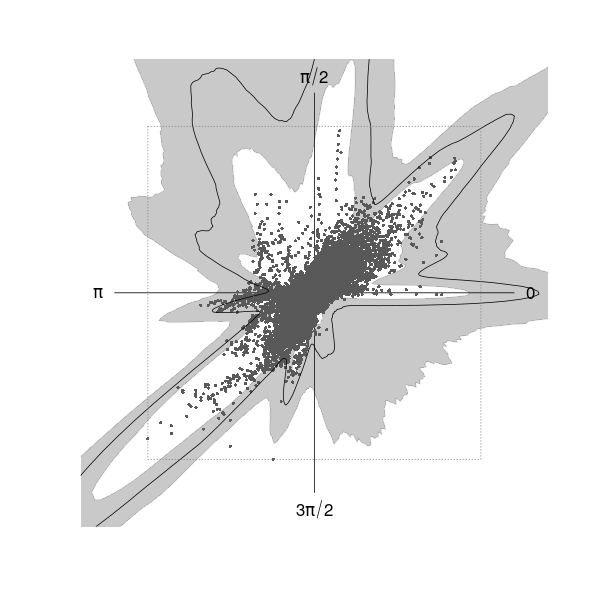}
    \caption{$\Mgh$, exc.\ only}
    \label{fig:river-gauge-m3-exc}
  \end{subfigure}
  \begin{subfigure}{.35\textwidth}
    \centering
    \includegraphics[width=.82\linewidth]{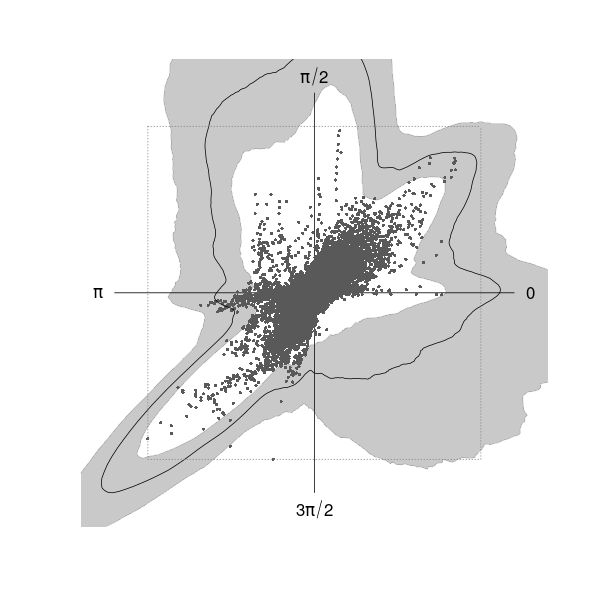}
    \caption{$\Mgh$, all angles}
    \label{fig:river-gauge-m3-all}
  \end{subfigure}
  \caption{Posterior estimates of the unit level set $\gG(\bm{x})=1$
    for the river flow dataset.\ The black line corresponds to the
    posterior mean, with the 0.95 prediction interval shaded in grey.\
    Black points are the original data in Laplace margins scaled by
    $\log({n}/{2})$.\ Dashed border line is the unit box.\ $\Mh$,
    $\Mg$, and $\Mgh$ define the angle density kernel, as described in
    section~\ref{sec:likelihood}.}
  \label{fig:river-gauge-unit-level-sets}
\end{figure}

\begin{figure}[h!]
  \centering
  \begin{subfigure}{.4\textwidth}
    \centering
    \includegraphics[width=.8\linewidth]{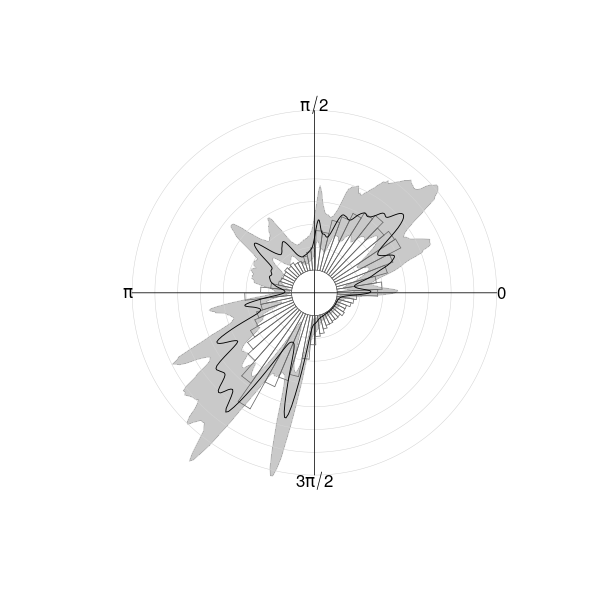}
    \caption{$\Mh$, exc.\ only}
    \label{fig:river-angle-m1-exc}
  \end{subfigure}
  \begin{subfigure}{.4\textwidth}
    \centering
    \includegraphics[width=.8\linewidth]{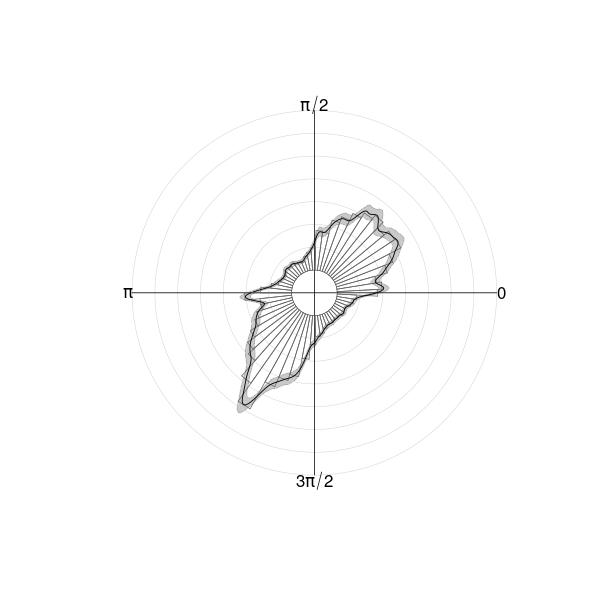}
    \caption{$\Mh$, all angles}
    \label{fig:river-angle-m1-all}
  \end{subfigure}
  \begin{subfigure}{.4\textwidth}
    \centering
    \includegraphics[width=.8\linewidth]{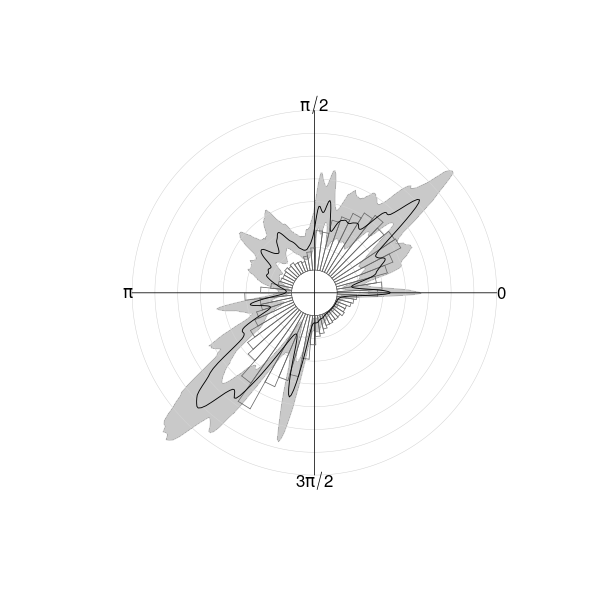}
    \caption{$\Mg$, exc.\ only}
    \label{fig:river-angle-m2-exc}
  \end{subfigure}
  \begin{subfigure}{.4\textwidth}
    \centering
    \includegraphics[width=.8\linewidth]{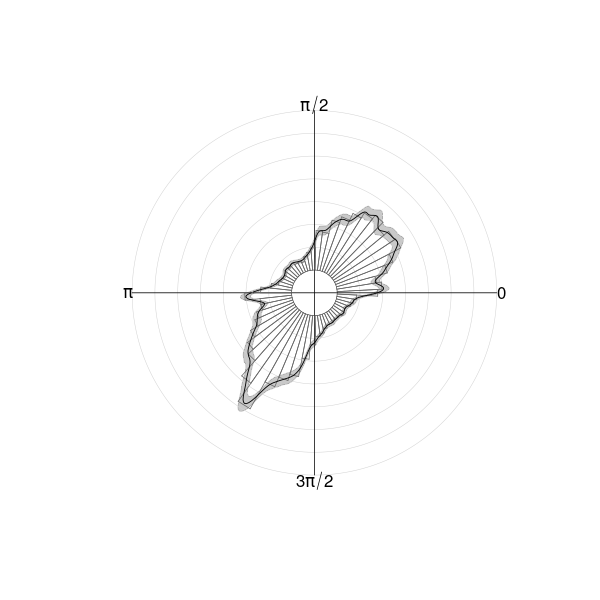}
    \caption{$\Mg$, all angles}
    \label{fig:river-angle-m2-all}
  \end{subfigure}
  \begin{subfigure}{.4\textwidth}
    \centering
    \includegraphics[width=.8\linewidth]{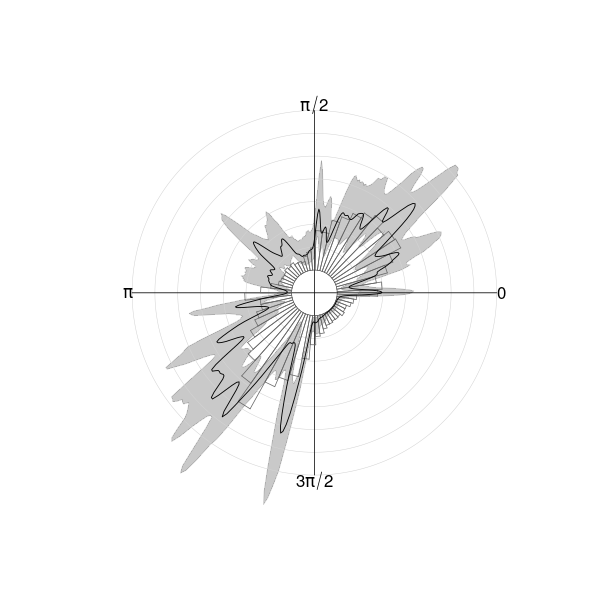}
    \caption{$\Mgh$, exc.\ only}
    \label{fig:river-angle-m3-exc}
  \end{subfigure}
  \begin{subfigure}{.4\textwidth}
    \centering
    \includegraphics[width=.8\linewidth]{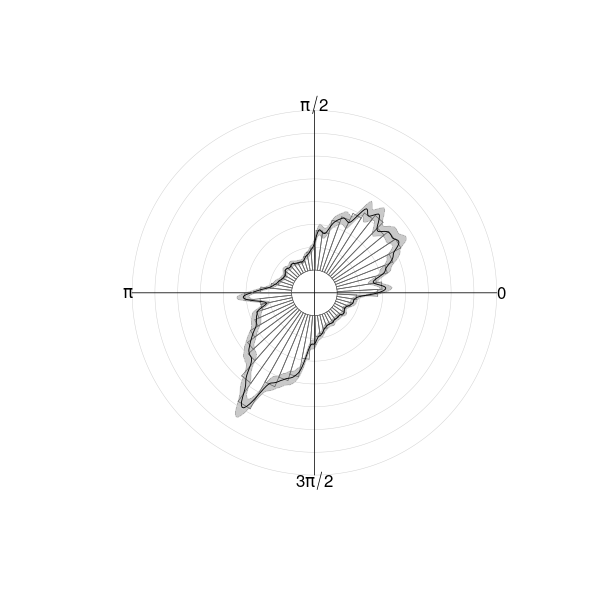}
    \caption{$\Mgh$, all angles}
    \label{fig:river-angle-m3-all}
  \end{subfigure}
  \caption{Posterior estimates of the mean angle density for the river flow dataset with 0.95 prediction intervals.\ The empirical density of angles is given by the underlying histogram.\ $\Mh$, $\Mg$, and $\Mgh$ define the angle density kernel, as described in section~\ref{sec:likelihood}.}
  \label{fig:river-angle-densities}
\end{figure}

\begin{figure}[h!]
    \centering
    \includegraphics[width=0.8\textwidth]{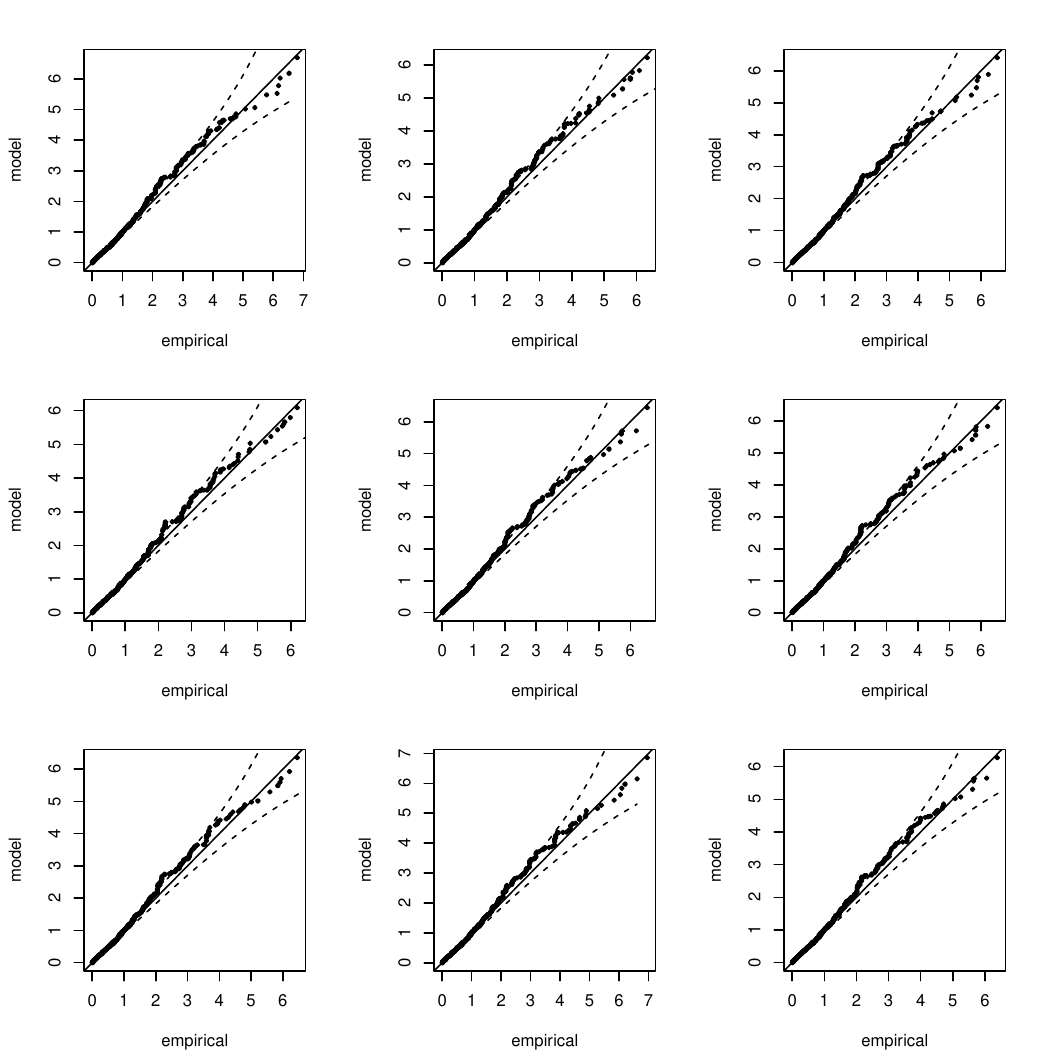}
    \caption{QQ plots for the Thames tributary river flow dataset exceedance model for 9 sampled posterior thresholds $\rquant(\bm{w})$, with 95\% confidence intervals.}
    \label{fig:river-qq}
\end{figure}
\begin{figure}[h!]
    \centering
    \includegraphics[width=0.3\textwidth]{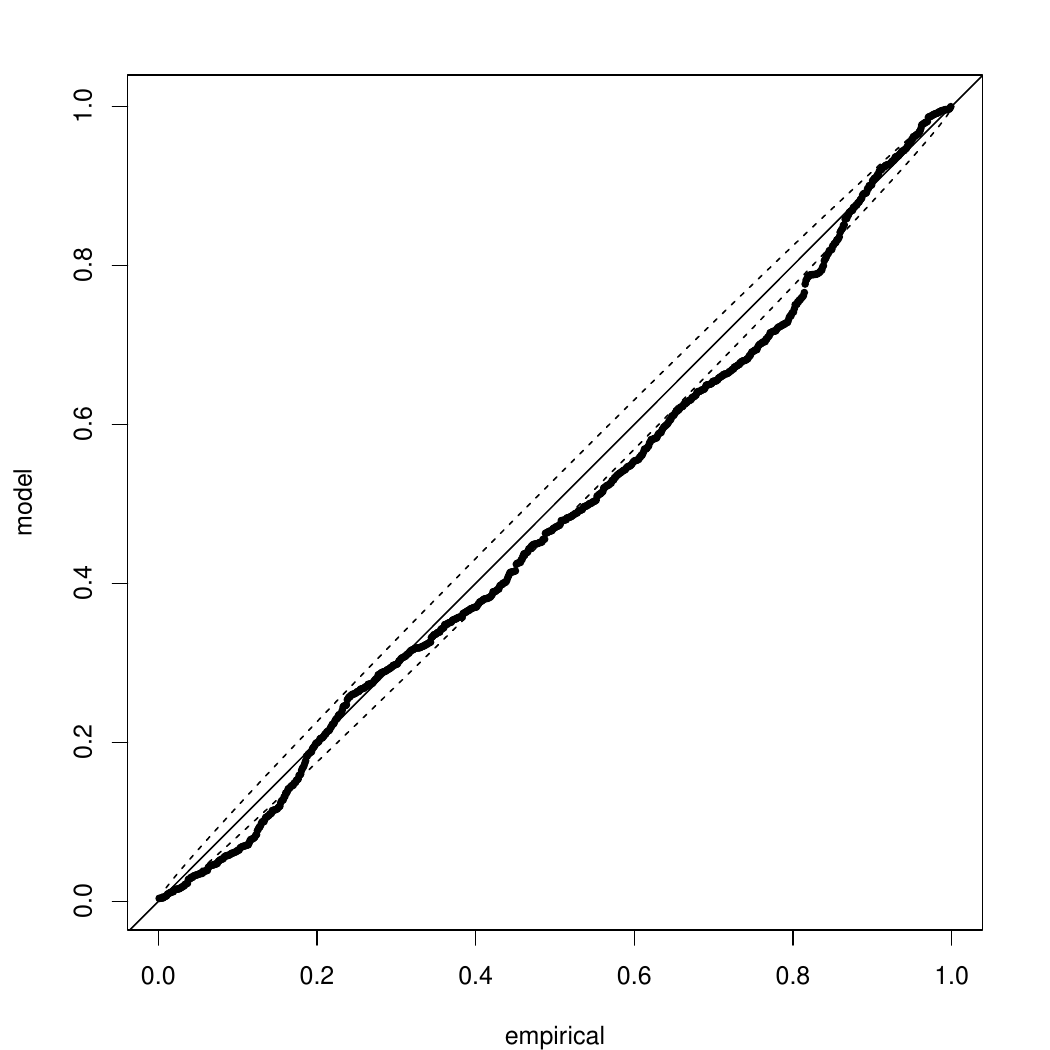}
    \caption{PP plots for the Thames tributary river flow dataset directional model, with 95\% confidence intervals.}
    \label{fig:river-pp-angle}
\end{figure}

\newpage
\clearpage
\subsection{Additional Newlyn wave height figures}
\label{sec:wave-additional-plots}
On the three-dimensional Newlyn wave data, we fit models using the
three architectures $\Mh$, $\Mg$, and $\text{M}_3$ using data
exceeding a high posterior quantile estimate and on all available
data.\ Figure \ref{fig:wave-gauge-unit-level-sets} show posterior mean
limit set boundaries for each of the six fitted models.\ In these
figures, we see that the fitted model associated with the $\Mg$
architecture and fitted on exceedance data only agrees most with
$\log n$-scaled data, though all models fit well.\ The posterior QQ
plots in Figure \ref{fig:wave-qq} corresponding to this optimal model
shows good agreement with the radial exceedance model and the
underlying exponential distribution, while the PP plots in
\ref{fig:wave-pp-angle} show good agreement between the posterior and
the underlying directional models.\ In Section \ref{sec:wave} of the main
body of this manuscript we see that this model is also very accurate
in estimating other diagnostics.\ One diagnostic not mentioned in
Section \ref{sec:wave} of the main body is plots of $\chi_q(A)$,
defined in equation \eqref{eq:chi-def} in Section \ref{sec:intro} of
the main body.\ Figure~\ref{fig:wave-chi} shows posterior mean
estimates of $\chi_q(A)$ for $A\in\left\{HP, HS,PS, HPS\right\}$ and
for $q\in(0.9,1)$ with posterior 95\% confidence intervals.\
\cite{wadsworth2022statistical} establish that the variable groups
$HP$, $PS$, and $HPS$ are asymptotically independent, and that $HS$ is
asymptotically dependent.\ With this in mind, we conclude that our
posterior joint model is able to accurately describe the extremal
dependence structure presented in this dataset.\
Figure~\ref{fig:wave-chi} correctly shows posterior estimates of
$\chi_q(A)$ tending to 0 as $q$ tends to $1$ for the asymptotically
independent groups, and tending to a nonzero value close to the
empirical estimate for the asymptotically dependent pair $HS$.

\begin{figure}[h!]
  \centering
    \begin{subfigure}{.34\textwidth}
    \centering
    \includegraphics[width=.85\linewidth]{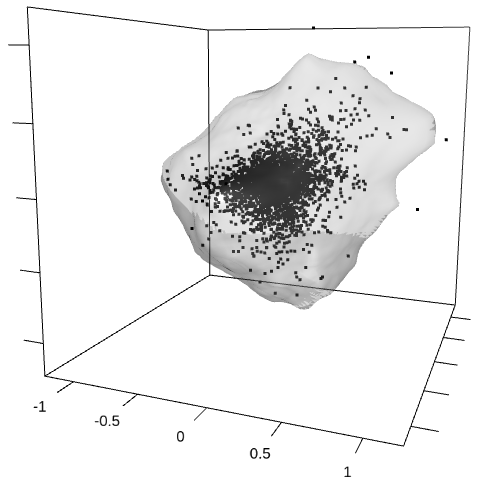}
    \caption{$\Mh$, exc.\ only}
    \label{fig:wave-gauge-m2-exc}
  \end{subfigure}
  \begin{subfigure}{.34\textwidth}
    \centering
    \includegraphics[width=.85\linewidth]{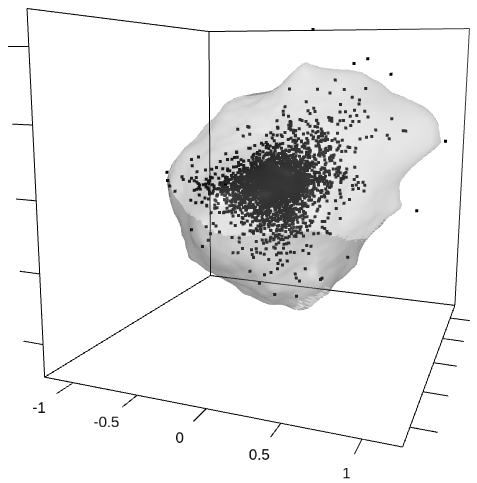}
    \caption{$\Mh$, all angles}
    \label{fig:wave-gauge-m2-all}
  \end{subfigure}
  \begin{subfigure}{.34\textwidth}
    \centering
    \includegraphics[width=.85\linewidth]{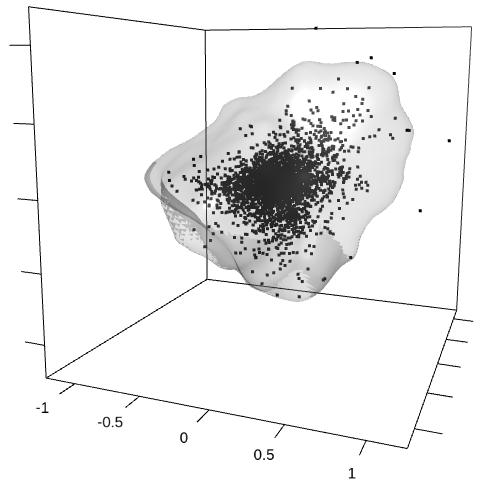}
    \caption{$\Mg$, exc.\ only}
    \label{fig:wave-gauge-m1-exc}
  \end{subfigure}
  \begin{subfigure}{.34\textwidth}
    \centering
    \includegraphics[width=.85\linewidth]{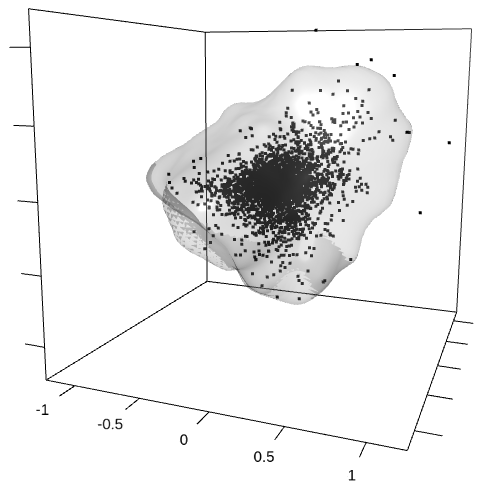}
    \caption{$\Mg$, all angles}
    \label{fig:wave-gauge-m1-all}
  \end{subfigure}
    \begin{subfigure}{.34\textwidth}
    \centering
    \includegraphics[width=.85\linewidth]{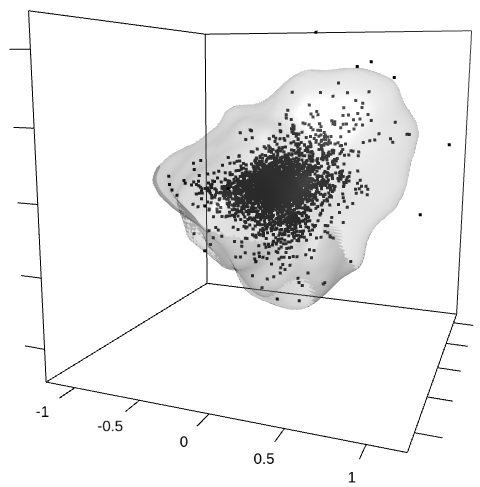}
    \caption{$\Mgh$, exc.\ only}
    \label{fig:wave-gauge-m3-exc}
  \end{subfigure}
  \begin{subfigure}{.34\textwidth}
    \centering
    \includegraphics[width=.85\linewidth]{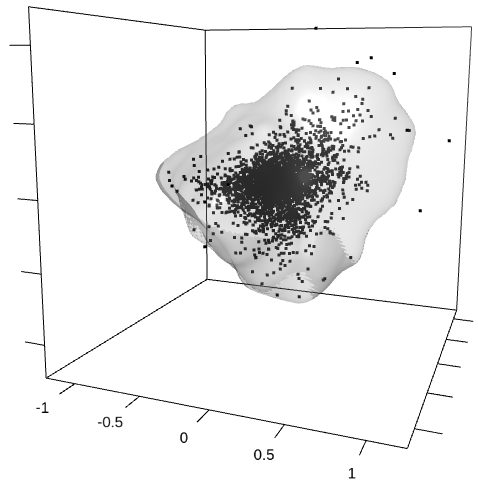}
    \caption{$\Mgh$, all angles}
    \label{fig:wave-gauge-m3-all}
  \end{subfigure}
  \caption{Posterior mean estimates of the unit level set
    $\gG(\bm{x})=1$ for the wave dataset.\ Black points are the original data in
    Laplace margins scaled by $\log({n}/{2})$.\ $\Mh$, $\Mg$, and $\Mgh$ define the angle
    density kernel, as described in section~\ref{sec:likelihood}.}
  \label{fig:wave-gauge-unit-level-sets}
\end{figure}

\begin{figure}[h!]
    \centering
    \includegraphics[width=0.8\textwidth]{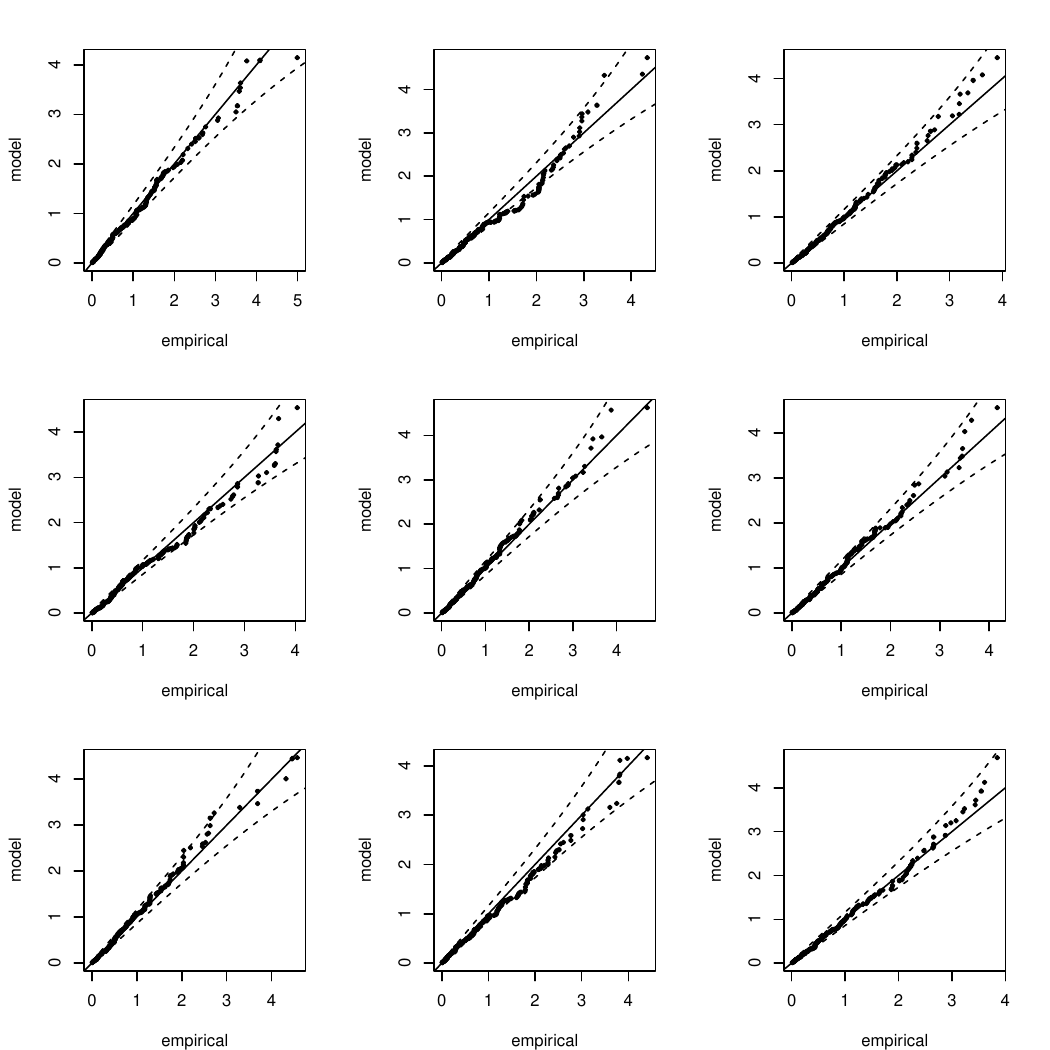}
    \caption{QQ plots for the Newlyn wave dataset exceedance model for 9 sampled posterior thresholds $\rquant(\bm{w})$, with 95\% confidence intervals.}
    \label{fig:wave-qq}
\end{figure}
\begin{figure}[h!]
    \centering
    \includegraphics[width=0.3\textwidth]{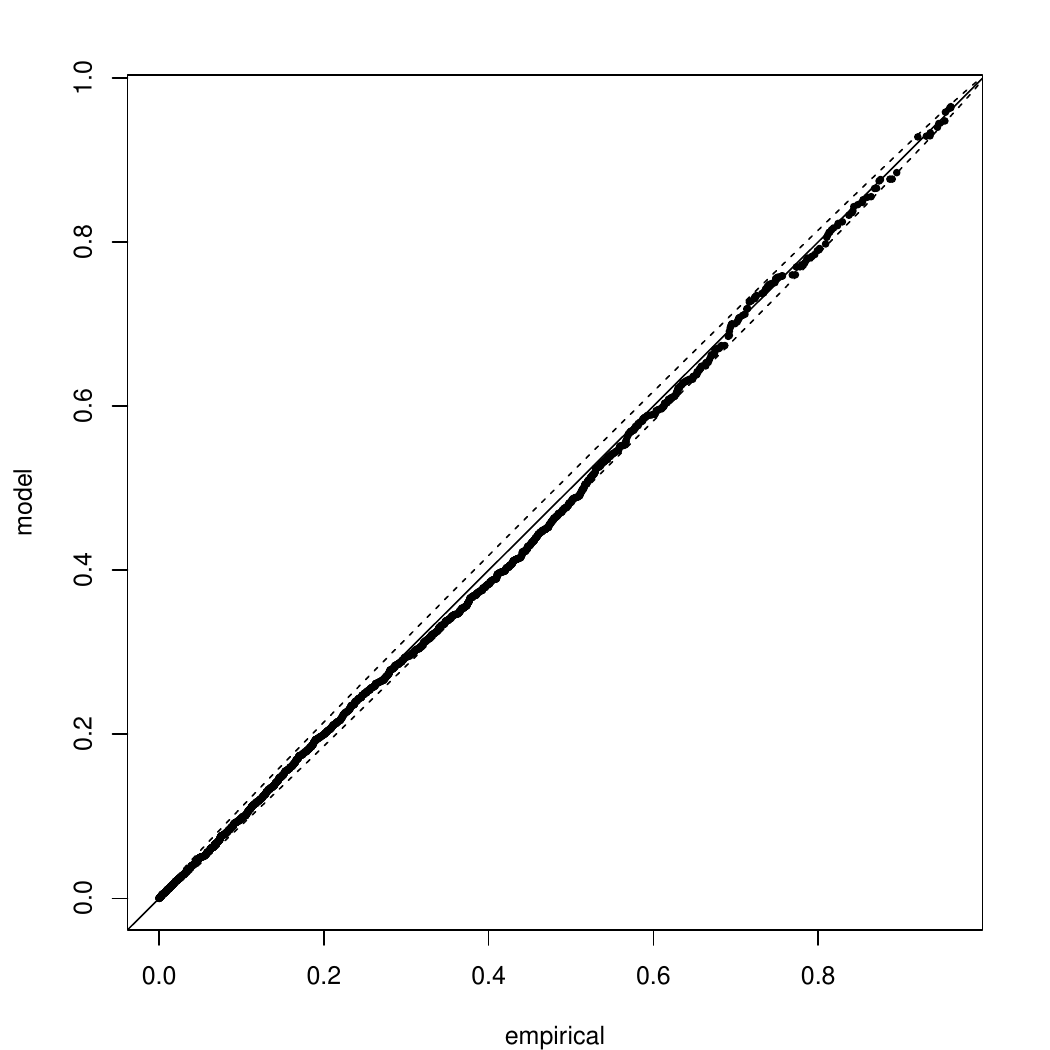}
    \caption{PP plots for the Newlyn wave dataset directional model, with 95\% confidence intervals.}
    \label{fig:wave-pp-angle}
\end{figure}

\begin{figure}[t!]
    \centering
    \includegraphics[width=0.24\textwidth]{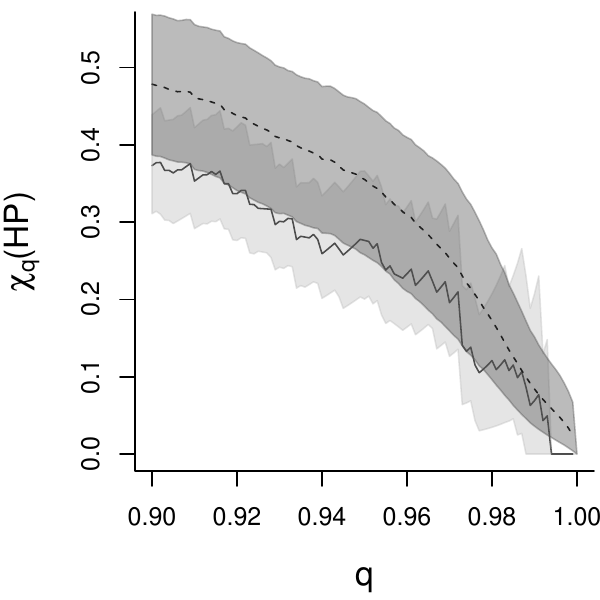}
    \includegraphics[width=0.24\textwidth]{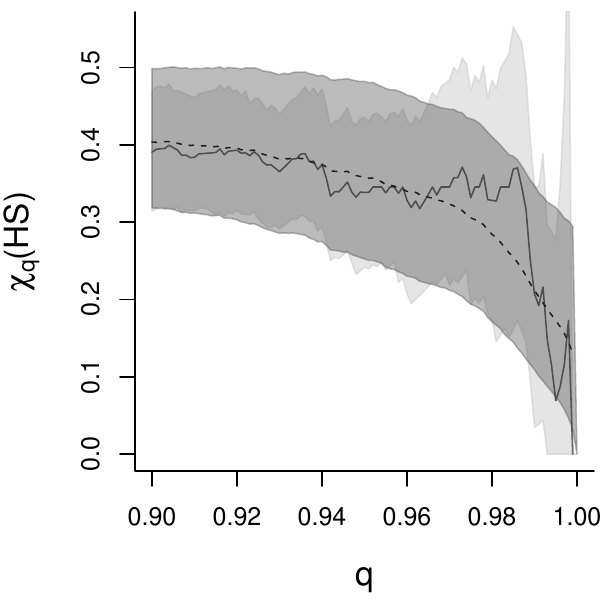}
    \includegraphics[width=0.24\textwidth]{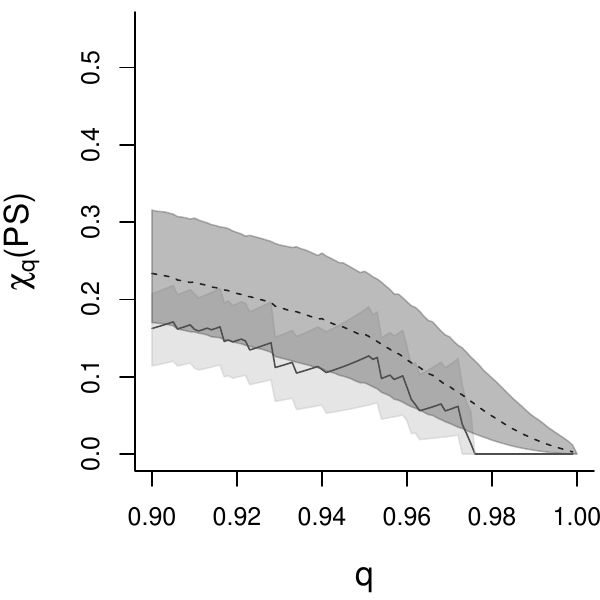}
    \includegraphics[width=0.24\textwidth]{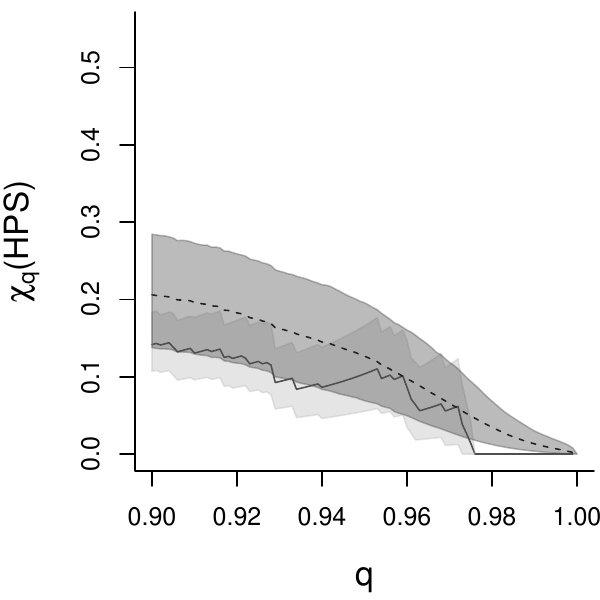}
    \caption{$\chi_q$ plots for all pairwise combinations of marginal variables and for all three marginal variables.\ Solid line is an empirical estimate with 95\% bootstrap confidence intervals in light grey, and the dashed line is a model-based estimate with 95\% posterior confidence intervals in dark grey.}
    \label{fig:wave-chi}
\end{figure}

\clearpage
\newpage

\section{Bivariate Gaussian mixture model fits}
\label{sec:biv-gauss-mix}
Here, we recreate the task of return level set boundary estimation in
the case of bivariate Gaussian mixture data presented in Section 4 of
\cite{hallin2021distribution}.\ Since our approach is non-empirical,
we are able to extrapolate to higher levels than observed in the data
and introduce uncertainty in our estimates.\ In
Figure~\ref{fig:d2-gauss-mix}, the return level curves, along with
0.95 credible intervals, are presented for return periods $T=20$ (or
$q=0.95$) and $T=10^5$ (or $q=1-10^{-5}$).\ For this each mixture
distribution, we generated $n=5,000$ datapoints and fit a quantile
regression model with threshold at the $q=0.9$ level.\ For the joint
model, we used the $\text{M}_3$ hierarchical structure.\ For display
purposes, 20 posterior observations of $\rquant$ were drawn and a
joint radial-directional model was fit at each of these observations.\
There were 50 draws of posterior return level curves drawn for each of
these models for a total of 1,000 return level curves for which to
obtain posterior mean and credible intervals for each plot in
Figure~\ref{fig:d2-gauss-mix}.
\begin{figure}[h!]
    \centering
    \includegraphics[width=\textwidth]{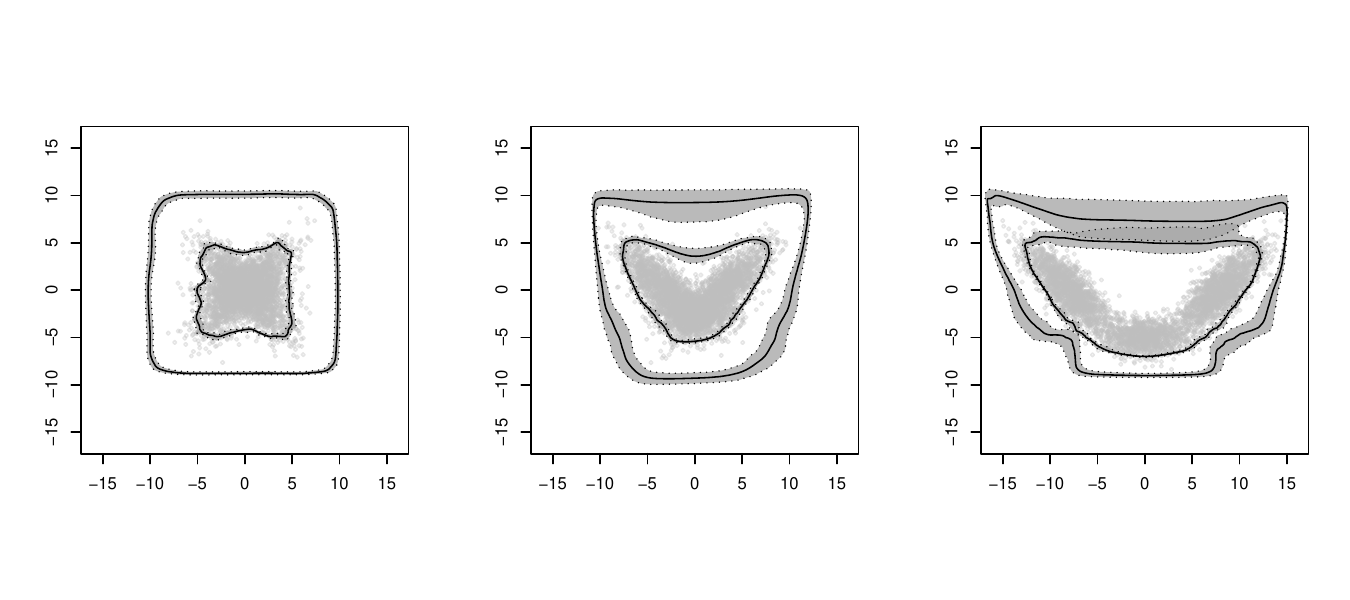}
    \caption{Posterior return level set boundaries for return periods $T=20$ (inner) and $T=10^5$ (outer) for 3 different bivariate Gaussian mixture distributions.\ Grey points are the data in original margins, solid black lines indicate the posterior mean return level set boundaries, with dark grey regions indicating the corresponding 0.95 credible bands, bordered by dotted lines.}
    \label{fig:d2-gauss-mix}
\end{figure}

\end{document}